\theoremstyle{plain}
\newtheorem{proposition}{Proposition}
\newtheorem{lemma}{Lemma}
\newtheorem{theorem}{Theorem}
\newtheorem*{main}{Main Result}
\newtheorem{remark}{Remark}
\def\bma{{\bm a}}
\def\bmb{{\bm b}}
\def\bmc{{\bm c}}
\def\bmd{{\bm d}}
\def\bme{{\bm e}}
\def\bmf{{\bm f}}
\def\bmg{{\bm g}}
\def\bmh{{\bm h}}
\def\bmi{{\bm i}}
\def\bmj{{\bm j}}
\def\bmk{{\bm k}}
\def\bml{{\bm l}}
\def\bmm{{\bm m}}
\def\bmx{{\bm x}}
\def\bmz{{\bm z}}
\def\bmA{{\bm A}}
\def\bmB{{\bm B}}
\def\bmC{{\bm C}}
\def\bmD{{\bm D}}
\def\bmE{{\bm E}}
\def\bmF{{\bm F}}
\def\bmP{{\bm P}}
\def\bmQ{{\bm Q}}
\def\bmX{{\bm X}}
\def\Schouten{{\bm S}{\bmc}{\bm h}}
\def\bmalpha{{\bm \alpha}}
\def\bmbeta{{\bm \beta}}
\def\bmgamma{{\bm \gamma}}
\def\bmdelta{{\bm \delta}}
\def\bmomega{{\bm \omega}}
\def\bmmu{{\bm \mu}}
\def\bmpi{{\bm \pi}}
\def\bmphi{{\bm \phi}}
\def\bmsigma{{\bm \sigma}}
\def\bmUpsilon{{\bm \Upsilon}}
\def\bmpartial{{\bm \partial}}
\def\bmnabla{{\bm \nabla}}
\def\bmhbar{{\bm \hbar}}
\newcounter{mnotecount}
\newcommand{\mnotex}[1]
{\protect{\stepcounter{mnotecount}}$^{\mbox{\footnotesize $\bullet$\themnotecount}}$ 
\marginpar{
\raggedright\tiny\em
$\!\!\!\!\!\!\,\bullet$\themnotecount: #1} }
\newcommand{\dbar}{\mathchar'26\mkern-13mu d}
\newcommand{\phibar}{\mathchar'26\mkern-13mu \phi}
\begin{document}

\title{\textbf{Perturbations of the asymptotic region of the Schwarzschild-de Sitter spacetime}}

\author[,1]{E. Gasper\'in\footnote{E-mail address:{\tt
      e.gasperingarcia@qmul.ac.uk}}} \author[,1]{J. A. Valiente
  Kroon \footnote{E-mail address:{\tt j.a.valiente-kroon@qmul.ac.uk}}}
\affil[1]{School of Mathematical Sciences, Queen Mary, University of
  London, Mile End Road, London E1 4NS, United Kingdom.}

\maketitle

\begin{abstract}
The conformal structure of the Schwarzschild-de Sitter spacetime is
analysed using the extended conformal Einstein field equations. To
this end, initial data for an asymptotic initial value problem for the
Schwarzschild-de Sitter spacetime is obtained. This initial data
allows to understand the singular behaviour of the conformal structure
at the asymptotic points where the horizons of the Schwarzschild-de
Sitter spacetime meet the conformal boundary. Using the insights
gained from the analysis of the Schwarzschild-de Sitter spacetime in a
conformal Gaussian gauge, we consider nonlinear perturbations close to
the Schwarzschild-de Sitter spacetime in the asymptotic region.  We
show that small enough perturbations of asymptotic initial data for
the Schwarzschild de-Sitter spacetime give rise to a solution to the
Einstein field equations which exists to the future and has an
asymptotic structure similar to that of the Schwarzschild-de Sitter
spacetime.

\end{abstract}

\textbf{Keywords:} Conformal methods, spinors, black holes,
Schwarzschild-de Sitter spacetime, global existence.

\medskip
\textbf{PACS:} 04.20.Ex, 04.20.Ha, 04.20.Gz

\setcounter{tocdepth}{2}
\tableofcontents

\section{Introduction}
\label{Introduction}

The stability of black hole spacetimes is, arguably, one of the
outstanding problems in mathematical General Relativity. The challenge
in analysing the stability of black hole spacetimes lies in both the
mathematical problems as well as in the physical concepts to be
grasped. By contrast, the nonlinear stability of  Minkowski spacetime
 ---see e.g. \cite{ChrKla93,Fri86b}--- and de Sitter
spacetimes ---see \cite{Fri86c,Fri86b}--- are well understood.

The results in \cite{Fri86c,Fri86b} show that the so-called
\emph{conformal Einstein field equations} are a powerful tool for the
analysis of the stability and global properties of vacuum
asymptotically simple spacetimes ---see
\cite{Fri81a,Fri86c,Fri86b,Fri14b}.  They provide a system of field
equations for geometric objects defined on a 4-dimensional Lorentzian
manifold $(\mathcal{M},\bmg)$, the so-called \emph{unphysical
  spacetime}, which is conformally related to a spacetime
$(\tilde{\mathcal{M}},\tilde{\bmg})$, the so-called \emph{physical
  spacetime}, satisfying the Einstein field equations. The conformal
framework allows to recast global problems in
$(\tilde{\mathcal{M}},\tilde{\bmg})$ as local problems in
$(\mathcal{M},\bmg)$.  The metrics $\bmg$ and $\tilde{\bmg}$ are
related to each other via a rescaling of the form $\bmg = \Xi^2
\tilde{\bmg}$ where $\Xi$ is a so-called \emph{conformal factor}.
Crucially, the conformal Einstein field equations are regular at the
points where $\Xi=0$ ---the so-called \emph{conformal
  boundary}. Moreover, a solution thereof implies, wherever $\Xi\neq
0$, a solution to the Einstein field equations.

 At its core, the conformal Einstein field equations constitute a
 system of differential conditions on the curvature tensors respect to
 the Levi-Civita connection of $\bmg$ and the conformal factor $\Xi$.
 The original formulation of the equations as given in, say
 \cite{Fri81a,Fri83}, requires the introduction of so-called
 \emph{gauge source functions}. An
 alternative approach to gauge fixing is to adapt the analysis to a
 congruence of curves. In the context of conformal methods, a natural
 candidate for a congruence is given by \emph{conformal geodesics}
 ---see \cite{FriSch87,Fri03c}. To combine gauges based on the
 properties of congruences of conformal geodesics with the conformal
 Einstein field equations, one needs a more general version of the
 latter ---the so-called \emph{extended conformal Einstein field
   equations} \cite{Fri95}.  The extended conformal field equations
 have been used to obtain an alternative proof of the semiglobal
 nonlinear stability of the Minkowski spacetime and of the global
 nonlinear stability of the de-Sitter spacetime ---see
 \cite{LueVal09}. In view of these results, a natural question is
 whether conformal methods can be used in the global analysis of
 spacetimes containing black holes. This article gives a first step in
 this direction by analysing certain aspects of the conformal
 structure of the Schwarzschild-de Sitter spacetime.

\subsection{The  Schwarzschild-de Sitter spacetime}
The Schwarzschild-de Sitter spacetime is a spherically symmetric
solution to the vacuum Einstein field equations with Cosmological
constant. It depends on two parameters: the Cosmological constant
$\lambda$ and the mass parameter $m$.  The assumption of spherical
symmetry almost completely singles out the Schwarzschild-de Sitter
spacetime among the vacuum solutions to the Einstein field equations
with de Sitter-like Cosmological constant. The other admissible
solution is the so-called Nariai spacetime. This observation can be
regarded as a generalisation of Birkhoff's theorem ---see
\cite{Sta98}.  For small values of the areal radius $r$, the solution
behaves like the Schwarzschild spacetime and for large values its
behaviour resembles that of the de Sitter spacetime. In the
Schwarzschild-de Sitter spacetime the relation between the mass and
Cosmological constant determines the location of the
\emph{Cosmological} and \emph{black hole horizons}.

The presence of a Cosmological constant makes the Schwarzschild-de
Sitter solution a convenient candidate for a global analysis by means
of the extended conformal field equations: the solution is an example
of a spacetime which admits a smooth conformal extension towards the
future (respectively, the past) ---see Figures
\ref{fig:SubSdSDiagram}, \ref{fig:eSdSDiagram} and
\ref{fig:HypSdSDiagram} in the main text. This type of spacetimes are called
future (respectively, past) asymptotically de Sitter ---see
  Section \ref{AsymptDeSitterSpaces} for definitions and
   \cite{AndGal02,Gal04} for a more extensive discussion. As the
Cosmological constant takes a de Sitter-like value, the conformal
boundary of the spacetime is spacelike and, moreover, there exists a
conformal representation in which the induced 3-metric on the
conformal boundary $\mathscr{I}$ is homogeneous. Thus, it is possible
to integrate the extended conformal field equations along single
conformal geodesics.

In this article \emph{we analyse the Schwarzschild-de Sitter spacetime
  as a solution to the extended conformal Einstein field equations}
and use the insights thus obtained to discuss nonlinear perturbations
of the spacetime. A natural starting point for this discussion is the
analysis of conformal geodesic equations on the spacetime. The results
of this analysis can, in turn, be used to rewrite the spacetime in the
conformal gauge associated to these curves. However, despite the fact
that the conformal geodesic equations for spherically symmetric
spacetimes can be written in quadratures \cite{Fri03c}, in general,
the integrals involved cannot be solved analytically. In view of this
difficulty, in this article we analyse the conformal properties of the
exact Schwarzschild-de Sitter spacetime by means of an asymptotic
initial value problem for the conformal field equations. Accordingly,
we compute the initial data implied by the Schwarzschild-de Sitter
spacetime on the conformal boundary and then use it to analyse the
behaviour of the conformal evolution equations. An important property
of these evolution equations is that their essential dynamics is
governed by a \emph{core system}. Consequently, an important
aspect of our discussion consists of the analysis of the formation of
singularities in the core system.  This analysis
is irrespective of the relation between $\lambda \neq 0$ and $m$.
This allows us to formulate a result which is valid for the
subextremal, extremal and hyperextremal Schwarzschild-de Sitter
spacetime characterised by the conditions $0<9m^2|\lambda|<1$,
$9m^2|\lambda|=1$ and $9m^2|\lambda|>1$ respectively.

\begin{figure}[t]
\centering
\includegraphics[width=0.3\textwidth]{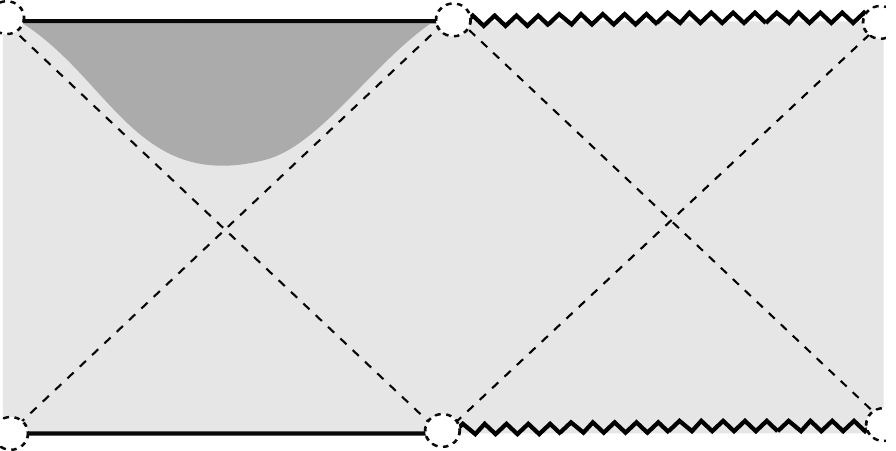}
\includegraphics[width=0.3\textwidth]{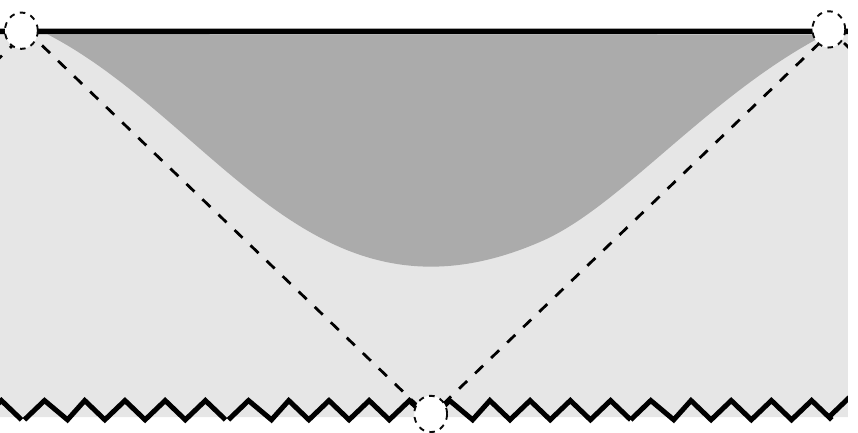}
\includegraphics[width=0.33\textwidth]{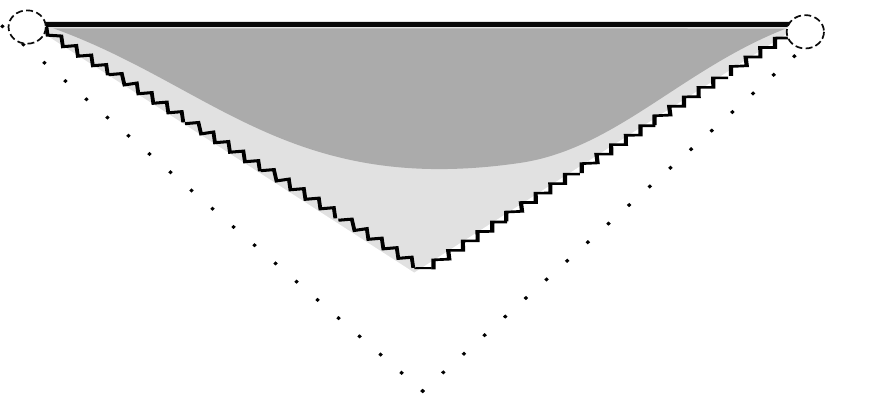}
\put(-390,70){$a)$} \put(-250,70){$b)$} \put(-120,70){$c)$}
\caption{Schematic depiction of the Main Result. Development of
  asymptotic initial data close to that of the Schwarzschild-de Sitter
  spacetime in the global representation ---the initial metric is
  $\bm\hbar$, the standard metric on $\mathbb{S}^3$, and the
  asymptotic points $\mathcal{Q}$ and $\mathcal{Q'}$ are excluded
  (denoted by empty circles in the diagram).  Figures a), b) and c)
  illustrate the evolution of initial data close to the
  Schwarzschild-de Sitter spacetime in the subextremal, extremal and
  hyperextremal cases respectively.  See also Figures
  \ref{fig:SubSdSDiagram},\ref{fig:eSdSDiagram} and
  \ref{fig:HypSdSDiagram}. }
\label{fig:SdSAsymptoticPerturbation}
\end{figure}

\begin{figure}
\includegraphics[width=1\textwidth]{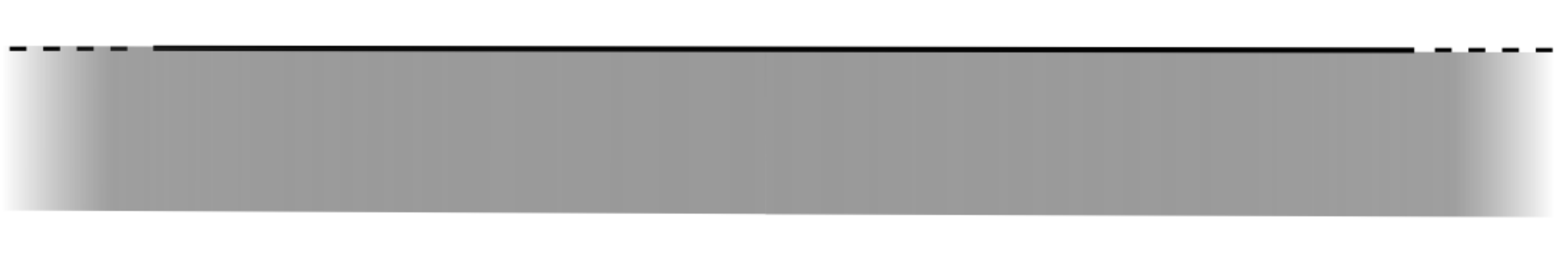}
\put(-420,65){$-\infty$} \put(-15,65){$\infty$}
\caption{Schematic depiction of the Main Result.  Development of
  asymptotic initial data close to that of the Schwarzschild-de Sitter
  spacetime in the representation in which Theorem
  \ref{ExistenceCauchySatbility-eSdS1} was obtained. The initial
  metric is $\bmh$, the standard metric on $\mathbb{R}\times
  \mathbb{S}^2$, and the asymptotic points $\mathcal{Q}$ and
  $\mathcal{Q'}$ are at infinity respect to $\bmh$ ---since $\bmhbar$
  and $\bmh$ are conformally flat one has $\bmh=\omega^2\bmhbar$.  The
  initial data for the subextremal, extremal and hyperextremal cases
  is formally identical.  Small enough perturbations the development
  has the same asymptotic structure as the reference spacetime}
\label{fig:CylinderRepresentationSdS}
\end{figure}

\subsection{The main result}
The analysis of the conformal properties of the 
Schwarzschild-de Sitter spacetime allows us to formulate 
 a result concerning the existence of solutions to the asymptotic
 initial value problem for the Einstein field equations with de
 Sitter-like Cosmological constant which can be regarded as
 perturbations of the \emph{asymptotic region} of the Schwarzschild-de Sitter spacetime ---see Figures \ref{fig:SdSAsymptoticPerturbation} and 
\ref{fig:CylinderRepresentationSdS}.  
 Our existence result  can be stated as:

\begin{main}[\textbf{\em asymptotically de Sitter 
spacetimes close to the asymptotic region of the SdS spacetime}] 
 Given asymptotic initial data which is suitably close to data for the
 Schwarzschild-de Sitter spacetime there exists a solution to the Einstein
 field equations which exists towards the future (past) and has an
 asymptotic structure similar to that of the Schwarzschild-de Sitter
 spacetime ---that is, the solution is future (past) asymptotically de Sitter.
\end{main}

\begin{remark}
{\em A detailed formulation of the Main Result of this
article 
 can be found
in Section \ref{Section:AsymptoticIVP} 
---see  Theorem \ref{ExistenceCauchySatbility-eSdS1}.}
\end{remark}

\medskip
Our analysis of the conformal evolution equations governing
the dynamics of the background solution provides explicit minimal
existence intervals for the solutions. These intervals are certainly
not optimal.  An interesting question related to the
class of solutions to the Einstein field equations obtained in this article is
to obtain their maximal development. To address this problem one requires different methods of
the theory of partial differential equations and it will be discussed
elsewhere.  A schematic depiction of the Main Result 
 is given in Figure \ref{fig:SdSAsymptoticPerturbation}.

\medskip

As part of the analysis of the background solution we require
asymptotic initial data for the Schwarzschild-de Sitter spacetime. The
construction of this initial data allows to study in detail the
singular behaviour of the conformal structure of the family of
background spacetimes at the \emph{asymptotic points}
$\mathcal{Q}$ and $\mathcal{Q}'$, where the
horizons of the spacetime meet the conformal boundary. As a
consequence of the singular behaviour of the asymptotic initial data,
the discussion of the asymptotic initial value problem has to exclude
these points.
 In view
of this, it turns out that a more convenient conformal representation
to analyse the conformal evolution equations for both the exact
Schwarzschild-de Sitter spacetime and its perturbations is one
 in which the the conformal boundary is metrically 
  $\mathbb{R}\times \mathbb{S}^2$ rather than $\mathbb{S}^3
  \backslash\{\mathcal{Q},\mathcal{Q'}\}$ so that the problematic
  asymptotic points are \emph{sent} to infinity ---see Figure
\ref{fig:CylinderRepresentationSdS}.  In
this representation, the methods of the theory of partial differential
equations used to analyse the existence of solutions to the conformal
evolution equations implicitly impose some decay conditions at infinity on the
perturbed initial data.

\subsection{Related results}

The properties of the Schwarzschild-de Sitter spacetime have been
systematically probed by means of an analysis of the solutions of the
scalar wave equation using \emph{vector field methods} ---see
\cite{Sch15}. This type of analysis requires special care when
discussing the behaviour of the solution close to the horizons. In the
asymptotic initial value problem considered in this article, the domain
of influence of the initial data is contained in the region
corresponding to the asymptotic region of the Schwarzschild-de Sitter
spacetime.

\smallskip
The properties of the Nariai spacetime
 ---the other solution appearing in the
generalisation of Birkhoff's theorem to spacetimes with a de
Sitter-like Cosmological constant--- have been analysed by means of
both analytic and numerical methods in \cite{Bey09a,Bey09b}. In
particular, in the former reference it is shown that the Nariai
solution does not admit a smooth conformal extension ---see also \cite{Fri14b}.
 Thus, it cannot be obtained from an asymptotic initial value problem.

\smallskip
Finally, it is pointed out that the singularity of the
 Schwarzschild-de Sitter spacetime is not a conformal
gauge singularity since  $\tilde{C}_{abcd}\tilde{C}^{abcd} \rightarrow \infty$ 
as $r \rightarrow 0$. Accordingly, theory of the extendibility of
conformal  gauge singularities as developed in \cite{LueTod09} cannot
be applied to our analysis. For any
of the possible conformal gauges available, one either has a
singularity of the Weyl tensor arising at a finite value of the
parameter of a conformal geodesic or one has an inextendible conformal
geodesic along which the Weyl tensor is always smooth. 

\subsection*{Notations and conventions}
 The signature convention for (Lorentzian) spacetime metrics is $
 (+,-,-,-)$.  In these conventions the Cosmological constant $\lambda$
 of the de Sitter spacetime takes negative values. Cosmological
 constants with negative values will be said to be \emph{de
   Sitter-like}.

In what follows, the Latin indices ${}_a ,\,{}_b ,{}_c,\,\ldots $ are
used as abstract tensor indices while the boldface Latin indices
${}_\bma,\, {}_\bmb,{}_\bmc, \ldots$ are used as spacetime frame
indices taking the values $0, \ldots , 3$. In this way, given a basis
$\{\bme_{\bma}\}$ a generic tensor is denoted by $T_{ab}$ while its
components in the given basis are denoted by $T_{\bma \bmb}\equiv
T_{ab}\bme_{\bma}{}^{a}\bme_{\bmb}{}^{b}$. We reserve the indices
${}_\bmi,\, {}_\bmj,\, {}_\bmk,\ldots$ to denote frame spatial indices
respect to an adapted frame taking the values $1,\,2,\,3$. We make
systematic use of spinors and follow the conventions and notation of
Penrose \& Rindler \cite{PenRin84} ---in particular,
${}_A,\,{}_B,\,{}_C,\ldots$ are abstract spinorial indices while
${}_\bmA,\, {}_\bmB,\,{}_\bmC,\ldots$ will denote frame spinorial
indices with respect to some specified spin dyad
$\{\epsilon_\bmA{}^{A}\}$.  Our conventions for the curvature tensors
are fixed by the relation:
\[
(\nabla_a \nabla_b -\nabla_b \nabla_a) v^c = R^c{}_{dab} v^d.
\]

 In addition,
 $D^{\pm}(\mathcal{A})$, $H(\mathcal{A})$, $J^{\pm}(\mathcal{A})$ and
 $I^{\pm}(\mathcal{A})$ will denote, respectively, the future (past) domain of dependence, the
 Cauchy horizon, the causal and the chronological futures (pasts) of $\mathcal{A}$ 
  ---see e.g. \cite{HawEll73,Wal84}.

\section{The asymptotic initial value problem in General Relativity}

In this section we briefly revisit the notion of asymptotically
  de Sitter spacetimes ---see \cite{AndGal02,Gal04,HawEll73}. After that, we
  review the properties of the extended conformal Einstein field
equations that will be used in our analysis of the Schwarzschild-de
Sitter spacetime. This general conformal representation of the
Einstein field equations was originally introduced in \cite{Fri95}
---see also \cite{Fri04,Tod02,CFEBook} for further discussion. For completeness, the conformal constraint equations are
presented ---see \cite{Fri83, Fri84,Fri86a,Fri04}. In addition, we
provide a discussion on the notion of conformal geodesics and
conformal Gaussian systems of coordinates ---see
\cite{Sch86,FriSch87,Fri03c,Fri95}. In this section we also discuss
how to use the conformal field equations expressed in terms of a
conformal Gaussian system to set up an \emph{asymptotic initial value
  problem} for a spacetime with a spacelike conformal boundary.
We conclude this section with a  discussion of
 the structural properties of the
  conformal evolution equations in the framework of the theory of
  symmetric hyperbolic systems contained in \cite{Kat75}.

 \subsection{Asymptotically de Sitter spacetimes}
 \label{AsymptDeSitterSpaces}


\medskip
A spacetime $(\tilde{\mathcal{M}},\tilde{\bmg})$ satisfying the vacuum
Einstein field equations
\begin{equation}
\tilde{R}_{ab} = \lambda \tilde{g}_{ab},
\label{EinsteinFieldEquations}
\end{equation}
is \emph{future
  asymptotically de Sitter} if there exist a spacetime with boundary
$(\mathcal{M},\bmg)$, a smooth conformal factor $\Xi$ and
a diffeomorphism
 $\varphi:\tilde{\mathcal{M}}\rightarrow
 \mathcal{U}\subseteq
\mathcal{M}$, such that:
\begin{eqnarray*}
&& \Xi >0 \qquad \text{in} \qquad \mathcal{U}, \\
 && \Xi =0 \quad
  \text{and} \quad \textbf{\mbox{d}}\Xi \neq 0 \quad \text{on} \quad
  \mathscr{I}^{+} \equiv \partial{\mathcal{U}}, \\ && \mathscr{I}^{+} \quad
  \text{is spacelike ---i.e. }\quad \bmg(\textbf{\mbox{d}}\Xi,
\textbf{\mbox{d}}\Xi) >0 \qquad\text{on}
  \qquad \mathscr{I}^{+}, \\ && \mathscr{I}^{+} \quad \text{lies to the
    future of } \tilde{\mathcal{M}}\quad \mbox{---i.e.}\quad \mathscr{I}^{+}\subset I^{+}(\tilde{\mathcal{M}}).
\end{eqnarray*}
Observe that this definition does not restrict the topology of
$\mathscr{I}^{+}$. In particular, it does not have to be compact
---see \cite{Gal04}.  The notion of past asymptotically de Sitter is
defined in analogous way. Additionally, $(\tilde{\mathcal{M}},\tilde{\bmg})$ is
asymptotically de Sitter if it is \emph{future and past asymptotically
  de Sitter}.  Notice that a spacetime which is asymptotically de
Sitter is not necessarily \emph{asymptotically simple} ---see
\cite{HawEll73} for a precise definition of asymptotically simple
spacetime.  In the following, in a slight abuse of notation, the
mapping $\varphi: \tilde{\mathcal{M}}\rightarrow \mathcal{U}\subseteq
\mathcal{M}$ will be omitted in the notation and we write 
\begin{equation}
\label{BasicConformalRescaling}
\bmg=\Xi^2\tilde{\bmg}.
\end{equation}
Furthermore, the term \emph{asymptotic region} will be used to refer
to the set $J^{-}(\mathscr{I}^{+})$ of a future asymptotically de
Sitter spacetime or $J^{+}(\mathscr{I}^{-})$ of a past asymptotically
de Sitter spacetime.

\subsection{The extended conformal Einstein field equations}
\label{sec:XCEFE}

In this section we provide a succinct discussion of the extended
conformal Einstein field equations. 

\subsubsection{Basic notions}
\label{sec:FrameFormalismAndWeylConnections}
 Given any connection $\bmnabla$ over a spacetime manifold
 $\tilde{\mathcal{M}}$, the \emph{torsion} and \emph{Riemann
   curvature} tensors are defined, respectively, by the expressions
 \[
  (\nabla_{a}\nabla_{b} -\nabla_b
 \nabla_a)\phi=\Sigma_{a}{}^{c}{}_{b}\nabla_{c}\phi, \qquad \qquad
 (\nabla_{a}\nabla_{b} -\nabla_b \nabla_a) u^c=R^{c}{}_{dab}u^{d} +
 \Sigma_{a}{}^{d}{}_{b}\nabla_{d}u^{c},
\]
 where $\phi$ and $u^d$ are smooth scalar and vector fields
 respectively, while $\Sigma_{a}{}^{c}{}_{b}$ and $R^{d}{}_{cab}$
 denote the torsion and Riemann tensors of $\bmnabla$.

\subsubsection{Frames and connection coefficients}
Let $\{\bme_{\bma}\}$ denote a set of frame fields on
$\tilde{\mathcal{M}}$ and let $\{ \bmomega^\bma\}$ be the associated
coframe. One has that $\langle \omega^\bma, \bme_\bmb\rangle =
\delta_\bmb{}^\bma$.  We define the frame metric as $g_{\bma
  \bmb}\equiv\bmg(\bme_{\bma},\bme_{\bmb})$ ---in abstract index
notation $g_{\bma \bmb} \equiv
\bme_{\bma}{}^{a}\bme_{\bmb}{}^{b}g_{ab}$. From now on we will restrict
our attention to orthonormal frames, so that $g_{\bma \bmb}=\eta_{\bma
  \bmb}$, where consistent with our signature conventions $\eta_{\bma
  \bmb}=\text{diag}(1,-1,-1,-1)$. The metric $\bmg$ is then expressed in
terms of the coframe $\{\bmomega^\bma \}$ as
\[
\bmg = \eta_{\bma\bmb} \bmomega^\bma \otimes \bmomega^\bmb.
\]

  The connection coefficients
$\Gamma_{\bma}{}^{\bmc}{}_{\bmb}$ of the connection $\bmnabla$ with
respect to the frame $\{ \bme_\bma\}$ are defined via the relation
 \[ 
\nabla_{\bma}\bme_{\bmb}=\Gamma_{\bma}{}^{\bmc}{}_{\bmb}\bme_{\bmc},
\]
  where $\nabla_{\bma}\equiv \bme_{\bma}{}^a\nabla_{a}$ denotes the
  covariant directional derivative in the direction of $\bme_{\bma}$.
  The torsion of $\bmnabla$ can be expressed in terms of the frame
  $\{\bme_{\bma}\}$ and the connection coefficients
  $\Gamma_{\bma}{}^{\bmc}{}_{\bmb}$ via
\[ 
\Sigma_{\bma}{}^{\bmc}{}_{\bmb}\bme_{\bmc} = [\bme_{\bma},\bme_{\bmb}]-
 (\Gamma_{\bma}{}^{\bmc}{}_{\bmb}-\Gamma_{\bmb}{}^{\bmc}{}_{\bma})\bme_{\bmc}.
\]

\subsubsection{Conformal rescalings}

Following the notation introduced in Section \ref{AsymptDeSitterSpaces},
 two spacetimes $(\mathcal{M},\bmg)$ are said to be
 $(\tilde{\mathcal{M}},\tilde{\bmg})$  \emph{conformally
   related} if the metrics $\bmg$ and $\tilde{\bmg}$ satisfy equation \eqref{BasicConformalRescaling} 
for some scalar field $\Xi$.
 In the remainder of this article the
symbols $\bmnabla$ and $\tilde{\bmnabla}$ will be reserved for the
Levi-Civita connection of the metrics $\bmg$ and $\tilde{\bmg}$. The
connection coefficients of $\bmnabla$ and $\tilde{\bmnabla}$ are
related to each other through the expression
\[ 
\Gamma_{\bma}{}^{\bmc}{}_{\bmb}=\tilde{\Gamma}_{\bma}{}^{\bmc}{}_{\bmb}
 + S_{\bma\bmb}{}^{\bmc\bmd}\Upsilon_{\bmd}, 
\]
 where
\[
 S_{\bma\bmb}{}^{\bmc\bmd} \equiv
 \delta_{\bma}{}^{\bmc}\delta_{\bmb}{}^{\bmd} +
 \delta_{\bmb}{}^{\bmc}\delta_{\bma}{}^{\bmd} -
 g_{\bma\bmb}g^{\bmc\bmd} \qquad \text{and}\qquad
 \Upsilon_{\bma}\equiv \Xi^{-1}\nabla_{\bma}\Xi.
\] 
In particular, observe that the 1-form $\bmUpsilon\equiv \Upsilon_\bma
\bmomega^\bma$ is exact.

\subsubsection{Weyl connections}
A \emph{Weyl connection} $\hat{\bmnabla}$ is a torsion-free connection
satisfying the relation
\begin{equation}
\label{WeylConnectionf}
 \hat{\nabla}_{a}g_{bc} =-2f_{a}g_{bc},
\end{equation}
where $f_a$ is an arbitrary 1-form ---thus, $\hat{\bmnabla}$ is not
necessarily a metric connection. Property \eqref{WeylConnectionf} is preserved
under the conformal rescaling \eqref{BasicConformalRescaling} as it
can be verified that 
$\hat{\nabla}_{a}\tilde{g}_{bc} =-2\tilde{f}_{a}\tilde{g}_{bc}$ where
$\tilde{f}_a \equiv f_a + \Upsilon_a$.  The connection coefficients of 
$\hat{\bmnabla}$ are related to those of $\bmnabla$ through the relation
\begin{equation} 
\label{WeylConectionCoefsToLeviCivitaConnectionCoef}
 \hat{\Gamma}_{\bma}{}^{\bmc}{}_{\bmb} =
 \Gamma_{\bma}{}^{\bmc}{}_{\bmb} + S_{\bma \bmb}{}^{\bmc\bmd}f_{\bmd}.
\end{equation}
A Weyl connection is a Levi-Civita connection of
 some element of the conformal class $[\bmg]$ if and only if the 1-form $f_a$
is exact. The Schouten tensor $L_{ab}$ of the connection $\bmnabla$ is
defined as 
\[
L_{ab} \equiv \frac{1}{2}R_{ab}-\frac{1}{12}Rg_{ab}. 
\]
The Schouten tensors of the connections $\hat{\nabla}$ and
 $\nabla$ are related to each other by
\begin{equation} 
\label{WeylSchoutenDefinition}
 L_{a b}-\hat{L}_{a b}=\nabla_{a}f_{b}-\frac{1}{2}S_{a b}{}^{c d}f_{c}f_{ d}
\end{equation}
Notice that, in general, $\hat{L}_{ab} \neq \hat{L}_{(ab)}$. 

\subsubsection{The extended conformal Einstein field equations}

From now on, we will consider Weyl connections
$\hat{\bmnabla}$ related to a conformal metric $\bmg$ as in equation
\eqref{WeylConnectionf}. Let $\hat{P}^{\bma}{}_ {\bmb \bmc \bmd}$ denote the
\emph{geometric curvature} of $\hat{\bmnabla}$ ---that is, the expression
 of the Riemann tensor of 
$\hat{\bmnabla}$ written in terms of derivatives of the connection
coefficients $\hat{\Gamma}_{\bma}{}^{\bmc}{}_{\bmb}$:
\[ 
\hat{P}^{\bma}{}_{\bmb \bmc \bmd} \equiv \bme_{\bma}
(\hat{\Gamma}_{\bmb}{}^{\bmc}{}_{\bmd})-\bme_{\bmb}
(\hat{\Gamma}_{\bma}{}^{\bmc}{}_{\bmd}) + \hat{\Gamma}_{\bmf}{}^{\bmc}{}_{\bmd}
(\hat{\Gamma}_{\bmb}{}^{\bmf}{}_{\bma}-\hat{\Gamma}_{\bma}{}^{\bmf}{}_{\bmb})+
 \hat{\Gamma}_{\bmb}{}^{\bmf}{}_{\bmd}\hat{\Gamma}_{\bma}{}^{\bmc}{}_{\bmf}
-\hat{\Gamma}_{\bma}{}^{\bmf}{}_{\bmd}\hat{\Gamma}_{\bmb}{}^{\bmc}{}_{\bmf}.
\]
 The expression of the irreducible decomposition of  Riemann tensor
 $\hat{R}^{\bma}{}_{\bmb \bmc  \bmd}$  given by
\begin{equation}
 \label{algebraicCurvatureprevious}
\hat{\rho}^{\bma}{}_{\bmb \bmc \bmd}\equiv \Xi d^{\bma}{}_{\bmb \bmc \bmd} +
2 S_{\bmb[\bmc}{}^{\bma \bmf}{\hat{L}}_{\bmb]\bmf}
\end{equation}
 will be called the \emph{algebraic curvature}. In the last
 expression $d^{\bma}{}_{\bmb \bmc \bmd}$ represents the so-called
 \emph{rescaled Weyl tensor}, defined as 
\[
d^{\bma}{}_{\bmb \bmc \bmd}
 \equiv \Xi^{-1} C^{\bma}{}_{\bmb \bmc \bmd},
\]
 where $C^{\bma}{}_{\bmb \bmc \bmd}$ is the Weyl tensor of the metric
 $\bmg$. Despite the fact that the definition of the rescaled
 Weyl tensor may look singular at the conformal boundary, it can be
 shown that under suitable
 assumptions the tensor $d^a{}_{bcd}$ it is regular  even when
 $\Xi=0$. Finally, let us introduce a 1-form $\bmd$ defined by the
 relation 
\[
d_\bma \equiv \Xi f_\bma + \nabla_\bma \Xi.
\]

\medskip
 With the above definitions one can
 write the \emph{vacuum extended conformal Einstein field equations} as
\begin{equation} 
\label{frameXCEFEvanishZeroQuantities} 
\hat{\Sigma}_{\bma}{}^{\bmc}{}_{\bmb}=0 , \qquad 
\hat{\Xi}^{\bma}{}_{\bmb \bmc \bmd}=0, \qquad 
\hat{\Delta}_{\bmc \bmd \bmb}=0, \qquad 
\hat{\Lambda}_{\bmb\bmc\bmd} =0
\end{equation}
where
\begin{subequations}
\begin{eqnarray}
&& \hat{\Sigma}_{\bma}{}^{\bmc}{}_{\bmb} \equiv [\bme_{\bma},\bme_{\bmb}]-
(\hat{\Gamma}_{\bma}{}^{\bmc}{}_{\bmb}-\hat{\Gamma}_{\bmb}{}^{\bmc}{}_{\bma})
\bme_{\bmc},\label{frameXCEFE1} \\
&& \hat{\Xi}^{\bma}{}_{\bmb \bmc \bmd} \equiv  \hat{P}^{\bma}{}_{\bmb \bmc \bmd} 
-\hat{R}^{\bma}{}_{\bmb \bmc \bmd},\label{frameXCEFE2} \\
&& \hat{\Delta}_{\bmc \bmd \bmb}\equiv \hat{\nabla}_{\bmc}\hat{L}_{\bmb\bmd}
-\hat{\nabla}_{\bmd}\hat{L}_{\bmc \bmb}-d_{\bma}d^{\bma}{}_{\bmb \bmc \bmd} =0,
\label{frameXCEFE3}\\
&& \hat{\Lambda}_{\bmb\bmc\bmd} \equiv \hat{\nabla}_{\bma}d^{\bma}{}_{\bmb \bmc \bmd}
-f_{\bma}d^{\bma}{}_{\bmb \bmc \bmd} .\label{frameXCEFE4}
\end{eqnarray}
\end{subequations}
The fields $\hat{\Sigma}_{\bma}{}^{\bmc}{}_{\bmb}$,
$\hat{\Xi}^{\bma}{}_{\bmb \bmc \bmd}$, $\hat{\Delta}$ and
$\hat{\Lambda}_{\bmb\bmc\bmd}$ will be called
\emph{zero-quantities}. 

\medskip
The geometric meaning of the extended conformal field equations is the
following: $\hat{\Sigma}_{\bma}{}^{\bmc}{}_{\bmb}=0$ describes the
fact that the connection $\hat{\nabla}$ is torsion-free. The equation
$\hat{\Xi}^{\bma}{}_{\bmb \bmc \bmd}=0$ expresses the fact that the
algebraic and geometric curvature coincide. The equations
$\hat{\Delta}_{\bmc \bmd \bmb} =0$ and $\hat{\Lambda}_{\bmb\bmc\bmd}
=0$ encode the contracted second Bianchi identity. Observe that there
is no differential condition for neither the 1-form $\bmd$ nor the
conformal factor. In Section
\ref{ConformalGeodesicsAndGaussianSystems} it will be discussed how to
fix these fields by gauge conditions.

In order to relate the conformal equations
\eqref{frameXCEFEvanishZeroQuantities} to the vacuum Einstein field
equations \eqref{EinsteinFieldEquations} one introduces the constraints
\begin{equation}
\label{XCEFEconstraints}
\delta_{\bma}=0, \qquad \gamma_{\bma \bmb}=0, \qquad \zeta_{\bma
  \bmb}=0
\end{equation}
encoded in the \emph{supplementary zero-quantities}
\begin{subequations}
\begin{eqnarray*}
&& \delta_{\bma} \equiv d_{\bma}-\Xi
  f_{\bma}-\hat{\nabla}\Xi \label{XCEFEconst1}, \\ && \gamma_{\bma
    \bmb}\equiv \frac{1}{6}\lambda \Xi^{-2}\eta_{\bma
    \bmb}-\hat{\nabla}_{\bma}(\Xi^{-1}d_{\bmb})-\Xi^{-2}S_{\bma
    \bmb}{}^{\bmc \bmd}d_{\bmc}d_{\bmd}, \label{XCEFEconst2} \\ &&
  \zeta_{\bma \bmb} \equiv \hat{L}_{[\bma
      \bmb]}-\hat{\nabla}_{\bma}f_{\bmb} \label{XCEFEconst3}.
 \end{eqnarray*}
\end{subequations} 
 The first equation in \eqref{XCEFEconstraints} encodes the definition
of the 1-form $d_{\bma}$; the second equation in
\eqref{XCEFEconstraints} arises from the transformation law between
the Schouten tensor $\hat{L}_{\bma\bmb}$ of $\hat{\bmnabla}$ and the
physical Schouten tensor
$\tilde{L}_{\bma\bmb}=\frac{1}{6}\tilde{\eta}_{\bma\bmb}$ determined
by the Einstein field equations \eqref{EinsteinFieldEquations};  the last
equation in \eqref{XCEFEconstraints} relates the
antisymmetry of the Schouten tensor $\hat{L}_{ab}$ to the derivative
of the 1-form $f_a$. 

\medskip
The precise relation between the extended conformal Einstein field
equations and the Einstein field equations is given by the following
lemma:

\begin{lemma}
\label{lemma:XCEFE-EFE}
Let
$(\bme_{\bma}{}^{a},\hat{\Gamma}_{\bma}{}^{\bmb}{}_{\bmc},\hat{L}_{\bma
  \bmb},d^{\bma}{}_{\bmb \bmc \bmd})$ denote a solution to the vacuum
extended conformal Einstein field equations \eqref{frameXCEFEvanishZeroQuantities}
 for some choice of gauge fields $(\Xi,d_{\bma})$ satisfying the
 constraint equations
\eqref{XCEFEconstraints}.  Assume, further,
that 
\[
\Xi\neq 0 \qquad  \text{and} \qquad 
det(\eta^{\bma \bmb}\bme_{\bma}\bme_{\bmb}) \neq 0
\]
on an open subset $ \mathcal{U}\subset \tilde{\mathcal{M}}$. Then
\[
\tilde{\bmg}= \Theta^{-2}\eta_{\bma \bmb}\bmomega^{\bma}\otimes
\bmomega^{\bmb},
\]
 where $\{\bmomega^{\bma}\}$ is the coframe dual to
$\{\bme_{\bma}\}$, is a solution to the vacuum Einstein field equations
\eqref{EinsteinFieldEquations} on
$\mathcal{U}$.
\end{lemma}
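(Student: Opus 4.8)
The plan is to verify that, under the non-degeneracy hypotheses, the abstract solution $(\bme_\bma{}^a,\hat{\Gamma}_\bma{}^\bmb{}_\bmc,\hat{L}_{\bma\bmb},d^\bma{}_{\bmb\bmc\bmd})$ is genuinely a collection of geometric objects on $\mathcal{U}$, and then to undo the conformal rescaling so that the reconstructed physical metric inherits the Einstein equations from the constraint $\gamma_{\bma\bmb}=0$. First, since $\det(\eta^{\bma\bmb}\bme_\bma\bme_\bmb)\neq0$ on $\mathcal{U}$, the frame $\{\bme_\bma\}$ is non-degenerate there, its dual coframe $\{\bmomega^\bma\}$ is well defined, and $\bmg=\eta_{\bma\bmb}\bmomega^\bma\otimes\bmomega^\bmb$ is a genuine Lorentzian metric. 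The vanishing of $\hat{\Sigma}_\bma{}^\bmc{}_\bmb$ then says that the antisymmetrised connection coefficients reproduce the frame commutators $[\bme_\bma,\bme_\bmb]$, so the $\hat{\Gamma}_\bma{}^\bmc{}_\bmb$ are the coefficients of a genuine torsion-free connection $\hat{\bmnabla}$; as these coefficients are those of a Weyl connection, $\hat{\bmnabla}$ satisfies \eqref{WeylConnectionf} for the 1-form $f_\bma$. Subtracting $S_{\bma\bmb}{}^{\bmc\bmd}f_\bmd$ as in \eqref{WeylConectionCoefsToLeviCivitaConnectionCoef} recovers the Levi-Civita connection $\bmnabla$ of $\bmg$.

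Next I would extract the curvature. The equation $\hat{\Xi}^\bma{}_{\bmb\bmc\bmd}=0$ asserts that the geometric curvature $\hat{P}^\bma{}_{\bmb\bmc\bmd}$, built from derivatives of $\hat{\Gamma}$, equals the algebraic curvature $\Xi d^\bma{}_{\bmb\bmc\bmd}+2S_{\bmb[\bmc}{}^{\bma\bmf}\hat{L}_{\bmd]\bmf}$. Hence the Riemann tensor of $\hat{\bmnabla}$ has trace-free part $\Xi d^\bma{}_{\bmb\bmc\bmd}$, identifying $d^\bma{}_{\bmb\bmc\bmd}$ with the rescaled Weyl tensor $\Xi^{-1}C^\bma{}_{\bmb\bmc\bmd}$, while $\hat{L}_{\bma\bmb}$ is the Schouten tensor of $\hat{\bmnabla}$. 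The supplementary constraint $\zeta_{\bma\bmb}=0$ ties the antisymmetric part $\hat{L}_{[\bma\bmb]}$ to $\hat{\nabla}_\bma f_\bmb$, which is precisely the condition ensuring that, once one descends to the Levi-Civita connection, the resulting Schouten tensor comes out symmetric. The Bianchi-type equations $\hat{\Delta}_{\bmc\bmd\bmb}=0$ and $\hat{\Lambda}_{\bmb\bmc\bmd}=0$ encode the contracted second Bianchi identity and guarantee that these geometric identifications are mutually consistent, so that no further algebraic obstruction arises.

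The decisive step uses $\Xi\neq0$. I would define the physical metric $\tilde{\bmg}=\Xi^{-2}\bmg$, whose frame components are $\tilde{g}_{\bma\bmb}=\Xi^{-2}\eta_{\bma\bmb}$, together with its Levi-Civita connection $\tilde{\bmnabla}$, related to $\bmnabla$ through $\Upsilon_\bma=\Xi^{-1}\nabla_\bma\Xi$. The constraint $\delta_\bma=0$ gives $d_\bma=\Xi f_\bma+\nabla_\bma\Xi$, whence $\Xi^{-1}d_\bma=f_\bma+\Upsilon_\bma=\tilde{f}_\bma$, exactly the 1-form for which $\hat{\bmnabla}$ is a Weyl connection of $\tilde{\bmg}$. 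Inserting $\tilde{f}_\bma=\Xi^{-1}d_\bma$ into the Weyl--Schouten transformation law \eqref{WeylSchoutenDefinition}, applied now between $\hat{L}_{\bma\bmb}$ and the physical Schouten tensor $\tilde{L}_{\bma\bmb}$, and using $\gamma_{\bma\bmb}=0$ to replace the combination of $\hat{\nabla}_\bma(\Xi^{-1}d_\bmb)$ with the quadratic term $\Xi^{-2}S_{\bma\bmb}{}^{\bmc\bmd}d_\bmc d_\bmd$, the computation should collapse to $\tilde{L}_{\bma\bmb}=\tfrac{1}{6}\lambda\,\Xi^{-2}\eta_{\bma\bmb}=\tfrac{1}{6}\lambda\,\tilde{g}_{\bma\bmb}$. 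Since in four dimensions $\tilde{L}_{\bma\bmb}=\tfrac{1}{6}\lambda\,\tilde{g}_{\bma\bmb}$ is equivalent to $\tilde{R}_{\bma\bmb}=\lambda\,\tilde{g}_{\bma\bmb}$, this establishes the vacuum Einstein field equations \eqref{EinsteinFieldEquations} on $\mathcal{U}$.

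I expect the main obstacle to be exactly this last contraction: correctly tracking the two-stage conformal behaviour of the Schouten tensor---first from the Weyl connection $\hat{\bmnabla}$ down to the Levi-Civita connection of $\bmg$, then up to the physical $\tilde{\bmnabla}$---and verifying that the algebraic structure of $\gamma_{\bma\bmb}$ has been engineered so that every derivative and quadratic term in $f_\bma$ and $\Upsilon_\bma$ cancels, leaving only the $\lambda$-term. The bookkeeping with the operator $S_{\bma\bmb}{}^{\bmc\bmd}$ and the careful separation of $\hat{L}_{(\bma\bmb)}$ from $\hat{L}_{[\bma\bmb]}$ is where sign and factor errors are easiest to make, so I would carry out this contraction explicitly rather than quote a general identity.
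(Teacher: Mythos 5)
Your proposal is correct and follows essentially the same route as the proof the paper defers to (the paper itself gives no argument for Lemma~\ref{lemma:XCEFE-EFE}, only the citation to \cite{Fri03c,CFEBook}): frame non-degeneracy reconstructs $\bmg$ and, via $\hat{\Sigma}_{\bma}{}^{\bmc}{}_{\bmb}=0$, the torsion-free Weyl connection $\hat{\bmnabla}$; the equation $\hat{\Xi}^{\bma}{}_{\bmb\bmc\bmd}=0$ identifies $\hat{L}_{\bma\bmb}$ with the Schouten tensor of $\hat{\bmnabla}$; the constraint $\delta_{\bma}=0$ identifies $\Xi^{-1}d_{\bma}$ with the 1-form $\tilde{f}_{\bma}$ relating $\hat{\bmnabla}$ to the Levi-Civita connection of $\tilde{\bmg}=\Xi^{-2}\bmg$ (the $\Theta^{-2}$ in the statement is a typo for $\Xi^{-2}$); and the Weyl--Schouten transformation law closed with $\gamma_{\bma\bmb}=0$ yields $\tilde{L}_{\bma\bmb}=\tfrac{1}{6}\lambda\,\tilde{g}_{\bma\bmb}$, which in four dimensions is equivalent to \eqref{EinsteinFieldEquations}.

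One concrete remark, sitting exactly at the step you flagged as delicate: the collapse you describe does \emph{not} occur with $\gamma_{\bma\bmb}$ as printed in the paper, because the printed expression contains no $\hat{L}_{\bma\bmb}$ term and carries coefficient $1$ rather than $\tfrac{1}{2}$ on the quadratic term. Substituting the printed constraint into the transformation law $\tilde{L}_{\bma\bmb}=\hat{L}_{\bma\bmb}+\hat{\nabla}_{\bma}\big(\Xi^{-1}d_{\bmb}\big)+\tfrac{1}{2}\Xi^{-2}S_{\bma\bmb}{}^{\bmc\bmd}d_{\bmc}d_{\bmd}$ leaves the residue $\hat{L}_{\bma\bmb}-\tfrac{1}{2}\Xi^{-2}S_{\bma\bmb}{}^{\bmc\bmd}d_{\bmc}d_{\bmd}$, not zero. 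The form consistent with the paper's own description of $\gamma_{\bma\bmb}$ (it ``arises from the transformation law between $\hat{L}_{\bma\bmb}$ and the physical Schouten tensor'') and with the cited references is
\[
\gamma_{\bma\bmb}=\hat{L}_{\bma\bmb}-\tfrac{1}{6}\lambda\Xi^{-2}\eta_{\bma\bmb}
+\hat{\nabla}_{\bma}\big(\Xi^{-1}d_{\bmb}\big)
+\tfrac{1}{2}\Xi^{-2}S_{\bma\bmb}{}^{\bmc\bmd}d_{\bmc}d_{\bmd},
\]
which is what your argument implicitly uses; your stated intention to carry out the contraction explicitly rather than quote an identity is precisely what would expose this misprint. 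A minor economy: the Bianchi-type equations $\hat{\Delta}_{\bmc\bmd\bmb}=0$ and $\hat{\Lambda}_{\bmb\bmc\bmd}=0$ are not actually needed for this direction of the lemma---they enter the subsidiary/propagation argument of Lemma~\ref{Thm:PropagationConstraints}---so invoking them for ``consistency'' is harmless but inessential.
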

 The proof of this lemma can be found in  \cite{Fri03c,CFEBook}.

\subsubsection{Conformal geodesics and conformal Gaussian systems}
\label{ConformalGeodesicsAndGaussianSystems}

A conformal geodesic on a spacetime
$(\tilde{\mathcal{M}},\tilde{\bmg})$ consists of a pair
$(x(\tau),\bmbeta(\tau))$ where $x(\tau)$ is a curve with tangent
$\dot{\bmx}(\tau)$ and $\bm\beta(\tau)$ is a 1-form defined along
$x(\tau)$ satisfying the \emph{conformal geodesic equations}
\begin{subequations}
\begin{eqnarray}
&&\dot{x}^{c}\tilde{\nabla}_{c}\dot{x}^{a} = 
  -\dot{x}^{d}\dot{x}^{b}S_{db}{}^{af}\beta_{f},\label{tildeConformalGeodesicEquation1}
  \\ &&\dot{x}^c\tilde{\nabla}_{c}\beta_{a} = -\tfrac{1}{2}\dot{x}^{c}
  S_{c a}{}^{b d}\beta_{b}\beta_{d} + \tilde{L}_{ca}\dot{x}^{c},\label{tildeConformalGeodesicEquation2}
\end{eqnarray}
\end{subequations}
where $\tilde{L}_{ca}$ denotes the Schouten tensor of
$\tilde{\nabla}$ and
\[
S_{ab}{}^{cd} \equiv \delta_a{}^c \delta_b{}^d + \delta_a{}^d
\delta_b{}^c - \tilde{g}_{ab} \tilde{g}^{cd}.
\]
In addition, it is convenient to consider a \emph{Weyl
propagated frame} ---that is,  a frame field $\{\bme_{\bma}{}^a\}$ satisfying
\[ 
\dot{x}^c\tilde{\nabla}_{c}\bme_{\bma}{}^a = -S_{cd}{}^{af}
\bme_{\bma}{}^d \dot{x}^c\beta_{f}. 
\]

 The definition of conformal geodesics is motivated by the
transformation laws of equations
\eqref{tildeConformalGeodesicEquation1}-\eqref{tildeConformalGeodesicEquation2}
under conformal rescalings and transitions to Weyl
connections. More precisely, given an arbitrary 1-form
$\wideparen{\bmf}$ one can construct a Weyl connection
$\wideparen{\nabla}$ as in equation
\eqref{WeylConnectionf}. Then, defining
$\wideparen{\bmbeta}\equiv \bmbeta-\wideparen{\bmf}$ the pair
$(x(\tau),\wideparen{\bmbeta}(\tau))$ will satisfy the equations
\begin{eqnarray*}
&&\dot{x}^{c}\wideparen{\nabla}_{c}\dot{x}^{a} = 
  -\dot{x}^{d}\dot{x}^{b}S_{db}{}^{af}\wideparen{\beta}_{f},
  \\ &&\dot{x}^c\wideparen{\nabla}_{c}\wideparen{\beta}_{a} = -\tfrac{1}{2}\dot{x}^{c}
  S_{c a}{}^{b d}\wideparen{\beta}_{b}\wideparen{\beta}_{d} +
  \wideparen{L}_{ca}\dot{x}^{c},
\end{eqnarray*}
 where $\wideparen{L}_{ca}$ is the Schouten tensor of the connection
$\wideparen{\bmnabla}$ as defined in equation
\eqref{WeylSchoutenDefinition}. If one chooses a Weyl
connection $\hat{\bmnabla}$ whose defining 1-form $\bmf$ coincides with
the 1-form $\bmbeta$ of the $\hat{\bmnabla}$-conformal geodesic equations
\eqref{tildeConformalGeodesicEquation1}-\eqref{tildeConformalGeodesicEquation2},
then the conformal geodesic equations reduce to
\begin{equation} 
\label{protoGauge}
  \dot{x}^{c}\hat{\nabla}_{c}\dot{x}^{a}=0, \qquad
  \hat{L}_{ab}\dot{x}^{b}=0.
\end{equation}
 Similarly, the Weyl propagation of the frame becomes
\begin{equation}\label{weylPropagationFrame}
\dot{x}^{c}\hat{\nabla}_{c}\bme_{\bma}{}^{a}=0.
\end{equation}

\medskip
 The conformal geodesics
equations admit more general reparametrisations than the usual affine
parametrisation of metric geodesics.  This is summarised in the
following lemma:
\begin{lemma}
\label{lemmaCGreparametrisations}
The admissible reparametrisations mapping (non-null) conformal
geodesics into (non-null) conformal geodesics are given by fractional
transformations of the form
\[ 
\tau \mapsto \frac{a \tau+b}{c \tau + d}
\]
where $a,b,c,d \in \mathbb{R}$.
\end{lemma}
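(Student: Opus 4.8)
The plan is to determine directly how the pair of conformal geodesic equations \eqref{tildeConformalGeodesicEquation1}--\eqref{tildeConformalGeodesicEquation2} behaves under a change of parameter $\tau\mapsto\sigma$, with $\tau=\tau(\sigma)$ and $\tau_\sigma\equiv \mathrm{d}\tau/\mathrm{d}\sigma\neq0$, and then to read off the constraint that an admissible reparametrisation must satisfy. Writing the new tangent as $v^a=\tau_\sigma\dot{x}^a$, the first step is to substitute into the left-hand side of \eqref{tildeConformalGeodesicEquation1} and use the original equation together with the algebraic identity $\dot{x}^d\dot{x}^b S_{db}{}^{af}\beta_f=2\dot{x}^a(\dot{x}^f\beta_f)-(\dot{x}^c\dot{x}_c)\beta^a$. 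This produces an extra term proportional to $\tau_{\sigma\sigma}$ along $\dot{x}^a$, so that \eqref{tildeConformalGeodesicEquation1} is again of conformal-geodesic type precisely for a new $1$-form $\beta'=\beta+\delta\beta$. Solving the resulting linear algebraic relation for $\delta\beta$ shows that it is forced to be proportional to the (lowered) tangent, $\delta\beta_a\propto\dot{x}_a$, with a coefficient built from $\tau_{\sigma\sigma}$ and $\dot{x}^c\dot{x}_c$. It is here that the non-null hypothesis $\dot{x}^c\dot{x}_c\neq0$ is essential: it guarantees that the relevant linear map $\delta\beta\mapsto 2v^a(v^f\delta\beta_f)-(v^cv_c)\delta\beta^a$ is invertible, so that $\delta\beta$ exists and is uniquely determined.

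With $\beta'=\beta+\delta\beta$ fixed by the first equation, the second step is to impose \eqref{tildeConformalGeodesicEquation2} on the reparametrised pair $(v,\beta')$. Expanding $\beta'_b\beta'_d=\beta_b\beta_d+2\beta_{(b}\delta\beta_{d)}+\delta\beta_b\delta\beta_d$ and invoking \eqref{tildeConformalGeodesicEquation2} for the background pair, the genuinely dynamical contributions --- the quadratic-in-$\beta$ terms and the Schouten term $\tilde{L}_{ca}v^c$ --- cancel against their counterparts, leaving a condition written purely in terms of $\delta\beta$. To reduce it further one needs the evolution of the norm along the curve, $\dot{x}^c\tilde{\nabla}_c(\dot{x}^a\dot{x}_a)=-2(\dot{x}^a\dot{x}_a)(\dot{x}^f\beta_f)$, obtained by contracting \eqref{tildeConformalGeodesicEquation1} with $\dot{x}_a$; this is exactly what is required to dispose of the $\dot{x}^c\dot{x}_c$-dependence hidden inside $\delta\beta$. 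After these simplifications the surviving relation is an ordinary differential equation for $\tau(\sigma)$ alone.

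The expected outcome of this reduction is the vanishing of the Schwarzian derivative,
\[
\frac{\tau_{\sigma\sigma\sigma}}{\tau_\sigma}-\frac{3}{2}\left(\frac{\tau_{\sigma\sigma}}{\tau_\sigma}\right)^2=0,
\]
whose general solution is the fractional linear family $\tau=(a\sigma+b)/(c\sigma+d)$ with $ad-bc\neq0$; this is the classical characterisation of Möbius maps by their Schwarzian and supplies the stated normal form. The converse inclusion --- that every such fractional transformation does map (non-null) conformal geodesics to (non-null) conformal geodesics --- then follows either by running the computation backwards or by a direct verification, using that these reparametrisations form a group under composition. The main obstacle is the bookkeeping in the second step: one must confirm that every curvature- and $\beta$-dependent term cancels so that a genuine third-order equation in $\tau(\sigma)$ survives, paying particular attention to the various contractions of $S_{ab}{}^{cd}$ and to the fact that the induced Weyl Schouten tensor need not be symmetric, so that index positions cannot be interchanged without care.
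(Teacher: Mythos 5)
Your proposal is correct, and it is essentially the same argument as the one the paper relies on: the paper gives no in-text proof of Lemma \ref{lemmaCGreparametrisations}, deferring to \cite{Fri03c,Tod02,CFEBook}, and the computation in those references is precisely the one you outline. Your key steps check out. With $v^a=\tau_\sigma\dot x^a$, the only new term in the first equation is $\tau_{\sigma\sigma}\dot x^a$, and since the map $\delta\beta^a\mapsto 2v^a(v^f\delta\beta_f)-(v^cv_c)\delta\beta^a$ has eigenvalues $\pm\,(v^cv_c)$ (on the span of $v^a$ and on its $\bmg$-orthogonal complement), non-nullness makes it invertible and forces
\[
\delta\beta_a=\mu\,\dot x_a,\qquad \mu=-\frac{\tau_{\sigma\sigma}}{\tau_\sigma^{2}\,(\dot x^c\dot x_c)}.
\]
Substituting into the second equation, the Schouten terms and all quadratic-in-$\beta$ terms cancel as you claim, leaving
\[
\dot\mu-2\mu\,(\beta_c\dot x^c)-\tfrac12\,\mu^{2}(\dot x^c\dot x_c)=0,
\]
and inserting the expression for $\mu$ while using $\tfrac{d}{d\tau}(\dot x^c\dot x_c)=-2(\dot x^c\dot x_c)(\beta_f\dot x^f)$ removes every trace of $\beta$, yielding exactly your Schwarzian condition; Möbius maps are its general solution, and the converse is the same computation run backwards. (The worry about non-symmetry of the Weyl--Schouten tensor is moot here: only the symmetric Levi-Civita Schouten tensor $\tilde L_{ab}$ enters, and it cancels identically.)

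One concrete caveat about the bookkeeping you were rightly concerned with: the cancellation of the terms proportional to $\beta_a$ in your second step holds for the conventional form of the second conformal geodesic equation,
\[
\dot x^c\tilde\nabla_c\beta_a-\tfrac12\,\dot x^{c}S_{ca}{}^{bd}\beta_b\beta_d=\tilde L_{ca}\dot x^c,
\]
whose quadratic term contributes $+(\beta_c\dot x^c)\beta_a-\tfrac12(\beta_c\beta^c)\dot x_a$ to the right-hand side. Equation \eqref{tildeConformalGeodesicEquation2} as literally printed in the paper carries this quadratic term with the opposite sign (the factor $-\tfrac12\dot x^cS_{ca}{}^{bd}\beta_b\beta_d$ appears on the right-hand side of the equality), and with that sign the terms $2\mu(\dot x^c\dot x_c)\beta_a$ do \emph{not} cancel: one would be forced to $\mu=0$, i.e.\ no nontrivial reparametrisations at all, contradicting the lemma. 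The printed sign is also inconsistent with the reduction to \eqref{protoGauge} asserted immediately below it, since covariance of the equations under $\bmbeta\mapsto\bmbeta-\wideparen{\bmf}$, $\tilde{\bmnabla}\mapsto\wideparen{\nabla}$ holds only for the conventional sign; it should therefore be read as a typo. Your cancellations are correct for the intended equation, but be aware that they cannot be reproduced from the equation exactly as displayed in the paper.
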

 The proof of this lemma can be found in \cite{Fri03c} ---see also
 \cite{Tod02,CFEBook}. Conformal geodesics allow to single out a
 canonical representative of
 the conformal class $[\tilde{\bmg}]$. This observation is contained
 in the following key result:

\begin{lemma}
\label{LemmaCF}
Let $(\tilde{\mathcal{M}},\tilde{\bmg})$ be a spacetime where
$\tilde{\bmg}$ is a solution to the vacuum Einstein field equations
\eqref{EinsteinFieldEquations}. Moreover, let
$(x(\tau),\bmbeta(\tau))$ satisfy the conformal geodesic equations
\eqref{tildeConformalGeodesicEquation1}-\eqref{tildeConformalGeodesicEquation2}
and let $\{\bme_{\bma}\}$ denote a Weyl propagated $\bmg$-orthonormal
frame along $x(\tau)$ with
\[
\bmg\equiv\Theta^2\tilde{\bmg},
\]
 such that 
\[
\bmg(\dot{\bmx},\dot{\bmx})=1.
\]
Then the conformal factor $\Theta$ is given, along $x(\tau)$, by
\begin{equation}
\label{canonicalCF}
\Theta(\tau)= \Theta_{\star} + \dot{\Theta}_{\star}(\tau-\tau_{\star})
+ \frac{1}{2}\ddot{\Theta}_{\star}(\tau-\tau_{\star})^2
\end{equation}
where the coefficients $\Theta_{\star}\equiv \Theta(\tau_{\star}),
\dot{\Theta}_{\star}\equiv \dot{\Theta}(\tau_{\star})$ and $
\ddot{\Theta}_{\star}\equiv \ddot{\Theta}(\tau_{\star})$ are constant
along the conformal geodesic and satisfy the constraints
\begin{equation}
\dot{\Theta}_{\star}= \langle \bmbeta_{\star},\dot{\bmx}_{\star}\rangle
 \Theta_{\star},\qquad \Theta_{\star}\ddot{\Theta}_{\star}=
 \frac{1}{2}\tilde{\bmg}^{\sharp}(\bmbeta_{\star},\bmbeta_{\star})+
 \frac{1}{6}\lambda.
\label{LemmaCFConstraint}
\end{equation}
 Moreover, along each conformal geodesic
\[ 
\Theta \beta_{\bm0} = \dot{\Theta}, \qquad  \qquad \Theta
\beta_{\bmi}=\Theta_{\star} \beta_{\bmi\star},
\]
where $\beta_{\bma}\equiv \langle \bmbeta,\bme_{\bma}\rangle$. 
\end{lemma}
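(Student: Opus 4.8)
The plan is to show that along the conformal geodesic the conformal factor $\Theta$ satisfies $\dddot{\Theta}=0$, so that it is forced to be the quadratic polynomial \eqref{canonicalCF}, and then to extract the coefficients and the frame relations from the conformal geodesic equations. I would work in the physical picture, using that $\bmg=\Theta^2\tilde{\bmg}$ together with $\bmg(\dot{\bmx},\dot{\bmx})=1$ is equivalent to $\tilde{\bmg}(\dot{\bmx},\dot{\bmx})=\Theta^{-2}$, and that the frame may be taken adapted with $\bme_{\bm0}=\dot{\bmx}$ (this identification is preserved since $\bme_{\bm0}$ obeys \eqref{weylPropagationFrame}, which for $\bme_{\bm0}=\dot{\bmx}$ coincides with \eqref{tildeConformalGeodesicEquation1}). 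Differentiating $\tilde{\bmg}(\dot{\bmx},\dot{\bmx})=\Theta^{-2}$ along $x(\tau)$, using metric compatibility of $\tilde{\nabla}$ and substituting \eqref{tildeConformalGeodesicEquation1}, the right-hand side contracts with $\dot{\bmx}$ to give $\tfrac{d}{d\tau}\tilde{\bmg}(\dot{\bmx},\dot{\bmx})=-2\langle\bmbeta,\dot{\bmx}\rangle\,\tilde{\bmg}(\dot{\bmx},\dot{\bmx})$, whence $-2\Theta^{-3}\dot{\Theta}=-2\Theta^{-2}\langle\bmbeta,\dot{\bmx}\rangle$ and therefore $\dot{\Theta}=\Theta\langle\bmbeta,\dot{\bmx}\rangle=\Theta\beta_{\bm0}$. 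This is the first ``moreover'' relation, and evaluated at $\tau_\star$ it is the first constraint in \eqref{LemmaCFConstraint}.

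Next I would iterate. Writing $\beta_{\bma}=\bme_{\bma}{}^a\beta_a$ and differentiating $\beta_{\bm0}=\dot{x}^a\beta_a$, one uses \eqref{tildeConformalGeodesicEquation1} on the factor $\dot{x}^a$ and \eqref{tildeConformalGeodesicEquation2} on the factor $\beta_a$, together with the orthonormality relation $\tilde{g}_{\bm0\bmi}=\Theta^{-2}\eta_{\bm0\bmi}=0$ and the vacuum value $\tilde{L}_{\bm0\bm0}=\tfrac{1}{6}\lambda\Theta^{-2}$ coming from \eqref{EinsteinFieldEquations}. Combining the resulting expression for $\dot{\beta}_{\bm0}$ with $\ddot{\Theta}=\dot{\Theta}\beta_{\bm0}+\Theta\dot{\beta}_{\bm0}$ and the identity $\tilde{\bmg}^{\sharp}(\bmbeta,\bmbeta)=\Theta^{2}\big(\beta_{\bm0}^{2}-\delta^{\bmi\bmj}\beta_{\bmi}\beta_{\bmj}\big)$, which follows from $\tilde{g}^{\bma\bmb}=\Theta^{2}\eta^{\bma\bmb}$, the quadratic terms organise into
\begin{equation*}
\ddot{\Theta}=\Theta^{-1}\Big(\tfrac{1}{2}\tilde{\bmg}^{\sharp}(\bmbeta,\bmbeta)+\tfrac{1}{6}\lambda\Big),
\end{equation*}
which multiplied by $\Theta$ is precisely the second constraint in \eqref{LemmaCFConstraint}. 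Differentiating this relation once more, the derivative of the $\tilde{\bmg}^{\sharp}(\bmbeta,\bmbeta)$ term (again evaluated from \eqref{tildeConformalGeodesicEquation2}) cancels against the derivatives of the $\Theta^{-1}$ prefactor and of the $\lambda$ term, precisely because $\tilde{L}_{ab}$ is covariantly constant; the net result is $\dddot{\Theta}=0$. Integrating twice yields \eqref{canonicalCF} with $\Theta_\star,\dot{\Theta}_\star,\ddot{\Theta}_\star$ constant along the curve.

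Finally, for the spatial components I would differentiate $\Theta\beta_{\bmi}$. The Weyl propagation \eqref{weylPropagationFrame} contributes $-2\beta_{\bm0}\beta_{\bmi}$ to $\dot{\beta}_{\bmi}$, and adding the contribution of \eqref{tildeConformalGeodesicEquation2} (whose cross terms $\tilde{g}_{\bm0\bmi}$ and $\tilde{L}_{\bm0\bmi}$ again vanish) leaves $\dot{\beta}_{\bmi}=-\beta_{\bm0}\beta_{\bmi}$; together with $\dot{\Theta}=\Theta\beta_{\bm0}$ this gives $\tfrac{d}{d\tau}(\Theta\beta_{\bmi})=0$, i.e.\ $\Theta\beta_{\bmi}=\Theta_\star\beta_{\bmi\star}$. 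I expect the main obstacle to be the bookkeeping in the second step: one must separate cleanly the contribution of the Weyl propagation of the frame from that of the $\bmbeta$-equation, and carry the conformal weights $\tilde{g}^{\bma\bmb}=\Theta^{2}\eta^{\bma\bmb}$ relating the frame components $\beta_{\bma}$ to the invariant $\tilde{\bmg}^{\sharp}(\bmbeta,\bmbeta)$, since only after this substitution do the quadratic terms collapse to the stated form. The conceptual crux is the covariant constancy of $\tilde{L}_{ab}$ enforced by \eqref{EinsteinFieldEquations}: this is the single place where the Einstein equations, rather than the mere conformal geodesic structure, are used, and it is what produces the cancellation giving $\dddot{\Theta}=0$.
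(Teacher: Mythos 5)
Your strategy is the right one and coincides with the proof in the references the paper cites for this result (\cite{Fri95,CFEBook}); the paper itself gives no proof of Lemma \ref{LemmaCF}, and the standard argument is exactly yours: derive $\dot{\Theta}=\Theta\beta_{\bm0}$ from the normalisation and \eqref{tildeConformalGeodesicEquation1}, obtain $\Theta\ddot{\Theta}=\tfrac{1}{2}\tilde{\bmg}^{\sharp}(\bmbeta,\bmbeta)+\tfrac{1}{6}\lambda$, show $\dddot{\Theta}=0$ using that $\tilde{L}_{ab}=\tfrac{1}{6}\lambda\tilde{g}_{ab}$ is parallel, and conserve $\Theta\beta_{\bmi}$ via the Weyl propagation of the frame.

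One substantive caveat, however. Your intermediate identities hold for the standard form of the $\bmbeta$-equation,
\[
\dot{x}^{c}\tilde{\nabla}_{c}\beta_{a}=\tfrac{1}{2}\dot{x}^{c}S_{ca}{}^{bd}\beta_{b}\beta_{d}+\tilde{L}_{ca}\dot{x}^{c},
\qquad\text{equivalently}\qquad
\dot{x}^{c}\tilde{\nabla}_{c}\beta_{a}=\langle\bmbeta,\dot{\bmx}\rangle\beta_{a}-\tfrac{1}{2}\tilde{\bmg}^{\sharp}(\bmbeta,\bmbeta)\,\tilde{g}_{ab}\dot{x}^{b}+\tilde{L}_{ca}\dot{x}^{c},
\]
whereas \eqref{tildeConformalGeodesicEquation2} as printed carries the opposite sign, $-\tfrac{1}{2}\dot{x}^{c}S_{ca}{}^{bd}\beta_{b}\beta_{d}$. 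If you use the printed sign literally, the collapse you claim in your second step does not occur: one finds instead $\Theta\ddot{\Theta}=-2\dot{\Theta}^{2}+\tfrac{3}{2}\tilde{\bmg}^{\sharp}(\bmbeta,\bmbeta)+\tfrac{1}{6}\lambda$, the spatial relation degenerates to $\tfrac{d}{d\tau}(\Theta\beta_{\bmi})=-2\dot{\Theta}\beta_{\bmi}\neq 0$, and the lemma itself fails. A concrete check: in Minkowski space ($\lambda=0$) with $\dot{\bmx}_{\star}=\bmpartial_{t}$ and $\bmbeta_{\star}=\alpha\,\mathbf{d}t$, the printed equations integrate to $\Theta=(1+\tfrac{3}{2}\alpha\tau)^{2/3}$, which is not a quadratic polynomial, while the corrected sign gives $\Theta=(1+\tfrac{1}{2}\alpha\tau)^{2}$, in agreement with \eqref{canonicalCF} and \eqref{LemmaCFConstraint}. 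So your computation is correct precisely because it implicitly adopts the sign convention of \cite{Fri95,CFEBook} rather than the equation as printed; since you quote the paper's equation, you should state this sign correction explicitly, as otherwise the bookkeeping in your second and fourth steps cannot close.
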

 The proof of this Lemma and a further discussion of the
properties of conformal geodesics can be found in \cite{Fri95, CFEBook}.

\medskip
For spacetimes with a spacelike conformal boundary the relation between 
metric geodesics and conformal geodesics is particularly simple.
This observation is the content of the following: 

\begin{lemma}\label{ReparametrisationMetricToConformalLeavingScri}
Any conformal geodesic leaving $\mathscr{I}^{+}$ ($\mathscr{I}^{-}$)
orthogonally into the past (future) is up to reparametrisation a
timelike future (past) complete geodesic for the physical metric
$\tilde{\bmg}$. The reparametrisation required is determined by
\begin{equation}
\label{ReparametrisationConformalGeodesics}
\frac{\mbox{\em d}\tilde{\tau}}{\mbox{\em d}\tau}= \frac{1}{\Theta(\tau)}
\end{equation}
where $\tilde{\tau}$ is the $\tilde{\bmg}$-proper time and $\tau$ is
the $\bmg$-proper time and  $\bmg=\Theta^2\tilde{\bmg}$.
\end{lemma}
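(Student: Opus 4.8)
The plan is to prove the statement in three stages: the reparametrisation formula, the geodesic property, and then timelikeness together with completeness. Throughout I use Lemma \ref{LemmaCF}, which supplies the canonical conformal factor $\Theta(\tau)$ and the relations $\Theta\beta_{\bm0}=\dot{\Theta}$ and $\Theta\beta_{\bmi}=\Theta_{\star}\beta_{\bmi\star}$ along the conformal geodesic, the frame $\{\bme_{\bma}\}$ being $\bmg$-orthonormal, Weyl propagated and adapted so that $\bme_{\bm0}{}^{a}=\dot{x}^{a}$. First I would establish \eqref{ReparametrisationConformalGeodesics}. Writing $\tilde{\dot{x}}{}^{a}$ for the tangent parametrised by the physical proper time $\tilde\tau$ and $\dot{x}^{a}$ for the one parametrised by the $\bmg$-proper time $\tau$, one has $\tilde{\dot{x}}{}^{a}=(\mbox{d}\tau/\mbox{d}\tilde\tau)\dot{x}^{a}$. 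Imposing $\tilde{\bmg}(\tilde{\dot{\bmx}},\tilde{\dot{\bmx}})=1$ and $\bmg(\dot{\bmx},\dot{\bmx})=1$ together with $\bmg=\Theta^2\tilde{\bmg}$ yields $(\mbox{d}\tau/\mbox{d}\tilde\tau)^2\Theta^{-2}=1$, hence $\mbox{d}\tilde\tau/\mbox{d}\tau=\Theta^{-1}$ and $\tilde{\dot{x}}{}^{a}=\Theta\dot{x}^{a}$.

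Next I would show that, with this reparametrisation, the curve solves the geodesic equation of $\tilde{\bmg}$. Differentiating $\tilde{\dot{x}}{}^{a}=\Theta\dot{x}^{a}$ gives $\tilde{\dot{x}}{}^{c}\tilde{\nabla}_{c}\tilde{\dot{x}}{}^{a}=\Theta\dot{\Theta}\dot{x}^{a}+\Theta^2\dot{x}^{c}\tilde{\nabla}_{c}\dot{x}^{a}$. Substituting the conformal geodesic equation \eqref{tildeConformalGeodesicEquation1}, expanding $S_{db}{}^{af}$, and using $\tilde{\bmg}(\dot{\bmx},\dot{\bmx})=\Theta^{-2}$ and $\langle\bmbeta,\dot{\bmx}\rangle=\beta_{\bm0}$, one obtains $\tilde{\dot{x}}{}^{c}\tilde{\nabla}_{c}\tilde{\dot{x}}{}^{a}=\beta^{a}-\Theta^2\beta_{\bm0}\dot{x}^{a}$ after combining the first two terms by means of $\Theta\beta_{\bm0}=\dot{\Theta}$. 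In the frame $\beta^{\bma}=\tilde{g}^{\bma\bmb}\beta_{\bmb}=\Theta^2\eta^{\bma\bmb}\beta_{\bmb}$ while $\dot{x}^{a}=\bme_{\bm0}{}^{a}$, so the right-hand side has vanishing time component and spatial components $-\Theta^2\beta_{\bmi}$; it vanishes if and only if $\beta_{\bmi}\equiv0$. Thus the reparametrised curve is a $\tilde{\bmg}$-geodesic precisely when $\beta_{\bmi}\equiv0$.

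It remains to deduce $\beta_{\bmi}\equiv0$ from the orthogonality to $\mathscr{I}^{+}$, and this is the step I expect to be the main obstacle. By Lemma \ref{LemmaCF} the quantity $\Theta\beta_{\bmi}=\Theta_{\star}\beta_{\bmi\star}$ is constant along the geodesic, so it suffices to show that this constant vanishes. The constraint $\delta_{\bma}=0$ in \eqref{XCEFEconstraints} reads $d_{\bma}=\Theta\beta_{\bma}+\nabla_{\bma}\Theta$, whence $\nabla_{\bmi}\Theta=d_{\bmi}-\Theta_{\star}\beta_{\bmi\star}$. At the parameter $\tau_{+}$ where the curve reaches $\mathscr{I}^{+}$ one has $\Theta(\tau_{+})=0$, and orthogonality means that the spatial legs $\bme_{\bmi}$ are tangent to $\mathscr{I}^{+}=\{\Theta=0\}$, i.e. $\nabla_{\bmi}\Theta|_{\tau_{+}}=\langle\mbox{d}\Theta,\bme_{\bmi}\rangle|_{\tau_{+}}=0$. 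Evaluating the identity at $\tau_{+}$ then gives $\Theta_{\star}\beta_{\bmi\star}=d_{\bmi}|_{\tau_{+}}$, so the whole argument reduces to showing that the spatial part of the gauge one-form $\bmd$ vanishes on $\mathscr{I}^{+}$, i.e. that $\bmd$ is aligned with the normal there. I would settle this from the explicit behaviour of $\bmd$ at the conformal boundary in the conformal Gaussian gauge, where $d_{\bma}=\Theta\tilde{f}_{\bma}$ singles out the normal direction; granting it, $\Theta_{\star}\beta_{\bmi\star}=0$ and hence $\beta_{\bmi}\equiv0$.

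Finally I would treat timelikeness and completeness. Since $\mathscr{I}^{+}$ is spacelike its normal is timelike; as $\dot{\bmx}$ is parallel to it at $\tau_{+}$ and $\bmg(\dot{\bmx},\dot{\bmx})=1>0$ is preserved, the curve is timelike, and $\tilde{\bmg}(\tilde{\dot{\bmx}},\tilde{\dot{\bmx}})=1$ confirms that $\tilde\tau$ is genuine physical proper time. For completeness I would integrate \eqref{ReparametrisationConformalGeodesics}. By Lemma \ref{LemmaCF}, $\Theta$ is the quadratic \eqref{canonicalCF}, whose discriminant $\dot{\Theta}_{\star}^2-2\Theta_{\star}\ddot{\Theta}_{\star}=|\Theta_{\star}\beta_{\bmi\star}|^2-\tfrac{1}{3}\lambda$ is positive because $\lambda<0$ (equivalently $\mbox{d}\Theta\neq0$ on $\mathscr{I}$); hence $\tau_{+}$ is a simple zero and $\Theta(\tau)\simeq\dot{\Theta}(\tau_{+})(\tau-\tau_{+})$ near it. Therefore $\tilde\tau=\int\Theta^{-1}\mbox{d}\tau$ diverges logarithmically as $\tau\to\tau_{+}$, so the physical geodesic reaches $\mathscr{I}^{+}$ only at infinite future proper time and is future complete. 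The statement for a conformal geodesic leaving $\mathscr{I}^{-}$ and past completeness follows by the time-reversed argument.
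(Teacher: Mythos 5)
Your overall strategy — derive the reparametrisation from the two unit normalisations, reduce the geodesic property to the vanishing of the spatial frame components $\beta_{\bmi}$, and obtain future completeness from the logarithmic divergence of $\int\Theta^{-1}\,\mbox{d}\tau$ at a simple zero of $\Theta$ — is sound, and it is essentially the argument of \cite{FriSch87}, to which the paper itself defers (no proof is given in the text). Your first, second and fourth paragraphs are correct, including the discriminant identity $\dot{\Theta}_{\star}^{2}-2\Theta_{\star}\ddot{\Theta}_{\star}=|\Theta_{\star}\beta_{\bmi\star}|^{2}-\tfrac{1}{3}\lambda>0$. The gap is in the third paragraph, which is where the real content lies, and there the argument is circular.

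Concretely: the constraint $\delta_{\bma}=0$ reads $d_{\bma}=\Xi f_{\bma}+\nabla_{\bma}\Xi$, where $f_{\bma}$ is the 1-form of the Weyl connection \emph{relative to} $\bmg$; it is not $\beta_{\bma}$. In the conformal Gaussian gauge one has $\tilde{f}_{\bma}=f_{\bma}+\Upsilon_{\bma}=\beta_{\bma}$, hence $d_{\bma}=\Theta\beta_{\bma}$ exactly — not $d_{\bma}=\Theta\beta_{\bma}+\nabla_{\bma}\Theta$ as you wrote. With the correct identity, $d_{\bmi}$ \emph{is} the conserved quantity $\Theta\beta_{\bmi}=\Theta_{\star}\beta_{\bmi\star}$, so your reduction ``it suffices to show that $d_{\bmi}$ vanishes on $\mathscr{I}^{+}$'' is word-for-word the statement to be proven, and your way of granting it — reading $d_{\bma}=\Theta\tilde{f}_{\bma}$ and setting $\Theta=0$ — fails because $\tilde{f}_{\bma}=\beta_{\bma}$ blows up like $\Theta^{-1}\nabla_{\bma}\Theta$ at the conformal boundary, so that $\Theta\tilde{f}_{\bmi}$ is an indeterminate form $0\cdot\infty$ whose value is exactly what is in question. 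The missing ingredient is the regularity statement hidden in the phrase ``conformal geodesic leaving $\mathscr{I}^{+}$'': by the conformal invariance of the conformal geodesic equations, the pair $\big(x(\tau),\,\bmf\big)$ with $f_{\bma}=\beta_{\bma}-\Theta^{-1}\nabla_{\bma}\Theta$ satisfies the conformal geodesic equations with respect to the unphysical connection $\bmnabla$, and it is smooth up to and including the point on $\mathscr{I}^{+}$ — this is the only sense in which a conformal geodesic can be said to leave the conformal boundary, since the physical equations degenerate there. Then $\Theta\beta_{\bmi}=\Theta f_{\bmi}+\bme_{\bmi}(\Theta)$: at $\mathscr{I}^{+}$ the first term vanishes because $\Theta=0$ while $f_{\bmi}$ stays finite, and the second vanishes because, by orthogonality, the Weyl-propagated spatial legs $\bme_{\bmi}$ are tangent to the level set $\{\Theta=0\}$. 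Hence the constant $\Theta_{\star}\beta_{\bmi\star}$ is zero and $\beta_{\bmi}\equiv 0$ follows on the physical region. With this substitution your proof closes; as written, its central step is assumed rather than proven.
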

The proof of this Lemma can be found in \cite{FriSch87}.  

\subsubsection{Conformal Gaussian systems}
\label{ConformalGaussSystems}

In what follows it will be assumed that there is a region of the
spacetime $(\tilde{\mathcal{M}},\tilde{\bmg})$ which can be covered by
non-intersecting conformal geodesics emanating orthogonally from some
initial hypersurface $\tilde{\mathcal{S}}$. Using Lemma \ref{LemmaCF},
the conformal factor \eqref{canonicalCF} is \emph{a priori} known and
completely determined from the specification of
$\Theta_{\star}$, $\dot{\Theta}_{\star}$ and $\ddot{\Theta}_{\star}$
on $\tilde{\mathcal{S}}$.  A \emph{conformal Gaussian system} is then
constructed by adapting the time leg of the $\bmg$-orthonormal tetrad
$\{\bme_{\bma}\}$ to the tangent to the conformal geodesic
$(x(\tau),\bmbeta(\tau))$ ---i.e. one sets $\bme_{0}= \dot{\bmx}$. The
rest of the tetrad is then assumed to be Weyl propagated along the
conformal geodesic. If one writes this condition together with the
conformal geodesic equations expressed in terms of the Weyl connection
singled out by $\bmbeta$, as in equations \eqref{protoGauge} and
\eqref{weylPropagationFrame}, one obtains the \emph{gauge conditions}
\begin{equation}
\label{conformalGaugeConditionsFrameVersion}
\hat{\Gamma}_{\bm0}{}^{\bma}{}_{\bmb}=0, \qquad  \hat{L}_{\bm0
  \bma}=0, \qquad  f_{\bm0}=0.
\end{equation}
One can further specialise the gauge by using the parameter $\tau$ along
the conformal geodesics as a time coordinate so that
\begin{equation} 
\label{GaugeAdaptTimelikeLeg}
 \bme_{\bm0}=\bm\partial_{\tau}.
\end{equation}

Now, consider a system of coordinates $(\tau,x^{\bm\alpha})$ where $(x^{\bm\alpha})$
are some local coordinates on $\tilde{\mathcal{S}}$. The coordinates $(x^{\bm\alpha})$
are extended off the initial hypersurface $\tilde{\mathcal{S}}$ by requiring
them to remain constant along the conformal geodesic which intersects
a point $p \in \tilde{\mathcal{S}}$ with coordinates $(x^{\bm\alpha})$. This type of
coordinates will be called a \emph{conformal Gaussian coordinate system}.
This construction naturally leads to consider a 1+3 decomposition of
the field equations. 

\subsubsection{Spinorial extended conformal Einstein field equations}

A spinorial version of the extended conformal Einstein field equations
\eqref{frameXCEFE1}-\eqref{frameXCEFE4} is readily obtained by
suitable contraction with the \emph{ Infeld-van der Waerden symbols}
$\sigma^{\bma}{}_{\bmA \bmA'}$. Given the components
$T_{\bma\bmb}{}^\bmc$ of a tensor $T_{ab}{}^c$, the components of its spinorial
counterpart are given by
\[
T_{\bmA \bmA'\bmB
  \bmB'}{}^{\bmC \bmC'}\equiv T_{\bma \bmb}{}^{\bmc}\sigma_{\bmA \bmA'
}{}^{\bma} \sigma_{\bmB \bmB' }{}^{\bmb} \sigma^{\bmC \bmC'
}{}_{\bmc},
\]
 where,
\begin{eqnarray*}
\sigma_{\bmA \bmA'}{}^{\bm0} &\equiv
  \displaystyle\frac{1}{\sqrt{2}} \begin{pmatrix} 1 & 0 \\ 0 & 1
    \\ \end{pmatrix}, 
\qquad  
\sigma_{\bmA \bmA'}{}^{\bm1}
  &\equiv \frac{1}{\sqrt{2}} \begin{pmatrix} 0 & 1 \\ 1 & 0
    \\ \end{pmatrix},  \\  
\sigma_{\bmA \bmA'}{}^{\bm2}
  &\equiv \displaystyle\frac{1}{\sqrt{2}} \begin{pmatrix} 0 & -\mbox{i} \\ \mbox{i}
    & 0 \\ \end{pmatrix},  
\qquad  
\sigma_{\bmA \bmA'}{}^{\bm3}
 &\equiv \frac{1}{\sqrt{2}} \begin{pmatrix} 1 & 0 \\ 0 & -1
    \\ \end{pmatrix},   \\
  \sigma^{\bmA \bmA'}{}_{\bm0} &\equiv
  \displaystyle\frac{1}{\sqrt{2}} \begin{pmatrix} 1 & 0 \\ 0 & 1
    \\ \end{pmatrix}, 
\qquad 
\sigma^{\bmA \bmA'}{}_{\bm1}
  &\equiv \frac{1}{\sqrt{2}} \begin{pmatrix} 0 & 1 \\ 1 & 0
    \\ \end{pmatrix},  \\ \sigma^{\bmA \bmA'}{}_{\bm2}
  &\equiv \displaystyle\frac{1}{\sqrt{2}} \begin{pmatrix} 0 & \mbox{i} \\ -\mbox{i}
    & 0 \\ \end{pmatrix}, 
\qquad 
\sigma^{\bmA \bmA'}{}_{\bm3}
  &\equiv \frac{1}{\sqrt{2}} \begin{pmatrix} 1 & 0 \\ 0 & -1
    \\ \end{pmatrix}.
\end{eqnarray*}
In particular, the spinorial counterpart of the frame metric $g_{\bma
  \bmb}=\eta_{\bma \bmb}$ is given by $g_{\bmA \bmA' \bmB \bmB'}
\equiv \epsilon_{\bmA \bmB}\epsilon_{\bmA' \bmB'}$ while the frame
$\{\bme_\bma\}$ and coframe $\{ \bmomega^\bma \}$ imply a frame $\{
\bme_{\bmA\bmA'} \}$ and a coframe $\{ \bmomega^{\bmA\bmA}\}$ such
that
\[
\bmg (\bme_{\bmA\bmA'},\bme_{\bmB\bmB'}) =\epsilon_{\bmA\bmB}
\epsilon_{\bmA'\bmB'}. 
\]

\medskip
 If one denotes with the same kernel letter the unknowns of the
frame version of the extended conformal Einstein field equations one
is lead to consider the following spinorial zero-quantities:
\begin{subequations}
\begin{eqnarray}
&& \hat{\Sigma}_{\bmA \bmA'\bmB \bmB'} \equiv [\bme_{\bmA \bmA'},\bme_{\bmB \bmB'}]
-(\hat{\Gamma}_{\bmA \bmA'}{}^{\bmC \bmC'}{}_{\bmB \bmB'}
-\hat{\Gamma}_{\bmB \bmB'}{}^{\bmC \bmC'}{}_{\bmA \bmA'})\bme_{\bmC \bmC'},\label{spacetimeSpinorXCEFE1} \\
&& \hat{\Xi}^{\bmC \bmC'}{}_{\bmD \bmD' \bmA \bmA'\bmB \bmB'} \equiv 
 \hat{r}^{\bmC \bmC'}{}_{\bmD \bmD' \bmA \bmA' \bmB \bmB'}
 -\hat{R}^{\bmC \bmC'}{}_{\bmD \bmD' \bmA \bmA' \bmB \bmB'},\label{spacetimeSpinorXCEFE2} \\
&& \hat{\Delta}_{\bmC \bmC' \bmD \bmD' \bmB \bmB'}\equiv 
\hat{\nabla}_{\bmC \bmC'}\hat{L}_{\bmD \bmD' \bmB \bmB'} 
-\hat{\nabla}_{\bmD \bmD'}\hat{L}_{\bmC \bmC' \bmB \bmB'}
-d_{\bmA \bmA'}d^{\bmA \bmA'}{}_{\bmB \bmB' \bmD \bmD'},\label{spacetimeSpinorXCEFE3}\\
&& \hat{\Lambda}_{\bmB \bmB'\bmC \bmC'\bmD \bmD'} \equiv 
\hat{\nabla}_{\bmA \bmA'}d^{\bmA \bmA'}{}_{\bmB \bmB' \bmC \bmC' \bmD \bmD'}
-f_{\bmA \bmA'}d^{\bmA \bmA'}{}_{\bmB \bmB' \bmC \bmC' \bmD \bmD'} \label{spacetimeSpinorXCEFE4}.
\end{eqnarray}
\end{subequations}

The spinorial version of the extended conformal Einstein
 field equations are then succinctly written as
\begin{equation}
\label{uncontractedXCEFE}
 \hat{\Sigma}_{\bmA \bmA'\bmB \bmB'} =0, \qquad
 \hat{\Xi}^{\bmC \bmC'}{}_{\bmD \bmD' \bmA \bmA'\bmB \bmB'}
 =0, \qquad \hat{\Delta}_{\bmC \bmC' \bmD \bmD' \bmB
   \bmB'}=0, \qquad \hat{\Lambda}_{\bmB \bmB'\bmC \bmC'\bmD
   \bmD'} =0.
\end{equation}
In the spinor description one can exploit the symmetries of
the fields and equations to obtain expressions in terms of lower
valence spinors. In particular, one has the decompositions
\begin{eqnarray*}
&& d_{\bmA \bmA' \bmB \bmB' \bmC \bmC' \bmD \bmD'} = -\phi_{\bmA \bmB
  \bmC \bmD} \epsilon_{\bmA' \bmB'}\epsilon_{\bmC' \bmD'} -
\bar{\phi}_{\bmA' \bmB' \bmC' \bmD'}\epsilon_{\bmA \bmB}\epsilon_{\bmC
  \bmD}, \\
&& \hat{\Gamma}_{\bmA \bmA'}{}^{\bmB \bmB'}{}_{\bmC \bmC'} =
\hat{\Gamma}_{\bmA \bmA'}{}^{\bmB}{}_{\bmC}\epsilon_{\bmC'}{}^{\bmB'}
+ \bar{\hat{\Gamma}}_{\bmA
  \bmA'}{}^{\bmB'}{}_{\bmC'}\epsilon_{\bmC}{}^{\bmB},
\end{eqnarray*}
 where $\phi_{\bmA \bmB \bmC \bmD}= \phi_{(\bmA \bmB \bmC
  \bmD)}$ are the components of the rescaled Weyl spinor and
$\hat{\Gamma}_{\bmA \bmA'}{}^{\bmB}{}_{\bmC}$ are the reduced
connection coefficients of $\hat{\bmnabla}$. Using the
spinorial version of equation
\eqref{WeylConectionCoefsToLeviCivitaConnectionCoef} and contracting
appropriately one obtains
\begin{equation} 
\label{SpinorWeylConnectionToLeviCivitaConnection}
\hat{\Gamma}_{\bmC \bmC' \bmA \bmB}=\Gamma_{\bmC \bmC' \bmA
  \bmB}-\epsilon_{\bmA \bmC}f_{\bmB \bmC'}.
\end{equation}
Likewise, one has the following reduced curvature spinors
\begin{eqnarray*}
 && \hat{R}^{\bmC}{}_{\bmD \bmA \bmA' \bmB \bmB'} \equiv
  \frac{1}{2}\hat{R}^{\bmC\bmQ'}{}_{\bmD \bmQ' \bmA \bmA' \bmB \bmB'}\\
  && \phantom{\hat{R}^{\bmC}{}_{\bmD \bmA \bmA' \bmB \bmB'}}=
 \bme_{\bmA \bmA'}\left( \hat{\Gamma}_{\bmB
    \bmB'}{}^{\bmC}{}_{\bmD} \right) -\bme_{\bmB \bmB'}
  \left(\hat{\Gamma}_{\bmA \bmA'}{}^{\bmC}{}_{\bmD} \right) \nonumber
  \\ && \hspace{3cm}-\hat{\Gamma}_{\bmF
    \bmB'}{}^{\bmC}{}_{\bmD}\hat{\Gamma}_{\bmA
    \bmA'}{}^{\bmF}{}_{\bmB}-\hat{\Gamma}_{\bmB
    \bmF'}{}^{\bmC}{}_{\bmD}\bar{\hat{\Gamma}}_{\bmA
    \bmA'}{}^{\bmF'}{}_{\bmB'} +\hat{\Gamma}_{\bmF \bmA'}{}^{\bmC}{}
  _{\bmD}\hat{\Gamma}_{\bmB\bmB'}{}^{\bmF}{}_{\bmA} \nonumber
  \\ && \hspace{3cm} + \hat{\Gamma}_{\bmA
    \bmF'}{}^{\bmC}{}_{\bmD}\bar{ \hat{\Gamma}}_{\bmB
    \bmB'}{}^{\bmF'}{}_{\bmA'} + \hat{\Gamma}_{\bmA
    \bmA'}{}^{\bmC}{}_{\bmE} \hat{\Gamma}_{\bmB
    \bmB'}{}^{\bmE}{}_{\bmD}-\hat{\Gamma}_{\bmB
    \bmB'}{}^{\bmC}{}_{\bmE} \hat{\Gamma}_{\bmA
    \bmA'}{}^{\bmE}{}_{\bmD}, \\ 
&&\hat{P}_{\bmA \bmB \bmC \bmC' \bmD
    \bmD'}\equiv \frac{1}{2} \hat{P}_{\bmA}{}^{\bmQ'}{}_{\bmB \bmQ'
    \bmC \bmC' \bmD \bmD'} \\
&& \phantom{\hat{P}_{\bmA \bmB \bmC \bmC' \bmD
    \bmD'}}
= -\Theta\phi_{\bmA \bmB \bmC
    \bmD}\epsilon_{\bmC' \bmD'} - L_{\bmC' (\bmA \bmB)
    \bmD'}\epsilon_{\bmC \bmD} -
  \frac{1}{2}\epsilon_{\bmA \bmB} \left(\hat{L}_{\bmC \bmC' \bmD
    \bmD'}-\hat{L}_{\bmD \bmD' \bmC \bmC'}\right),\\ 
&&
  \hat{\Lambda}_{\bmA \bmA' \bmB \bmC} \equiv
  \frac{1}{2}\hat{\Lambda}_{\bmA \bmA' \bmB \bmQ' \bmC}{}^{\bmQ'} \\
&&\phantom{\hat{\Lambda}_{\bmA \bmA' \bmB \bmC}}=
  \hat{\nabla}^{\bmQ}{}_{\bmA}\phi_{\bmA \bmB \bmC \bmQ}
  -f^{\bmQ}{}_{\bmA'}\phi_{\bmA \bmB \bmC \bmQ}.
\end{eqnarray*}
 With these definitions, the spinorial extended conformal Einstein
 field equations can be alternatively written as
\begin{equation}
\label{contractedXCEFE}
\hat{\Sigma}_{\bmA \bmA'}{}_{\bmB \bmB'}=0, \qquad
\hat{\Xi}^{\bmC}{}_{\bmD \bmA \bmA' \bmB
  \bmB'}=0, \qquad\hat{\Delta}_{\bmC \bmC' \bmD \bmD' \bmB
  \bmB'}=0, \qquad  \hat{\Lambda}_{\bmB \bmB' \bmC \bmD}=0.
\end{equation}
The last set of equations is completely equivalent to 
the equations in \eqref{uncontractedXCEFE}.

\subsubsection{Space spinor formalism }
\label{SpaceSpinorFormalism}

In what follows, let the Hermitian spinor $\tau^{AA'}$ denote the
spinor counterpart of the vector $\sqrt{2}\bme_{\bm0}{}^a$. In
addition, let $\{\epsilon_{\bmA}{}^{A}\}$ with $
\epsilon_{\bm0}{}^{A}= o^{A}, \epsilon_{\bm1}{}^{A}= \iota^{A}$ denote
a spinor dyad such that
\[ 
\tau^{A A'}=\epsilon_{\bm0}{}^{A}\epsilon_{\bm0'}{}^{A'} 
+ \epsilon_{\bm1}{}^{A}\epsilon_{\bm1'}{}^{A'}.
\]
 We have chosen the normalisation $\tau^{AA'}\tau_{AA'}= 2$,
 in accordance with the conventions of \cite{Fri91}. In what follows
let $\tau^{\bmA \bmA'}$ denote the components of $\tau^{AA'}$  respect
to $\{\epsilon^{\bmA}{}_{A}\}$. The Hermitian spinor $\tau^{AA'}$ 
 can  be used to
perform a \emph{space spinor split} of the frame 
 $\{\bme_{\bmA \bmA'}\}$ and coframe $\{\bm\omega^{\bmA
  \bmA'}\}$. Namely, one can write 
\begin{equation} 
\label{FrameSplit}
 \bme_{\bmA \bmA'}=\frac{1}{2}\tau^{\bmA
   \bmA'}\bme-\tau^{\bmB}{}_{\bmA'}\bme_{\bmA \bmB}, \qquad
 \bm\omega^{\bmA \bmA'}=\frac{1}{2}\tau^{\bmA \bmA'}\bm\omega +
 \tau_{\bmC}{}^{\bmA'}\bm\omega^{\bmC \bmA},
\end{equation}
where
\[
\bme \equiv \tau^{\bmP \bmP'}\bme_{\bmP \bmP'}, \qquad
\bme_{\bmA \bmB}\equiv
\tau_{(\bmA}{}^{\bmP'}\bme_{\bmB)\bmP'}, \qquad \bm\omega
\equiv \tau_{\bmP \bmP'}\bm\omega^{\bmP \bmP'}, \qquad 
\bm\omega^{\bmA \bmB}=-\tau^{(\bmA}{}_{\bmP'}\bm\omega^{\bmB) \bmP'}.
\]
It follows from the above expressions that the metric $\bmg$ admits
the split
\[ 
\bmg =\frac{1}{2}\bm\omega \otimes \bm\omega +
 h_{\bmA \bmB \bmC \bmD}\bm\omega^{\bmA \bmB}\otimes \bm\omega^{\bmC
   \bmD}
\]
 where 
\[
 h_{\bmA \bmB \bmC \bmD} \equiv \bmg(\bme_{\bmA
  \bmB},\bme_{\bmC \bmD})=
-\epsilon_{\bmA(\bmC}\epsilon_{\bmD)\bmB}. 
\]
Similarly, any general connection $\breve{\bmnabla}$ 
can be split as
\[ 
\breve{\nabla}_{\bmA \bmA'}= \frac{1}{2}\tau_{\bmA
  \bmA'}\mathcal{P}-\tau_{\bmA'}{}^{\bmQ}\mathcal{\breve{D}}_{\bmA \bmQ},
\]
where
\[
\mathcal{P}\equiv\tau^{\bmA \bmA'}\breve{\nabla}_{\bmA \bmA'} \qquad
\text{and} \qquad \mathcal{\breve{D}}_{\bmA \bmB}\equiv
\tau_{(\bmB}{}^{\bmA'}\breve{\nabla}_{\bmA)\bmA'},
\]
denote, respectively, the derivative along the direction given by
$\tau^{\bmA \bmA'}$ and $\mathcal{\breve{D}}_{\bmA \bmB}$ is the \textit{Sen
  connection} of $\breve{\nabla}$ relative to $\tau^{\bmA \bmA'}$.

The Hermitian spinor
$\tau^{AA'}$ induces a notion of Hermitian conjugation: given an
arbitrary spinor with components $\mu_{\bmA \bmA'}$ its Hermitian
conjugate has components
\begin{equation}
\label{HermitianConjugation}
 \mu^{\dagger}_{\bmC \bmD} \equiv\tau_{\bmC}{}^{\bmA'} \tau_{\bmD}{}^{\bmA}
 \overline{\mu_{\bmA \bmA'}} = \tau_{\bmC}{}^{\bmA'} 
\tau_{\bmD}{}^{\bmA} \overline{\mu}_{\bmA' \bmA},
\end{equation}
 where the bar denotes complex conjugation. In a similar
manner, one can extend the definition to contravariant indices
 and higher valence spinors  by requiring that
 $(\bmpi \bmmu)^\dagger=\bmpi^\dagger\bmmu^\dagger$. 

\subsection{Conformal evolution and constraint equations}

In this section the evolution equations implied by the
extended conformal field equations and the conformal
Gaussian gauge are discussed. In addition, a brief overview of the
conformal constraint equations is given.

\subsubsection{Conformal Gauss gauge in spinorial form  and evolution equations}
\label{subsectionSpaceSpinorAndEvolutioneqs}

 The space spinor formalism leads to a systematic split of the
 extended conformal Einstein field equations \eqref{contractedXCEFE}
 into evolution and constraint equations.  To
this end, one performs a space spinor split for the fields 
$\bme_{\bmA \bmA'}$, $f_{\bmA \bmA'}$, $\hat{L}_{\bmA \bmA'}$,
$\hat{\Gamma}_{\bmA \bmA'}{}^{\bmB}{}_{\bmC}$ .

The frame coefficients $e_{\bmA\bmA'}{}^{\bma}$ satisfy formally
identical splits to those in \eqref{FrameSplit}, where $\bme_{\bmA
\bmA'}= e_{\bmA \bmA'}{}^{\bma}\bmc_{\bma}$ with $\bmc_{\bma} \in
\{\bmpartial_{\tau}, \bmc_{\bmi} \}$ representing a fixed frame field
---the latter is not necessarily $\bmg$-orthonormal.  Observe that, in
terms of tensor frame components, the gauge condition
\eqref{GaugeAdaptTimelikeLeg} implies that $\bme_{\bm0}{}^{\bma} =
\delta_{\bm0}{}^{\bma}$. The gauge conditions
\eqref{conformalGaugeConditionsFrameVersion} and
\eqref{GaugeAdaptTimelikeLeg} are rewritten as
\begin{equation}
\label{conformalGaugeConditionsSpacetimeSpinors} \tau^{\bmA
\bmA'}\bme_{\bmA \bmA'}=\sqrt{2}\bm\partial_{\tau}, \qquad \tau^{\bmA
\bmA'}\hat{\Gamma}_{\bmA \bmA'}{}^{\bmB}{}_{\bmC}=0, \qquad \tau^{\bmA
\bmA'}\hat{L}_{\bmA \bmA' \bmB \bmB'}=0.
\end{equation}
In addition, we define
\begin{subequations}
\begin{eqnarray} 
\label{spacespinorXCEFEfirstdefinitions}
&  \hat{\Gamma}_{\bmA \bmB \bmC \bmD} \equiv \tau^{\bmB \bmA'}
\hat{\Gamma}_{\bmA \bmA'\bmC \bmD}, \qquad
 \Gamma_{\bmA \bmB \bmC \bmD} \equiv \tau_{\bmB}{}^{\bmA'}\Gamma_{\bmA \bmA' \bmC \bmD},
 \qquad f_{\bmA \bmB} \equiv \tau_{\bmB}{}^{\bmA'}f_{\bmA \bmA'}, &
\\
 &{L}_{\bmA \bmB \bmC \bmD} \equiv
 \tau_{\bmB}{}^{\bmA'}\tau_{\bmD}{}^{\bmC'}\hat{L}_{\bmA \bmA' \bmC \bmC'}, 
\qquad \Theta_{\bmA \bmB \bmC \bmD} \equiv {L}_{\bmA \bmB (\bmC \bmD)}
 \qquad \Theta_{\bmA \bmB}\equiv {L}_{\bmA \bmB \bmQ}{}^{\bmQ}.&
\end{eqnarray}
\end{subequations}

 Recalling equation
\eqref{SpinorWeylConnectionToLeviCivitaConnection} one obtains
\[ 
\hat{\Gamma}_{\bmA \bmB \bmC \bmD}=\Gamma_{\bmA \bmB \bmC \bmD }- 
\epsilon_{\bmC \bmA}f_{\bmD \bmA'}\tau_{\bmB}{}^{\bmA'},
\]
 where $\Gamma_{\bmA \bmB \bmC \bmD} \equiv
\tau_{\bmB}{}^{\bmA'}\Gamma_{\bmA \bmA'\bmC \bmD}$. This relation allows us
to write the equations in terms of the reduced connection coefficients
of the Levi-Civita connection of $\bmg$ instead of the reduced
connection coefficients of $\hat{\nabla}$. Only the
spinorial counterpart of the Schouten tensor of the connection
 $\hat{\nabla}$ will
not be written in terms of its Levi-Civita counterpart. Exploiting the
notion of Hermitian conjugation given in equation 
\eqref{HermitianConjugation} one defines
\[ 
\chi_{\bmA \bmB \bmC \bmD} \equiv -\frac{1}{\sqrt{2}}
\left( \Gamma_{\bmA \bmB \bmC \bmD} + \Gamma_{\bmA \bmB \bmC
  \bmD}^{\dagger}\right),
 \qquad  \xi_{\bmA \bmB \bmC \bmD} \equiv
 \frac{1}{\sqrt{2}}\left(\Gamma_{\bmA \bmB \bmC \bmD} -\Gamma_{\bmA \bmB
   \bmC \bmD}^{\dagger}\right),
\]
 where $\chi_{\bmA \bmB \bmC \bmD}$ and $\xi_{\bmA \bmB \bmC
  \bmD}$ correspond, respectively, to the real and imaginary part of the
connection coefficients $\Gamma_{\bmA \bmB \bmC \bmD}$. We define the electric and magnetic parts of the
rescaled Weyl spinor as
\[ 
\eta_{\bmA \bmB \bmC \bmD} \equiv  \frac{1}{2} 
\left( \phi_{\bmA \bmB \bmC \bmD} + \phi_{\bmA \bmB \bmC
  \bmD}^{\dagger}\right),
 \qquad \mu_{\bmA \bmB \bmC \bmD}\equiv -\frac{1}{2}\mbox{i}
 \left( \phi_{\bmA \bmB \bmC \bmD} - \phi_{\bmA \bmB \bmC
   \bmD}^{\dagger}\right).  
\]

\medskip
The gauge conditions
\eqref{conformalGaugeConditionsSpacetimeSpinors} can be rewritten in
terms of the spinors defined in
\eqref{spacespinorXCEFEfirstdefinitions} as
\begin{equation}
f_{\bmA \bmB} = f_{(\bmA \bmB)}, \qquad
\Gamma_{\bmQ}{}^{\bmQ}{}_{\bmA \bmB}=0, \qquad
\hat{L}_{\bmQ}{}^{\bmQ}{}_{\bmA \bmB}=0.
\label{ConformalGaugeConditionSpinors}
\end{equation}
 The last condition implies the decomposition
\[
 \hat{L}_{\bmA \bmB \bmC \bmD} =\Theta_{\bmA \bmB \bmC \bmD}
 + \frac{1}{2}\epsilon_{\bmC \bmD}\Theta_{\bmA \bmB} 
\]
for the components of the spinorial counterpart of the Schouten tensor
of the Weyl connection where $\Theta_{\bmA\bmB\bmC\bmD}\equiv
\hat{L}_{(\bmA\bmB)(\bmC\bmD)}$ and $\Theta_{\bmA\bmB} \equiv
\hat{L}_{\bmA\bmB\bmQ}{}^\bmQ$. 

\medskip
The fields defined in the previous paragraphs 
 allow us to derive from the expressions
\begin{subequations}
\begin{eqnarray}
 &\tau^{\bmA \bmA'}\hat{\Sigma}_{\bmA \bmA'}{}^{\bmP \bmP'}{}_{\bmB
    \bmB'} \bme_{\bmP \bmP'}{}^{\bma}=0, \hspace{0.5cm} \tau^{\bmC
    \bmC'} \hat{\Xi}_{\bmA \bmB \bmC \bmC' \bmD \bmD'} =0,& \label{EvEqsZeroQuantities1}\\
 & \tau^{\bmA \bmA'}\hat{\Delta}_{\bmA \bmA' \bmB \bmB' \bmC \bmC'}=0,
 \hspace{1.2cm} \tau_{(\bmA}{}^{\bmA'}\hat{\Lambda}_{|\bmA'|\bmB \bmC
   \bmD)}=0, \label{EvEqsZeroQuantities2}& 
\end{eqnarray}
\end{subequations}
a set of evolution equations for the fields
\[
\chi_{\bmA \bmB \bmC
  \bmD}, \hspace{0.2cm}\xi_{\bmA \bmB \bmC \bmD}, \hspace{0.2cm}
\bme_{\bmA \bmB}{}^{\bm0}, \hspace{0.2cm}
\bme_{\bmA\bmB}{}^{\bmi}, \hspace{0.2cm} f_{\bmA
  \bmB}, \hspace{0.2cm}\Theta_{\bmA \bmB \bmC
  \bmD}, \hspace{0.2cm}\Theta_{\bmA \bmB}, \hspace{0.2cm} \phi_{\bmA
  \bmB \bmC \bmD}.
\]
 Explicitly, one has that
\begin{subequations}
\begin{eqnarray}
&& \partial_{\tau}e_{\bmA \bmB}{}^{\bm0}=-\chi_{(\bmA\bmB)}{}^{\bmP
    \bmQ}e_{\bmP \bmQ}{}^{\bm0}-f_{\bmA
    \bmB}, \label{EvolutionEquation3Plus1Decomposition1} \\ &&
  \partial_{\tau}e_{\bmA \bmB}{}^{\bmi}=-\chi_{(\bmA \bmB)}{}^{\bmP
    \bmQ}e_{\bmP
    \bmQ}{}^{\bmi}, \label{EvolutionEquation3Plus1Decomposition2}\\ &&
  \partial_{\tau}\xi_{\bmA \bmB \bmC \bmD}=-\chi_{(\bmA \bmB)}{}^{\bmP
    \bmQ}\xi_{\bmP \bmQ \bmC \bmD} +
  \frac{1}{\sqrt{2}}(\epsilon_{\bmA \bmB}\chi_{(\bmB \bmD)\bmP \bmQ}
  + \epsilon_{\bmB \bmD}\chi_{(\bmA\bmC)\bmP \bmQ})f^{\bmP
    \bmQ}, \label{EvolutionEquation3Plus1Decomposition3}\\ 
   && \hspace{2cm}-\sqrt{2}\chi_{\bmA \bmB (\bmC}{}^{\bmE}f_{\bmD) \bmE}
 -\frac{1}{2}\left(\epsilon_{\bmA \bmC}\Theta_{\bmB \bmD} +
  \epsilon_{\bmB \bmD}\Theta_{\bmA \bmC}\right) - \mbox{i}\Theta\mu_{\bmA
    \bmB \bmC \bmD}, \label{EvolutionEquation3Plus1Decomposition4}\\
   && \partial_{\tau}f_{\bmA \bmB}=-\chi_{(\bmA \bmB)}{}^{\bmP\bmQ}
 f_{\bmP \bmQ} + \frac{1}{\sqrt{2}}\Theta_{\bmA
    \bmB}, \label{EvolutionEquation3Plus1Decomposition5}\\ &&
  \partial_{\tau}\chi_{(\bmA \bmB)\bmC \bmD}=-\chi_{\bmA \bmB}{}^{\bmP
    \bmQ}\chi_{\bmP \bmQ \bmC \bmD}-\Theta_{\bmA \bmB \bmC \bmD} +
  \Theta \eta_{\bmA \bmB \bmC
    \bmD}, \label{EvolutionEquation3Plus1Decomposition6} \\ &&
  \partial_{\tau}\Theta_{\bmA \bmB \bmC \bmD}=-\chi_{(\bmA
    \bmB)}{}^{\bmP\bmQ}L_{\bmP \bmQ (\bmC
    \bmD)}-\dot{\Theta}\eta_{\bmA \bmB \bmC \bmD}+
  \mbox{i}d^{\bmP}{}_{(\bmA}\mu_{\bmB) \bmC \bmD
    \bmP}, \label{EvolutionEquation3Plus1Decomposition7}\\ &&
  \partial_{\tau}\Theta_{\bmA \bmB}=-\chi_{(\bmA \bmB)}{}^{\bmE
    \bmF}\Theta_{\bmE \bmF} + \sqrt{2}d^{\bmP \bmQ}\eta_{\bmA \bmB
    \bmP \bmQ}, \label{EvolutionEquation3Plus1Decomposition8}\\ &&
  \partial_{\tau}\phi_{\bmA \bmB \bmC \bmD}-
  \sqrt{2}\mathcal{D}_{(\bmA}{}^{\bmQ}\phi_{\bmB \bmC
    \bmD)\bmQ}=0. \label{EvolutionEquation3Plus1Decomposition9}
\end{eqnarray}
\end{subequations}

\medskip
The following proposition relates the discussion of the
conformal evolution equations and the full set of extended conformal
field equations given by
\eqref{frameXCEFEvanishZeroQuantities}:

\begin{lemma} [\textbf{\em propagation of the constraints and subsidiary system}] 
\label{Thm:PropagationConstraints}
Assume that the evolution equations extracted
 from equations \eqref{EvEqsZeroQuantities1}-\eqref{EvEqsZeroQuantities2}
 and the conformal Gauss gauge conditions
\eqref{ConformalGaugeConditionSpinors} hold.  Then, the
independent components of the zero quantities
\[
\hat{\Sigma}_{\bmA\bmA'}{}^{\bmB\bmB'}{}_{\bmC\bmC'}, \quad
\hat{\Xi}_{\bmA\bmB\bmC\bmC'\bmD\bmD'}, \quad
\hat{\Delta}_{\bmA\bmA'\bmB\bmB'\bmC\bmC'}, \quad \delta_{\bmA\bmA'},
\quad \gamma_{\bmA\bmA'\bmB\bmB'}, \quad \zeta_{\bmA\bmA'},
\]
 which are not determined by the evolution
equations or the gauge conditions, satisfy a symmetric hyperbolic
system of equations whose lower order terms are homogeneous in the
zero-quantities.
\end{lemma}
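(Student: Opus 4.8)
The plan is to construct the \emph{subsidiary system}: a closed system of equations satisfied by the zero-quantities themselves, to show that it is symmetric hyperbolic and that its right-hand sides are homogeneous in the zero-quantities, and then to invoke uniqueness for symmetric hyperbolic systems (the theory of \cite{Kat75}) to conclude that zero-quantities vanishing on the initial hypersurface $\tilde{\mathcal{S}}$ remain zero throughout the region covered by the conformal Gaussian gauge. The cornerstone of the argument is that the geometric curvature $\hat{P}^{\bma}{}_{\bmb\bmc\bmd}$ is, by construction, the curvature of the connection $\hat{\bmnabla}$, and hence satisfies the first and second Bianchi identities as \emph{off-shell} identities, independently of whether the field equations hold. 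The field equations enter only through the identification of $\hat{P}$ with the algebraic curvature (encoded by $\hat{\Xi}$) and through the Bianchi-type relations $\hat{\Delta}$ and $\hat{\Lambda}$.

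First I would assemble the geometric identities on which the closure rests. These are: (i) the first Bianchi identity, which ties the totally antisymmetric part of $\hat{P}$ to covariant derivatives of the torsion, so that derivatives of $\hat{\Sigma}$ produce only zero-quantities; (ii) the contracted and uncontracted second Bianchi identities for $\hat{P}$; and (iii) the commutator $(\hat{\nabla}_{\bma}\hat{\nabla}_{\bmb}-\hat{\nabla}_{\bmb}\hat{\nabla}_{\bma})$, which applied to any field produces the geometric curvature, rewritten as algebraic curvature plus $\hat{\Xi}$, together with torsion terms proportional to $\hat{\Sigma}$. Using these, every covariant derivative of a zero-quantity can be re-expressed, modulo the evolution equations and the gauge conditions \eqref{ConformalGaugeConditionSpinors}, in terms of zero-quantities alone.

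Next I would process each zero-quantity in turn, working in the space-spinor formalism. For the torsion $\hat{\Sigma}_{\bmA\bmA'}{}^{\bmB\bmB'}{}_{\bmC\bmC'}$ and the curvature mismatch $\hat{\Xi}_{\bmA\bmB\bmC\bmC'\bmD\bmD'}$, applying the normal derivative $\tau^{\bmA\bmA'}\hat{\nabla}_{\bmA\bmA'}$ together with the structure equations and the first and second Bianchi identities yields transport-type equations whose principal parts assemble into a symmetric hyperbolic block. The constraint-type components of $\hat{\Delta}_{\bmA\bmA'\bmB\bmB'\bmC\bmC'}$ — those not fixed by the evolution projection $\tau^{\bmA\bmA'}\hat{\Delta}_{\bmA\bmA'\bmB\bmB'\bmC\bmC'}=0$ in \eqref{EvEqsZeroQuantities2} — propagate via an equation built from the second Bianchi identity for the rescaled Weyl spinor $\phi_{\bmA\bmB\bmC\bmD}$; this is the prototypical spin-$2$ symmetric hyperbolic operator in this setting, and its source becomes homogeneous once the evolution equation \eqref{EvolutionEquation3Plus1Decomposition9} for $\phi$ is used. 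Finally, for the supplementary zero-quantities $\delta_{\bmA\bmA'}$, $\gamma_{\bmA\bmA'\bmB\bmB'}$ and $\zeta_{\bmA\bmA'}$, I would differentiate their defining relations along the conformal geodesics and substitute the evolution equations \eqref{EvolutionEquation3Plus1Decomposition1}--\eqref{EvolutionEquation3Plus1Decomposition9} for $\Theta_{\bmA\bmB\bmC\bmD}$, $\Theta_{\bmA\bmB}$, $f_{\bmA\bmB}$ and for the gauge fields $(\Theta,d_{\bmA\bmB})$; these produce transport (ordinary differential) equations in $\tau$ that are trivially symmetric hyperbolic and manifestly homogeneous.

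The main obstacle is the closure-and-symmetry bookkeeping rather than any single deep estimate: one must verify that, after all substitutions, no new quantity appears that is not itself a zero-quantity, so that the right-hand sides are genuinely linear and homogeneous in $(\hat{\Sigma},\hat{\Xi},\hat{\Delta},\delta,\gamma,\zeta)$, and simultaneously that the principal parts of the heterogeneous blocks — the first-order curvature and torsion equations alongside the spin-$2$ equation for $\phi$ — assemble into a single symmetric hyperbolic operator in the space-spinor variables. This is where the algebraic structure of the space-spinor decomposition \eqref{FrameSplit} and the Hermiticity properties encoded in \eqref{HermitianConjugation} are essential. Once this is established, homogeneity guarantees that the identically-vanishing zero-quantities form a solution of the subsidiary system; since the constraints \eqref{XCEFEconstraints} ensure that all the zero-quantities vanish on $\tilde{\mathcal{S}}$, uniqueness for symmetric hyperbolic systems forces them to vanish on the whole development. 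A detailed execution of this computation is carried out in \cite{Fri95,CFEBook}, which I would follow.
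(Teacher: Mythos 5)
Your proposal is correct and follows essentially the same route as the paper, which does not spell out the computation but defers precisely to \cite{Fri95,Fri98c,CFEBook}: the standard subsidiary-system argument built on the off-shell Bianchi identities, the commutator structure of $\hat{\bmnabla}$, transport equations along the congruence for $\delta_{\bmA\bmA'}$, $\gamma_{\bmA\bmA'\bmB\bmB'}$, $\zeta_{\bmA\bmA'}$, and homogeneity of the lower-order terms. The only remark worth making is that your final uniqueness step (vanishing of the zero-quantities on the development) is the corollary the paper draws \emph{after} the lemma, not part of the lemma itself, which asserts only the existence and structure of the symmetric hyperbolic subsidiary system.
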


The proof of Lemma \ref{Thm:PropagationConstraints} can be
found in \cite{Fri95, Fri98c, CFEBook} ---see also \cite{LueVal12}
for a discussion of these equations in the presence of an
electromagnetic field.

\medskip
The most important consequence of Lemma
\ref{Thm:PropagationConstraints} is that if the zero-quantities vanish
at some initial hypersurface and the evolution equations
\eqref{EvolutionEquation3Plus1Decomposition1}-
\eqref{EvolutionEquation3Plus1Decomposition8} are satisfied, then the
\emph{full} extended conformal Einstein field equations encoded in
\eqref{uncontractedXCEFE} are satisfied in the development of the
initial data. This is a consequence of the standard uniqueness result
for \emph{homogeneous} symmetric hyperbolic systems.

\subsubsection{Controlling the gauge}\label{controllingthegauge}

The derivation of the conformal evolution equations
\eqref{EvolutionEquation3Plus1Decomposition1}-\eqref{EvolutionEquation3Plus1Decomposition8}
is based on the assumption of the 
existence of a non-intersecting congruence of conformal geodesics. To
verify this assumption one has to analyse the deviation vector of the
congruence.

\medskip
Let $\bmz$ denote the deviation vector of the congruence. One has then that
\begin{equation}
[\dot{\bmx},\bmz]=0.
\label{BracketDeviationVector}
\end{equation}
Now, let $\bmz^{\bmA \bmA'}$ denote the spinorial counterpart of the
components $z^{\bma}$ of $\bmz$ respect to a Weyl propagated frame
$\{\bme_{\bma}\}$.  Following the spirit of the space spinor formalism
one defines $z_{\bmA \bmB}\equiv \tau_{\bmB}{}^{\bmA'}z_{\bmA
  \bmA'}$. This spinor can be decomposed as
\[z_{\bmA \bmB}=\frac{1}{2}z\epsilon_{\bmA \bmB}+ z_{(\bmA \bmB)} \]
The evolution equations for the deviation vector can be readily
deduced from the commutator \eqref{BracketDeviationVector}. Expressing
the latter in terms of the fields appearing in the extended conformal
field equations one obtains
\begin{subequations}\label{ConformalDeviationEquations}
\begin{eqnarray}
&& \partial_{\tau}z=f_{\bmA \bmB}z^{(\bmA \bmB)} \label{ConformalDeviationEquations1}\\ &&
  \partial_{\tau}z_{(\bmA \bmB)}=\chi_{\bmC \bmD (\bmA \bmB)}z^{(\bmC
    \bmD)}\label{ConformalDeviationEquations2}
\end{eqnarray}
\end{subequations}
The congruence of conformal geodesics is non-intersecting as long as
$z_{(\bmA \bmB)}\neq 0$. Once one has solved equations
\eqref{EvolutionEquation3Plus1Decomposition1}-
\eqref{EvolutionEquation3Plus1Decomposition9} one can substitute
$f_{\bmA \bmB}$ and $\chi_{\bmA \bmB \bmC \bmD}$ into equation
\eqref{ConformalDeviationEquations} and analyse the evolution of the
deviation vector ---for further discussion see \cite{LueVal09}.

\subsubsection{The conformal constraint equations}
\label{ConformalConstraintEquations}

The \emph{conformal constraint equations} encode the set of
restrictions induced by the zero-quantities on the various fields on
hypersurfaces of the unphysical spacetime $(\mathcal{M},\bmg)$.  In
what follows, we will consider a setting where the 1-form $\bmf$
vanishes. Accordingly, the initial data for the \emph{extended}
conformal evolution equations
\eqref{EvolutionEquation3Plus1Decomposition1}-\eqref{EvolutionEquation3Plus1Decomposition8}
and those for the \emph{standard} conformal Einstein field equations  
are the same ---see Appendix
\ref{Appendix:CFE}.  Now, let $\mathcal{\tilde{S}}$ denote a
3-dimensional submanifold of $\tilde{\mathcal{M}}$. The metric
$\tilde{\bmg}$  induces a 3-dimensional metric $\tilde{\bmh}=
\tilde{\varphi}^{*}\tilde{\bmg}$ on $\tilde{\mathcal{S}}$, where
$\tilde{\varphi}: \tilde{\mathcal{S}} \rightarrow \tilde{\mathcal{M}}$
is an embedding.  Similarly, one can consider a 3-dimensional
submanifold $\mathcal{S}$ of $\mathcal{M}$ with induced metric
 $\bmh=\varphi^{*}\bmg$, such that
\[
\bmh = \Omega^2 \tilde{\bmh}, 
\]
 where $\Omega$ denotes the restriction of the conformal
factor to the initial hypersurface $\mathcal{S}$ ---in Section
\ref{ConformalGeodesicsAndGaussianSystems}  this restriction is denoted by $\Theta_{\star}$. 

Let $n_{a}$ and $\tilde{n}_{a}$ with
$n_{a}=\Theta\tilde{n}_{a}$ be, respectively, the $\bmg$-unit and
$\tilde{\bmg}$-unit normals, so that $ n^an_a=\tilde{n}^a\tilde{n}_a=1$
---in accordance with our signature conventions for an spacelike
hypersurface. With these definitions, the second fundamental forms
$\chi_{ab} \equiv h_{a}{}^{c}\nabla_{c}n_{b}$ and $\tilde{\chi}_{ab}
\equiv \tilde{h}_{a}{}^{c}\tilde{\nabla}_{c}\tilde{n}_{b}$ are related
 by the formula
\[ 
\chi_{ab}=\Omega(\tilde{\chi}_{ab} + \Sigma\tilde{h}_{ab})
\]
 where $\Sigma\equiv n^{a}\nabla_{a}\Omega$. 

The conformal constraint equations are conveniently expressed in terms
of a frame $\{\bme_\bmi \}$ adapted to the hypersurface
$\mathcal{S}$ ---that is, the vectors $\bme_\bmi$ span $T\mathcal{S}$
and, thus, are orthogonal to its normal. All the fields appearing in
the constraint equations are expressed in terms of this
frame. The \emph{conformal constraint equations} are then given by:
\begin{subequations}
\begin{eqnarray}
&& D_{\bmi}D_{\bmj}\Omega = -\Sigma \chi_{\bmi \bmj} -\Omega
  L_{\bmi\bmj} + s h_{\bmi \bmj}, \label{CEFEConstraints1} \\ &&
  D_{\bmi}\Sigma = \chi_{\bmi}{}^{\bmk}D_{\bmk}\Omega -\Omega
  L_{\bmi}, \label{CEFEConstraints2}\\ && D_{\bmi}s =-L_{\bmi}\Sigma -
  \Omega L_{\bmi}, \label{CEFEConstraints3} \\ && D_{\bmi}L_{\bmj
    \bmk}-D_{\bmj}L_{\bmi \bmk} =-\Sigma d_{\bmi \bmj \bmk} + d_{\bml
    \bmk \bmi \bmj} D^{\bml}\Omega - (\chi_{\bmi \bmk}L_{\bmj}
  -\chi_{\bmj\bmk}L_{\bmi}), \label{CEFEConstraints4}\\ &&
  D_{\bmi}L_{\bmj}-D_{\bmj}L_{\bmi}=d_{\bml \bmi \bmj}D^{\bml}\Omega +
  \chi_{\bmi}{}^{\bmk}L_{\bmj \bmk}-\chi_{\bmj}{}^{\bmk}L_{\bmi
    \bmk}, \label{CEFEConstraints5} \\ && D^{\bmk}d_{\bmk \bmi
    \bmj}=\chi^{\bmk}{}_{\bmi}d_{\bmj\bmk}
  -\chi^{\bmk}{}_{\bmj}d_{\bmi \bmk}, \label{CEFEConstraints6} \\ &&
  D^{\bmi}d_{\bmi \bmj}=\chi^{\bmi \bmk}d_{\bmi \bmj
    \bmk}, \label{CEFEConstraints7}\\ &&
  D_{\bmj}\chi_{\bmk\bmi}-D_{\bmk}\chi_{\bmj\bmi}=\Omega d_{\bmi \bmj
    \bmk} + h_{\bmi
    \bmj}L_{\bmk}-h_{\bmi\bmk}L_{\bmj}, \label{CEFEConstraints8}\\ &&
  l_{\bmi\bmj}=\Omega d_{\bmi \bmj} + L_{\bmi
    \bmj}-\chi_{\bmk}{}^{\bmk} (\chi_{\bmi \bmj} - \frac{1}{4}\chi
  h_{\bmi \bmj})+ \chi_{\bmk \bmi}
  \chi_{\bmj}{}^{\bmk}-\frac{1}{4}\chi_{\bmk \bml}\chi^{\bmk
    \bml}, \label{CEFEConstraints9} \\ && \lambda=6\Omega s -3\Sigma^2
  -3 D_{\bmk}\Omega D^{\bmk}\Omega, \label{CEFEConstraints10}
\end{eqnarray}
\end{subequations}
 where $D$ is the Levi-Civita connection on
$(\mathcal{S},\bmh)$, $l_{\bmi \bmj}$ is the associated Schouten
tensor,  $d_{\bmi \bmj \bmk} \equiv d_{\bmi \bm0 \bmj
  \bmk},\, d_{\bmi \bmj} \equiv d_{\bmi \bm0 \bmj
  \bm0},\, L_{\bmi} \equiv L_{\bm0 \bmi} $ and
$s$ is a scalar field on $\mathcal{S}$ ---see Appendix
\ref{Appendix:CFE} for the definition of $s$ in context of the
conformal Einstein field equations.

\subsubsection{Constraints at the conformal boundary } 
\label{Section:ConstraitsAtScri}

The conformal constraint equations simplify considerably on
hypersurfaces for which $\Omega=0$. If this is the case then equations
\eqref{CEFEConstraints1}-\eqref{CEFEConstraints9}
 reduce to
\begin{subequations}
\begin{eqnarray}
&& s h_{\bmi \bmj}=\Sigma \chi_{\bmi
    \bmj}, \label{AsymptoticConformalConstraints1}\\ 
&& D_{\bmi}\Sigma=0, \label{AsymptoticConformalConstraints2}\\ 
&& D_{\bmi}s=-L_{\bmi}\Sigma, \label{AsymptoticConformalConstraints3}\\ 
&& D_{\bmi}L_{\bmj \bmk}-D_{\bmj}L_{\bmi\bmk}=-\Sigma d_{\bmi \bmj
    \bmk}-(\chi_{\bmi \bmk}L_{\bmj}-\chi_{\bmj
    \bmk}L_{\bmi}),\label{AsymptoticConformalConstraints4} \\ 
&& D_{\bmi}L_{\bmj}-D_{\bmj}L_{\bmi}=\chi_{\bmi}{}^{\bmk} L_{\bmj
    \bmk}-\chi_{\bmj}{}^{\bmk}L_{\bmi
    \bmk}, \label{AsymptoticConformalConstraints5}\\
 && D^{\bmk}d_{\bmk
    \bmi \bmj} = \chi^{\bmk}{}_{\bmi} d_{\bmj
    \bmk}-\chi^{\bmk}{}_{\bmj}d_{\bmi
    \bmk}, \label{AsymptoticConformalConstraints6}\\ 
&& \lambda=-3\Sigma^2, \label{AsymptoticConformalConstraints7}\\ 
&&
  D^{\bmi}d_{\bmi\bmj}=\chi^{\bmi\bmk}d_{\bmi \bmj   \bmk}, 
 \label{AsymptoticConformalConstraints8}
  \\ 
&&D_{\bmj}\chi_{\bmk \bmi}-D_{\bmk}\chi_{\bmj \bmi}=h_{\bmi\bmj}
  L_{\bmk}-h_{\bmi\bmk}L_{\bmj},\label{AsymptoticConformalConstraints9}
  \\
 && l_{\bmi\bmj}=L_{\bmi \bmj}-\chi(\chi_{\bmi \bmj}
  -\frac{1}{4}\chi h_{\bmi \bmj})+\chi_{\bmk\bmi}\chi_{\bmj}{}^{\bmk}
  -\frac{1}{4}\chi_{\bmk\bml}\chi^{\bmk\bml}h_{\bmi
    \bmj}. \label{AsymptoticConformalConstraints10}
\end{eqnarray}
\end{subequations}

A procedure for obtaining a solution for these
 equations has been given in \cite{Fri86a,Fri95}. Direct algebraic
 manipulations yield
\begin{subequations} 
\begin{eqnarray}
 &\Sigma=\sqrt{\displaystyle\frac{|\lambda|}{3}}, \quad \Sigma_{\bmi}=0,
 \quad s=\Sigma\kappa, \quad \chi_{\bmi
   \bmj}=\kappa h_{\bmi \bmj}, \quad 
 L_{\bmi}=- D_{\bmi} \kappa,& \label{SolutionContraintsGeneral1}\\
&L_{\bmi
   \bmj}=l_{\bmi \bmj}+ \displaystyle\frac{1}{2} \kappa^2 h_{\bmi
   \bmj}, \quad d_{\bmi \bmj \bmk}=-\Sigma^{-1}y_{\bmi
   \bmj \bmk},& \label{SolutionContraintsGeneral2}
\end{eqnarray}
\end{subequations}
where $\kappa$ is an smooth scalar function on the initial
 hypersurface and $y_{\bmi\bmj\bmk}$ denotes the components of the
 Cotton tensor of the metric $\bmh$. The only differential
 condition that has to be solved to obtain a full solution to the
 conformal constraint equations is
\begin{equation}
\label{DivergenceElectricWeyl}
D^{\bmi}d_{\bmi \bmj}=0,
\end{equation}
 where $d_{\bmi \bmj}$ is a symmetric tracefree tensor
encoding the initial data for the electric part of the rescaled Weyl
tensor.

\subsection{The formulation of an asymptotic initial value problem}
\label{AsymptoticInitialValueProblem}

In this section we show how the conformal Gaussian gauge can
be used to formulate an asymptotic initial value problem for the
extended conformal Einstein field equations. Thus, in the sequel we
consider an initial hypersurface on which the conformal factor
vanishes so that it corresponds to the conformal boundary of an
hypothetical spacetime. Accordingly, this initial hypersurface will be
denoted by $\mathscr{I}$. 

\subsubsection{The conformal boundary}

Following Lemma \ref{LemmaCF} 
  we can set, without lost of generality, $\tau_{\star}=0$ on
  $\mathscr{I}$. Moreover, it will be assumed that $f_{\bma}$ vanishes
  initially. Accordingly, we have the initial
condition $\bmbeta_{\star}=\Theta_{\star}^{-1}\mathbf{d}\Theta_{\star}$. Recalling that
$\bmd = \Theta \bmbeta$, and $\tilde{\bmg}^\sharp = \Theta^2
\bmg^\sharp$, and using the constraints in \eqref{LemmaCFConstraint} of
Lemma \ref{LemmaCF}  it readily follows, for the asymptotically problem 
(in which ${\Theta}_{\star}=0$), that 
\[ 
\dot{\Theta}_{\star}= \sqrt{\frac{|\lambda|}{3}}.
\]
 Moreover, using again that $\bmd=\Theta\bmbeta$
and requiring $\dot{\bmx}_{\star}$ to be orthogonal to $\mathscr{I}$
(so that $\dot{\bmx}_{\star}=\bme_{0}$), we have
$d_{0\star}=\dot{\Theta}_{\star}$. It follows that 
\[
d_{0\star}=\sqrt{\frac{|\lambda|}{3}}.
\]

The coefficient $\ddot{\Theta}_{\star}$ is fixed by the requirement
$s=\Sigma \kappa$ on $\mathscr{I}$ ---see \cite{Bey07}.
 From the definition of
$s$ and $\Sigma_{\bma}\equiv \nabla_{\bma}\Theta$ it follows that
\begin{equation}
\begin{aligned}\label{FixingThetaDot}
  s_{\star} & =\left(\frac{1}{4}\nabla_{\bma}\nabla^{\bma}\Theta +
 \frac{1}{24}R\Theta\right)_{\star}  =  \frac{1}{4}(\bme_{\bma}\Sigma^{\bma})_{\star}+
 \frac{1}{4}(\Gamma_{\bma}{}^{\bma}{}_{\bmb}\Sigma^{\bmb})_{\star}   \\ 
  & =  \frac{1}{4}\eta^{\bma\bmb} (\bme_{\bma}\bme_{\bmb}\Theta)_{\star}
 +\frac{1}{4}\dot{\Theta}_{\star}(\Gamma_{\bma}{}^{\bma}{}_{\bm0})_{\star}.
\end{aligned}
\end{equation}

Taking into account that $\Theta$ and $\Sigma_{\bmi}$ vanish at
$\mathscr{I}$ we have that $\eta^{\bma
\bmb}(\bme_{\bma}\bme_{\bmb}\Theta)_{\star}=\ddot{\Theta}_{\star}$. Using
the solution to the constraints given in
\eqref{SolutionContraintsGeneral1}-\eqref{SolutionContraintsGeneral2}
and exploiting the properties of the adapted orthonormal frame we have
$(\Gamma_{\bma}{}^{\bma}{}_{\bm0})_{\star}=
(\Gamma_{\bmi}{}^{\bmi}{}_{\bm0})_{\star}=(\chi_{\bmi}{}^{\bmi})_{\star}=
\kappa \delta_{\bmi}{}^{\bmi}=3\kappa$. Substituting into
\eqref{FixingThetaDot} and using that
$s_{\star}=\dot{\Theta}_{\star}\kappa$ one gets
\[
\ddot{\Theta}_{\star}=\dot{\Theta}_{\star}\kappa. 
\]

\medskip
 Summarising, for an asymptotic initial value problem the
 conformal factor implied by the conformal Gaussian gauge is given by
\begin{equation}
\Theta(\tau)= \sqrt{\frac{|\lambda|}{3}}\tau\Big(1 + \frac{1}{2}\kappa
\tau\Big). 
\label{UniversalConformalFactor}
\end{equation}
The conformal factor given by equation
\eqref{UniversalConformalFactor} is, in a certain sense, Universal. It
does not encode any information about the particular details of the
spacetime to be evolved from $\mathscr{I}$. As such, it can be used to
analyse any spacetime with de Sitter-like Cosmological constant as
long as the spacetime has at least one component of the conformal
boundary. If $\kappa \neq 0$ the conformal boundary has
two components located at
\[ 
\tau =0  \hspace{0.5cm} \text{and} \hspace{0.5cm}
\tau=-\frac{2}{\kappa}.
\]
The first zero corresponds to the initial hypersurface
$\mathscr{I}$. The physical spacetime corresponds to the region where
$\Theta\neq0$.  Therefore, the roots of $\Theta$ render two different
regions of $({\mathcal{M}},{\bmg})$ corresponding to two different
conformal representation of $(\tilde{\mathcal{M}},\tilde{\bmg})$. One
of these representations corresponds to the region covered by the
conformal geodesics with $\tau \in[- {2}/{|\kappa|},0]$ 
or $\tau \in [0,2/|\kappa|]$ and other
corresponds to the region  covered by the
conformal geodesics with $\tau \in [0, \infty)$ or $\tau \in
  (-\infty, 0]$ depending on the sign of $\kappa$. 

\begin{remark}
{\em The discussion of the previous paragraphs is
formal: the component of the conformal boundary given by $\tau
=-2/\kappa$ may not be realised in a specific spacetime. This is, in
particular, the case of the extremal and hyperextremal
Schwarzschild-de Sitter spacetimes in which the singularity precludes
reaching the second conformal infinity ---see Figure \ref{fig:eSdSDiagram}.}
\end{remark}

\subsubsection{Exploiting the conformal gauge freedom}
\label{ExploitConformalGauge}

The conformal freedom of the setting allows us to further simplify the
solution to the conformal constraint equations at $\mathscr{I}$. Given
a solution to the conformal Einstein field equations associated to a
metric $\bmg$, it follows from the conformal covariance of the
equations and fields that the conformally related metric $\bmg'
\equiv \vartheta^2\bmg$ for some $\vartheta$ is also a solution. On an
initial hypersurface $\mathcal{S}$ the latter implies implies $\bmh'=
\vartheta^2_{\star} \bmh$. From the definition of the field $s$ ---see
Appendix \ref{Appendix:CFE}--- and the conformal transformation rule
for the Ricci scalar one has that
 \[ 
s'_{\star}= \vartheta^{-1}_{\star}s_{\star} +
\vartheta^{-2}_{\star}(\nabla_{\bmc}\vartheta)_{\star}(\nabla^{\bmc}\Theta)_{\star}.
\]
Thus, the condition $s'=0$ can be solved locally for
$\vartheta_{\star}$. Accordingly, one chooses $\vartheta_{\star}$ so
that $\kappa=0$.  In this gauge $\chi'_{\bmi \bmj}$ and $L'_{\bmi}$
vanish and $L'_{\bmi \bmj}=l'_{\bmi \bmj}$ at $\mathscr{I}$. In
addition, the conformal factor reduces to
\[
\Theta(\tau)= \sqrt{\frac{|\lambda|}{3}}\tau.
\]
In this representation $\Theta$ has only one zero and the second
component of the conformal boundary (if any) is located at an infinite
distance with respect to the parameter $\tau$.

\subsection{The general structure of the conformal evolution equations}
\label{generalsetting}

One of the advantages of the hyperbolic reduction of the extended
conformal Einstein field equations by means of 
conformal Gaussian systems is that it provides \emph{a priori}
knowledge of the location of the conformal boundary of the solutions
to the conformal field equations. Following the discussion in Section
\ref{ConformalGaussSystems}, the conformal geodesics fix the gauge
through equations \eqref{conformalGaugeConditionsFrameVersion} and
\eqref{GaugeAdaptTimelikeLeg}.  The last condition corresponds to the
requirement on the spacetime to possess a congruence of conformal
geodesics and a Weyl propagated frame ---i.e. equations
\eqref{protoGauge} and \eqref{weylPropagationFrame} are satisfied. 
As already mentioned, the system of evolution equations
\eqref{EvolutionEquation3Plus1Decomposition1}-
\eqref{EvolutionEquation3Plus1Decomposition8}
constitutes a symmetric hyperbolic system. This is the key property
for analysing the existence and stability of perturbations of suitable
 spacetimes using the extended conformal Einstein field equations.

\medskip

To discuss the structure of the conformal evolution system in more
detail, let $\bme$ denote the components of the frame $\bme_{\bmA
  \bmB}$, $\mathbf{\Gamma}$ the independent components of $\chi_{\bmA
  \bmB \bmC \bmD}$ and $\xi_{\bmA \bmB \bmC \bmD}$, and $\bm\phi$ the
independent components of the rescaled Weyl spinor $\phi_{\bmA \bmB
  \bmC \bmD}$. Then the evolution equations
\eqref{EvolutionEquation3Plus1Decomposition1}-\eqref{EvolutionEquation3Plus1Decomposition8}
can be written as
\begin{subequations}
\begin{eqnarray} 
&&\partial_{\tau}\bm\upsilon=\mathbf{K} \bm\upsilon +
  \mathbf{Q}(\bm\Gamma)\bm\upsilon + \mathbf{L}
  (x)\bm\phi \label{structureEvolutionEquations1}, 
\\ && (\mathbf{I} +
  \mathbf{A}^{0}(\bme))\partial_{\tau}\bm\phi +
  \mathbf{A}^{\bmi}\partial_{\bmi}\bm\phi=
  \mathbf{B}(\bm\Gamma), \label{structureEvolutionEquations2}
\end{eqnarray}
\end{subequations}
where $\bm\upsilon$ represents the independent components of the
spinors in the conformal evolution equations except for the rescaled
Weyl spinor whose components are represented by $\bm\phi$. In
addition, $\mathbf{I}$ is the $5\times 5$ identity matrix, $\mathbf{K}$ is a
constant matrix, $\mathbf{Q}$, $\mathbf{A}^{0}$, $\mathbf{A}^{\bmi}$,
and $\mathbf{B}$ are smooth matrix valued functions of its arguments
and $\mathbf{L}(x)$ is a matrix valued function depending on the
coordinates. To have an even more compact notation let
$\mathbf{u}\equiv(\bm\upsilon, \bm\phi)$.
Consistent with this notation, let  $\mathring{\mathbf{u}}$
 denote a solution to the evolution equations
\eqref{structureEvolutionEquations1}-\eqref{structureEvolutionEquations2}
arising from data $\mathring{\mathbf{u}}_{\star}$ prescribed on an hypersurface
$\mathcal{S}$. The solution $\mathring{\mathbf{u}}$
 will be regarded as the \emph{reference solution}.
Consider a general perturbation 
succinctly written as $ \mathbf{u}= \ring{\mathbf{u}} +
\breve{\mathbf{u}}$. Equivalently, one considers
\begin{equation}
\label{splitBackgroundPerturbation}
 \bme = \ring{\bme}+
 \breve{\bme}, \qquad \mathbf{\Gamma}= \mathbf{\ring{\Gamma}}
 + \mathbf{\breve{\Gamma}}, \qquad
 \bm\phi= \ring{\bm\phi} + \breve{\bm\phi}.
\end{equation}
Recalling that $\ring{\mathbf{u}}$ is a solution to the conformal
evolution equations
\eqref{structureEvolutionEquations1}-\eqref{structureEvolutionEquations2}
and making use of the split \eqref{splitBackgroundPerturbation} one
obtains that
\begin{subequations} 
\begin{eqnarray}
&& \partial_{\tau}\breve{\bm\upsilon}=\mathbf{K} \breve{\bm\upsilon} +
  \mathbf{Q}(\ring{\bm\Gamma} + \breve{\bm\Gamma})\breve{\bm\upsilon} +
  \mathbf{Q}(\breve{\bm\Gamma})\ring{\bm\upsilon} +
  \mathbf{L}(x)\breve{\bm\phi}, \label{formPerturbationEquations1}
\\ && (\mathbf{I}+
  \mathbf{A}^{0}(\ring{\bme}+\breve{\bme}))\partial_{\tau}\breve{\bm\phi}
  + (\mathbf{I}+
  \mathbf{A}^{0}(\ring{\bme}+\breve{\bme}))\partial_{\tau}\ring{\bm\phi} +
 \mathbf{A}^{\bmi}(\ring{\bme} +
  \breve{\bme})\partial_{\bmi}\breve{\bm\phi} +
\nonumber \\ &&
     \mathbf{A}^{\bmi}(\ring{\bme} +
  \breve{\bme})\partial_{\bmi}\ring{\bm\phi}=\mathbf{B}
(\ring{\bm\Gamma}+\breve{\bm\Gamma})\breve{\bm\phi} + \mathbf{B}
(\ring{\bm\Gamma}+\breve{\bm\Gamma})\ring{\bm\phi}.\label{formPerturbationEquations2}
\end{eqnarray}
\end{subequations}
Equations \eqref{formPerturbationEquations1} and
\eqref{formPerturbationEquations2} are read as equations for the
components of the perturbed fields $\breve{\bm \upsilon}$ and
$\breve{\bmphi}$. These equations are in a form
  where the theory of first order symmetric hyperbolic systems in
  \cite{Kat75} can be applied to obtain a existence and stability
  result for small perturbations of the initial data
  $\mathring{\mathbf{u}}_{\star}$.  This requires however, the introduction
  of the appropriate norms measuring size of the perturbed initial data
  $\breve{\mathbf{u}}_{\star}$.  This general discussion will not be
  developed further, instead, we particularise this discussion in
  Section \ref{PertsOfeSdS} introducing the appropriate norms
  required to analyse the Schwarzschild-de Sitter spacetime as an
  asymptotic initial value problem.

\section{The Schwarzschild-de Sitter spacetime and its conformal structure}

In this section we briefly review  general properties of
the  Schwarzschild-de Sitter spacetime that will be relevant for 
the main analysis of this article.

\subsection{The Schwarzschild-de Sitter spacetime}
\label{Sec:eSdSIntro}

 The
 \emph{Schwarzschild-de Sitter spacetime} is the
spherically symmetric solution to the Einstein field equations
\begin{equation}
\tilde{R}_{ab}= \lambda \tilde{g}_{ab}
\label{EinsteinFieldEquations}
\end{equation}
 with, in the signature conventions of this article, a \emph{negative}
Cosmological constant given in \emph{static coordinates} $(t,r,\theta,\varphi)$ by
\begin{equation}  
\label{SdSmetric}
\tilde{\bmg}_{SdS}=F(r)\mathbf{d} t \otimes \mathbf{d} t -
F(r)^{-1}\mathbf{d} r \otimes \mathbf{d} r - r^2 \bmsigma,
\end{equation}  
where the function $F(r)$ is given by
\begin{equation}\label{Ffor-SdS}
F(r)\equiv 1 - \frac{2m}{r} + \frac{1}{3}\lambda r^2,
 \end{equation}
and $\bmsigma$ is the standard metric on the 2-sphere $\mathbb{S}^2$
\[ 
\bm\sigma \equiv  \mathbf{d}\theta \otimes \mathbf{d}\theta + \sin^2\theta 
\mathbf{d}\varphi \otimes \mathbf{d}\varphi, 
\] 
with $t\in(-\infty,\infty),\; r\in(0,\infty), \;
\theta \in[0,\pi], \;\varphi\in [0,2\pi)$. This solution
reduces to the \emph{de Sitter spacetime} when $m=0$ and to the
\emph{Schwarzschild solution} when $\lambda=0$. 

\begin{remark}
{\em In the following,
we will only consider the case $m>0$ and we will always assume a 
de Sitter-like value for the cosmological constant $\lambda$.}
\end{remark}

 The location of the roots
of  the polynomial $ r - 2m + \tfrac{1}{3}\lambda r^3 $ are determined by
the relation between $m$ and $\lambda$; whenever $0<9m^2|\lambda|<1$
 this polynomial
has two distinct positive roots $r_{b},r_{c}$ 
and a negative root $r_{-}$  located at 
\begin{eqnarray*}
r_{b} &\equiv&\frac{2}{\sqrt{ |\lambda|}}\cos \left( \frac{\alpha}{3} + \frac{4\pi}{3}\right), \\
r_{c}&\equiv&\frac{2}{\sqrt{ |\lambda|}}\cos \left( \frac{\alpha}{3} \right),\\
r_{-} & \equiv & \frac{2}{\sqrt{ |\lambda|}}\cos \left( \frac{\alpha}{3} + \frac{2\pi}{3}\right), 
\end{eqnarray*}
where $\cos \alpha =-3m\sqrt{|\lambda|}$. The positive roots 
$0<r_{b}\leq r_{c}$ correspond, respectively, to a black hole-like
horizon and a Cosmological-like horizon.  
One can classify this 2-parameter family of
 solutions to the Einstein field equations 
 depending on the relation
between the parameters $m$ and $\lambda$ . 
 The \emph{subextremal Schwarzschild-de Sitter}
spacetime arises when the relation between $m$ and $\lambda$ satisfies
\begin{equation}\label{subextSdSCondition}
0<9 m^2 |\lambda|<1.
\end{equation}
If condition \eqref{subextSdSCondition} holds, one can verify that
$F(r)>0$ for $r_{b}<r<r_{c}$ while $F(r)<0$ in the regions $0\leq r<r_{b}$ and
$r>r_{c}$. Consequently, the solution is static for $r_{b}<r<r_{c}$
---see \cite{BicPod95}.  The \emph{extremal Schwarzschild-de Sitter}
spacetime is obtained by setting
\begin{equation}\label{eSdSCondition}
|\lambda|=1/9m^2.
\end{equation}
If the extremal condition \eqref{eSdSCondition} holds, then the black
hole and Cosmological horizons degenerate into a \emph{single Killing
  horizon} at $r=3m$.  Moreover, one has that $F(r)<0$ for $0 \leq r
<\infty$ so that the hypersurfaces of constant coordinate $r$ are
spacelike while those of constant $t$ are timelike and there are no
static regions.  In the extremal case the function $F(r)$ can be
factorised as
\begin{equation}
\label{Ffor-eSdS}  
F(r)=-\frac{(r-3m)^2(r+6m)}{27m^2r}.
\end{equation} 
In the \emph{hyperextremal Schwarzschild-de Sitter} spacetime one
considers
 \begin{equation}\label{hSdSCondition}
9m^2|\lambda| >1.
\end{equation}
In this case one has again $F(r)<0$ for $0 \leq r <\infty$ so that similar remarks
as those for the extremal case hold. The crucial difference with the
extremal case is that in the hyperextremal case there are no horizons.
 Finally, at $r=0$ it can be verified that the spacetime
has a \emph{curvature singularity} irrespective of the relation
between $m$ and $\lambda$ ---in particular, the scalar
$\tilde{C}_{abcd}\tilde{C}^{abcd}$, with $\tilde{C}^a{}_{bcd}$ the
Weyl tensor of the metric $\tilde{\bmg}_{SdS}$, blows up.

\begin{figure}[t]
\centering
\includegraphics[width=1\textwidth]{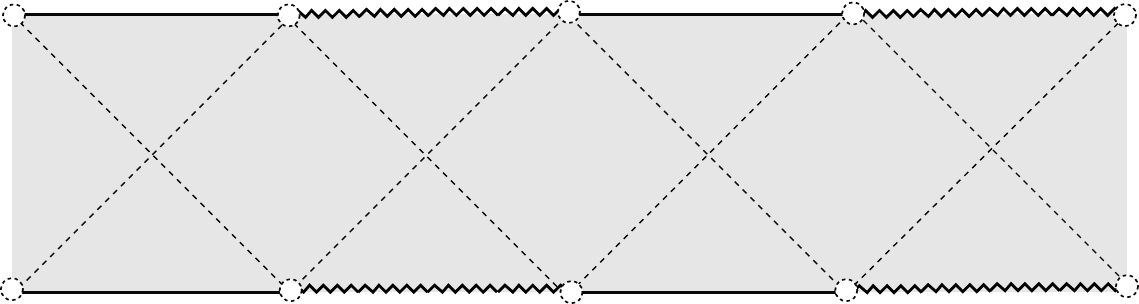}
\put(-390, -8){$\mathscr{I}^{-} (r=\infty)$}
\put(-270, -8){$r=0$}
\put(-185, -8){$\mathscr{I}^{-} (r=\infty)$}
\put(-70, -8){$r=0$}
\put(-390, 114){$\mathscr{I}^{+} (r=\infty)$}
\put(-270, 114){$r=0$}
\put(-185, 114){$\mathscr{I}^{+} (r=\infty)$}
\put(-70, 114){$r=0$}
\put(-418, -8){$\mathcal{Q}$}
\put(-318, -8){$\mathcal{Q'}$}
\put(-213, -8){$\mathcal{Q}$}
\put(-113, -8){$\mathcal{Q'}$}
\put(-10, -8){$\mathcal{Q}$}
\put(-418, 114){$\mathcal{Q}$}
\put(-318, 114){$\mathcal{Q'}$}
\put(-213, 114){$\mathcal{Q}$}
\put(-113, 114){$\mathcal{Q'}$}
\put(-10, 114){$\mathcal{Q}$}
\put(-354, 80){$\mathcal{H}_{c}$}
\put(-388, 80){$\mathcal{H}_{c}$}
\put(-254, 80){$\mathcal{H}_{b}$}
\put(-288, 80){$\mathcal{H}_{b}$}
\put(-150, 80){$\mathcal{H}_{c}$}
\put(-185, 80){$\mathcal{H}_{c}$}
\put(-45, 80){$\mathcal{H}_{b}$}
\put(-76, 80){$\mathcal{H}_{b}$}
\put(-354, 25){$\mathcal{H}_{c}$}
\put(-388, 25){$\mathcal{H}_{c}$}
\put(-254, 25){$\mathcal{H}_{b}$}
\put(-288, 25){$\mathcal{H}_{b}$}
\put(-150, 25){$\mathcal{H}_{c}$}
\put(-185, 25){$\mathcal{H}_{c}$}
\put(-45, 25){$\mathcal{H}_{b}$}
\put(-76, 25){$\mathcal{H}_{b}$}
\caption{Penrose diagram for the subextremal Schwarzschild-de Sitter
  spacetime.  The excluded points $\mathcal{Q}$, $\mathcal{Q'}$
  represent asymptotic regions where the Cosmological horizon appear
  to meet $\mathscr{I}$. As discussed in Section \ref{Sec:eSdSIntro}
  this region of the spacetime does not belong to $\mathscr{I}$.  }
\label{fig:SubSdSDiagram}
\end{figure}

\begin{figure}[t]
\includegraphics[width=1\textwidth]{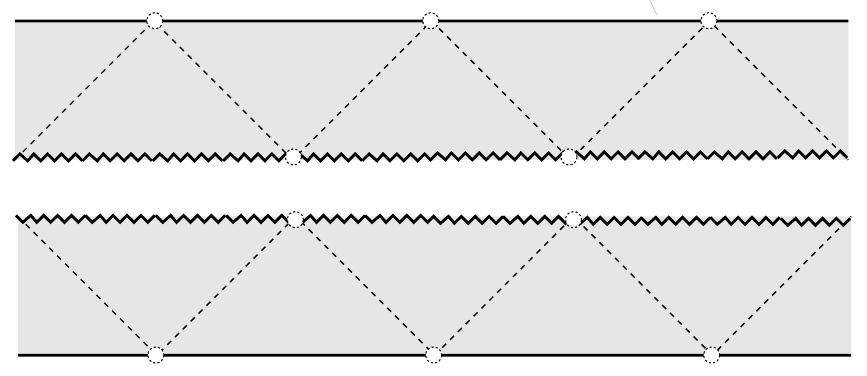}
\put(-400,180){(a)}
\put(-350, 159){$\mathcal{Q}$}
\put(-216, 159){$\mathcal{Q'}$}
\put(-80, 160){$\mathcal{Q}$}
\put(-300,175){$\mathscr{I} \quad (r=\infty)$}
\put(-180,175){$\mathscr{I} \quad (r=\infty)$}
\put(-400,130){$\mathcal{H}$}
\put(-320,130){$\mathcal{H}$}
\put(-242,130){$\mathcal{H}$}
\put(-165,130){$\mathcal{H}$}
\put(-108,130){$\mathcal{H}$}
\put(-28,130){$\mathcal{H}$}
\put(-282, 112){$\mathcal{P}$}
\put(-148, 112){$\mathcal{P}$}
\put(-230,98){$r=0$}
\put(-400,85){(b)}
\put(-230,82){$r=0$}
\put(-282, 63){$\mathcal{P}$}
\put(-148, 63){$\mathcal{P}$}
\put(-300,0){$\mathscr{I} \quad (r=\infty)$}
\put(-180,0){$\mathscr{I} \quad (r=\infty)$}
\put(-348, 15){$\mathcal{Q}$}
\put(-212, 15){$\mathcal{Q'}$}
\put(-78, 15){$\mathcal{Q}$}
\put(-385,30){$\mathcal{H}$}
\put(-317,30){$\mathcal{H}$}
\put(-244,30){$\mathcal{H}$}
\put(-177,30){$\mathcal{H}$}
\put(-112,30){$\mathcal{H}$}
\put(-40,30){$\mathcal{H}$}
\caption{Penrose diagrams for the extremal Schwarzschild-de Sitter
  spacetime. Case (a) corresponds to a \emph{white hole} which evolves
  towards a de Sitter final state while
  case (b) is a model of a black hole with a future singularity. The
  continuous black line denotes the conformal boundary; the serrated
  line denotes the location of the singularity; the dashed line shows
  the location of the Killing   horizons $\mathcal{H}$ at $r=3m$ .
 The excluded points $\mathcal{Q}$, $\mathcal{Q'}$ and
  $\mathcal{P}$ represent asymptotic regions of the spacetime that do not
  belong to $\mathscr{I}$ or the singularity $r=0$. } 
\label{fig:eSdSDiagram}
\end{figure}

\begin{figure}[t]
\centering
\includegraphics[width=0.3\textwidth,angle=90]{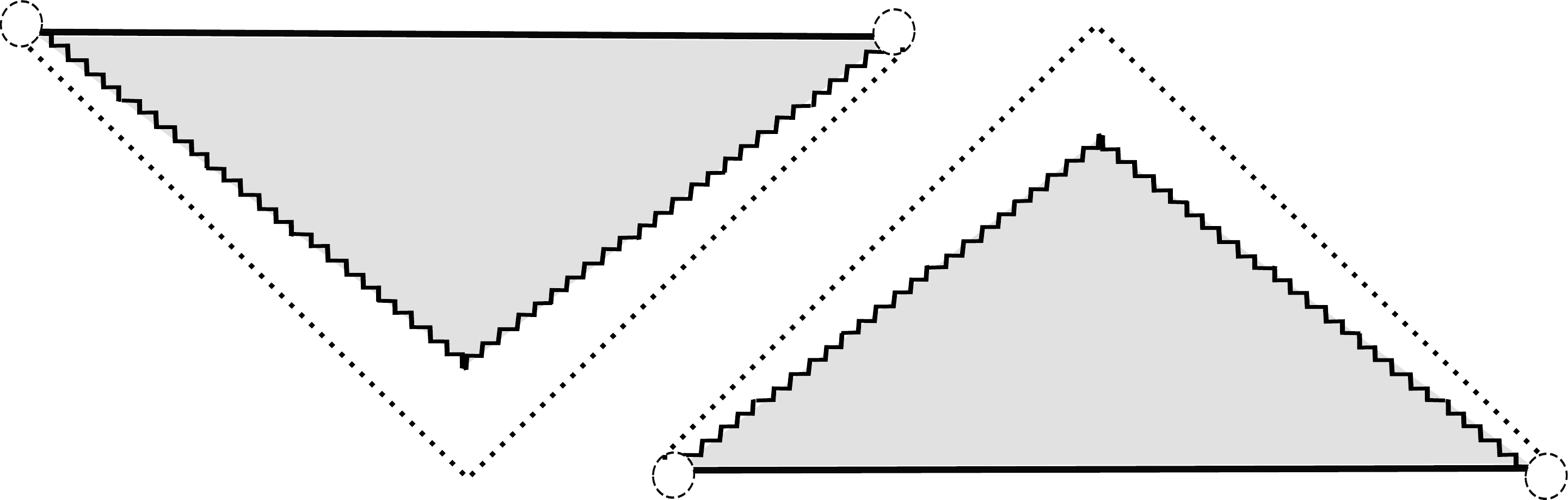}
\put(-410,140){a)}
\put(-396,130){$\mathcal{Q}$}
\put(-175,130){$\mathcal{Q'}$}
\put(-305, 125){$\mathscr{I} \hspace{3mm} (r=\infty)$}
\put(-255, 60){\rotatebox{40}{$r=0$}}
\put(-330, 80){\rotatebox{-38}{$r=0$}}
\put(-160,140){b)}
\put(-230,-10){$\mathcal{Q}$}
\put(-10,-10){$\mathcal{Q'}$}
\put(-145, -10){$\mathscr{I} \hspace{3mm} (r=\infty)$}
\put(-170, 40){\rotatebox{40}{$r=0$}}
\put(-85, 55){\rotatebox{-40}{$r=0$}}
\caption{Penrose diagram for the hyperextremal Schwarzschild-de Sitter
 spacetime. The singularity is of spacelike nature. Dotted lines at
 45 $^{\circ}$ and 135 $^{\circ}$ have been included for
 visualisation.  Case (a) corresponds to a white hole which evolves to a
 final de-Sitter state. Case (b) corresponds to a black hole with a
 future spacelike singularity.  }
\label{fig:HypSdSDiagram}
\end{figure}

\subsection{The $\mathbb{S}^3\backslash\{\mathcal{Q},\mathcal{Q'}\}$-representation}
\label{ConformalStructureSdS}
\medskip
The basic conformal structure of the subextremal and extremal
Schwarzschild-de Sitter spacetimes has already been discussed in
\cite{BazFer85, BicPod95} and \cite{Pod99} respectively. Coordinate
and Penrose diagrams have been also provided in \cite{Gey80} for the
subextremal, extremal and hyperextremal cases.  In this section we
present a concise discussion, adapted to our conventions, of the
conformal structure of the Schwarzschild-de Sitter spacetime in the
subextremal, extremal and hyperextremal cases.  We start our
discussion showing that irrespective of the relation of $m$ and
$\lambda$ the induced metric at the conformal boundary for the
Schwarzschild de Sitter spacetime 
can be identified with the standard metric
on $\mathbb{S}^3$. As discussed in more detail
in Section \ref{MetricOnScri}, this construction
 depends on the particular conformal representation being considered.  In the subextremal case one cannot obtain
simultaneously an analytic
extension regular near both $r_{b}$ and $r_{c}$---see
\cite{BazFer85}.  Since we are interested only in the asymptotic
region, in this section we will consider the region $r>r_{c}$. For the
extremal and hyperextremal cases such considerations are not necessary.

In the following we introduce the null coordinates
\[
u\equiv\sqrt{|\lambda|}(t-\mathfrak{r}), \qquad
v\equiv \sqrt{|\lambda|}(t+\mathfrak{r}),
\]
where $\mathfrak{r}$ is a \emph{tortoise} coordinate given by
\begin{equation}
\label{IntegralTortoise}
\mathfrak{r}\equiv\int \frac{1}{F(r)}dr.
\end{equation}
This integral can be computed explicitly ---see
\cite{BazFer85,BicPod95}.  The particular form of $\mathfrak{r}$
depends on the relation between $\lambda$ and $m$. As
discussed in \cite{BicPod95, Pod99} the integration constant can
always be chosen so that $\mathfrak{r} \rightarrow 0$ as $r
\rightarrow \infty$. Defining $\tan U\equiv u, \tan V\equiv v$, with $U,V \in
            [-\tfrac{\pi}{2},\tfrac{\pi}{2}]$ one gets the line
            element
\begin{equation}\label{MetricWellBehavedHorizon}
\tilde{\bmg}_{SdS}=
\frac{1}{2}\frac{F(r)}{|\lambda|}\sec^2{U}\sec^2{V}\left(\mathbf{d}U\otimes
\mathbf{d}V + \mathbf{d}V \otimes \mathbf{d}U\right)- r^2\bmsigma.
\end{equation}

As discussed in \cite{BicPod95,BazFer85}, one can construct
  \emph{Kruskal type} coordinates covering the black hole horizon
   by choosing appropriately the integration constant in
  equation \eqref{IntegralTortoise}.  Analogously, choosing a
  different integration constant,
  one can construct Kruskal type coordinates covering the cosmological
  horizon.  Nevertheless in the subextremal case,
  as emphasised in \cite{BazFer85}, it
  is not possible to construct Kruskal type coordinates covering
  simultaneously both horizons. To construct the Penrose diagram
  for this spacetime, one considers as building blocks the Penrose
  diagrams for the regions $0 \leq r \leq r_{b}$, $r_{b}\leq r \leq r_{c}$ and
  $r_{c} \leq r < \infty$ which are then glued together using the
  corresponding Kruskal type coordinates to cross each horizon ---see
  \cite{BazFer85,Gey80} for a detailed discussion on the construction
  the Penrose diagram and Kruskal type coordinates in the
Schwarzschild-de Sitter spacetime. Consistent with the above
discussion and given that 
we are only interested in the asymptotic region,
we restrict our attention, in the subextremal case, to 
$r > r_{c}$. In
the extremal case one has, however, that $r_{b}=r_{c}=3m$ and one can
verify that
\[
\lim_{r \to 3m}\frac{\cos U}{r-3m}=\lim_{r \to 3m}\frac{\cos
  V}{r-3m}=C,
\]
where $C\neq 0$ is a constant depending on $m$ and the integration
constant chosen in the definition of $\mathfrak{r}$. Consequently, in
the extremal case, the metric \eqref{MetricWellBehavedHorizon} is well
defined for the whole range of the coordinate $r$: $0<r<\infty$ ---see
\cite{Pod99}. Introducing the coordinates $(\bar{U},\bar{V})$ defined
via
\[
 \tan U \equiv \ln\tan \bigg(\frac{\pi}{4} + \frac{\bar{U}}{2}\bigg), \qquad
 \tan V \equiv \ln\tan \bigg(\frac{\pi}{4} + \frac{\bar{V}}{2}\bigg)
\]
one obtains
\[ 
\tilde{\bmg}_{SdS}=
\frac{1}{2}\frac{F(r)}{|\lambda|}\sec{\bar{U}}\sec{\bar{V}}\left(\mathbf{d}\bar{U}\otimes
\mathbf{d}\bar{V} + \mathbf{d}\bar{V} \otimes
\mathbf{d}\bar{U}\right)- r^2\bmsigma.
\]
Recalling that in the subextremal case $F(r)\leq 0$ for $r\geq r_{c}$
while for the extremal and hyperextremal cases $F(r)\leq 0$ for $0 < r
<\infty$, one identifies the conformal factor
\[
\Xi^{2}=\frac{|\lambda|}{|F(r)|}\cos \bar{U} \cos \bar{V}.
\]
Therefore, we can identify the conformal metric $\bmg_{SdS}= \Xi^2
\tilde{\bmg}_{SdS}$ with 
\begin{equation}
\bmg_{SdS} =-\frac{1}{2}\left(\mathbf{d}\bar{U}\otimes
\mathbf{d}\bar{V} + \mathbf{d}\bar{V} \otimes \mathbf{d}\bar{U}\right)
- \frac{|\lambda|r^2}{|F(r)|}\cos\bar{ U} \cos\bar{ V} \bm\sigma.
\end{equation}
Introducing the coordinates
\[ 
T \equiv \bar{U}+\bar{V}, \qquad \Psi\equiv \bar{V}-\bar{U},
\]
one gets
\[
\bmg_{SdS}= \frac{1}{4}\left ( \mathbf{d}\Psi\otimes \mathbf{d}\Psi -
\mathbf{d}T \otimes \mathbf{d}T \right)
-\frac{|\lambda|r^2}{|F(r)|}\cos \frac{1}{2}\left(T
+\Psi\right)\cos \frac{1}{2}\left(T
-\Psi\right)\bm\sigma.
\]
The analysis in \cite{BazFer85} shows that the conformal factor
$\Xi$ tends to zero as $r \rightarrow \infty$.  Hence, to identify
the induced metric at $\mathscr{I}$ it is sufficient to analyse such
limit.  Noticing that
\[
\mathfrak{r}= \frac{1}{2 \sqrt{|\lambda|}}(v-u) =\frac{1}{2
  \sqrt{|\lambda|}} \ln \left( \frac {\tan(\pi/4 +
  \bar{V})}{\tan(\pi/4 + \bar{U})} \right)
\]
and recalling that
 \[\lim_{r \rightarrow \infty} \mathfrak{r} = 0,\]
 one concludes that  $r\rightarrow \infty$ implies $ \Psi =0$ as
 long as $\bar{U}\neq \pm \tfrac{1}{2}\pi$ and $\bar{V} \neq \pm\tfrac{1}{2}\pi$. 
 Using  equation \eqref{Ffor-SdS} one can verify that
\[
\lim_{r \rightarrow \infty }\frac{|\lambda|r^2}{|F(r)|} = 1.
\]
Consequently, the induced metric on $\mathscr{I}$ is given by
\[
\bmh=-\frac{1}{4}\mathbf{d}T\otimes \mathbf{d}T -
\cos^{2}\frac{T}{2}\bm\sigma
\]
which can be written in a more recognisable form introducing
$\xi\equiv \tfrac{1}{2}(T+\pi)$ so that
\begin{equation}\label{MetricOnS3}
\bmhbar= -\mathbf{d}\xi\otimes \mathbf{d}\xi -\sin^2 \xi \bm\sigma.
\end{equation}
The metric $\bmhbar$ is the standard metric on $\mathbb{S}^3$.
Observe that the excluded points in the discussion of this section
$(\bar{U},\bar{V})=(\pm\tfrac{1}{2}\pi,\pm\tfrac{1}{2}\pi)$ correspond
 to $\xi=0$ and
$\xi=\pi$ ---the North and South pole of $\mathbb{S}^3$.  The Penrose
diagram of the subextremal, extremal and hyperextremal Schwarzschild-de
Sitter spacetime is given in Figure \ref{fig:eSdSDiagram} (a). The
conformal boundary $\mathscr{I}$ of the (subextremal, extremal and
hyperextremal) Schwarzschild-de Sitter spacetime, defined by the
condition $\Xi=0$, is \emph{spacelike} consistent with the fact that
the Cosmological constant of the spacetime is de Sitter-like ---see
e.g. \cite{PenRin86,Ste91}. Moreover, the singularity at $r=0$ is of a
\emph{spacelike nature} ---see \cite{Gey80,Pod99}.  As pointed out in
\cite{BazFer85,GriPod09}, the Schwarzschild-de Sitter spacetime can be
interpreted as the model of a \emph{white hole} singularity towards a
final de Sitter state. Alternatively, making use of a reflection
\[
u \mapsto -u, \qquad v \mapsto -v,
\] 
one obtains a model of a \emph{black hole} with a future singularity
---see Figures \ref{fig:SubSdSDiagram}, \ref{fig:eSdSDiagram} and
\ref{fig:HypSdSDiagram}.

\medskip
In what follows, we adopt the white hole point of
view for the extremal and hyperextremal cases so that $\mathscr{I}$
corresponds to future conformal infinity and we will consider
 a \emph{ backward
  asymptotic initial value problem}. Consistent with this point of
view, for the subextremal case we consider asymptotic initial data on
$\mathscr{I}^{+}$ and study the development of such data towards the
curvature singularity located at $r=0$ ---see Figure \ref{fig:SdSAsymptoticPerturbation}.

\subsection{The $\mathbb{R}\times\mathbb{S}^2$-representation}
\label{TheRcrossS2Representation}

In Section \ref{ConformalStructureSdS} we have shown
  that there exist a conformal representation in which the induced
  metric on the conformal boundary corresponds to the standard metric
  on $\mathbb{S}^3$. A quick inspection shows that the metric \eqref{MetricOnS3} is
\emph{conformally flat}. In this section we put this observation in a
wider perspective and show that the induced metric on $\mathscr{I}$ of
a spherically symmetric spacetime with spacelike $\mathscr{I}$ is
necessarily conformally flat. In addition, a conformal representation
in which the induced metric at the conformal boundary corresponds to
 the standard metric on $\mathbb{R}\times\mathbb{S}^2$ is discussed.
This conformal representation will be of particular importance in 
the subsequent analysis.

\subsubsection{The conformal boundary of spherically symmetric and asymptotically de Sitter spacetimes} 
\label{MetricOnScri}
  Following an argument similar to the
one given in \cite{LueVal14a} we have the following construction for a
spherically symmetric spacetime with spacelike conformal boundary: if
a spacetime $(\tilde{\mathcal{M}},\tilde{\bmg})$ is spherically
symmetric then the metric $\tilde{\bmg}$ can be written in a warped
product form
\begin{equation}
\label{WarpedProductForm}
 \tilde{\bmg}= \tilde{\bmgamma}-\tilde{\rho}^2\bmsigma,
\end{equation}
 where $\tilde{\bmgamma}$ is the 2-metric on the quotient
manifold $\tilde{\mathcal{Q}} \equiv \mathcal{\tilde{M}}/SO(3)$,
$\bmsigma$ is the standard metric of $\mathbb{S}^2$ and $\tilde{\rho}:
\tilde{\mathcal{Q}}\rightarrow \mathbb{R}$. If $\bmg$ and
$\tilde{\bmg}$ are conformally related, $\bmg=\Theta^2 \tilde{\bmg}$, then
the spherical symmetry condition for $\bmg$ is translated into the
requirement that $\bmg$ can be written in the form
\[ 
\bmg=\bmgamma - \rho^2 \bmsigma,
\]
 where $\bmgamma\equiv \Theta \tilde{\bmgamma}$ and $\rho\equiv \Theta
\tilde{\rho}$, where $\Theta$ does not depend on the coordinates on
$\mathbb{S}^2$. Near $\mathscr{I}$ let us introduce local coordinates
$(\Theta,\psi)$ on the quotient manifold $\mathcal{Q} \equiv
\mathcal{M}/SO(3)$ so that $\Theta=0$ denotes the locus of
$\mathscr{I}$. Since the conformal boundary is spacelike we have that
$\bmg(\mathbf{d} \Theta,\mathbf{d} \Theta)>0$. Therefore, the metric
induced on $\mathscr{I}$ by $\bmg$ has the form
\[
\bmh = -A(\psi)\mathbf{d}\psi \otimes \mathbf{d} \psi -
 \rho^2(\psi) \bmsigma,
\]
where $A(\psi)$ is a positive function. Redefining the
coordinate $\psi$ we can rewrite $\bmh$ as
\[ 
\bmh = -\rho^2(\psi)(\mathbf{d} \psi \otimes \mathbf{d} \psi +
\bmsigma).
\]
It can be readily verified ---say, by calculating the Cotton tensor of
$\bmh$--- that the metric $\bmh$ is conformally flat.
In Section \ref{IdentifyAsympRegularData} it will be shown that, in view of the
conformal freedom of the setting, a convenient choice is to consider a conformal representation
in which the the 3-metric on $\mathscr{I}$ is given by
\begin{equation}\label{metricTorus}
 \bmh = -\mathbf{d}\psi \otimes \mathbf{d}\psi - \bm\sigma.
\end{equation}
This metric is the standard metric of the cylinder $\mathbb{R}\times
\mathbb{S}^2$ with $\psi \in (-\infty,\infty)$. It can be verified that this conformal representation
is related to the one discussed in Section 
 \ref{ConformalStructureSdS} via $\bmh=\omega^2\bm\hbar$,
where the conformal factor $\omega$ and
the relation between the coordinates are given by
\begin{equation}\label{ConformalFactorSphereToCylinder}
\psi(\xi)=\psi_{\star}-\ln|\csc \xi + \cot \xi|, \qquad
\omega(\xi)=\csc(\xi).
\end{equation}
Equivalently, one has that 
\[
\xi(\psi)=\arccos \left(
\frac{e^{2(\psi_{\star}-\psi)}-1}{e^{2(\psi_{\star}-\psi)}+1}\right),
\qquad \omega(\psi) = \frac{e^{\psi}}{2e^{\psi_{\star}}} (
e^{2\psi_{\star}} + e^{2\psi}),
\]
where $\psi_{\star}$ is a constant of integration. We can directly
observe that in this representation $\xi=0$ and $\xi=\pi$ correspond
to $\psi= -\infty$ and $\psi= \infty$, respectively.

\subsubsection{The extrinsic curvature of the conformal boundary
 in the $\mathbb{R}\times \mathbb{S}^2$ representation}
\label{PhysicalParameters}

A particularly
simple conformal representation for the Schwarzschild-de
Sitter spacetime can be obtained using the discussion of Section
\ref{MetricOnScri}. Accordingly, take the metric of the 
Schwarzschild-de Sitter spacetime as written in equation \eqref{SdSmetric} with $F(r)$  
as  given by the relation \eqref{Ffor-SdS} and consider the conformal factor
$\wideparen{\Xi}\equiv 1/r$. Introducing the coordinates
$\varrho\equiv {1}/{r}$ and  $\zeta\equiv\sqrt{|\lambda|/3}t $, the conformal metric
 \[
\wideparen{\bmg}\equiv \wideparen{\Xi}^2{}\tilde{\bmg}_{eSdS}
\] 
is given by
\[
\wideparen{\bmg} = \frac{3}{|\lambda|}\Big(\varrho^2 -2m\varrho^3 -
\frac{1}{3}|\lambda| \Big) \mathbf{d} \zeta \otimes \mathbf{d} \zeta -
\Big(\varrho^2 -2m\varrho^3 - \frac{1}{3}|\lambda|\Big)^{-1}
\mathbf{d} \varrho \otimes \mathbf{d}\varrho - \bmsigma.
\]
The induced metric on the hypersurface described by the condition
$\wideparen{\Xi}=0$ is given by
\[ 
\wideparen{\bmh}=- \mathbf{d} \zeta \otimes \mathbf{d}\zeta -
\bmsigma.
\]

It can be verified that \emph{$\wideparen{\bmg}$
satisfies a conformal gauge for which the
conformal boundary has vanishing extrinsic curvature}. To see this, 
consider a $\wideparen{\bmg}$-orthonormal coframe $\{ \bmomega^\bma\}$ 
with 
\[    
\bm\omega^0 = \sqrt{\frac{3}{|\lambda|}}
\left(\varrho^2 -2m\varrho^3
 -\frac{1}{3}|\lambda|\right)^{1/2}\mathbf{d}\zeta 
, \qquad \bm\omega^3 =  \left(\varrho^2 -2m\varrho^3 
-\frac{1}{3}|\lambda|\right)^{-1/2}\mathbf{d}\varrho,
\] 
and $\{\bmomega^1,\,\bmomega^2\}$ a $\bmsigma$-orthonormal
coframe. Denote by $\{ \bme_\bma \}$ the corresponding dual
frame. Using this frame we can directly compute the Friedrich scalar 
 $\wideparen{s} \equiv
\tfrac{1}{4}\wideparen{\nabla}^{\bmc}\wideparen{\nabla}_{\bmc}\wideparen{\Xi}
+ \tfrac{1}{24}\wideparen{R}\hspace{1mm}\wideparen{\Xi}$ ---see
Appendix \ref{Appendix:CFE}.  The computation of the Ricci scalar
yields 
\begin{equation}
\label{RicciEdgarStyleConformalFactor}
 \wideparen{R}=-12m\varrho.
\end{equation}
A direct calculation using 
\[
\wideparen{\nabla}_{\mu}\wideparen{\nabla}^{\mu}\Xi=
\frac{1}{\sqrt{-\det \wideparen{\bmg}}}\partial_{\mu}
(\sqrt{-\det \wideparen{\bmg}} \hspace{1mm} \wideparen{g}^{\mu\nu}\partial_{\nu}\Xi)
\]
shows that $\wideparen{\nabla}_{a}\wideparen{\nabla}^{a} \Xi =
6m\varrho^2 -2\varrho$. Consequently, the scalar $\wideparen{s}$
vanishes at the hypersurface defined by
$\wideparen{\Xi}=\varrho=0$. Contrasting this result with the solution
to the conformal constraints given in equations
\eqref{SolutionContraintsGeneral1}-\eqref{SolutionContraintsGeneral2}
we conclude that in this representation the hypersurface described by
$\wideparen{\Xi}=0$ has vanishing extrinsic curvature as claimed.

\begin{remark}
{\em Notice that, in this
  representation the curvature singularity, located $r=0$, corresponds
  to $\varrho=\infty$.  Consequently, $\mathscr{I}$ is at an infinite
  distance from the conformal boundary.}
\end{remark}

\medskip
Observe that,
 the components of the Weyl tensor with respect to the orthonormal
frame $\{ \bme_\bma \}$ as described
 above are given by
\[ 
C_{1212}=-2m\varrho, \hspace{0.2cm} C_{1313}=m\varrho, \hspace{0.2cm}
C_{1010}=-m\varrho, \hspace{0.2cm} C_{2323}=m\varrho, \hspace{0.2cm}
C_{2020}=-m\varrho, \hspace{0.2cm}C_{3030}=2m\varrho.
\]
This information will be required in the discussion of the initial data
for the rescaled Weyl tensor ---see Section 
\ref{Section:InitialDataRescaledWeylFrame}. Using now that $d_{\bma \bmb \bmc \bmd}= \Xi^{-1}C_{\bma \bmb \bmc
  \bmd}$ with $\wideparen{\Xi}=\xi$ and exploiting the fact that the
computations have been carried out in an orthonormal frame so that
$C^{\bma}{}_{\bmb \bmc \bmd}=\eta^{\bma \bmf}C_{\bmf \bmb \bmc \bmd}$,
we get
\[ 
d_{1212}=-2m, \hspace{0.5cm} d_{1313}=m, 
\hspace{0.5cm} d_{1010}=-m, \hspace{0.5cm} d_{2323}=m,
\hspace{0.5cm} d_{2020}=-m, \hspace{0.5cm}d_{3030}=2m. 
\]
 Finally, considering $d_{ij}\equiv d_{i 0 j 0}$ we have 
\begin{equation}
\label{ComparisonInitialDatumAstar}
d_{11}=-m, \hspace{0.3cm} d_{22}=-m, \hspace{0.3cm} d_{33}=2m.
\end{equation}

\subsection{Identifying asymptotic regular data }
\label{IdentifyAsympRegularData}

As discussed in Section \ref{Sec:eSdSIntro}, 
there is a conformal representation in which the induced 
 metric on the
conformal boundary of the Schwarzschild-de Sitter is the standard
metric $\bmhbar$ on $\mathbb{S}^3$. Nevertheless, the asymptotic points
 $\mathcal{Q}$ and $\mathcal{Q'}$, as depicted in the Penrose diagram of Figure
\ref{fig:eSdSDiagram}, are associated to the behaviour of those
timelike geodesics which never cross the horizon ---see Appendix
\ref{AsymptoticPointsQQprime}.  Despite that, from
the point of view of the intrinsic geometry of $\mathscr{I}$ these
asymptotic regions ---corresponding to the North and South poles of
$\mathbb{S}^3$--- are regular, from a spacetime point of view they are
not. This issue will be further discussed Section
\ref{Section:InitialDataRescaledWeylFrame} where it will be shown that
the initial data for the electric part of rescaled Weyl tensor is
singular at $\mathcal{Q}$ and $\mathcal{Q'}$.  Fortunately, as
exposed in Section \ref{ExploitConformalGauge} one can exploit the
inherent conformal freedom of the setting to select any representative
of the conformal class $[\bmhbar]$ to construct a solution to the
conformal constraint equations.  Taking into account the previous
remarks it will be convenient to choose the
  conformal representation discussed in Section \ref{TheRcrossS2Representation},
  $\bmh=\omega^2\bmhbar$ with $\omega$ and $\bmh$ given in equations
\eqref{metricTorus} and \eqref{ConformalFactorSphereToCylinder},
 in which the points $\mathcal{Q}$ and
  $\mathcal{Q'}$ are at infinity respect to the metric $\bmh$.

\subsubsection{A frame for the induced metric at $\mathscr{I}$}
\label{FrameInitialData}

  Consistent with the discussion of the last section,
 on $\mathscr{I}$ one considers an adapted
frame $\{\bml,\, \bmm,\bar{\bmm}\}$ such that the metric
\eqref{metricTorus} can be written
in the form
\[
\bmh = -(  \bml\otimes \bml +\bmsigma)
\]
where 
\[
\bml=\mathbf{d}\psi, \qquad \bm\sigma= \frac{1}{2}
 ( \bmm \otimes \bar{\bmm} + \bar{\bmm}\otimes \bmm ).
\]
In terms of abstract index notation we have 
\begin{equation}
\label{metricTorusFrame}
h_{ij} = -l_{i}l_{j} - 2m_{(i}\bar{m}_{j)}.
 \end{equation}
The frame $\{\bml,\, \bmm,\bar{\bmm}\}$
satisfies the pairings 
\begin{equation} 
\label{NormalisationConditions}
l_{j}l^{j}=-1, \qquad  m_{j}\bar{m}^{j}=-1, \qquad  l_jm^j=
l_j\bar{m}^j = m_jm^j=\bar{m}_j\bar{m}^j=0.
\end{equation}

\subsubsection{Initial data for the rescaled Weyl tensor}
\label{Section:InitialDataRescaledWeylFrame}

The procedure for the construction of a solution to the conformal
constraints at the conformal boundary requires, in particular, a solution to the
divergence equation \eqref{DivergenceElectricWeyl} for the electric
part of the rescaled Weyl tensor.  The requirement of spherical
symmetry of the spacetime can be succinctly incorporated using the
results in \cite{Pae14}.  If the unphysical spacetime
$(\mathcal{M},\bmg)$ possesses a Killing vector $\bmX$ then the initial
data encoded in the symmetric tracefree tensor $d_{ij}$ must satisfy
the condition
\begin{equation}
\label{KIDcondition}
 \pounds _{\bmX} d_{ij}=0,
\end{equation}
 where $ \pounds_{\bmX}$ denotes the Lie derivative in the direction
 of $\bmX$ on the initial hypersurface. The only  symmetric tracefree
 tensor $d_{\bmi\bmj}$ compatible with the above requirement is given
 by
\begin{equation}\label{formTTtensorSphericallySymmetric}
 d_{ij}= \frac{1}{2}\varsigma (3l_{i}l_{j}+ h_{ij}).
\end{equation}
where $\varsigma=d_{ij}l^{i}l^{j}$.

\medskip
\noindent
\textbf{TT-tensors on $\mathbb{R}^3$.} The general form of symmetric,
tracefree and divergence-free tensors (i.e. \emph{TT-tensors}) in a
conformally flat setting are well-known ---see
e.g. \cite{BeiOMu96,DaiFri01}. For convenience of the reader, in this
short paragraph, we adapt the conventions and discussion given in the
latter references to the present setting. The general the solutions to the equation
\begin{equation}\label{Divergence}
\grave{D}^{\bmi}\grave{d}_{\bmi\bmj}=0,
\end{equation}
 where $\grave{\bmh} \equiv-\bmdelta$ is the flat metric has been given in
 \cite{DaiFri01}.  One can introduce Cartesian coordinates $(x^{\bmalpha})$
 with the origin of $\mathbb{R}^3$ located at a fiduciary position
 $\mathcal{O}$. Additionally, we introduce polar coordinates defined via
 $\rho=\delta_{\bmalpha\bmbeta}x^{\bmalpha}x^{\bmbeta}$.  The flat metric
 in these coordinates
 reads
\begin{equation}\label{FlatMetric}
\grave{\bmh} = -\mathbf{d}\rho \otimes \mathbf{d}\rho -\rho^2 \bmsigma.
\end{equation}
Using this notation and taking into account the requirement of
spherical symmetry encoded in equation \eqref{KIDcondition} the flat
space counterpart of the required solution is
\[
\grave{\bmd} = \frac{A_{\star}}{\rho^3}\left(3 \mathbf{d}\rho \otimes
\mathbf{d}\rho + \grave{\bmh} \right),
\]
where $A_{\star}$ is a constant. In order to obtain an analogous
solution in conformally related 3-manifolds one can exploit the conformal
properties of equation \eqref{Divergence} using the following:

\begin{lemma}\label{Lemma:TTtensorTransformation}
Let $\bar{d}_{ij}$ be a tracefree symmetric solution to
$\bar{D}^{i}\bar{d}_{ij}=0$ where $\bar{D}$ is the
Levi-Civita connection of $\bar{\bmh}$. Let $\bmh=\omega^2\bar{\bmh}$,
then $d_{ij}=\omega^{-1}\bar{d}_{ij}$ is a symmetric
tracefree solution to $D^{i}d_{ij}=0$ where $D$ is the
Levi-Civita connection of $\bmh$.
\end{lemma}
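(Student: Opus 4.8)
The statement has three parts, two of which are immediate. Symmetry of $d_{ij}=\omega^{-1}\bar{d}_{ij}$ is inherited directly from that of $\bar{d}_{ij}$, and tracefreeness follows from $h^{ij}=\omega^{-2}\bar{h}^{ij}$, since
\[
h^{ij}d_{ij}=\omega^{-2}\bar{h}^{ij}\,\omega^{-1}\bar{d}_{ij}=\omega^{-3}\,\bar{h}^{ij}\bar{d}_{ij}=0.
\]
The entire content of the Lemma therefore lies in the divergence condition $D^{i}d_{ij}=0$, and the plan is to deduce it from $\bar{D}^{i}\bar{d}_{ij}=0$ using the conformal transformation law relating the Levi-Civita connections $D$ and $\bar{D}$.

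First I would record the transition tensor between the two connections. Writing $\Upsilon_{i}\equiv\partial_{i}\ln\omega$ and $\bar{\Upsilon}^{k}\equiv\bar{h}^{kl}\Upsilon_{l}$, the rescaling $\bmh=\omega^{2}\bar{\bmh}$ gives (this is the spatial, purely metric analogue of the formula quoted earlier for $\Gamma-\tilde{\Gamma}$, with $\bar{h}_{ij}\bar{h}^{kl}=h_{ij}h^{kl}$ so that either metric may be used in the last term)
\[
\Gamma^{k}{}_{ij}-\bar{\Gamma}^{k}{}_{ij}=\delta^{k}{}_{i}\Upsilon_{j}+\delta^{k}{}_{j}\Upsilon_{i}-\bar{h}_{ij}\bar{\Upsilon}^{k}\equiv C^{k}{}_{ij}.
\]
Expanding the covariant derivative of $d_{ij}=\omega^{-1}\bar{d}_{ij}$, the derivative hitting the factor $\omega^{-1}$ produces a term $-\Upsilon_{k}\bar{d}_{ij}$, and collecting everything one obtains
\[
D_{k}d_{ij}=\omega^{-1}\Big(\bar{D}_{k}\bar{d}_{ij}-\Upsilon_{k}\bar{d}_{ij}-C^{l}{}_{ki}\bar{d}_{lj}-C^{l}{}_{kj}\bar{d}_{il}\Big).
\]
The next step is to contract with $h^{ik}=\omega^{-2}\bar{h}^{ik}$, giving $D^{i}d_{ij}=\omega^{-3}\bar{h}^{ik}(\cdots)$, and to evaluate the four resulting contractions one at a time.

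The heart of the proof is then the cancellation of all $\Upsilon$-dependent terms. The piece $\bar{h}^{ik}\bar{D}_{k}\bar{d}_{ij}=\bar{D}^{i}\bar{d}_{ij}$ vanishes by hypothesis. Writing $V_{j}\equiv\bar{\Upsilon}^{l}\bar{d}_{lj}$ and inserting the explicit form of $C^{l}{}_{ki}$, the remaining contractions reduce to multiples of $V_{j}$: the weight term contributes $sV_{j}$ (here $s=-1$); the contraction $-\bar{h}^{ik}C^{l}{}_{ki}\bar{d}_{lj}$ yields $(n-2)V_{j}$, the dimension-dependent coefficient arising from the trace identity $\bar{h}^{ik}\bar{h}_{ki}=n$; and the contraction $-\bar{h}^{ik}C^{l}{}_{kj}\bar{d}_{il}$ vanishes, one of its pieces being proportional to the trace $\bar{h}^{ik}\bar{d}_{ik}=0$. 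Summing, one finds $D^{i}d_{ij}=\omega^{-3}(s+n-2)V_{j}$, and this coefficient vanishes precisely for the weight $s=-1$ in dimension $n=3$. The main (and essentially only) subtlety is thus combinatorial rather than analytic: it is the conspiracy between the conformal weight $-1$ carried by $d_{ij}$, which is exactly the value $s=2-n$, the spatial dimension $n=3$, and the tracefree property of $\bar{d}_{ij}$ that forces the divergence to transform homogeneously. No genuine obstacle remains once the transition tensor is inserted and the contractions are organised as above.
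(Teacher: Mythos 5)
Your proof is correct, and every step checks out: the transition tensor $C^{k}{}_{ij}$ is the right one for $\bmh=\omega^{2}\bar{\bmh}$, the contraction $-\bar{h}^{ik}C^{l}{}_{ki}\bar{d}_{lj}$ indeed gives $(n-2)V_{j}$, the mixed contraction $-\bar{h}^{ik}C^{l}{}_{kj}\bar{d}_{il}$ vanishes (one piece by tracefreeness, the other two cancelling each other), and the total coefficient $s+n-2$ vanishes exactly for $s=-1$, $n=3$. Note, however, that the paper does not prove this lemma at all: it simply cites Dain \& Friedrich (2001) and adapts the statement to its conventions, so there is no ``paper proof'' to match step by step. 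What you have done is supply the standard conformal-weight computation that underlies the cited result, in a self-contained way. Your version has the added virtue of exposing the general mechanism: a symmetric tracefree tensor of conformal weight $2-n$ has conformally covariant divergence in dimension $n$, which is precisely why the weight $-1$ is forced in the three-dimensional setting of the paper, and why the same bookkeeping also yields the spacetime version (weight $-2$ in $n=4$) used elsewhere in the conformal framework. One cosmetic remark: in your displayed formula for $D_{k}d_{ij}$ you have already specialised to $s=-1$ (the term $-\Upsilon_{k}\bar{d}_{ij}$), while in the final accounting you revert to general $s$; it would read more cleanly to carry $\omega^{s}$ throughout and substitute $s=-1$ only at the end. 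Also, since the paper's spatial metrics are negative definite, it is worth a one-line remark that nothing in the argument uses signature --- only the Levi-Civita transformation law and the trace identity $\bar{h}^{ik}\bar{h}_{ki}=3$.
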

This lemma can be found in \cite{DaiFri01}. Here we have adapted the
statement to agree with the conventions of this article.

\medskip
\noindent
\textbf{TT-tensors on $\mathbb{S}^3$ and
  $\mathbb{R}\times\mathbb{S}^2$.} One can exploit Lemma \ref{Lemma:TTtensorTransformation} to derive
spherically symmetric solutions of the divergence equation
\eqref{Divergence} in conformally flat 3-manifolds. In particular, the
metrics $\bmhbar$ and $\grave{\bmh}$ as given in equations
\eqref{MetricOnS3} and \eqref{FlatMetric} are related via
\[
\bmhbar=\omega^2 \grave{\bmh},
\]
where 
\begin{equation}\label{PlaneToSphere1}
\rho(\xi)= \cot(\xi/2), \qquad \omega(\xi)=2\sin^2(\xi/2),
\end{equation}  
The coordinate transformation $\rho(\xi)$ corresponds to the
stereographic projection in which the origin $\mathcal{O}$
of $\mathbb{R}^3$ is mapped to the South pole on $\mathbb{S}^3$.
Alternatively, one can also derive
 \begin{equation}\label{PlaneToSphere2}
\rho(\xi)=\tan(\xi/2), \qquad \omega(\xi)=2\cos^2(\xi/2),
\end{equation}
corresponding to the stereographic projection in which the
 origin of $\mathbb{R}^3$
is mapped to the North pole of $\mathbb{S}^3$. 
Using Lemma \ref{Lemma:TTtensorTransformation} with equations
 \eqref{PlaneToSphere1} or \eqref{PlaneToSphere2}
 one obtains 
\begin{equation} \label{TTtensorSphere}
\bm{\dbar}=\frac{A_{\star}}{2\sqrt{1-\omega^2(\xi)}}\left( 3\mathbf{d}\xi
 \otimes \mathbf{d}\xi + \bmhbar \right).
\end{equation}
Observe that $\dbar_{\bmi\bmj}$ is singular when $\omega(\xi)=1$ which
 corresponds to
$\xi=0$ and $\xi=\pi$ according to equations \eqref{PlaneToSphere1} 
and \eqref{PlaneToSphere2}, respectively.
Therefore, in this conformal representation the electric part of the
 rescaled Weyl tensor
is singular at the North and South poles of $\mathbb{S}^3$.
Proceeding in a analogous way as in the previous paragraphs
 one can observe that the metrics $\bmh$ and $\grave{\bmh}$
given in equations \eqref{metricTorus} and \eqref{FlatMetric}
are related via
\[
\bmh=\omega^2\grave{\bmh}
\]
where
\[
\rho(\psi)= e^{\psi}, \qquad \omega(\psi)=e^{-\psi}.
\]
A straightforward computation using Lemma \ref{Lemma:TTtensorTransformation}
 renders
\begin{equation} \label{TTtensorCylinder}
\bm{d}=A_{\star}\left( 3\mathbf{d}\psi \otimes \mathbf{d}\psi + \bmh \right).
\end{equation}
Moreover, since $D^{i}d_{ij}=3 A_{\star} D^{i}(l_{i}l_{j})$, it
follows that verifying that
$d_{ij}$ satisfies the condition \eqref{KIDcondition} reduces to the
computation of $\omega_i \equiv  \pounds_{\bmX}l_{i}$ and  showing that the
components of $\omega_i$ along any leg of the frame vanishes ---that
is
\[
l^i\omega_i =0, \qquad m^i\omega_i=0, \qquad
\bar{m}^i\omega_i=0.
\]
The latter can easily be done using the Killing equation
$\pounds_{\bmX}h_{ij}=2D_{(i}X_{j)}=0$ along with equations
\eqref{metricTorusFrame} and \eqref{NormalisationConditions}. Finally,
comparing expression \eqref{TTtensorCylinder} with equation
\eqref{ComparisonInitialDatumAstar} we can recognise that
$A_{\star}=m$.  Observe that this identification is irrespective of
the extrinsic curvature of $\mathscr{I}$.

\subsection{Asymptotic initial data for the Schwarzschild-de Sitter spacetime}
\label{sec:InitialData}

In the last section it was shown that the
$\mathbb{R}\times \mathbb{S}^2$-conformal  representation 
 leads to regular asymptotic data for the rescaled Weyl tensor. In this
  section we complete the discussion the asymptotic initial data for the
Schwarzschild-de Sitter spacetime 
in this conformal representation.  
To do so, we make use of the procedure to solve the
  conformal constraints at the conformal boundary as discussed in
  Section \ref{Section:ConstraitsAtScri} and the specific properties
  of the Schwarzschild-de Sitter spacetime.  

\subsubsection{Initial data for the Schouten tensor}
\label{section:intialDataSchoutenFrame}

Computing the Schouten tensor $\Schouten[\bmh]$ of $\bmh$ we get that
\[
\Schouten[\bmh]= -\frac{1}{2}\mathbf{d}\psi \otimes \mathbf{d}\psi + \frac{1}{2}\bmsigma.
\] 
 Equivalently, in abstract index notation one writes
\[
l_{ij}=-l_{i}l_{j} - \frac{1}{2}h_{ij}.
\]
 Thus, recalling the solution
to the conformal constraints given in equation
\eqref{SolutionContraintsGeneral2} we get,
\[
L_{ij}=-l_il_j-\frac{1}{2}(1-\kappa^2)h_{ij}. 
\]

\subsubsection{Initial data for the connection coefficients}
\label{section:initialdataFrameConnection}

In order to compute the connection coefficients associated with the
coframe $\{ \bmomega_\bmi \}$ recall that
$\bmomega^3=\mathbf{d}\psi$ and $\{\bmomega^1, \bmomega^2\}$ are
$\bmsigma$-orthonormal. Equivalently, one has that $\{ \bme_\bmi \}
=\{\bmpartial_{\psi},\bme_{\bm1}, \bme_{\bm2} \}$ with 
\[
\bme_{\bm1}=
\frac{1}{\sqrt{2}}(\bmm + \bar{\bmm}), \hspace{1cm}\bme_{\bm2}=
\frac{\mbox{i}}{\sqrt{2}}(\bmm -\bar{\bmm}),
\] 
where $\bmsigma= \bmm \otimes \bar{\bmm} + \bar{\bmm} \otimes \bmm$, so that
\[ 
\bmh =  -\bm\omega^1 \otimes \bm\omega^1 -\bm\omega^2 \otimes \bm\omega^2 
-\bm\omega^3 \otimes \bm\omega^3. 
\]

The connection coefficients can be obtained using the first structure
equation \eqref{FirstCartanStructureEquationFrame} given in Appendix
\ref{CartanFrame}. Proceeding in this manner, by a straightforward
computation, one can show that the only non-zero connection coefficient
is $\gamma_{2}{}^{2}{}_{1}$. In terms of the Ricci-rotation
coefficients, the latter corresponds to
$2\sqrt{2}\operatorname{Re}(\alpha_{\star})$ where $\alpha_{\star}
=-\frac{1}{2}\bar{m}^a\bar{\delta}m_{a}$ in the standard NP
notation ---see \cite{Ste91}. Therefore, the only no-trivial initial data for
the connection coefficients is \[ \gamma_{2}{}^{2}{}_{1}=
\sqrt{2}(\alpha_{\star}+ \bar{\alpha}_{\star}).
\]

\medskip
\noindent
\textbf{Remark 4.} The frame over the cylinder $\mathbb{R}\times
\mathbb{S}^2$ introduced in this section is not a global
one. Nevertheless, it is possible to construct an atlas covering
$\mathbb{R}\times\mathbb{S}^2$ such that one each of the charts one
has a well defined frame of the required form.

\subsubsection{Spinorial initial data}
\label{Sec:SpinorInitialData}

 In this section we discuss the spinorial counterpart of the
 asymptotic initial data computed in the previous sections.

\subsubsection{Spin connection coefficients}\label{Section:Spinorialconnection}

The spinorial counterpart of the asymptotic initial data constructed
in the previous sections is readily obtained by suitable
contraction with the spatial Infeld-van der Waerden symbols ---see Appendix
 \ref{CartanSpacespinor}. 
Following the discussion of Section
\ref{section:initialdataFrameConnection}, let $\bmomega^3=
 \mathbf{d}\psi$ and let $\{ \bmomega^1, \bmomega^2 \}$ denote an
 $\bmsigma$-orthonormal coframe. Using equations
 \eqref{Infeldxyz2} of Appendix \ref{CartanSpacespinor} we have that
 the spinorial coframe is given by
\begin{equation}
\label{spinorCoframexyz}
 \bmomega^{\bmA\bmB}= \sigma_{\bmi}{}^{\bmA \bmB}\bmomega^{\bmi} =
 (y^{\bmA\bmB} + z^{\bmA \bmB})\bmomega^1 + \mbox{i}(y^{\bmA
   \bmB}-z^{\bmA \bmB})\bmomega^2 - x^{\bmA \bmB}\bmomega^3.
\end{equation}
Alternatively, one has that the \emph{spinorial frame} is given by 
\[
\bme_{\bmA \bmB}=x_{\bmA \bmB}\bme_{x}{}^3\bmpartial_{\psi} + \sqrt{2}
y_{\bmA \bmB}e_{y}{}^{+} \bar{\bmm}^{\flat}+ \sqrt{2} z_{\bmA
  \bmB}e_{z}{}^{-}\bmm^{\flat}
\]
 where  $\bme_{x}{}^{3}$, $\bme_{y}{}^{+}$, $\bme_{z}{}^{-}$ denote the
 only non-vanishing frame coefficients. Equation \eqref{spinorCoframexyz} allow
us to compute the reduced connection coefficients
$\gamma_{\bmA}{}^{\bmB}{}_{\bmC\bmD}$ using the first Cartan structure
equation \eqref{FirstCartanStructureEquationSpinor} in Appendix
\ref{CartanSpacespinor}. Alternatively, one can
use the results of Section \ref{section:initialdataFrameConnection} and the 
spatial Infeld-van der Waerden symbols to compute
\[
\gamma_{\bmA \bmB}{}^{\bmC \bmD}{}_{\bmE \bmF} \equiv \gamma_{\bmi}{}^{\bmj}{}_{\bmk}
\sigma_{\bmA \bmB}{}^{\bmi}\sigma^{\bmC \bmD}{}_{\bmj} \sigma_{\bmE
  \bmF}{}^{\bmk},
\] 
where
\[
\gamma_{\bmi}{}^{\bmj}{}_{\bmk}=
\delta_{\bmi}{}^{2}\delta_{1}{}^{\bmj}\delta_{\bmk}{}^{2}\gamma_{2}{}^{1}{}_{2}
+\delta_{\bmi}{}^{2}\delta_{2}{}^{\bmj}\delta_{\bmk}{}^{1}\gamma_{2}{}^{2}{}_{1},
 \]
with 
\[
\gamma_{2}{}^{1}{}_{2}=-\sqrt{2}(\alpha_{\star}+
\bar{\alpha}_{\star}), \qquad \gamma_{2}{}^{2}{}_{1}=
\sqrt{2}(\alpha_{\star}+ \bar{\alpha}_{\star}).
\] 
Using the identities \eqref{Infeldxyz1}-\eqref{Infeldxyz2} in Appendix \ref{CartanSpacespinor} one obtains
\[ 
\gamma_{\bmA \bmB}{}^{\bmC \bmD}{}_{\bmE \bmF}= 2 \sqrt{2}(\alpha_{\star}+ \bar{\alpha}_{\star}) 
(y_{\bmA \bmB}-z_{\bmA \bmB})
(y_{\bmE \bmF}z^{\bmC \bmD}-y^{\bmC \bmD}z_{\bmE \bmF}). 
\]
Thus, the reduced connection coefficients are given by
\begin{equation} 
\label{ReducedConnection}
 \gamma_{\bmA \bmB}{}^{\bmD}{}_{\bmF} \equiv \frac{1}{2}\gamma_{\bmA
   \bmB}{}^{\bmC \bmD}{}_{\bmC \bmD} = (\alpha_{\star}+ \bar{\alpha}_{\star})
 x^{\bmD}{}_{\bmF}(y_{\bmA \bmB}-z_{\bmA \bmB}).
\end{equation}
By computing the spinor version of the \emph{connection form}
$\bmgamma^{\bmD}{}_{\bmF} \equiv \gamma_{\bmA
\bmB}{}^{\bmD}{}_{\bmF}\bmomega^{\bmA \bmB}$ using equations
\eqref{ReducedConnection} and \eqref{spinorCoframexyz} one can readily
verify that the first structure equation is satisfied. Additionally, using
the reality conditions, 
\[
x_{\bmA\bmB}{}^{\dagger}=-x_{\bmA \bmB}, \qquad y_{\bmA
  \bmB}{}^{\dagger}=z_{\bmA \bmB}, \qquad z_{\bmA
  \bmB}{}^{\dagger}=y_{\bmA \bmB}
\]
 we can verify that $\gamma_{\bmA \bmB \bmC \bmD}$ is an imaginary
spinor ---as is to be expected from the space spinor formalism. The
field  $\gamma_{\bmA \bmB \bmC \bmD}$ represents the initial data for
the field $\xi_{\bmA \bmB \bmC \bmD}$ ---the imaginary part of the
reduced connection coefficient  $\Gamma_{\bmA \bmB \bmC\bmD}$.  The
real part of $\Gamma_{\bmA \bmB \bmC\bmD}$ corresponds to the Weingarten spinor
$\chi_{\bmA \bmB \bmC \bmD}$ which, in accordance with equation
\eqref{SolutionContraintsGeneral1}, is given initially by
\[ 
\chi_{\bmA \bmB \bmC \bmD}= \kappa h_{\bmA \bmB \bmC \bmD}.
\]
 Rewriting the reduced connection coefficients
\eqref{ReducedConnection} in terms of the basic valence-4 spinors
introduced in Section \ref{sec:SphericalSymmetrySpinors} we get for
$\xi_{\bmA \bmB \bmC \bmD}=\gamma_{\bmA \bmB \bmC \bmD}$ the explicit expression
\begin{eqnarray*}
&& \xi_{\bmA \bmB \bmC \bmD}= -(\alpha_{\star}+ 
\bar{\alpha}_{\star})
(\epsilon^{1}{}_{\bmA \bmB \bmC \bmD} + \epsilon^{3}{}_{\bmA \bmB \bmC
  \bmD}) \\
&& \hspace{3cm}+ \frac{1}{2\sqrt{2}}(\alpha_{\star}+ \bar{\alpha}_{\star})\epsilon_{\bmA \bmC}
(y_{\bmB \bmD} + z_{\bmB \bmD}) + \frac{1}{2\sqrt{2}}(\alpha_{\star}+ \bar{\alpha}_{\star})
\epsilon_{\bmB \bmD} (y_{\bmA \bmC}+ z_{\bmA \bmC}). 
\end{eqnarray*}

\subsubsection{Spinorial counterpart of the Schouten tensor}
\label{Section:InitialDataSpinorSchouten}

The spinorial counterpart of the Schouten tensor $l_{ij}$ can be
directly read from the expressions in Section
\ref{section:intialDataSchoutenFrame}. Observe that the elementary
spinor $x^{\bmA \bmB}$ corresponds to the components of $l_{i}$ with
respect to the coframe \eqref{spinorCoframexyz} since
\[ 
\bmomega^{\bmA \bmB}x_{\bmA \bmB}=-x^{\bmA \bmB}x_{\bmA \bmB}\bmomega^3=
\bmomega^3=\mathbf{d}\psi = \bml. 
\]
 Replacing $h_{\bmi\bmj}$ by its space spinor
counterpart $h_{\bmA \bmB \bmC \bmD}$ we obtain
\[
l_{\bmi \bmj} \mapsto  l_{\bmA \bmB \bmC \bmD}= -x_{\bmA \bmB}x_{\bmC \bmD}
-\frac{1}{2}h_{\bmA \bmB \bmC \bmD}. 
\]
Equivalently, recalling that the space spinor counterpart of
the tracefree part of a tensor $l_{ \{ \bmi\bmj\}} \equiv l_{\bmi
  \bmj}-\tfrac{1}{3}lh_{\bmi \bmj}$ corresponds to the totally
symmetric spinor $l_{(\bmA \bmB \bmC \bmD)}$ it follows then from
\[ 
l_{\bmi \bmj}= l_{\{ \bmi \bmj\}} + \frac{1}{3} l h_{\bmi \bmj}, 
\]
that
 \[
l_{\bmA \bmB \bmC \bmD} =l_{(\bmA \bmB \bmC \bmD)} + 
\frac{1}{3}l h_{\bmA \bmB \bmC \bmD}.
\]
Thus, using that for the metric \eqref{metricTorus}  one has
$r=-2$ and that $l \equiv h^{\bmi\bmj}l_{\bmi\bmj}= \frac{1}{4}r$, it
follows that  $l= -\frac{1}{2}$ and $l_{(\bmA \bmB
  \bmC \bmD)}=-x_{(\bmA \bmB}x_{\bmC \bmD)}=-2\epsilon^{2}_{\bmA \bmB
  \bmC \bmD}$.  Therefore, we get
\begin{equation}
\label{intialSchoutenIntrinsic}
l_{\bmA\bmB \bmC \bmD}= -2\epsilon^{2}{}_{\bmA \bmB \bmC \bmD} -
\frac{1}{6}h_{\bmA \bmB \bmC \bmD}.
\end{equation}
 Finally, recalling the expressions for the components of the spacetime
 Schouten tensor given in \eqref{SolutionContraintsGeneral2} we conclude
\[
L_{\bmA \bmB \bmC \bmD}= -2\epsilon^{2}{}_{\bmA \bmB \bmC \bmD} 
- \frac{1}{6}(1-3\kappa^2)h_{\bmA \bmB \bmC \bmD}.
\]

\subsubsection{Initial data for the rescaled Weyl spinor}
\label{Section:InitialWeylSpinor}

Following the approach employed in last section, the spinorial
counterpart of \eqref{TTtensorCylinder} is given by
\[ 
d_{\bmA \bmB \bmC \bmD}= A_{\star}(3l_{\bmA \bmB}l_{\bmC \bmD} 
+ h_{\bmA \bmB \bmC \bmD}).
\]

However, the trace-freeness condition simplifies the last
expression since $d^{\bmi}{}_{\bmi}=0$ implies that $d_{\bmi \bmj}=d_{
  \{ \bmi \bmj\}}$. Therefore $d_{\bmA \bmB \bmC \bmD}= d_{(\bmA \bmB
  \bmC \bmD)}=3 A_{\star}l_{(\bmA \bmB}l_{\bmC \bmD)}$. As the
elementary spinor $x_{\bmA \bmB}$ can be  associated to the
components of $\bml$ respect to the co-frame
\eqref{spinorCoframexyz} one gets that 
\[ 
d_{\bmA \bmB \bmC \bmD}=3A_{\star}x_{(\bmA \bmB}x_{\bmC \bmD)}.
\]
This last expression can be equivalently written in terms of
the basic valence-4 space spinors of Section
\ref{sec:SphericalSymmetrySpinors} as
\[
\phi_{\bmA \bmB \bmC \bmD}= 6m\epsilon^{2}{}_{\bmA \bmB \bmC \bmD}. 
\]
where, in the absence of a magnetic part, we
 have identified $\phi_{\bmA \bmB \bmC \bmD}$
initially with $d_{\bmA \bmB \bmC \bmD}$. Observe that  have set 
$A_{\star}=m$ consistent with the discussion of Section
  \ref{Section:InitialDataRescaledWeylFrame}.

\section{The solution to the asymptotic initial value problem for the 
Schwarzschild-de Sitter spacetime and perturbations}

As already discussed in the introductory section, recasting explicitly the
Schwarzschild-de Sitter spacetime as a solution to the system of
conformal evolution equations
\eqref{EvolutionEquation3Plus1Decomposition1}-\eqref{EvolutionEquation3Plus1Decomposition9}
requires solving, in an explicit manner, the conformal geodesic
equations. This, as discussed in Appendix
  \ref{eSdS:AsymptoticPoints}, is not possible in general. Instead,
an alternative approach is to study directly the conformal evolution
equations
\eqref{EvolutionEquation3Plus1Decomposition1}-\eqref{EvolutionEquation3Plus1Decomposition9}
making explicit the spherical symmetry of the solution and the
asymptotic initial data corresponding to the Schwarzschild-de Sitter
spacetime. This approach does not only extract the required
information about the reference solution ---in the conformal Gaussian
gauge--- but, in addition, is a model for the general structure of the
conformal evolution equations. The relevant analysis is
  discussed in Sections \ref{sec:SphericalSymmetrySpinors} and
  \ref{sec:CoreAnalysis}.   As a complementary analysis,
  we study the the formation of singularities in the evolution
  equations.  In order to have a more compact discussion leading to the Main
  Result, the analysis of the formation of singularities is presented
  in  Appendix
  \ref{FormationOfSingularitiesAndReparametrisations}. Finally, in Section \ref{PertsOfeSdS}, we use the
  theory of symmetric hyperbolic systems contained in \cite{Kat75} to
  obtain a existence and stability result for the development of small
  perturbations to the asymptotic initial data of the Schwarzschild-de
  Sitter spacetime.

\subsection{The spherically symmetric evolution equations}
\label{sec:SphericalSymmetrySpinors}

Hitherto, the discussion of the extended conformal Einstein field
equations and the conformal constraint equations has been completely
general. Since we are interested in analysing the Schwarzschild-de
Sitter spacetime as a solution to the conformal field equations one
has to incorporate specific properties of this spacetime. The most
important assumption for our analysis is that of the \emph{spherical
  symmetry} of the spacetime. Under this assumption, a generalisation
of Birkhoff's theorem for vacuum spacetimes with de Sitter-like
Cosmological constant shows that the spacetime must be locally
isometric to either the Nariai or the Schwarzschild-de Sitter
solutions ---see \cite{Sta98}. As the Nariai solution is known to not
admit a smooth conformal boundary \cite{Bey09a,Fri14b}, then the
formulation of an asymptotic initial value problem readily selects the
Schwarzschild-de Sitter spacetime.

\medskip
To incorporate the assumption of spherical symmetry
 into the conformal field equations encoded in the spinorial zero-quantities
\eqref{spacetimeSpinorXCEFE1}-\eqref{spacetimeSpinorXCEFE4} one has to
reexpress the requirement of spherical symmetry in terms of the
space spinor formalism. In order to ease the presentation we simply
introduce a consistent 
Ansatz for spherical symmetry ---a similar approach has been
taken in \cite{LueVal14a}. More precisely, we set 
\begin{subequations}
\begin{eqnarray}
&&\phi_{\bmA \bmB \bmC \bmD} = \phi_{2}\,\epsilon^{2}{}_{\bmA \bmB \bmC \bmD}, \label{AnsatzFirst}\\ 
&&  \Theta_{\bmA \bmB} =   \sqrt{2}\Theta_{x}{}^{T}\,x_{\bmA \bmB}, \\
&&  \Theta_{\bmA \bmB \bmC \bmD}=   \Theta_{2}{}^{S}\,
\epsilon^{2}{}_{\bmA \bmB \bmC \bmD} +
  \frac{1}{3}\Theta_{h}{}^{S}\,h_{\bmA \bmB \bmC \bmD} ,\\ 
&& \xi_{\bmA \bmB \bmC \bmD}=\xi_{1}\,\epsilon^{1}{}_{\bmA
    \bmB \bmC \bmD} + \xi_{2}\,\epsilon^{2}{}_{\bmA \bmB \bmC \bmD} +
  \xi_{3}\,\epsilon^{3}{}_{\bmA \bmB \bmC \bmD}  + \frac{1}{3}\xi_{h} \,h_{\bmA
    \bmB \bmC \bmD} \nonumber \\ 
&&\hspace{2cm}+\frac{\xi_x}{\sqrt{2}}(x_{\bmB \bmD} \epsilon_{\bmA
      \bmC} +x_{\bmA \bmC}\epsilon_{\bmB \bmD})
 + \frac{\xi_y}{\sqrt{2}}(y_{\bmB
      \bmD}\epsilon_{\bmA \bmC} +y_{\bmA \bmC}\epsilon_{\bmB
      \bmD}) \nonumber \\ 
&& \hspace{2cm} + \frac{\xi_z}{\sqrt{2}}(z_{\bmB \bmD}\epsilon_{\bmA
      \bmC} +z_{\bmA \bmC}\epsilon_{\bmB \bmD}),\\ 
&&\chi_{\bmA \bmB \bmC \bmD}=\chi_{2}\,\epsilon^{2}_{\bmA \bmB \bmC
    \bmD} + \frac{1}{3}\chi_h\, h_{\bmA \bmB \bmC \bmD}, \label{AnsatzChi}\\ 
&& e^{0}{}_{\bmA \bmB}=e^{0}_{x}\,x_{\bmA \bmB}, \qquad e^{3}_{\bmA \bmB}=
  e^{3}_{x}\,x_{\bmA \bmB}, \qquad e^{+}_{\bmA \bmB}=
  e^{+}_{y}\,y_{\bmA \bmB}, \qquad e^{-}_{\bmA \bmB}=
  e^{-}_{z}\,z_{\bmA \bmB}, \\
&& f_{\bmA \bmB}=f_{x}\,x_{\bmA \bmB}, \\
&&d_{\bmA \bmB}=d_{x}\,x_{\bmA\bmB}. \label{AnsatzLast}
\end{eqnarray}
\end{subequations}
The \emph{elementary spinors} $x_{\bmA\bmB}$, $y_{\bmA\bmB}$,
$z_{\bmA\bmB}$, $\epsilon^2_{\bmA\bmB\bmC\bmD}$ and
$h_{\bmA\bmB\bmC\bmD}$ used in the above Ansatz are defined in
Appendix \ref{Appendix:SpaceSpinorFormalism}.  For further details on
the construction of a general spherically symmetric Ansatz see
\cite{Fri98c,Val12}.  Alternatively, one can follow a procedure
similar to that of Section \ref{Section:Spinorialconnection} ---by
writing a consistent spherically symmetric Ansatz for the orthonormal
frame one can identify the non-vanishing components of the required
tensors. The transition to the spinorial version of such Ansatz can be
obtained by contracting appropriately with the Infeld-van der Waerden
symbols taking into account equations
\eqref{Infeldxyz1}-\eqref{Infeldxyz2},
\eqref{xyzProducts1}-\eqref{xyzProducts4} and
\eqref{UsefulIdentities1}-\eqref{UsefulIdentities3}.

\medskip

The Ansatz for spherical symmetry encoded in equations
\eqref{AnsatzFirst}-\eqref{AnsatzLast} combined with the evolution
equations
\eqref{EvolutionEquation3Plus1Decomposition1}-\eqref{EvolutionEquation3Plus1Decomposition9}
leads, after suitable contraction with the elementary spinors
introduced in Section \ref{sec:SphericalSymmetrySpinors}, to a set of
evolution equations for the fields
 \[ 
\phi_{2}, \; \Theta_{x}{}^{T},\;
\Theta_{2}{}^{S},\; \Theta_{h}{}^{S},\xi_{1},\;  \xi_3
,\;  \xi_{x},
\; \xi_{y},\; \xi_{z},\; 
 \; e_{x}^{0}, \; e_{x}^{3}, \; e_{z}^{+},
 e_{y}^{-},\; f_{x}.
\]
 This lengthy computation has been
 carried out using the suite {\tt xAct } for tensor
 and spinorial manipulations  in {\tt Mathematica} ---see
 \cite{GarMar12}. At the end of the day one obtains the  following
 evolution equations:
\begin{subequations}
\begin{eqnarray} 
&&  \partial_{\tau}e_x^0 =\tfrac{1}{3} \chi_2 e_x^0 -  \tfrac{1}{3} \chi_h e_x^0 -  f_x, \label{EvEq1}\\
&& \partial_{\tau}e_x^3 = \tfrac{1}{3} \chi_2 e_x^3 -  \tfrac{1}{3} \chi_h e_x^3, \label{EvEq2} \\
&& \partial_{\tau}e_y^+ =  - \tfrac{1}{6} \chi_2 e_y^+ -  \tfrac{1}{3} \chi_h e_y^+, \label{EvEq3} \\
&&  \partial_{\tau}e_z^-= - \tfrac{1}{6} \chi_2 e_z^- -  \tfrac{1}{3} \chi_h e_z^-, \label{EvEq4} \\
&& \partial_{\tau}f_x = \tfrac{1}{3} \chi_2 f_x -  \tfrac{1}{3} \chi_h f_x + \Theta_x^T, \label{EvEq5} \\
&& \partial_{\tau}\chi_2=\tfrac{1}{6} \chi_2^2 -  \tfrac{2}{3} \chi_2 \chi_h - \Theta_2^S - \Theta \phi_2, \label{EvEq6}\\
&&  \partial_{\tau}\chi_h= - \tfrac{1}{6} \chi_2^2 -  \tfrac{1}{3} \chi_h^2 -  \Theta_h^S, \label{EvEq7}\\
&& \partial_{\tau}\xi_3 = \tfrac{1}{12} \chi_2 \xi_3 -  \tfrac{1}{3} \chi_h \xi_3 -  \tfrac{1}{2} \chi_2 \xi_y,\label{EvEq8} \\
&& \partial_{\tau}\xi_1 = \tfrac{1}{12} \chi_2 \xi_1 -  \tfrac{1}{3} \chi_h \xi_1 - \tfrac{1}{2} \chi_2 \xi_z,\label{EvEq9} \\
&& \partial_{\tau}\xi_x =  - \tfrac{1}{2} \chi_2 f_x -  \Theta_x^T -  \tfrac{1}{6} \chi_2 \xi_x - \tfrac{1}{3} \chi_h \xi_x, \label{EvEq10}\\
&& \partial_{\tau}\xi_y = - \tfrac{1}{8} \chi_2 \xi_3 + \tfrac{1}{12} \chi_2 \xi_y -  \tfrac{1}{3} \chi_h \xi_y,\label{EvEq11} \\
&&  \partial_{\tau}\xi_z =- \tfrac{1}{8} \chi_2 \xi_1 + \tfrac{1}{12} \chi_2 \xi_z -  \tfrac{1}{3} \chi_h \xi_z, \label{EvEq12}\\
&&  \partial_{\tau}\Theta_x^T = \tfrac{1}{3} \chi_2 \Theta_x^T -  \tfrac{1}{3} \chi_h \Theta_x^T  + \tfrac{1}{3} d_x \phi_2, \label{EvEq13}\\
&&   \partial_{\tau} \Theta_2^S =  \tfrac{1}{6} \chi_2 \Theta_2^S -  \tfrac{1}{3} \chi_h \Theta_2^S - 
 \tfrac{1}{3} \chi_2 \Theta_h^S + \dot{\Theta} \phi_2, \label{EvEq14}\\
&& \partial_{\tau}\Theta_h^S = - \tfrac{1}{6} \chi_2 \Theta_2^S -  \tfrac{1}{3} \chi_h \Theta_h^S, \label{EvEq15}
\\
&& \partial_{\tau}\phi_2 = - \tfrac{1}{2} \chi_2 \phi_2 - \chi_h  \phi_2.\label{EvEq16}
\end{eqnarray}
\end{subequations}

The results of the analysis of Sections
\ref{Section:Spinorialconnection}, \ref{Section:InitialDataSpinorSchouten}
and \ref{Section:InitialWeylSpinor} provide the asymptotic initial data
for the above spherically symmetric evolution equations. The resulting expressions
are collected in the following lemma:

\begin{lemma}
\label{Lemma:SdSAsymptotic InitialData}
There exists a conformal gauge in which asymptotic initial data for the 
Schwarzschild-de Sitter spacetime can be expressed, in terms of the
fields defined by the Ansatz \eqref{AnsatzFirst}-\eqref{AnsatzLast}, as 
\begin{align*}
 \phi_{2} = & 6m, & \Theta_{x}{}^{T} = & 0, &  \Theta_{2}{}^{S} = &-2 ,
&  \Theta_{h}{}^{S} = &-\frac{1}{2}(1-3\kappa^2), \\   \xi_{1} = &
 -(\alpha_{\star}+ \bar{\alpha}_{\star}), &
 \xi_3 = & -(\alpha_{\star}+ \bar{\alpha}_{\star}), & \xi_{x} = &
\frac{1}{2\sqrt{2}}(\alpha_{\star}+ \bar{\alpha}_{\star}), & 
  \xi_{y} = & \frac{1}{2\sqrt{2}}(\alpha_{\star}+ \bar{\alpha}_{\star}),
\\  \xi_{z} = &
\frac{1}{2\sqrt{2}}(\alpha_{\star}+ \bar{\alpha}_{\star}), & \chi_{2}= & 0,  &
 \chi_{h}= & 3\kappa,& \chi_{x}= & 0, \\ 
  e_{x}^{0}= & 0, & e_{x}^{3}= & 1, & e_{z}^{+} = & 1,
 & e_{y}^{-} = & 1, \\   f_{x} = & 0. 
\end{align*}
\end{lemma}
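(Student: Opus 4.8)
The plan is to prove Lemma \ref{Lemma:SdSAsymptotic InitialData} as a collation step: each field entering the spherically symmetric Ansatz \eqref{AnsatzFirst}--\eqref{AnsatzLast} is matched against its already-computed spinorial expression from Sections \ref{Section:Spinorialconnection}, \ref{Section:InitialDataSpinorSchouten} and \ref{Section:InitialWeylSpinor}, together with the solution to the conformal constraints \eqref{SolutionContraintsGeneral1}--\eqref{SolutionContraintsGeneral2} and the conformal Gauss gauge conditions \eqref{ConformalGaugeConditionSpinors}. The heart of the argument is therefore a decomposition of each computed spinor in the basis of elementary spinors $x_{\bmA\bmB},y_{\bmA\bmB},z_{\bmA\bmB},\epsilon^{2}{}_{\bmA\bmB\bmC\bmD},h_{\bmA\bmB\bmC\bmD}$ fixed in Appendix \ref{Appendix:SpaceSpinorFormalism}, followed by reading off the scalar coefficients.

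First I would dispose of the quantities whose identification is immediate. Section \ref{Section:InitialWeylSpinor} gives the rescaled Weyl spinor as $\phi_{\bmA\bmB\bmC\bmD}=6m\,\epsilon^{2}{}_{\bmA\bmB\bmC\bmD}$, so comparison with \eqref{AnsatzFirst} yields $\phi_{2}=6m$. The extrinsic curvature datum $\chi_{\bmA\bmB\bmC\bmD}=\kappa\,h_{\bmA\bmB\bmC\bmD}$ from \eqref{SolutionContraintsGeneral1} carries no $\epsilon^{2}$-part, so matching \eqref{AnsatzChi} gives $\chi_{2}=0$, $\chi_{h}=3\kappa$ and $\chi_{x}=0$. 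The standing assumption that $\bmf$ vanish initially, imposed throughout the constraint discussion of Section \ref{ConformalConstraintEquations}, forces $f_{\bmA\bmB}=0$ and hence $f_{x}=0$; and the surviving frame coefficients are read directly from the adapted spinorial frame of Section \ref{Section:Spinorialconnection}, whose expansion contains no $\bm\partial_{\tau}$ term and unit coefficients along the spatial legs, giving $e_{x}^{0}=0$, $e_{x}^{3}=1$ and unit values for the remaining null-frame coefficients.

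The two substantive steps concern the Schouten and connection spinors. For the former I would insert the computed $L_{\bmA\bmB\bmC\bmD}=-2\,\epsilon^{2}{}_{\bmA\bmB\bmC\bmD}-\tfrac{1}{6}(1-3\kappa^{2})h_{\bmA\bmB\bmC\bmD}$ into the decomposition $\hat L_{\bmA\bmB\bmC\bmD}=\Theta_{\bmA\bmB\bmC\bmD}+\tfrac{1}{2}\epsilon_{\bmC\bmD}\Theta_{\bmA\bmB}$ valid under \eqref{ConformalGaugeConditionSpinors}; since this expression has no $\epsilon_{\bmC\bmD}$ factor its trace spinor $\Theta_{\bmA\bmB}$ vanishes, whence $\Theta_{x}^{T}=0$, while comparison of the totally symmetric part with the Ansatz gives $\Theta_{2}^{S}=-2$ and $\Theta_{h}^{S}=-\tfrac{1}{2}(1-3\kappa^{2})$. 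The connection spinor is the delicate case: the output of the first Cartan structure equation in Section \ref{Section:Spinorialconnection} is written in a mixed form involving $\epsilon^{1},\epsilon^{3}$ and the products $\epsilon_{\bmA\bmC}(y_{\bmB\bmD}+z_{\bmB\bmD})$, which is not yet in the canonical shape of the $\xi$-Ansatz, so I would reorganise it using the elementary-spinor identities \eqref{Infeldxyz1}--\eqref{Infeldxyz2}, \eqref{xyzProducts1}--\eqref{xyzProducts4} and \eqref{UsefulIdentities1}--\eqref{UsefulIdentities3} to isolate the genuinely independent components, whereupon the coefficients $\xi_{1}=\xi_{3}=-(\alpha_{\star}+\bar{\alpha}_{\star})$ and $\xi_{x}=\xi_{y}=\xi_{z}=\tfrac{1}{2\sqrt{2}}(\alpha_{\star}+\bar{\alpha}_{\star})$ emerge.

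The main obstacle is precisely this last reorganisation of $\xi_{\bmA\bmB\bmC\bmD}$: because the spinors $\epsilon^{1},\epsilon^{2},\epsilon^{3},h$ and the products $x,y,z$ are not an independent set, the passage from the raw form to the canonical Ansatz decomposition is not a term-by-term identification but requires the full algebra of the elementary spinors, and it is here that care is needed to avoid double counting when collecting the $x$-type contributions. Everything else is direct substitution; the persistence of the gauge parameter $\kappa$ in $\Theta_{h}^{S}$ and $\chi_{h}$ simply records that the conformal representative has been left general, in accordance with the freedom discussed in Section \ref{ExploitConformalGauge}, and one recovers the simplest data by specialising to $\kappa=0$.
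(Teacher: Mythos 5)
Your proposal is correct and is essentially the paper's own argument: the paper offers no separate proof of Lemma \ref{Lemma:SdSAsymptotic InitialData}, presenting it instead as a collation of the results of Sections \ref{Section:Spinorialconnection}, \ref{Section:InitialDataSpinorSchouten} and \ref{Section:InitialWeylSpinor} together with the constraint solution \eqref{SolutionContraintsGeneral1}--\eqref{SolutionContraintsGeneral2}, the gauge conditions \eqref{ConformalGaugeConditionSpinors} and the assumption that $\bmf$ vanishes initially, which is exactly the matching-of-coefficients procedure you describe. Your identification of the rewriting of $\xi_{\bmA\bmB\bmC\bmD}$ in the basis of elementary spinors as the only step requiring the algebra of Appendix \ref{Appendix:SpaceSpinorFormalism} (rather than term-by-term reading) likewise coincides with where the paper's implicit computation actually does work.
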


\subsection{The Schwarzschild-de Sitter spacetime
 in the conformal Gaussian gauge}
\label{sec:CoreAnalysis}

  In this section we analyse in some detail the spherically symmetric evolution
  equations derived in the previous section. In particular, we show
  that there is a subsystem of equations that decouples from the rest
  ---which we call the \emph{core system}--- and controls the
  essential dynamics of the system \eqref{EvEq1}-\eqref{EvEq16}.

  As  the Schwarzschild-de Sitter spacetime possess a curvature
  singularity at $r=0$, one expects, in general, the conformal
  evolution equations to develop singularities. Moreover, since the
  two essential parameters appearing in the initial data 
  given in Lemma \ref{Lemma:SdSAsymptotic InitialData} are $m$ and
  $\kappa$ ---the function $\alpha_{\star}$ only encodes the connection on
  $\mathbb{S}^2$--- one expects, in general, that the congruence of
  conformal geodesics reaches the curvature singularity at
  $\tau=\tau_{\lightning}(m,\kappa)$. 
 Nevertheless, numerical evaluations
  suggest that for $\kappa=0$ the core system does not develop any
  singularity ---observe that this is consistent with the remark made
  in the discussion of Section \ref{PhysicalParameters}. Furthermore,  an estimation for the time of
  existence $\tau_{\circledcirc}$ of the solution to the conformal
  evolution equations \eqref{EvEq1}-\eqref{EvEq16} with initial data
  in the case $\kappa=0$ is given.
  A discussion of the mechanism for the formation of
  singularities in the core system ($\kappa \neq 0$) and the role of
  the parameter $\kappa$ is given in Appendix
  \ref{FormationOfSingularitiesAndReparametrisations}.  

\subsubsection{The core system}
Inspection of the system \eqref{EvEq1}-\eqref{EvEq16} reveals that
there is a subsystem of equations that decouple from the rest.  In the
sequel we will refer to these equations as the \emph{core
  system}. Defining the fields
\begin{equation}
\chi \equiv \frac{1}{3}\bigg(\frac{1}{2}\chi_{2}+ \chi_{h}\bigg),
 \hspace{1cm} L \equiv -\frac{1}{3}\bigg(\frac{1}{2}\Theta_{2}{}^{S} +
 \Theta_{h}{}^{S}\bigg),
 \hspace{1cm} \phi \equiv \frac{1}{3}\phi_{2}, \label{coreVar}
\end{equation}
the system \eqref{EvEq16}-\eqref{EvEq1} can be shown to imply the
equations
\begin{subequations}
\begin{eqnarray}\label{eq:CoreSystem}
&& \dot{\phi}=-3\chi \phi, \label{Core1} \\ && \dot{\chi} =-\chi^2 + L
  - \frac{1}{2}\Theta \phi, \label{Core2} \\ && \dot{{L}} =-\chi
  L-\frac{1}{2}\dot{\Theta}\phi,\label{Core3}
\end{eqnarray} 
\end{subequations} 
where the overdot denotes differentiation with respect to $\tau$ and
 \[
\Theta(\tau)= \sqrt{\frac{|\lambda|}{3}}\tau\left(1+\frac{1}{2} \kappa
\tau\right), \qquad \text{ } \qquad
\dot{\Theta}=\sqrt{\frac{|\lambda|}{3}}(1+ \kappa \tau).
\]  

The initial data for this system is given by
 \begin{equation}
\label{InitialDataCoreSpherical}
\phi(0)=2m, \qquad \qquad \chi(0) = \kappa, \qquad \qquad L(0) =
\frac{1}{2}(1-\kappa^2).
\end{equation}

As it will be seen in the remainder of this article, equations
\eqref{Core1}-\eqref{Core3} with initial data
\eqref{InitialDataCoreSpherical} govern the dynamics of the complete
system \eqref{EvEq1}-\eqref{EvEq16}. The evolution of the remaining
fields can be understood once the core system has been investigated.


\subsubsection{Analysis of the Core System}
\label{AnalysisOfTheCoreSystem}

This section will be concerned with an analysis of the initial value
problem for the core system \eqref{Core1}-\eqref{Core3} with initial
data given by \eqref{InitialDataCoreSpherical}.  As it will be seen in the
following, the essential feature driving the dynamics of the core system
\eqref{Core1}-\eqref{Core3} is the fact that the function $\chi$
satisfies a Riccati equation coupled to two further fields. One also
has the following:

\medskip
\noindent 
\textbf{Observation 1.}  The core equation \eqref{Core1} can be
formally integrated to yield

\begin{equation}
\label{FormalSolutionE}
\phi(\tau)= 2m \exp\left({-3\int^\tau _0 \chi(\mbox{s})\mbox{d}\mbox{s}}\right).
\end{equation}
Hence, $\phi(\tau)>0$ if $m \neq 0$.

\medskip
In the remaining of this section, we analyse the
behaviour of the core system in the case where the extrinsic curvature
of $\mathscr{I}$ vanishes.

\medskip
As discussed in Section \ref{AsymptoticInitialValueProblem} in the
 case $\kappa=0$  the conformal factor reduces to
$\Theta(\tau)=\sqrt{|\lambda|/3}\tau$ ---thus,  one
has only one root corresponding to the initial
hypersurface $\mathscr{I}$. To simplify the
notation recall that $\dot{\Theta}_{\star}=\sqrt{|\lambda|/3}$ 
so that $\Theta(\tau)= \dot{\Theta}_{\star}\tau$. Accordingly,  the core
system \eqref{Core1}-\eqref{Core3} can be rewritten as
\begin{subequations}
\begin{eqnarray}
&& \dot{\phi}=-3\chi \phi, \label{CoreSym1} \\ 
&& \dot{\chi} =-\chi^2 + L - \frac{1}{2}\dot{\Theta}_{\star}\tau
  \phi, \label{CoreSym2} \\ 
&& \dot{{L}} =-\chi
  L-\frac{1}{2}\dot{\Theta}_{\star} \phi.\label{CoreSym3}
\end{eqnarray} 
\end{subequations} 
Moreover,  the initial data  reduces to 
\[
\chi(0)=0, \qquad  L(0)=\frac{1}{2}, \qquad\phi(0)=2m.
\] 

\medskip
\noindent
\textbf{Observation 2.} A direct inspection shows that equations
\eqref{CoreSym1}-\eqref{CoreSym3} imply that
\[
\chi(\tau)=\tau L(\tau).
\]
 This relation can be easily verified by direct substitution into
 equations \eqref{CoreSym2} and \eqref{CoreSym3}. Observe that
 $L(\tau)=\chi(\tau)/\tau$ is well defined at $\mathscr{I}$ where
 $\tau=0$ and $\chi(0)=0$ since the initial conditions ensure that
\[
\lim_{\tau \rightarrow 0}\frac{\chi(\tau)}{\tau}=\frac{1}{2}.
\]
  
\medskip
Taking into account the above observation the core system reduces to 
\begin{subequations}
\begin{eqnarray}
&& \dot{L}=-\tau L^2 -\frac{1}{2}\dot{\Theta}_{\star} \phi \label{CoreSymmetric1} \\ 
&& \dot{\phi}= -3 \tau L
  \phi \label{CoreSymmetric2}
\end{eqnarray} 
\end{subequations} 
 with initial data
\begin{equation}\label{intialDataTimeSym}
L(0)=\frac{1}{2},   \qquad  \phi(0)=2m.
\end{equation}

\medskip
\noindent
\textbf{Observation 3.} One can integrate
 \eqref{CoreSymmetric2} to 
\begin{equation}
\label{phiPositiveTimeSymmetric} 
 \phi(\tau)=2m \exp\left(-\int_{0}^{\tau} \mbox{s}L(\mbox{s})\mbox{ds}\right) 
\end{equation}
 and conclude that $\phi(\tau)>0$ for $\tau>0$.

\medskip
To prove the boundedness of the solutions to the core system we begin
by proving some basic estimates:

\begin{lemma} 
\label{SymEstimateL}
If  $\kappa=0$, then the solution of \eqref{Core1}-\eqref{Core3} with initial data  \eqref{InitialDataCoreSpherical} satisfies the bound
 \[
L(\tau) \geq
\phi(\tau)\left(\frac{1}{4m}-\frac{1}{2}\dot{\Theta}_{\star}\tau\right) \qquad
\text{
for}\qquad \tau \geq 0.
\]
\end{lemma}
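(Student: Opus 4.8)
The plan is to work with the ratio of the two fields appearing in the reduced core system. In the case $\kappa=0$ the equations \eqref{Core1}-\eqref{Core3} collapse, by Observations 2 and 3, to the pair \eqref{CoreSymmetric1}-\eqref{CoreSymmetric2} with data \eqref{intialDataTimeSym}. Since $m>0$, Observation 3 (equation \eqref{phiPositiveTimeSymmetric}) guarantees $\phi(\tau)>0$ throughout the interval of existence, so the quotient $W\equiv L/\phi$ is well defined there. I would then reformulate the desired inequality as the cleaner statement $W(\tau)\geq \tfrac{1}{4m}-\tfrac{1}{2}\dot{\Theta}_{\star}\tau$; multiplying through by the positive factor $\phi(\tau)$ recovers the claimed bound on $L$.

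The key step is to differentiate $W$ and substitute the reduced equations. Using $\dot{L}=-\tau L^2-\tfrac{1}{2}\dot{\Theta}_{\star}\phi$ and $\dot{\phi}=-3\tau L\phi$, the $\tau L^2\phi$ contributions combine favourably and one obtains
\[
\dot{W}=\frac{\dot{L}\,\phi-L\,\dot{\phi}}{\phi^2}=\frac{2\tau L^2}{\phi}-\frac{1}{2}\dot{\Theta}_{\star}.
\]
The essential observation is that the first term on the right is manifestly nonnegative for $\tau\geq 0$, because $\phi>0$, and hence $\dot{W}\geq -\tfrac{1}{2}\dot{\Theta}_{\star}$. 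Integrating this differential inequality from $0$ to $\tau$ and inserting the initial value $W(0)=L(0)/\phi(0)=(1/2)/(2m)=1/(4m)$ from \eqref{intialDataTimeSym} yields $W(\tau)\geq \tfrac{1}{4m}-\tfrac{1}{2}\dot{\Theta}_{\star}\tau$, which is precisely the reformulated claim.

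I do not anticipate a genuine obstacle here: the argument reduces to a one-line differential inequality once the correct auxiliary variable $W=L/\phi$ is identified. The only point requiring care is the positivity of $\phi$, which is not assumed but supplied by Observation 3 together with the hypothesis $m>0$; this is what allows the sign-definite term $2\tau L^2/\phi$ to be discarded. The underlying reason the computation is so clean is that the coefficient $-3$ in $\dot{\phi}=-3\tau L\phi$ exactly triples the $-\tau L^2$ term in $\dot{L}$, so that after forming the quotient only the sign-definite remainder $2\tau L^2/\phi$ and the constant forcing $-\tfrac{1}{2}\dot{\Theta}_{\star}$ survive.
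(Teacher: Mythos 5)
Your proposal is correct and follows essentially the same route as the paper's own proof: both form the quotient $L/\phi$, use the identity $\phi\dot{L}-L\dot{\phi}=2\tau L^{2}\phi-\tfrac{1}{2}\dot{\Theta}_{\star}\phi^{2}$ together with the positivity of $\phi$ to discard the sign-definite term, and integrate the resulting inequality $\tfrac{\mathrm{d}}{\mathrm{d}\tau}(L/\phi)\geq-\tfrac{1}{2}\dot{\Theta}_{\star}$ from the initial value $1/(4m)$. The only cosmetic difference is that you name the quotient $W$ and state its evolution equation explicitly, whereas the paper works directly with $\phi\dot{L}-L\dot{\phi}$; the content is identical.
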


\begin{proof}
 Using equations \eqref{CoreSymmetric1} and \eqref{CoreSymmetric2} we
 obtain the  expression
\begin{equation}
\phi \dot{L}-L\dot{\phi}= 2 \tau L^2 \phi
-\frac{1}{2}\dot{\Theta}_{\star}\phi^2 \geq
-\frac{1}{2}\dot{\Theta}_{\star}\phi^2 \qquad \text{for} \qquad \tau
\geq 0. \label{eqQuotientRule}
\end{equation}

Since $\phi(\tau)>0$ we can consider
the derivative of ${L}/{\phi}$. Notice that
\[ 
\phi^2 \frac{\mbox{d}}{\mbox{d}\tau}\left(\frac{L}{\phi}\right)= \phi
\dot{L}-L\dot{\phi}.
\] 
This observation and inequality \eqref{eqQuotientRule}
gives
\[ 
\frac{\mbox{d}}{\mbox{d}\tau}\left(\frac{L}{\phi}\right) \geq 
-\frac{1}{2}\dot{\Theta}_{\star}  \qquad \text{for} \qquad \tau
\geq 0.
\]
 Integrating the last differential inequality from $\tau=0$
to $\tau>0$ taking into account the initial conditions leads to
\[
L(\tau) \geq \phi(\tau)\left(\frac{1}{4m}-\frac{1}{2}
\dot{\Theta}_{\star}\tau\right) \qquad \text{for} \qquad \tau
\geq 0. 
\]
\end{proof}

Observe that the last estimate ensures that $L(\tau)$ is
non-negative for $\tau \in [0,8m/\dot{\Theta}_{\star}]$.
It turns out that finding an upper bound for $L(\tau)$
 is relatively simple:

\begin{lemma}\label{lemmaUpperBoundLTimeSym}
If  $\kappa=0$ then, for the solution of \eqref{Core1}-\eqref{Core3}
 with initial data
  \eqref{InitialDataCoreSpherical}, one has that
\[
L(\tau) \leq \frac{2}{\tau^2+ 4} \qquad \mbox{for} \qquad \tau\geq 0.
\] 
\end{lemma}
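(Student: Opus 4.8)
The plan is to recognise the claimed upper bound as the \emph{exact} solution of a Riccati equation and then to exhibit $L$ as a subsolution of that equation. First I would set $U(\tau)\equiv 2/(\tau^2+4)$ and verify by direct differentiation that
\[
\dot U = -\frac{4\tau}{(\tau^2+4)^2} = -\tau U^2,
\]
so that $U$ solves the Riccati problem $\dot U=-\tau U^2$ with $U(0)=\tfrac12$. In particular $U(0)=L(0)=\tfrac12$ by the initial data \eqref{intialDataTimeSym}.

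Next I would establish that $L$ is a subsolution of this same equation. Since $\kappa=0$, the reduced core equation \eqref{CoreSymmetric1} reads $\dot L=-\tau L^2-\tfrac12\dot\Theta_\star\phi$. By Observation 3 one has $\phi(\tau)>0$ for $\tau\geq 0$, while $\dot\Theta_\star=\sqrt{|\lambda|/3}>0$ because $\lambda\neq 0$. Hence the last term is non-positive and
\[
\dot L \leq -\tau L^2, \qquad \tau\geq 0,
\]
with $L(0)=U(0)$.

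The heart of the argument is then a comparison between the subsolution $L$ and the exact solution $U$. Let $[0,\tau_\ast)$ denote the maximal interval of existence of $L$ and put $w\equiv L-U$, which is $C^1$ there with $w(0)=0$. Subtracting the two relations gives
\[
\dot w \leq -\tau L^2 + \tau U^2 = -\tau(L+U)\,w.
\]
Introducing the positive, continuous integrating factor
\[
\mu(\tau)\equiv \exp\Big(\int_0^\tau s\big(L(s)+U(s)\big)\,\mathrm{d}s\Big),
\]
one finds $\tfrac{\mathrm{d}}{\mathrm{d}\tau}\big(\mu w\big)=\mu\big(\dot w+\tau(L+U)w\big)\leq 0$, so that $\mu(\tau)w(\tau)\leq \mu(0)w(0)=0$. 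As $\mu>0$, this forces $w(\tau)\leq 0$, i.e. $L(\tau)\leq U(\tau)=2/(\tau^2+4)$ on $[0,\tau_\ast)$, which is the assertion.

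The only point requiring care is this comparison step: one must argue on the maximal existence interval of $L$, where $L$, $U$ and hence $\mu$ are well defined, rather than presupposing global existence, and one should note that the integrating factor is admissible irrespective of the sign of $L+U$ since only the continuity of $\tau(L+U)$ on $[0,\tau_\ast)$ is used. Equivalently, because $f(\tau,y)=-\tau y^2$ is $C^1$ and thus locally Lipschitz in $y$, the conclusion follows immediately from the standard comparison theorem for ordinary differential equations. I do not expect any genuine obstacle here; the substance of the lemma is the observation that the stated barrier is precisely the Riccati solution sharing the initial value $\tfrac12$, after which the sign of the dropped term $-\tfrac12\dot\Theta_\star\phi$ does all the work.
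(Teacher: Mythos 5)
Your proof is correct, and its first half coincides with the paper's: both use $\phi>0$ (Observation 3) and $\dot\Theta_\star>0$ to discard the term $-\tfrac12\dot\Theta_\star\phi$ in \eqref{CoreSymmetric1} and arrive at the differential inequality $\dot L\le-\tau L^2$. The two arguments part ways at the closing step. The paper divides by $L^2$ and integrates the inequality $\dot L/L^2\le-\tau$ (separation of variables), which presupposes $L(\tau)>0$ for all $\tau\ge 0$; but the only positivity statement available at that point, Lemma \ref{SymEstimateL}, gives $L\ge\phi\bigl(\tfrac{1}{4m}-\tfrac12\dot\Theta_\star\tau\bigr)$, whose right-hand side changes sign at finite $\tau$, so the paper's proof as written tacitly needs a supplementary remark (e.g.\ that $\dot L<0$ strictly because $\phi>0$, so once $L$ reaches zero it stays nonpositive, where the claimed bound is trivially true). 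Your route --- exhibiting $2/(\tau^2+4)$ as the exact solution of $\dot U=-\tau U^2$, $U(0)=\tfrac12$, and running the integrating-factor comparison on $w=L-U$ over the maximal existence interval --- needs no sign information about $L$ whatsoever. So the two proofs share the key observation (the barrier is precisely the Riccati solution with matching initial value, and the dropped term does all the work), but your implementation of the comparison step is the more robust one; what the paper's version buys in exchange is only brevity.
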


\begin{proof}
Assume $\tau \geq 0$. Using that $\phi(\tau)>0$ and equation
\eqref{CoreSymmetric1} one obtains the differential inequality
\[
\dot{L}(\tau) \leq - \tau L^2(\tau).
\]
Using that $L(\tau)> 0$ for $\tau \geq 0$ one gets
\[ 
\frac{\dot{L}(\tau)}{L^2(\tau)} \leq -\tau.
\]
 The last expression can be integrated 
 giving an upper bound for $L(\tau)$:
\[
L(\tau)\leq \frac{2}{\tau^2+ 4}. 
\]
\end{proof}

A simple bound on a finite interval can be found for the field $\phi(\tau)$
as follows:
\begin{lemma}
\label{NoBlowUpForFiniteTime}
If  $\kappa=0$ then, for the solution of \eqref{Core1}-\eqref{Core3} with initial data  \eqref{InitialDataCoreSpherical} and for  $0 \leq \tau \leq 1/(2\sqrt[3]{\dot{\Theta}_{\star}m})$, the field $\phi(\tau)$ is bounded by above. 
\end{lemma}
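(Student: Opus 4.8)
The plan is to close the otherwise coupled estimates by passing to the exponent in the integral representation \eqref{phiPositiveTimeSymmetric}. First I would set $P(\tau)\equiv\int_0^{\tau}\mbox{s}L(\mbox{s})\,\mbox{ds}$, so that by \eqref{phiPositiveTimeSymmetric} one has $\phi=2m\,e^{-P}$ and, differentiating, $\dot P=\tau L$. Feeding the lower bound of Lemma \ref{SymEstimateL} (which holds for all $\tau\ge0$) into this, and writing $\phi=2m\,e^{-P}$, the bound becomes a scalar inequality for $P$ alone:
\begin{equation*}
\dot P=\tau L\geq \tau\cdot 2m\,e^{-P}\Big(\tfrac{1}{4m}-\tfrac12\dot{\Theta}_{\star}\tau\Big)
= e^{-P}\Big(\tfrac{\tau}{2}-\dot{\Theta}_{\star}m\,\tau^2\Big).
\end{equation*}
Multiplying by $e^{P}$ recognises the left-hand side as a total derivative, giving $\tfrac{\mbox{d}}{\mbox{d}\tau}e^{P}\geq \tfrac{\tau}{2}-\dot{\Theta}_{\star}m\,\tau^2$, an inequality with a fully explicit polynomial right-hand side.

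Integrating from $0$ to $\tau$ with $P(0)=0$ then yields the pointwise lower bound
\begin{equation*}
e^{P(\tau)}\geq g(\tau)\equiv 1+\frac{\tau^2}{4}-\frac{\dot{\Theta}_{\star}m}{3}\,\tau^3 .
\end{equation*}
Wherever $g(\tau)>0$ this gives at once $\phi(\tau)=2m\,e^{-P(\tau)}\leq 2m/g(\tau)$, so the whole matter reduces to a lower bound on the cubic $g$ over the stated interval. I would observe that $g(0)=1$ and $g'(\tau)=\tau\big(\tfrac12-\dot{\Theta}_{\star}m\,\tau\big)$, so $g$ increases and then decreases (it is unimodal), whence its minimum on any closed interval is attained at an endpoint. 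At the right endpoint $\tau=1/\big(2\sqrt[3]{\dot{\Theta}_{\star}m}\big)$ the cubic term equals exactly $\tfrac{1}{24}$, independently of $m$ and $\lambda$, so $g\geq 1-\tfrac{1}{24}=\tfrac{23}{24}>0$ there; combined with $g(0)=1$ this gives $g\geq\tfrac{23}{24}$ throughout and hence the explicit bound $\phi\leq \tfrac{48}{23}m$ on the interval, which is the claimed boundedness.

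The step I expect to be the crux — and the one I would take most care over — is the decoupling substitution itself: Lemma \ref{SymEstimateL} controls $L$ by $\phi$ while \eqref{CoreSymmetric2} controls $\phi$ by $L$, so a direct Gr\"onwall attempt chases the feedback indefinitely, which is exactly the mechanism by which the core system can eventually blow up for $\kappa\neq0$. Writing everything in terms of $e^{P}$ converts this feedback into a single first-order inequality because the problematic product $\tau\phi$ is precisely $e^{-P}\dot P$ up to the explicit weight. It is also worth noting that the endpoint $1/\big(2\sqrt[3]{\dot{\Theta}_{\star}m}\big)$ in the statement is not arbitrary: it is the threshold at which the destabilising cubic correction in $g$ is still guaranteed to be dominated by the constant term, which is what forces the characteristic cube-root scaling and makes the estimate uniform across the subextremal, extremal and hyperextremal cases.
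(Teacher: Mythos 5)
Your proof is, at its core, the same argument as the paper's: both feed the lower bound of Lemma \ref{SymEstimateL} into equation \eqref{CoreSymmetric2} and integrate the resulting separable inequality for $\phi$. Your bookkeeping via $e^{P}$ is the paper's integration of $\dot{\phi}/\phi^2$ in disguise; the only substantive refinement is that you retain the $\tfrac{1}{4m}\phi$ term of Lemma \ref{SymEstimateL}, whereas the paper discards it, obtains $\dot{\phi}\leq\tfrac{3}{2}\dot{\Theta}_{\star}\tau^{2}\phi^{2}$, and integrates to $\phi\leq 2m/(1-\dot{\Theta}_{\star}m\tau^{3})$, whose denominator is monotone and manifestly positive on the stated interval, so no unimodality discussion is needed there.

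There is, however, a factor-of-3 inconsistency you must fix. Equation \eqref{CoreSymmetric2} reads $\dot{\phi}=-3\tau L\phi$, so with $P(\tau)\equiv\int_{0}^{\tau}sL(s)\,\mathrm{d}s$ the correct representation is $\phi=2m\,e^{-3P}$, not $\phi=2m\,e^{-P}$ (the paper's printed equation \eqref{phiPositiveTimeSymmetric} contains the same typo, which is presumably its source). As written, your two relations $\phi=2m\,e^{-P}$ and $\dot{P}=\tau L$ cannot hold simultaneously along a solution of the evolution equation. The repair is mechanical and does not disturb the structure of your argument: the differential inequality becomes
\[
\frac{\mathrm{d}}{\mathrm{d}\tau}e^{3P}\geq \frac{3\tau}{2}-3\dot{\Theta}_{\star}m\,\tau^{2},
\qquad\text{hence}\qquad
e^{3P(\tau)}\geq 1+\frac{3\tau^{2}}{4}-\dot{\Theta}_{\star}m\,\tau^{3},
\]
and at $\tau=1/\big(2\sqrt[3]{\dot{\Theta}_{\star}m}\big)$ the cubic term equals $1/8$ (not $1/24$), so your endpoint/unimodality argument gives a lower bound of $7/8$ for the right-hand side on the whole interval, yielding $\phi\leq 16m/7$ instead of $48m/23$. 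With this correction your proof is sound and delivers a marginally sharper constant than the paper's bound.
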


\begin{proof}
Assume $\tau \geq 0$. From the estimate of Proposition
 \ref{SymEstimateL} one has that
\[
L \geq -\frac{1}{2}\dot{\Theta}_{\star}\tau\phi.
\]  
Therefore 
\[ 
-3 \tau L\phi \leq \frac{3}{2}\dot{\Theta}_{\star}\tau^2 \phi^2 .  
\]
Using equation \eqref{CoreSymmetric2} one obtains the 
differential inequality
\[ 
\dot{\phi} \leq   \frac{3}{2}\dot{\Theta}_{\star}\tau^2 \phi^2.
\]
 Since $\phi(\tau)>0$ the last expression can be integrated
 to yield,
\[
\phi(\tau) \leq \frac{2m}{1-\dot{\Theta}_{\star}m\tau^3}.
\]
 Therefore, for  $0<\tau < 1/\sqrt[3]{\dot{\Theta}_{\star}m}$,
the field, $\phi(\tau)$ is bounded by above. Consequently,
one can take  $0 \leq \tau \leq 1/(2\sqrt[3]{\dot{\Theta}_{\star}m})$.
\end{proof}

The results of Lemmas \ref{SymEstimateL},
  \ref{lemmaUpperBoundLTimeSym} and \ref{NoBlowUpForFiniteTime} can be
  summarised in the following: 

\begin{lemma}  
\label{LemmaBoundednessCoreSystem}
The solution to the core system \eqref{Core1}-\eqref{Core3} with
initial data \eqref{InitialDataCoreSpherical}, in the case $\kappa=0$,
is bounded for $0 \leq \tau \leq \tau_{\bullet}$, where
\begin{equation}
\tau_{\bullet} \equiv \textnormal{min} \;\Big\{
\frac{8m}{\dot\Theta_{\star}},
\frac{1}{2\sqrt[3]{\dot{\Theta}_{\star}m}} \Big\}.
\end{equation}
\end{lemma}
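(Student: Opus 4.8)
The statement is a consolidation of the three preceding estimates, so the plan is simply to collect them and check that together they control every field of the core system on the advertised interval. Since $\kappa=0$, Observation 2 supplies the algebraic relation $\chi(\tau)=\tau L(\tau)$, which reduces the three-dimensional system \eqref{Core1}-\eqref{Core3} to the pair \eqref{CoreSymmetric1}-\eqref{CoreSymmetric2} for $(L,\phi)$. Accordingly, it suffices to bound $L$ and $\phi$: once these are controlled, $\chi$ is controlled automatically on any finite $\tau$-interval through $|\chi|=\tau\,|L|$, so no separate argument for $\chi$ is needed.

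First I would record the two-sided control of $L$. The upper bound $L(\tau)\leq 2/(\tau^2+4)$ holds for all $\tau\geq 0$ by Lemma \ref{lemmaUpperBoundLTimeSym}. For the lower bound I would invoke Lemma \ref{SymEstimateL}, which gives $L(\tau)\geq \phi(\tau)\big(\tfrac{1}{4m}-\tfrac{1}{2}\dot{\Theta}_{\star}\tau\big)$; combined with $\phi(\tau)>0$ from Observation 3, this keeps $L$ non-negative as long as $\tfrac{1}{4m}-\tfrac{1}{2}\dot{\Theta}_{\star}\tau\geq 0$, that is on $[0,8m/\dot{\Theta}_{\star}]$. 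Hence $0\leq L(\tau)\leq 2/(\tau^2+4)$ on that interval, which is the desired two-sided bound for $L$. Next I would control $\phi$: Observation 3 already gives $\phi(\tau)>0$, while Lemma \ref{NoBlowUpForFiniteTime} furnishes the explicit upper bound $\phi(\tau)\leq 2m/(1-\dot{\Theta}_{\star}m\tau^3)$ on the interval $[0,1/(2\sqrt[3]{\dot{\Theta}_{\star}m})]$, so $\phi$ is bounded there as well.

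To finish, I would take $\tau_{\bullet}$ to be the minimum of the two endpoints $8m/\dot{\Theta}_{\star}$ and $1/(2\sqrt[3]{\dot{\Theta}_{\star}m})$, which places us on the common domain where all three estimates hold simultaneously. On $[0,\tau_{\bullet}]$ the fields $L$ and $\phi$ are bounded, and $\chi=\tau L$ inherits a bound from the boundedness of $L$ together with the finiteness of $\tau$; this gives boundedness of the full solution. There is no genuine analytic obstacle in this step—the work was already carried out in the three lemmas—so the only thing requiring care is the bookkeeping: verifying that the interval of validity is precisely the intersection of the individual domains, and that the field $\chi$ suppressed by Observation 2 is recovered algebraically rather than through a fresh differential inequality.
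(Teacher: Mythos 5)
Your proposal is correct and takes essentially the same approach as the paper: Lemma \ref{LemmaBoundednessCoreSystem} is proved there simply by collecting Lemmas \ref{SymEstimateL}, \ref{lemmaUpperBoundLTimeSym} and \ref{NoBlowUpForFiniteTime} (together with Observations 2 and 3) on the common interval $[0,\tau_{\bullet}]$, exactly as you do, with $\chi$ recovered algebraically from $\chi=\tau L$. One caveat, which you share with the paper's own remark after Lemma \ref{SymEstimateL}: the bracket $\tfrac{1}{4m}-\tfrac{1}{2}\dot{\Theta}_{\star}\tau$ vanishes at $\tau=1/(2m\dot{\Theta}_{\star})$, not at $8m/\dot{\Theta}_{\star}$, so the asserted non-negativity of $L$ on $[0,8m/\dot{\Theta}_{\star}]$ can fail when $m>1/4$; this does not affect the boundedness conclusion, since on $[0,\tau_{\bullet}]$ the field $\phi$ is bounded and Lemma \ref{SymEstimateL} then bounds $L$ from below regardless of the sign of the bracket.
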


\begin{remark}
{\em A plot of the numerical evaluation of the solutions
to the core system \eqref{Core1}-\eqref{Core3} with initial data
  \eqref{InitialDataCoreSpherical} in the case $\kappa=0$ is shown
  in Figure \ref{Figure:Kappa0}.}
\end{remark}

\begin{figure}[t]
\begin{center}
\includegraphics[width=0.65\textwidth]{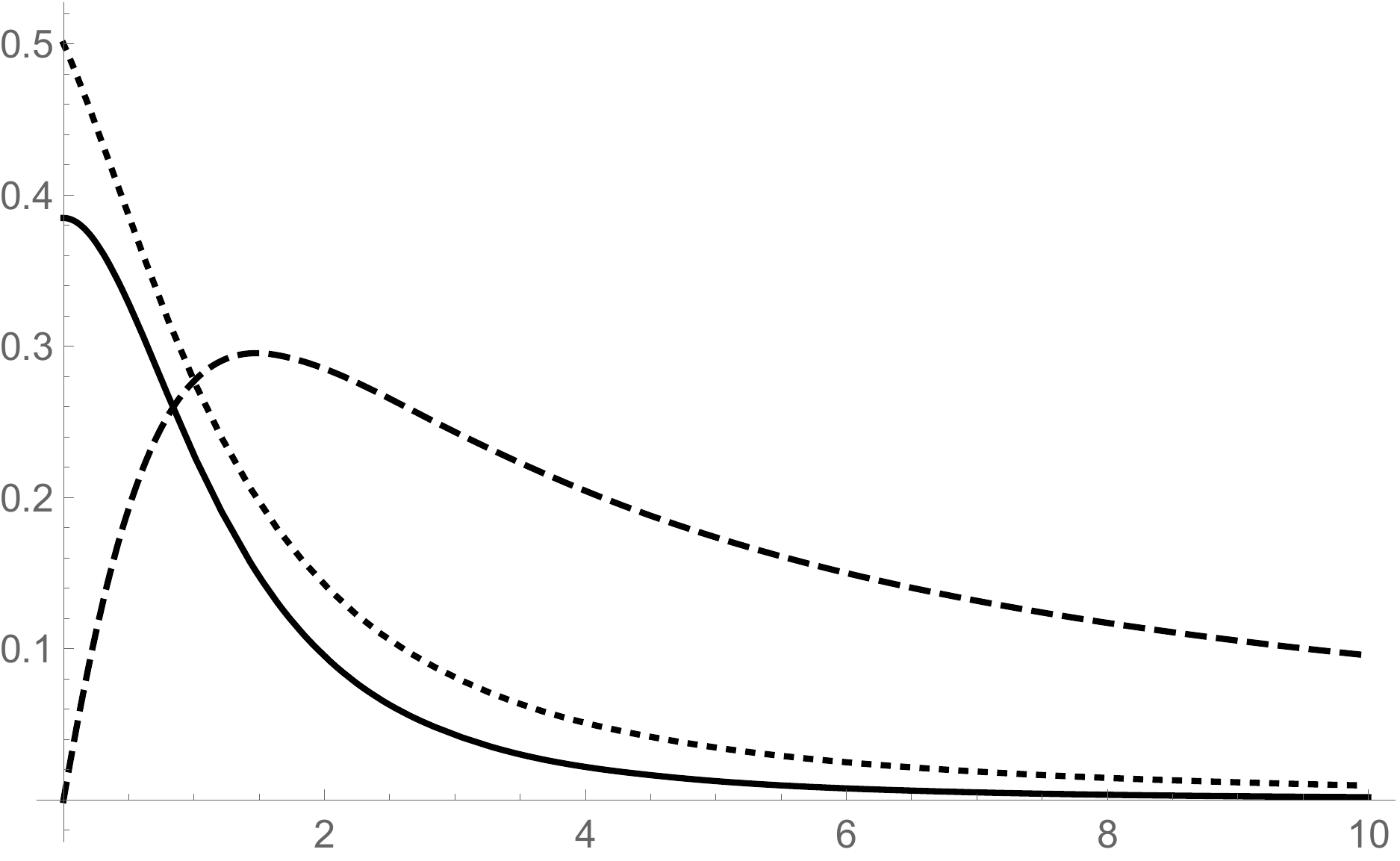}
\end{center}
\caption{Numerical solution of the core system in the $\kappa=0$ case
  with $|\lambda|=3$ and $m=1/3\sqrt{3}$.  The solid line corresponds
  to $\phi$, the dashed line to $\chi$ and the dotted line to
  $L$. Observe that in contrast to the $\kappa>1$ and $\kappa<-1$
cases,  numeric 
evaluations suggest that in the case $\kappa=0$
the fields of the core system are bounded for all times  ---see Figures \ref{fig:figPos} and \ref{fig:Negative} of Appendix
  \ref{FormationOfSingularitiesAndReparametrisations}.}
\label{Figure:Kappa0}
 \end{figure}

\subsubsection{Behaviour of the remaining fields in the
 conformal evolution equations}
\label{DiscussionConformalEvolution}

In this section we complete the analysis of the conformal evolution equations.
In particular, we show that 
the dynamics of the whole evolution equations is driven by the core
  system. To this end, we introduce the fields
\[
\bar{\chi}\equiv\frac{1}{3}(\chi_{2}-\chi_{h}), \qquad
 \bar{L}\equiv\frac{1}{3}(\Theta_{2}^{S}-\Theta_{h}^{S}). 
\]
 The
 evolution equations for these variables are
\begin{subequations}
\begin{eqnarray}
&&\dot{\bar{\chi}}=\bar{\chi}^2 -\bar{L}-\Theta \phi,\label{SuplementarySystem1} \\
&&\dot{\bar{L}} = \bar{\chi}\bar{L} + \dot{\Theta}\phi,\label{SuplementarySystem2}
\end{eqnarray}
\end{subequations}
with initial data
\[
\bar{\chi}(0)=-\kappa,\qquad \bar{L}(0)=-\frac{1}{2}(1+\kappa^2).
\]
Notice that despite these equations resemble those of the core system,
the field $\phi$ is not determined by the equations
\eqref{SuplementarySystem1}-\eqref{SuplementarySystem2} ---thus, we
call this subsystem the \emph{supplementary system}.  Once the core
system has been solved, $\phi$ can be regarded as a source term for
the system \eqref{SuplementarySystem1}-\eqref{SuplementarySystem2}.
If $\bar{\chi}$ and $\bar{L}$ are known then one can write the
remaining unknowns in quadratures. More precisely, defining
\begin{eqnarray*}
\xi_{y3}^{+}\equiv \xi_{y}+\frac{1}{2}\xi_{3}, \qquad
\xi_{y3}^{-}\equiv \xi_{y}-\frac{1}{2}\xi_{3}, \\ \xi_{z1}^{+}\equiv
\xi_{z}+\frac{1}{2}\xi_{1}, \qquad \xi_{z1}^{-}\equiv
\xi_{z}-\frac{1}{2}\xi_{1},
\end{eqnarray*}
one finds that the equations for these fields can be formally solved
to give
\begin{eqnarray*}
\xi_{y3}^{+}(\tau)=\xi_{y3}^{+}(0)\exp\left(-\int_{0}^{\tau}\chi(s)\mbox{ds}\right),
\qquad
\xi_{y3}^{-}=\xi_{y3}^{-}(0)\exp\left(-\int_{0}^{\tau}\bar{\chi}(s)\mbox{ds}\right),
\\ \xi_{z1}^{+}(\tau)=\xi_{z1}^{+}(0)\exp\left(-\int_{0}^{\tau}\chi(s)\mbox{ds}\right),
\qquad\xi_{z1}^{-}=\xi_{z1}^{-}(0)\exp\left(-\int_{0}^{\tau}\bar{\chi}(s)\mbox{ds}\right).
\end{eqnarray*}
The role of the the subsystem formed by $\Theta_{x}^{T}$, $f_{x}$ and $e_{x}^{3}$ 
is analysed in the following result.

\begin{lemma}
 \label{ForeSdSfVanishes}
Given asymptotic initial data for the  Schwarzschild-de Sitter
spacetime, if $\partial_{\psi}\kappa =0$ on $\mathscr{I}$ then
\[ f_{x}(\tau)=e_{x}{}^{0}(\tau)=\Theta_{x}{}^{T}(\tau)=0.\]
\end{lemma}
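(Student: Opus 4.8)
The plan is to isolate the three evolution equations \eqref{EvEq1}, \eqref{EvEq5} and \eqref{EvEq13}, which govern exactly the fields $e_{x}^{0}$, $f_{x}$ and $\Theta_{x}{}^{T}$, and to show that under the stated hypothesis they constitute a closed, linear and homogeneous system with vanishing initial data. First I would observe that, writing $\bar{\chi}=\tfrac{1}{3}(\chi_{2}-\chi_{h})$ as in Section \ref{DiscussionConformalEvolution}, these three equations read
\begin{align*}
\partial_{\tau}e_{x}^{0} &= \bar{\chi}\,e_{x}^{0} - f_{x}, \\
\partial_{\tau}f_{x} &= \bar{\chi}\,f_{x} + \Theta_{x}{}^{T}, \\
\partial_{\tau}\Theta_{x}{}^{T} &= \bar{\chi}\,\Theta_{x}{}^{T} + \tfrac{1}{3}d_{x}\phi_{2},
\end{align*}
and that, by the asymptotic data of Lemma \ref{Lemma:SdSAsymptotic InitialData}, $e_{x}^{0}(0)=f_{x}(0)=\Theta_{x}{}^{T}(0)=0$. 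The only term preventing this system from being closed and homogeneous is the source $\tfrac{1}{3}d_{x}\phi_{2}$, in which $d_{x}$ is the spatial component of the gauge $1$-form $\bmd$ rather than one of the evolved fields.

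The decisive step is to pin down $d_{x}$ using the supplementary constraint $\delta_{\bma}=0$, that is $d_{\bma}=\Theta f_{\bma}+\hat{\nabla}_{\bma}\Theta$. By Lemma \ref{Thm:PropagationConstraints} this constraint, which vanishes on $\mathscr{I}$, propagates and therefore holds throughout the development, so the spinorial relation $d_{\bmA\bmB}=\Theta f_{\bmA\bmB}+\hat{\nabla}_{\bmA\bmB}\Theta$ is available for all $\tau$. For a scalar the Weyl derivative reduces to the frame derivative, $\hat{\nabla}_{\bmi}\Theta=\bme_{\bmi}(\Theta)$, and from the universal conformal factor \eqref{UniversalConformalFactor} the field $\Theta$ depends on the spatial coordinates only through $\kappa$. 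Under the hypothesis $\partial_{\psi}\kappa=0$ (together with spherical symmetry, which forbids any dependence of $\kappa$ on the $\mathbb{S}^{2}$-coordinates) the function $\Theta$ is spatially homogeneous, so $\bme_{\bmi}(\Theta)=0$ and hence $d_{\bmA\bmB}=\Theta f_{\bmA\bmB}$. In terms of the Ansatz coefficients this is exactly $d_{x}=\Theta f_{x}$.

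Substituting $d_{x}=\Theta f_{x}$ into the third equation turns the source into $\tfrac{1}{3}\Theta\phi_{2}f_{x}$, so that $(f_{x},\Theta_{x}{}^{T})$ obeys the closed linear homogeneous system
\begin{align*}
\partial_{\tau}f_{x} &= \bar{\chi}\,f_{x} + \Theta_{x}{}^{T}, \\
\partial_{\tau}\Theta_{x}{}^{T} &= \bar{\chi}\,\Theta_{x}{}^{T} + \tfrac{1}{3}\Theta\phi_{2}\,f_{x},
\end{align*}
whose coefficients $\bar{\chi}(\tau)$, $\Theta(\tau)$ and $\phi_{2}(\tau)=3\phi(\tau)$ are the continuous background quantities already controlled by the core and supplementary systems. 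Since $f_{x}(0)=\Theta_{x}{}^{T}(0)=0$, uniqueness for linear ODEs (equivalently a Gr\"onwall estimate on $|f_{x}|+|\Theta_{x}{}^{T}|$) forces $f_{x}\equiv\Theta_{x}{}^{T}\equiv 0$; feeding this back into $\partial_{\tau}e_{x}^{0}=\bar{\chi}\,e_{x}^{0}$ with $e_{x}^{0}(0)=0$ gives $e_{x}^{0}\equiv 0$ as well. The one genuinely delicate point --- and the step I would spend most care on --- is the identification $d_{x}=\Theta f_{x}$: it hinges on the gauge $1$-form $\bmd$ being fixed by the propagated constraint $\delta_{\bma}=0$, and on the spatial homogeneity of $\Theta$, the latter being precisely where the hypothesis $\partial_{\psi}\kappa=0$ enters.
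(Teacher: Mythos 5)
Your overall strategy is the same as the paper's: equations \eqref{EvEq1}, \eqref{EvEq5} and \eqref{EvEq13}, together with the vanishing initial data of Lemma \ref{Lemma:SdSAsymptotic InitialData}, are meant to form a linear homogeneous system whose unique solution is trivial, and the only real work is pinning down the gauge quantity $d_{x}$ through the constraint $\delta_{\bma}=0$. However, your decisive step, $d_{x}=\Theta f_{x}$, is wrong, and the error is a conflation of coordinate derivatives with frame derivatives. Spatial homogeneity of $\Theta$ (which is what $\partial_{\psi}\kappa=0$ plus spherical symmetry gives you) implies $\bmc_{\bmi}(\Theta)=0$ for the \emph{fixed coordinate frame} $\{\bmc_{\bmi}\}$, but $\hat{\nabla}_{\bmA\bmB}\Theta=\bme_{\bmA\bmB}(\Theta)=e_{\bmA\bmB}{}^{0}\partial_{\tau}\Theta+e_{\bmA\bmB}{}^{\bmi}\bmc_{\bmi}(\Theta)$, and the Weyl-propagated spatial frame legs tilt off the initial hypersurface: they acquire a $\bmpartial_{\tau}$-component measured precisely by $e_{x}{}^{0}$ --- that is exactly what equation \eqref{EvEq1} evolves. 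Since $\dot{\Theta}\neq 0$, the term $e_{x}{}^{0}\dot{\Theta}$ survives, so the correct identification is $d_{x}=\Theta f_{x}+\sqrt{2}\,e_{x}{}^{0}\dot{\Theta}+e_{x}{}^{3}\partial_{\psi}\Theta$, and the hypothesis $\partial_{\psi}\kappa=0$ kills only the last term (as $\partial_{\psi}\Theta=\tfrac{1}{2}\sqrt{|\lambda|/3}\,\tau^{2}\partial_{\psi}\kappa$). This is the identification the paper makes, $d_{x}=\sqrt{2}\,e_{x}{}^{0}\dot{\Theta}$ (its dropping of $\Theta f_{x}$ is immaterial, that term being linear homogeneous in $f_{x}$ anyway). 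Declaring $\bme_{\bmA\bmB}(\Theta)=0$ amounts to assuming $e_{x}{}^{0}=0$, i.e.\ part of the conclusion, so your argument is circular at that point; concretely, your $2\times 2$ system for $(f_{x},\Theta_{x}{}^{T})$ does not actually close, because the source $\tfrac{1}{3}d_{x}\phi_{2}$ couples back to $e_{x}{}^{0}$.

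The repair is immediate and lands you on the paper's own proof: substitute the full expression for $d_{x}$ and observe that \eqref{EvEq1}, \eqref{EvEq5}, \eqref{EvEq13} then constitute, along each conformal geodesic, a $3\times 3$ linear \emph{homogeneous} ODE system in $(e_{x}{}^{0},f_{x},\Theta_{x}{}^{T})$ with continuous coefficients built from $\bar{\chi}$, $\Theta$, $\dot{\Theta}$ and $\phi_{2}$; since the initial data vanish, uniqueness for linear ODEs (or your Gr\"onwall estimate, now applied to $|e_{x}{}^{0}|+|f_{x}|+|\Theta_{x}{}^{T}|$) forces the trivial solution. The sequential scheme --- first solve for $(f_{x},\Theta_{x}{}^{T})$, then feed into the equation for $e_{x}{}^{0}$ --- must be abandoned in favour of treating the three fields simultaneously.
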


\begin{proof}
This result follows directly from equations \eqref{EvEq2},\eqref{EvEq5},
\eqref{EvEq13} and the initial data given in Lemma \ref{Lemma:SdSAsymptotic InitialData}.  To see this, first recall
that
 \[
d_{x} \equiv x^{\bmA \bmB} e_{\bmA \bmB}{}^{\bmi}\bme_{\bmi}
(\Theta) = e_{x}{}^{0}\partial_{0}\Theta +
e_{x}{}^{3}\partial_{3}\Theta. 
\]
 Assuming then that
$\bme_{\bm3}(\kappa)=0$ one has that $\bme_{\bm3}(\Theta)=0$ and therefore
\[
d_{x} =\sqrt{2}
e_{x}{}^{0}\dot{\Theta}.
\]
 Observing that equations
\eqref{EvEq2},\eqref{EvEq5}, \eqref{EvEq13} form an homogeneous system
of equations for the fields ${e_{x}{}^{0},f_{x},\Theta_{x}{}^{T}}$ with
vanishing initial data then, using a standard existence and
uniqueness argument for ordinary differential equations, it follows
 that the unique solution to this
subsystem is the trivial solution, namely
\[
f_{x}(\tau)=e_{x}{}^{0}(\tau)=\Theta_{x}{}^{T}(\tau)=0.
\]
\end{proof}
\noindent  Using the result of Lemma  \ref{ForeSdSfVanishes} 
one can formally integrate equation \eqref{EvEq10} to yield
\[\xi_{x}(\tau)= \xi_{x}(0)\exp\left( -\int_{0}^{\tau}\chi(s)\mbox{ds}\right). \]
The frame coefficients can also be found by quadratures
\begin{equation*} 
e_{x}^{0}(\tau)=e_{x}^{0}(0)\exp\left(\int_{0}^{\tau}\bar{\chi}(s)\mbox{ds}\right) ,
\qquad e_{y}^{+}(\tau)=e_{y}^{+}(0)\exp\left(-\int_{0}^{\tau}\chi(s)\mbox{ds}\right),
\end{equation*}
\begin{equation*}
\qquad e_{z}^{+}(\tau)=e_{z}^{+}(0)\exp\left(-\int_{0}^{\tau}\chi(s)\mbox{ds}\right).
\end{equation*}
Since we can write
\begin{align*}
\chi_{2} = & 2(\chi + \bar{\chi}), & \chi_{h}= & 2\chi-\bar{\chi}, &
\Theta_{2}^{S} = & 2(\bar{L}-L), & \Theta_{h}^{S} =
&-\bar{L}-2L,  \\ \xi_{y}= &\frac{1}{2}(\xi_{y3}^{+} +\xi_{y3}^{-}), &
\xi_{z}=&\frac{1}{2}(\xi_{z1}^{+} +\xi_{z1}^{-}), &
\xi_{1}=&2(\xi_{z1}^{+}-\xi_{z1}^{-}),
&\xi_{3}=&2(\xi_{y3}^{+}-\xi_{y3}^{-}).
\end{align*}
then, it only remains to study the behaviour of $\bar{\chi}$ and $\bar{L}$
to completely characterise the evolution equations
\eqref{EvEq1}-\eqref{EvEq16}.

\begin{remark}
{\em In the analysis of the core system of Appendix
  \ref{FormationOfSingularitiesAndReparametrisations} we identify the
mechanism for the formation of singularities at finite time
 in the case $\kappa
\neq 0$. Since $\phi$ acts as a source term for the supplementary
system \eqref{SuplementarySystem1}-\eqref{SuplementarySystem2} one expects
the solution to this system to be singular at finite time if the
solutions to the core system develop a singularity. Clearly, the
behaviour of the core system is independent from the behaviour of the
supplementary system. Consequently, the fact that $\phi$ diverges at
finite time or not is irrespective of the behaviour of $\bar{L}$ and
$\bar{\chi}$. }
\end{remark}

\subsubsection{Deviation equation for the congruence}
\label{DeviationEquationForTheCongruence}

As discussed in Section \ref{controllingthegauge}, the evolution
equations
\eqref{EvolutionEquation3Plus1Decomposition1}-\eqref{EvolutionEquation3Plus1Decomposition8}
are derived under the assumption of the existence of a
non-intersecting congruence of conformal geodesics. In this section we
analyse the solutions to the deviation equations.

\medskip
As a consequence of Lemma \ref{ForeSdSfVanishes} we have $f_{\bmA
  \bmB}=0$. Following the spirit of the space spinor formalism, the
deviation spinor $z_{\bmA \bmB}$ can be written in terms of elementary
valence 2 spinors as
\[
 z_{(\bmA \bmB)}=z_{x}x_{\bmA \bmB}+z_{y}y_{\bmA \bmB}+z_{z}z_{\bmA
   \bmB}.
\]
Substituting expression \eqref{AnsatzChi} into equation
\eqref{ConformalDeviationEquations2} and using the identities given in
equation \eqref{UsefulIdentities4} one obtains
\[
\partial_{\tau}z_{x}=0, \qquad \partial_{\tau}z_{z}=0, \qquad
\partial_{\tau}z_{y}
=-\frac{1}{12}\chi_{2}z_{y}-\frac{1}{6}\chi_{h}z_{y}.
\]
One can formally integrate these equations to obtain
\[
 z_{x}(\tau)=z_{x\star}, \qquad z_{z}(\tau)=z_{z\star}, \qquad
 z_{y}(\tau) =z_{y\star}\exp \left(
 -\frac{1}{2}\int_{0}^{\tau}\chi(s)\mbox{ds}\right).
\]
In the last equation, $z_{x\star},z_{y\star}$ and $z_{z\star}$ denote
the initial value of $z_{x}(\tau),z_{y}(\tau)$ and $z_{z}(\tau)$
respectively.  It follows that the deviation vector is non-zero and
regular as long as the initial data $z_{x\star},z_{y\star}$ and
$z_{z\star}$ are non-vanishing and $\chi(\tau)$ is
regular. Accordingly, \emph{the congruence of conformal geodesics will be
non-intersecting.}

\subsubsection{Analysis of the supplementary system}
\label{AnalysisSuplementarySystem}
As in the case of the core system, the supplementary system is simpler in
the gauge in which $\kappa=0$. In such case, direct inspection shows
that equations \eqref{SuplementarySystem1}-\eqref{SuplementarySystem2}
imply 
\[ 
\bar{\chi}=-\tau\bar{L}.
\]
This can be verified by direct substitution into equations
\eqref{SuplementarySystem1} and \eqref{SuplementarySystem2}.  Notice
that $\bar{L}(\tau)$ is well defined at $\mathscr{I}$ where $\tau=0$
and $\bar{\chi}(0)=0$ since the initial conditions ensure that
\[
\lim_{\tau\rightarrow 0}\frac{\bar{\chi}(\tau)}{\tau}=\frac{1}{2}. 
\]
Taking into account this observation, the system
\eqref{SuplementarySystem1}-\eqref{SuplementarySystem2} reduces to the
equation
\begin{equation} 
\label{SuplementaryZeroKappa}
\dot{\bar{L}}=-\tau\bar{L}^2 + \dot{\Theta}_{\star}\phi,
\end{equation}
with initial data
\begin{equation}
\label{SuplementaryZeroKappaData}
\bar{L}(0)=-\frac{1}{2}.
\end{equation}
Using that $\phi$ is only determined by the core system, together with the
analysis of Section \ref{AnalysisOfTheCoreSystem} one obtains the
following result:
\begin{lemma}
\label{Lemma:NoSingularitySupplementarySystem}
 The solution to equation \eqref{SuplementaryZeroKappa} 
with initial data \eqref{SuplementaryZeroKappaData} is bounded 
for $0 \leq \tau \leq \tau_{\circledcirc}$ with
\begin{equation}\label{NonInfinitesimalExistenceTime} 
\tau_{\circledcirc}\equiv
 \text{min}\big\{ \tau_{\circ}, \tau_{\bullet}
\big\}, \qquad
where \qquad \tau_{\circ} \equiv \sqrt{\dot{\Theta}_{\star}^{-1/2} \left( \frac{\pi}{2}+
2\arctan \left(\frac{1}{2}\dot{\Theta}_{\star}^{-1/2}\right)\right)}
\end{equation}
\end{lemma}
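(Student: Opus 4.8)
The plan is to treat \eqref{SuplementaryZeroKappa} as a scalar Riccati equation for $\bar{L}$ in which $\phi$ plays the role of a \emph{known}, strictly positive and bounded source. By the time one reaches this point the core system has already been analysed, so that Observation~3 (equation \eqref{phiPositiveTimeSymmetric}) and Lemmas \ref{SymEstimateL}, \ref{lemmaUpperBoundLTimeSym} and \ref{LemmaBoundednessCoreSystem} guarantee that $\phi(\tau)>0$ and that $\phi$ is bounded on $[0,\tau_{\bullet}]$; moreover, combining \eqref{phiPositiveTimeSymmetric} with the bound $L(\tau)\le 2/(\tau^2+4)$ of Lemma \ref{lemmaUpperBoundLTimeSym} gives the two-sided control $8m/(\tau^2+4)\le \phi(\tau)\le 2m$ on that interval. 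The whole problem therefore reduces to a one-dimensional comparison argument, and the final existence time is produced by intersecting the interval on which these source estimates are available (which yields $\tau_{\bullet}$) with the interval on which the comparison for $\bar{L}$ closes (which yields $\tau_{\circ}$), so that $\tau_{\circledcirc}=\min\{\tau_{\circ},\tau_{\bullet}\}$ as in \eqref{NonInfinitesimalExistenceTime}.

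The upper bound on $\bar{L}$ is the easy direction. Since $-\tau\bar{L}^2\le 0$ for $\tau\ge 0$, equation \eqref{SuplementaryZeroKappa} gives $\dot{\bar{L}}\le \dot{\Theta}_{\star}\phi$, and integrating from the data \eqref{SuplementaryZeroKappaData} together with $\phi\le 2m$ yields $\bar{L}(\tau)\le -\tfrac12+2m\dot{\Theta}_{\star}\tau$, finite on any bounded interval. The good sign of the quadratic term rules out blow-up to $+\infty$ entirely, so no finer analysis is required here.

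The delicate direction, and the main obstacle, is the lower bound: the term $-\tau\bar{L}^2$ has exactly the wrong sign to prevent $\bar{L}$ from escaping to $-\infty$ in finite time. I would first record the crude comparison obtained by discarding the (helpful) positive source, $\dot{\bar{L}}\ge -\tau\bar{L}^2$, which by the standard comparison theorem for ODEs forces $\bar{L}(\tau)\ge w(\tau)$, where $w$ solves $\dot{w}=-\tau w^2$ with $w(0)=-\tfrac12$, i.e. $w(\tau)=2/(\tau^2-4)$. This barrier is finite only for $\tau<2$ and degenerates as $\tau\to 2^{-}$, so on its own it merely establishes boundedness on compact subintervals of $[0,2)$. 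To obtain the sharper, closed-form time $\tau_{\circ}$ I would not drop the source but retain the positive lower bound $\dot{\Theta}_{\star}\phi\ge 8m\dot{\Theta}_{\star}/(\tau^2+4)$ and construct an explicit sub-solution of \eqref{SuplementaryZeroKappa}. Here the natural device is the change of variable $s=\tau^2/2$, which normalises the quadratic coefficient to a constant and turns \eqref{SuplementaryZeroKappa} into a Riccati equation of the form $\mathrm{d}\bar{L}/\mathrm{d}s=-\bar{L}^2+(\text{source})$; its lower barrier is then of tangent type, and the location of the first vertical asymptote of that barrier is what I expect to yield $\tau_{\circ}$, the constant $\tfrac12\dot{\Theta}_{\star}^{-1/2}$ inside the arctangent recording the initial datum $\bar{L}(0)=-\tfrac12$ together with the normalisation carried by $\dot{\Theta}_{\star}$.

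Finally I would assemble the two estimates: on $[0,\tau_{\circledcirc}]$, with $\tau_{\circledcirc}=\min\{\tau_{\circ},\tau_{\bullet}\}$, the boundedness and positivity of $\phi$ (valid up to $\tau_{\bullet}$) justify the source estimates used in both directions, while the tangent-type lower barrier (valid up to $\tau_{\circ}$) keeps $\bar{L}$ finite from below; together with the linear upper bound this shows that $\bar{L}$ is bounded on $[0,\tau_{\circledcirc}]$, as claimed. I expect the only genuinely subtle points to be verifying that the candidate function is an honest sub-solution on the entire interval, so that the comparison theorem applies, and checking that the explicit integration of the normalised Riccati equation indeed produces the stated $\tau_{\circ}$ rather than the weaker value $\tau=2$ coming from the source-free estimate.
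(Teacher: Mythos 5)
Your overall framework---treating \eqref{SuplementaryZeroKappa} as a scalar Riccati equation whose source $\dot{\Theta}_{\star}\phi$ is known, positive and bounded from the core-system analysis, bounding $\bar{L}$ from above and below separately, and intersecting intervals to get $\tau_{\circledcirc}=\min\{\tau_{\circ},\tau_{\bullet}\}$---is the same as the paper's, and your upper bound is sound: the paper argues instead by contradiction ($\bar{L}\to+\infty$ would force $\phi\to\infty$, against Lemma \ref{LemmaBoundednessCoreSystem}), and your direct integration is if anything more quantitative. (Minor slip there: $\phi\le 2m$ follows from \eqref{phiPositiveTimeSymmetric} together with $L\ge 0$, i.e.\ from Lemma \ref{SymEstimateL}, not from the upper bound of Lemma \ref{lemmaUpperBoundLTimeSym}, which only yields $\phi\ge 8m/(\tau^{2}+4)$.) The genuine gap is in your lower-bound step. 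A comparison equation of the form $\mathrm{d}v/\mathrm{d}s=-v^{2}+g(s)$ with $g>0$ does not have tangent-type barriers: its equilibria are at $\pm\sqrt{g}$, a datum lying above the negative equilibrium is driven toward the positive one, and the explicit comparison solutions are of rational or hyperbolic type, so no $\pi/2$ or $\arctan$ constants can emerge. Hence the integration you propose cannot ``produce the stated $\tau_{\circ}$''; the constant in \eqref{NonInfinitesimalExistenceTime} is tied to a \emph{negative} source. The paper manufactures one by precisely the opposite move to yours: it \emph{weakens} $\phi>0$ to $\phi>-\tau$, so that \eqref{SuplementaryZeroKappa} gives the separable inequality $\dot{\bar{L}}\ge-\tau\bigl(\bar{L}^{2}+\dot{\Theta}_{\star}\bigr)$, whose integration with \eqref{SuplementaryZeroKappaData} yields $\bar{L}(\tau)\ge-\sqrt{\dot{\Theta}_{\star}}\,\tan\bigl(\tfrac{1}{2}\sqrt{\dot{\Theta}_{\star}}\,\tau^{2}+\arctan\bigl(\tfrac{1}{2}\dot{\Theta}_{\star}^{-1/2}\bigr)\bigr)$, and $\tau_{\circ}$ is read off as the time at which the argument of the tangent saturates.

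Two further remarks. First, your ``crude'' inequality $\dot{\bar{L}}\ge-\tau\bar{L}^{2}$ is pointwise \emph{stronger} than the paper's comparison inequality (discarding a positive term beats replacing it by $-\dot{\Theta}_{\star}\tau$), so by the comparison theorem your barrier $2/(\tau^{2}-4)$ lies above the paper's tangent barrier and survives at least as long: your instinct that the crude bound is the weak one, and that a sharper sub-solution is needed to beat $\tau=2$, is backwards. Indeed, if you track the constants in the paper's own integration, the tangent's argument reaches $\pi/2$ at $\tau^{2}=\dot{\Theta}_{\star}^{-1/2}\bigl(\pi-2\arctan\bigl(\tfrac{1}{2}\dot{\Theta}_{\star}^{-1/2}\bigr)\bigr)=2\,\dot{\Theta}_{\star}^{-1/2}\arctan\bigl(2\dot{\Theta}_{\star}^{1/2}\bigr)<4$, so the tangent barrier always degenerates before $\tau=2$; this also shows that the sign of the $\arctan$ term in \eqref{NonInfinitesimalExistenceTime} deserves a second look. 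Second, retaining the positive source, as you suggest, is not useless---when $m\dot{\Theta}_{\star}$ is large relative to the datum $-\tfrac{1}{2}$ it can even give a barrier with no finite-time degeneration at all---but the resulting estimate is of hyperbolic (coth/tanh) type and will never reproduce formula \eqref{NonInfinitesimalExistenceTime}; to obtain the lemma with the stated constant you must follow the paper's weakening step rather than the sharpening one you planned.
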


\begin{proof}
To prove that $\bar{L}(\tau)$ is bounded from above we proceed by
contradiction.  Assume that $\bar{L}\rightarrow \infty$ for some
finite $\tau_{\lightning} \in [0,\tau_{\bullet}]$,
 then $\dot{\bar{L}} \rightarrow \infty$ at
$\tau_{\lightning}$.  Now, equation \eqref{SuplementaryZeroKappa}
can be rewritten as 
\[
\dot{\bar{L}} +\tau\bar{L}^2 = \dot{\Theta}_{\star}\phi.
\]

Therefore, since $\tau \geq 0$, the last expression implies that
 $\phi\rightarrow \infty$ at $\tau_{\lightning}$.  However, in
Section \ref{AnalysisOfTheCoreSystem} we showed that $\phi$ is
 finite for $\tau \in [0, \tau_{\bullet}]$.
 This is a contradiction, and one cannot have
$\bar{L} \rightarrow \infty$ at 
$\tau_{\lightning} \in [0, \tau_{\bullet}]$.
 Consequently $L(\tau)$ is
bounded from above for $0 \leq \tau \leq \tau_{\bullet}$.
 To show that $\bar{L}(\tau)$ is bounded
from below, for $0 \leq \tau\leq \tau_{\circ}$
  with $\tau_{\circ}$ as given by 
 relation \eqref{NonInfinitesimalExistenceTime},
  observe that $\phi(\tau)>-\tau$ for $\tau \geq 0$ since $\phi(\tau)>0$.
 Using this observation, equation \eqref{SuplementaryZeroKappa} implies the
differential inequality
\[
\dot{\bar{L}} \geq -\tau (\bar{L}^2 + \dot{\Theta}_{\star}).
   \]
Since $\dot{\Theta}_{\star}>0$ one has that $(\bar{L}^2 + \dot{\Theta}_{\star})
>0$.  Thus, one can rewrite the last inequality as
\[
\frac{\dot{\bar{L}}}{(\bar{L}^2 + \dot{\Theta}_{\star}) } \geq -\tau , 
\]
which can be integrated using the initial data
\eqref{SuplementaryZeroKappaData} to give
\[ 
L(\tau) \geq -\sqrt{\Theta_{\star}}\tan\left(
\frac{1}{2}\sqrt{\dot{\Theta}_{\star}}\tau^2 +\arctan \left(
\frac{1}{2\sqrt{\dot{\Theta}_{\star}}} \right)\right).
\]
Since the function $\tan$ is bounded if its argument lies in $[0,\pi/4]$ one
concludes that $L(\tau)$ is bounded from below for $0 \leq \tau \leq
\tau_{\circ}$. Finally, taking the minimum of $\tau_{\bullet}$ and 
$\tau_{\circ}$ one obtains the result.
\end{proof}

\begin{remark}
{\em Numerical evaluations of the solutions to the
supplementary system show that it should be possible to improve Lemmas
 \ref{LemmaBoundednessCoreSystem} 
and \ref{Lemma:NoSingularitySupplementarySystem}  and conclude that the
solutions do not blow up in finite time. These results, however, will
not be required to formulate the Main Result of this article. }
\end{remark}

\subsection{Perturbations of the Schwarzschild-de Sitter spacetime}
\label{PertsOfeSdS}

In the sequel, we consider perturbations of the Schwarzschild-de Sitter
spacetime which can be covered by a congruence of conformal geodesics
so that Lemma \ref{LemmaCF} can be applied. In particular, this means
that the functional form of the conformal factor is the same for for
both the background and the perturbed spacetime.

The discussion of Section \ref{IdentifyAsympRegularData}
 brings to the
foreground the difficulties in setting up an asymptotic initial value
problem for the Schwarzschild-de Sitter spacetime in a representation
in which the initial hypersurface contains the asymptotic points
$\mathcal{Q}$ and $\mathcal{Q'}$: on the one hand, the initial data
for the rescaled Weyl tensor is singular at both $\mathcal{Q}$ and
$\mathcal{Q'}$; and, on the other hand, the curves in a congruence of
timelike conformal geodesics become asymptotically null as they
approach $\mathcal{Q}$ and $\mathcal{Q'}$ ---see Appendix \ref{AsymptoticPointsQQprime}.

Consistent with the above remarks, the analysis of the conformal
evolution equations
\eqref{EvolutionEquation3Plus1Decomposition1}-\eqref{EvolutionEquation3Plus1Decomposition8}
has been obtained in a conformal representation in which the metric on
$\mathscr{I}$ is the standard one on $\mathbb{R}\times
\mathbb{S}^2$. In this particular conformal representation the
asymptotic points $\mathcal{Q}$ and $\mathcal{Q'}$ are at infinity
respect to the 3-metric of $\mathscr{I}$ and the initial data for the
Schwarzschild-de Sitter spacetime is homogeneous.  In this section we
analyse nonlinear perturbations of the Schwarzschild-de Sitter
spacetime by means of suitably posed initial value problems. More
precisely, we analyse the development of perturbed initial data close
to that of the Schwarzschild-de Sitter spacetime in the above
described conformal representation. Then, using the conformal
evolution equations
\eqref{EvolutionEquation3Plus1Decomposition1}-\eqref{EvolutionEquation3Plus1Decomposition8}
and the theory of first order symmetry hyperbolic systems contained in
\cite{Kat75} we obtain a existence and stability result for a
reference solution corresponding to the asymptotic region of the
Schwarzschild-de Sitter spacetime ---see Figure
\ref{fig:SdSAsymptoticPerturbation}.

\subsubsection{Perturbations of asymptotic data for the 
  Schwarzschild-de Sitter spacetime}
\label{perturbationseSdS}

In what follows, let $\mathcal{S}$ denote a 3-dimensional manifold with
$\mathcal{S}\approx \mathbb{R}\times\mathbb{S}{}^{2}$. By assumption,
there exists a diffeomorphism $\psi: \mathcal{S}\rightarrow
\mathbb{R}\times\mathbb{S}^2$ which can used to pull-back a coordinate
system $x=(x^{\alpha})$ on $\mathbb{R}\times\mathbb{S}^2$ to obtain a
coordinate system on $\mathcal{S}$ ---i.e. $ \wideparen{x}= \psi^{*}
x= x\circ \psi$. Exploiting the fact that $\psi$ is a diffeomorphism
we can define not only the pull-back $\psi^{*}:
T^{*}(\mathbb{R}\times\mathbb{S}^2 )\rightarrow T^{*}\mathcal{S} $ but
also the push-forward of its inverse $ ( \psi^{-1})_{*} :
T(\mathbb{R}\times\mathbb{S}^2) \rightarrow T\mathcal{S}$. Using this
mapping, we can push-forward vector fields $\bmc_{\bmi}$ on
$T(\mathbb{R}\times\mathbb{S}^2)$ and pull-back their covector fields
$\bmalpha^{\bmi}$ on $T^{*}\mathcal{S}$ via
\[ 
\wideparen{\bmc}_{\bmi}=(\psi^{-1})_{*}\bmc_{\bmi}, \qquad
\wideparen{\bmalpha}^{\bmi}=\psi^{*}\bmalpha^{\bmi}.    
\]
In a slight abuse of notation, the fields $\wideparen{\bmc}_{\bmi}$
and $\wideparen{\bmalpha}^{\bmi}$ will be simply denoted by
$\bmc_\bmi$ and $\bmalpha^\bmi$.

\medskip
In the following, we will refer to all the fields discussed
previously for the exact  Schwarzschild-de Sitter spacetime as
the \emph{background solution} and distinguish them with a 
$\mathring{\phantom{X}}$ over the
Kernel letter ---e.g. $\ring{\bmh}$ will denote the standard metric on
$\mathbb{R}\times\mathbb{S}^2$ given in equation
\eqref{metricTorus}. Similarly, the perturbation to the corresponding
field will be identified with a $\breve{\phantom{X}}$ over the Kernel letter. 
Notice
that although the frame $\{\bmc_{\bmi}\}$ is
$\ring{\bmh}$-orthonormal, it is not necessarily orthogonal respect to
the intrinsic 3-metric $\bmh$ on $\mathcal{S}$.

\medskip
Let
$\{\bme_{\bmi} \}$ denote a $\bmh$-orthonormal frame over
$T\mathcal{S}$ and let $\{ \bmomega^{\bmi}\}$ be the associate
cobasis. Assume that there exist vector fields $\{
\breve{\bme}_\bmi\}$ such that an $\bmh$-orthonormal frame
$\{\bme_{\bmi}\}$ is related to an $\ring{\bmh}$-orthonormal frame
$\{\bmc_\bmi\}$ through the relation
 \[
\bme_{\bmi}=\bmc_{\bmi}+ \breve{\bme}_{\bmi}.
\]
 This last requirement is equivalent to introducing coordinates on
$\mathcal{S}$ such that
\begin{equation}
\bmh = \ring{\bmh} + \breve{\bmh}.
\label{hperturbation}
\end{equation}
Now, consider a solution 
\[
(h_{\bmi\bmj}, \;\chi_{\bmi
  \bmj},\; L_{\bmi},\; L_{\bmi \bmj},\; d_{\bmi\bmj\bmk},\; d_{\bmi\bmj})
\]
 to the asymptotic conformal constraint equations
\eqref{AsymptoticConformalConstraints1}-\eqref{AsymptoticConformalConstraints9}
which is, in some sense to be determined, close to initial data for
the Schwarzschild-de Sitter spacetime so that one can write
\begin{eqnarray*}
&
  h_{\bmi\bmj}|_{\mathcal{S}}=\ring{h}_{\bmi\bmj}|_{\mathcal{S}}+\breve{h}_{\bmi\bmj}|_{\mathcal{S}},
 \qquad \chi_{\bmi\bmj}|_{\mathcal{S}}=\ring{\chi}_{\bmi\bmj}|_{\mathcal{S}}+\breve{\chi}_{\bmi\bmj}|_{\mathcal{S}},
 \qquad  L_{\bmi}|_{\mathcal{S}} =\ring{L}_{\bmi}|_{\mathcal{S}}
 + \breve{L}_{\bmi}|_{\mathcal{S}}&\\ 
& L_{\bmi\bmj}|_{\mathcal{S}}
 =\ring{L}_{\bmi\bmj}|_{\mathcal{S}}+\breve{L}_{\bmi\bmj}|_{\mathcal{S}},
\qquad  d_{\bmi\bmj\bmk}|_{\mathcal{S}}
=\ring{d}_{\bmi\bmj\bmk}|_{\mathcal{S}}
+\breve{d}_{\bmi\bmj\bmk}|_{\mathcal{S}},
\qquad d_{\bmi\bmj}|_{\mathcal{S}}
=\ring{d}_{\bmi\bmj}|_{\mathcal{S}}+\breve{d}_{\bmi\bmj}|_{\mathcal{S}}. & 
\end{eqnarray*}

A spinorial version of these data can be obtained using the spatial 
Infeld-van der Waerden symbols. Accordingly, one writes
\begin{subequations}
\begin{eqnarray}
& \eta_{\bmA \bmB \bmC \bmD}|_{\mathcal{S}} = \ring{\eta}_{\bmA \bmB
    \bmC \bmD}|_{\mathcal{S}} + \breve{\eta}_{\bmA \bmB \bmC
    \bmD}|_{\mathcal{S}}, \qquad  \mu_{\bmA \bmB \bmC
    \bmD}|_{\mathcal{S}} = \breve{\mu}_{\bmA \bmB \bmC
    \bmD}|_{\mathcal{S}}, & \label{spinorInitialDataPertrubations1}
  \\
&L_{\bmA \bmB \bmC
    \bmD}|_{\mathcal{S}}=\ring{L}_{\bmA\bmB\bmC\bmD}|_{\mathcal{S}}+\breve{L}_{\bmA
    \bmB \bmC \bmD}|_{\mathcal{S}}, \qquad  \xi_{\bmA \bmB \bmC
    \bmD}|_{\mathcal{S}}=\ring{\xi}_{\bmA\bmB\bmC\bmD}|_{\mathcal{S}}
  +
  \breve{\xi}_{\bmA\bmB\bmC\bmD}|_{\mathcal{S}}, &
 \label{spinorInitialDataPertrubations2}
\\ 
& L_{\bmA \bmB}|_{\mathcal{S}} = \breve{L}_{\bmA
    \bmB}|_{\mathcal{S}}, \qquad  \chi_{\bmA \bmB \bmC
    \bmD}|_{\mathcal{S}}=\ring{\chi}_{\bmA \bmB \bmC
    \bmD}|_{\mathcal{S}}+\breve{\chi}_{\bmA\bmB\bmC\bmD}|_{\mathcal{S}}, &
\label{spinorInitialDataPertrubations3}
  \\ 
& \bme_{\bmA\bmB}|_{\mathcal{S}}=
  \ring{\bme}_{\bmA\bmB}|_{\mathcal{S}}+
  \breve{\bme}_{\bmA\bmB}|_{\mathcal{S}}, \qquad f_{\bmA
    \bmB}|_{\mathcal{S}}= \breve{f}_{\bmA\bmB}|_{\mathcal{S}}. &
\label{spinorInitialDataPertrubations4}
\end{eqnarray}
\end{subequations}
Observe that all the objects appearing in expressions
\eqref{spinorInitialDataPertrubations1}-\eqref{spinorInitialDataPertrubations4}
are scalars.

\subsubsection{Controlling the size of the perturbation}
\label{ControllingTheSizeOfThePerturbation}

In this subsection we introduce the necessary notions
and definitions to measure the size of the perturbation of the initial
data. Let $\mathcal{A} \equiv \{ (\phi_{1},\mathcal{U}_{1}),(\phi_{2},\mathcal{U}_{2})\}$  with 
$\phi_{1}:\mathcal{U}_{1}\rightarrow \mathbb{R}^3$ and $\phi_{2}:\mathcal{U}_{2}\rightarrow \mathbb{R}^3$
be an Atlas for $\mathbb{R}\times \mathbb{S}^2$.  Let $\mathcal{V}_{1} \subset \mathcal{U}_{1}$,
$\mathcal{V}_{2}\subset \mathcal{U}_{2}$ be closed sets such that $\mathbb{R}\times
\mathbb{S}^2 \subset \mathcal{V}_{1} \cup \mathcal{V}_{2}$. In addition, define
the functions
\begin{equation}
\eta_{1}(x) = \begin{cases} 1 & \textrm{ $x \in \phi_{1}(\mathcal{V}_{1})$ }
  \\ 0 & \textrm{ $x \in \mathbb{R}^3/\phi_{1}(\mathcal{V}_{1})$} \\
   \end{cases}, 
\qquad \eta_{2}(x) = \begin{cases} 1 & \textrm{ $x \in
    \phi_{2}(\mathcal{V}_{2})$ } \\ 0 & \textrm{ $x \in
    \mathbb{R}^3/\phi_{2}(\mathcal{V}_{2})$} \\
   \end{cases} .
\end{equation}
Observe that any point $p\in \mathcal{S}$  is described in local coordinates by
$x_{p}=(\phi_{i} \circ \psi )(p)$ with $x_{p} \in \phi(\mathcal{U})$
 where $\psi$ is
the diffeomorphism defined in Section \ref{perturbationseSdS} 
and $(\phi,\mathcal{U}) \in \mathcal{A}$.
Consequently, any smooth function $Q:
\mathcal{S}\rightarrow \mathbb{C}^{N}$ can be regarded in
local coordinates as
$Q(x):\phi(\mathcal{U}) \rightarrow \mathbb{C}^N$. Let $Q_{i}(x)$
denote the restriction of $Q(x)$ to one the open sets
$\phi_{i}(\mathcal{U}_{i})$ for ${i=1,2 }$. 
Then, we define the norm of $Q$ as
\[
\parallel Q \parallel_{\mathcal{S},m} \equiv \parallel
\eta_{1}(x)Q_{1}(x) \parallel_{\mathbb{R}^3,m} + \parallel
\eta_{2}(x)Q_{2}(x) \parallel_{\mathbb{R}^3,m}
\]
where
\[
\parallel {Q}\parallel_{\mathbb{R}^3,m}= \left(
\sum_{l=0}^{m}\sum_{\alpha_1,...,\alpha_l}^{3}\int_{\mathbb{R}^3}
(\partial_{\alpha_1}...\partial_{\alpha_l}Q)^2
\mbox{d}^3x\right)^{1/2}.
\]

\medskip
Now, we use these notions to define  Sobolev norms for
any quantity $Q_{\mathcal{K}}$ with $_\mathcal{K}$
 being an arbitrary string of frame spinor indices as 
\[
\parallel{Q}_{\mathcal{K} }\parallel_{\mathcal{S},m}\equiv
\sum_{{\kappa}}^{{}}\parallel {Q}_{\kappa}\parallel_{\mathcal{S},m}.
\]

In the last expression $m$ is a positive integer and the 
sum is carried over all the independent components of
$Q_{\mathcal{K}}$ which have been denoted by $Q_{\kappa}$.

\subsubsection{Formulation of the evolution problems}
\label{FormulationEvolutionProblems}

Consistent with the split
\eqref{spinorInitialDataPertrubations1}-\eqref{spinorInitialDataPertrubations4}
for the initial data, we look
for solutions to the conformal evolution equations
\eqref{structureEvolutionEquations1}-\eqref{structureEvolutionEquations2}
of the form
\begin{subequations} 
\begin{eqnarray}
&\eta_{\bmA \bmB \bmC \bmD} = \ring{\eta}_{\bmA \bmB \bmC \bmD} +
  \breve{\eta}_{\bmA \bmB \bmC \bmD}, \qquad  \mu_{\bmA \bmB
    \bmC \bmD} = \breve{\mu}_{\bmA \bmB \bmC \bmD},& \label{perturbedSolution1} \\
&L_{\bmA \bmB
    \bmC \bmD}=\ring{L}_{\bmA\bmB\bmC\bmD}+\breve{L}_{\bmA \bmB \bmC
    \bmD}, \qquad  \xi_{\bmA \bmB \bmC
    \bmD}=\ring{\xi}_{\bmA\bmB\bmC\bmD} +
  \breve{\xi}_{\bmA\bmB\bmC\bmD}, &\label{perturbedSolution2}\\ 
& L_{\bmA \bmB} = \breve{L}_{\bmA
    \bmB}, \qquad  \chi_{\bmA \bmB \bmC \bmD}
  =\ring{\chi}_{\bmA \bmB \bmC \bmD} +\breve{\chi}_{\bmA\bmB\bmC\bmD}, &
\label{perturbedSolution3}\\ 
& \bme_{\bmA\bmB} = \bmc_{\bmA\bmB} +
  \breve{\bme}_{\bmA\bmB}, \qquad  f_{\bmA \bmB}=
  \breve{f}_{\bmA\bmB}.\label{perturbedSolution4}& 
\end{eqnarray}
\end{subequations}

\medskip
Using the notation introduced
 in Section \ref{generalsetting}, the initial data
\eqref{spinorInitialDataPertrubations1}-\eqref{spinorInitialDataPertrubations4}
 will be represented as
$\mathbf{u}_{\star}$. The perturbed initial data will be assumed to be small in
the sense that given some $\varepsilon > 0$ one has
\begin{eqnarray*}
&& \parallel \breve{\mathbf{u}}_{\star} \parallel_{\mathcal{S},m} \equiv \parallel
  \breve{\chi}_{\bmA \bmB \bmC \bmD}\parallel_{\mathcal{S},m}
  + \parallel \breve{\xi}_{\bmA \bmB \bmC
    \bmD}\parallel_{\mathcal{S},m} + \parallel \breve{L}_{\bmA \bmB
    \bmC \bmD}\parallel_{\mathcal{S},m} + \parallel \breve{L}_{\bmA
    \bmB}\parallel_{\mathcal{S},m} \\ && \hspace{2cm} + \parallel
  \breve{\bme}_{\bmA \bmB }\parallel_{\mathcal{S},m} + \parallel
  \breve{f}_{\bmA \bmB}\parallel_{\mathcal{S},m} + \parallel
  \breve{\phi}_{\bmA \bmB\bmC \bmD}\parallel_{\mathcal{S},m} <
  \varepsilon.
\end{eqnarray*}

\begin{remark}
{\em Notice that, 
 as a consequence of the conformal representation
being considered, the above smallness requirement on the
 perturbed initial data constraints  the possible behaviour
of the perturbation near the asymptotic points $\mathcal{Q}$ and
$\mathcal{Q'}$. To see this in more detail let
$\breve{\phi}$ denote a perturbation of the initial data for some
component the rescaled Weyl spinor.  For simplicity, assume that in
some local coordinates $(\psi,\theta,\varphi)$
for $\mathbb{R}\times \mathbb{S}^2$, the perturbed field $\breve{\phi}$ is
independent of $(\theta, \varphi)$.
In such case, if $\breve{\phi}\in L^{2}(\mathbb{R})$ one has that
\begin{equation}\label{ObservationDecay}
\breve{\phi}= \mathcal{O}(\psi^{-\beta})
\end{equation}
with $\beta >1/2$. 
Consequently,  in the $\mathbb{R}\times\mathbb{S}^2$-conformal representation
 the perturbations must decay  at infinity, i.e. as
 they approach $\mathcal{Q}$ and $\mathcal{Q'}$. 
Under the conformal transformation
$\bmg=\varpi^{2}\acute{\bmg}$ the components of the rescaled Weyl
spinor transform as ${\phi}_{\bmA \bmB \bmC
  \bmD}=\varpi^{-3}\acute{\phi}_{\bmA \bmB \bmC \bmD}$.
This last expression is consistent with the
frame version of the conformal transformation rule
given in Lemma \ref{Lemma:TTtensorTransformation}. Taking into
account the discussion of Section \ref{MetricOnScri} 
and equation \eqref{ObservationDecay} one concludes that for the
corresponding perturbation in the $\mathbb{S}^3$-conformal representation
 one has 
\[
\breve{\phibar}= \mathcal{O}(\xi^{-3}(\ln|\xi|)^{-\beta})
\]
near the South pole $\xi=0$. Consequently, initial data on
$\mathbb{R}\times\mathbb{S}^2$ satisfying $L^2$-decay conditions near
infinity correspond, in general, to data which is singular in other conformal
representations. In other words,
the class of perturbation data that we can consider can be, in principle,
singular at both the North and South poles in the $\mathbb{S}^3$-conformal
 representation.}
\end{remark}

\begin{remark}
{\em An explicit class of perturbed
asymptotic initial data sets can be constructed, keeping the initial metric fixed to
be standard one on $\mathbb{R}\times\mathbb{S}^2$, using the analysis
of \cite{DaiFri01} as follows: introduce Cartesian coordinates
$(x^{\bmalpha})$ in $\mathbb{R}^3$ with origin located at a fiduciary
position $\mathcal{Q}$ and define a polar coordinate via
$\rho\equiv\delta_{\bmalpha\bmbeta}x^{\bmalpha}x^{\bmbeta}$. The
general solution of the equation
\[
\acute{D}^{i}\acute{d}_{ij}=0,
\]
 where $\acute{D}^{i}$
is the Levi-Civita connection on $\mathbb{R}^3$, can be
parametrised as
\[
\acute{d}_{ab}=\acute{d}_{ab}^{(P)}+\acute{d}_{ab}^{(J)}+\acute{d}_{ab}^{(A)}+\acute{d}_{ab}^{(Q)}+\acute{d}_{ab}^{(\Lambda)}.
\]
The terms $\acute{d}_{ab}^{(P)}$, $\acute{d}_{ab}^{(J)}$
$\acute{d}_{ab}^{(A)}$, $\acute{d}_{ab}^{(Q)}$ are divergent at
$\mathcal{Q}$ and have been explicitly derived in
\cite{DaiFri01}. Given any smooth function $\Lambda(x^{\bmalpha})$ on
$\mathbb{R}^3$ the term $\acute{d}_{ab}^{(\Lambda)}$ can be obtained
using the operators $\eth$ and $\bar{\eth}$ ---see \cite{Ste91} for
definitions.  This term can have, in general, any behaviour near
$\mathcal{Q}$ ---see \cite{DaiFri01}.  However, setting
$\Lambda=\mathcal{O}(\rho^n)$ with $n\geq 3$ the term
$\acute{d}_{ab}^{(\Lambda)}$ is regular near $\mathcal{Q}$.  Using the
frame version of the conformal transformation rule of Lemma
\ref{Lemma:TTtensorTransformation} and either equation
\eqref{PlaneToSphere1} or \eqref{PlaneToSphere2} one can verify that
the corresponding term in the $\mathbb{S}^3$-representation is
$\hspace{1mm}\dbar_{\bma\bmb}^{(\Lambda)}=\mathcal{O}(\rho^{n+3})$.
Similarly, using the conformal transformation formulae, given in
Section \ref{MetricOnScri}, relating the
$\mathbb{S}^3$ and $\mathbb{R}\times\mathbb{S}^2$-representations of
the initial data, one obtains
$d_{\bma\bmb}^{(\Lambda)}=\mathcal{O}(\rho^{n+6})$.  We observe that regular behaviour of perturbed
initial data in the $\mathbb{R}\times\mathbb{S}^2$-representation does
not necessarily correspond to regular behaviour in the
$\mathbb{S}^3$-representation nor in the
$\mathbb{R}^3$-representation.}
\end{remark}

\subsection{The main result}
\label{Section:AsymptoticIVP}

\begin{figure}[t]
\centering
\includegraphics[width=0.93\textwidth]{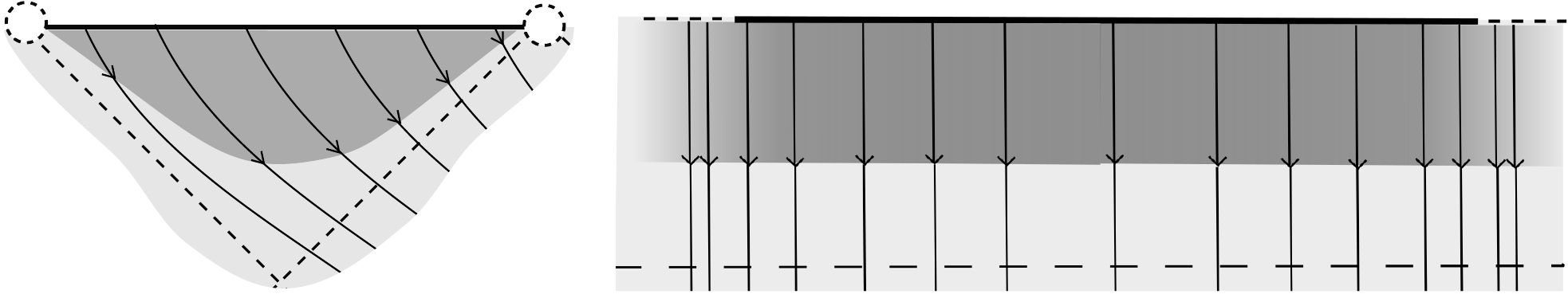}
\put(-400,90){(a)}
\put(-330,72){$\mathscr{I}$}
\put(-245,90){(b)}
\put(-120,72){$\mathscr{I} $}
\put(-10,70){$\tau=0$}
\put(-10,32){$\tau=\tau_{\bullet}$}
\put(-10,10){$\tau=\tau_{\mathcal{CH}}$}
\caption{ Schematic depiction of the development of 
perturbed initial data for the Schwarzschild-de Sitter spacetime and the
 congruence of conformal geodesics.  In \emph{(a)} the evolution of
asymptotic initial data is depicted in the conformal representation in
which the asymptotic points $\mathcal{Q}$ and $\mathcal{Q'}$ are at a
finite distance respect to the metric on $\mathscr{I}$.  Figure
\emph{(b)} shows a schematic depiction of the evolution of asymptotic
initial data in the conformal representation in which Theorem
\ref{ExistenceCauchySatbility-eSdS1} has been formulated. In contrast
to the conformal representation leading to Figure \emph{(a) }, the
initial data is homogeneous and formally identical for the
subextremal, extremal or hyperextremal cases. 
In both diagrams, the
dashed line corresponds to the location of an hypothetical Cauchy horizon of the development.}
\label{fig:ComparisonGradient}
\end{figure}

The main
analysis of the background solution in Section \ref{sec:CoreAnalysis}
was performed in a conformal representation in which the asymptotic
initial data is homogeneous and the extrinsic
  curvature of $\mathscr{I}$ vanishes ---i.e. $\kappa =0$.  The general
evolution equations
\eqref{structureEvolutionEquations1}-\eqref{structureEvolutionEquations2}
consist of transport equations for ${\bm\upsilon}$ coupled with a
system of partial differential equations for ${\bm\phi}$. However, as
shown in Section \ref{sec:CoreAnalysis}, the assumption of
\emph{spherical symmetry} implies that the only independent component
of the spinorial field $\phi_{\bmA \bmB \bmC \bmD}$ is $\phi_{2}$. Consequently,
the system
\eqref{structureEvolutionEquations1}-\eqref{structureEvolutionEquations2}
reduces, for the background fields
$\mathring{\mathbf{u}}=(\mathring{\bm \upsilon},\mathring{\bm \phi})$,
to a system of ordinary differential equations. The
\emph{Piccard-Lindel{\"o}f theorem} can be applied to discuss local
existence of the latter system. However, one does not have, \emph{a
  priori}, control on the smallness of the existence time. To obtain
statements concerning \emph{the existence time of the perturbed
  solution}, we recall that the discussion of the evolution
  equations of Section \ref{sec:CoreAnalysis} shows that the
  components of solution $\ring{\mathbf{u}}$ are regular for $ \tau
  \in [0, \tau_{\circledcirc}]$ with $\tau_{\circledcirc}$ as given
in equation \eqref{NonInfinitesimalExistenceTime}, so that the
guaranteed existence time is not arbitrarily small.

\medskip
The analysis of the core system in Section \ref{sec:CoreAnalysis} was
  restricted to the case $\kappa=0$, in which the conformal boundary
  has vanishing extrinsic curvature. In this case, we obtained an
  explicit existence time $\tau_{\circledcirc}$ for the solution to
  the conformal evolution equations.  In contrast, the analysis given
  in Appendix \ref{FormationOfSingularitiesAndReparametrisations}
  shows that in general, for $\kappa \neq 0$, the core system develops
  a singularity at finite $\tau_{\lightning}$.  Since the results
  given in Section \ref{DeviationEquationForTheCongruence} for the
  conformal deviation equations hold not only for $\kappa =0$, but
  for any $\kappa$ as long as $\partial_{\psi}\kappa =0$, one has that
  the congruence of conformal geodesics is non-intersecting in the
  $\kappa \neq 0$ case as well. This shows that, the singularities in
  the core system in the case $\kappa \neq 0$ are not gauge
  singularities.  The estimation for the existence time
  $\tau_{\circledcirc}$ in the $\kappa=0$ case along with the
  discussion of the reparametrisation of conformal geodesics given in
  Appendix \ref{ExploitingTheConformalGaugeFreedom} can, in principle,
  be used to obtain an estimation for the existence time
  $\tau_{\otimes}$ in the case $\kappa \neq 0$. 

In this section it is shown how one can exploit these
  observations, together with the theory for symmetric hyperbolic
systems, to prove the existence of solutions to the general conformal
evolution equations with the same existence time
  $\tau_{\circledcirc}$ for small perturbations of asymptotic initial
data close to that of the Schwarzschild-de Sitter reference
solution. By construction, the development of this perturbed data will
be contained in the domain of influence which corresponds, in this
case, to the asymptotic region of the spacetime ---see Figure
\ref{fig:ComparisonGradient}.

\medskip
Taking into account the above
remarks and using the theory of symmetric hyperbolic
systems contained in \cite{Kat75} one can formulate the following
existence and Cauchy stability result:

\begin{theorem}
[\textbf{\em existence and Cauchy stability for
    perturbations of asymptotic initial data for the 
    Schwarzschild-de Sitter spacetime}]
\label{ExistenceCauchySatbility-eSdS1}
  Let $\mathbf{u}_{\star}=\ring{\mathbf{u}}_{\star} +
  \breve{\mathbf{u}}_{\star}$ denote asymptotic initial data for the
  extended conformal Einstein field equations on a 3-dimensional
  manifold $\mathcal{S}\approx  \mathbb{R}\times\mathbb{S}^2$ where
  $\ring{\mathbf{u}}_{\star}$ denotes the asymptotic initial data for
  the  Schwarzschild-de Sitter spacetime (subextremal, extremal and
  hyperextremal cases) with $\kappa=0$ 
 in which the asymptotic
points $\mathcal{Q}$ and $\mathcal{Q'}$ are at infinity. Then, for $m \geq 4$
  and $\tau_{\circledcirc} $ as given in equation
 \eqref{NonInfinitesimalExistenceTime}, there exists
  $\varepsilon>0$ such that:

\begin{itemize}
\item[(i)] for
  $||\breve{\mathbf{u}}_{\star}||_{\mathcal{S},m}<\varepsilon$ , there
  exist a unique solution $\breve{\mathbf{u}}$ to the conformal
  evolution equations
  \eqref{formPerturbationEquations1}-\eqref{formPerturbationEquations2}
  with a minimal existence interval $[0,\tau_{\circledcirc}]$ and
\[
\breve{\mathbf{u}} \in C^{m-2}([0,\tau_{\circledcirc}] 
\times \mathcal{S}, \mathbb{C}^{N}),
\]
and the associated congruence of conformal geodesics
 contains no conjugate points in $[0,\tau_{\circledcirc}]$;

\item[(ii)] given a sequence of perturbed data
  $\{\breve{\mathbf{u}}_\star^{(n)}\}$ such that
\[
\parallel \breve{\mathbf{u}}_\star^{(n)} \parallel_{\mathcal{S},m}
\rightarrow 0 \qquad \mbox{as} \qquad n\rightarrow \infty,
\]
then the corresponding solutions $\{\breve{\mathbf{u}}^{(n)}\}$ have a
minimum existence interval $[0,\tau_{\circledcirc}]$ and it holds that
\[
\parallel \breve{\mathbf{u}}^{(n)} \parallel_{\mathcal{S},m}
\rightarrow 0 \qquad \mbox{as} \qquad n\rightarrow \infty
\]
uniformly in $\tau \in[0,\tau_\circledcirc ]$ as $n\rightarrow \infty$;

\item[(iii)] the solution $\mathbf{u}=\ring{\mathbf{u}}+
  \breve{\mathbf{u}}$ is unique in $[0,\tau_\circledcirc ]\times
  \mathcal{S}$ and implies a $C^{m-2}$ solution
  $(\tilde{\mathcal{M}}_{\tau_\circledcirc},\tilde{\bmg})$ to the Einstein
  vacuum equations with the same de Sitter-like Cosmological constant
  as the background solution where
\[ 
\tilde{\mathcal{M}}_{\tau_\bullet} \equiv
    (0,\tau_{\circledcirc})\times \mathcal{S}.
\] 
Moreover, the hypersurface $\mathscr{I}\equiv
    \{0\}\times \mathcal{S}$ represents the conformal boundary of the
    spacetime.
\end{itemize}
\end{theorem}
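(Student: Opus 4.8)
The plan is to obtain Theorem \ref{ExistenceCauchySatbility-eSdS1} as an application of the existence and Cauchy stability theory for first-order symmetric hyperbolic systems of Kato \cite{Kat75} to the perturbation equations \eqref{formPerturbationEquations1}-\eqref{formPerturbationEquations2}, supplemented by the propagation of the constraints (Lemma \ref{Thm:PropagationConstraints}) and the reconstruction result (Lemma \ref{lemma:XCEFE-EFE}). The crucial analytic input is already in hand: the boundedness estimates of Section \ref{sec:CoreAnalysis} (Lemmas \ref{LemmaBoundednessCoreSystem} and \ref{Lemma:NoSingularitySupplementarySystem}) guarantee that, in the gauge $\kappa=0$, the \emph{reference} solution $\mathring{\mathbf{u}}$ is regular on the whole fixed interval $[0,\tau_\circledcirc]$, with $\tau_\circledcirc$ given by \eqref{NonInfinitesimalExistenceTime}. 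This is what will allow the existence time of the perturbed solutions to be bounded below by a quantity that is not arbitrarily small.

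First I would verify that \eqref{formPerturbationEquations1}-\eqref{formPerturbationEquations2} form a symmetric hyperbolic system for $\breve{\mathbf{u}}=(\breve{\bm\upsilon},\breve{\bm\phi})$. The equation \eqref{formPerturbationEquations1} for $\breve{\bm\upsilon}$ is a transport equation carrying no spatial derivatives, while the Bianchi part \eqref{formPerturbationEquations2} for $\breve{\bm\phi}$ is symmetric hyperbolic because the matrices $\mathbf{A}^{\bmi}(\mathring{\bme}+\breve{\bme})$ are Hermitian and $\mathbf{I}+\mathbf{A}^0(\mathring{\bme}+\breve{\bme})$ remains positive definite for $\breve{\bme}$ small. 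All coefficients are smooth functions of $\mathring{\mathbf{u}}+\breve{\mathbf{u}}$ and, through $\mathbf{L}(x)$, of the spatial coordinates, and the inhomogeneous terms built from the reference solution are smooth on $[0,\tau_\circledcirc]\times\mathcal{S}$. Using the Sobolev norms $\parallel\cdot\parallel_{\mathcal{S},m}$ of Section \ref{ControllingTheSizeOfThePerturbation}, Kato's theorem then provides, for $m\geq 4$ and data with $\parallel\breve{\mathbf{u}}_\star\parallel_{\mathcal{S},m}<\varepsilon$, a unique solution; the Cauchy stability part of the theorem yields both the continuous dependence statements (i) and (ii) and, for $\varepsilon$ small enough, the common minimal existence interval $[0,\tau_\circledcirc]$ inherited from the reference solution. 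The Sobolev embedding $H^m(\mathbb{R}^3)\hookrightarrow C^{m-2}$, valid since $m-2>3/2$ when $m\geq 4$, gives the asserted $C^{m-2}$ regularity, and the absence of conjugate points on $[0,\tau_\circledcirc]$ follows by transferring, via Cauchy stability, the non-degeneracy of the background deviation spinor established in Section \ref{DeviationEquationForTheCongruence}.

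To pass from the evolution system to the full extended conformal field equations, I would argue that the data constructed in Section \ref{sec:InitialData} solves the asymptotic conformal constraints \eqref{AsymptoticConformalConstraints1}-\eqref{AsymptoticConformalConstraints9}, so that every zero-quantity vanishes on $\mathscr{I}$. By Lemma \ref{Thm:PropagationConstraints} the zero-quantities satisfy a homogeneous symmetric hyperbolic subsidiary system; uniqueness for that system forces them to vanish throughout $[0,\tau_\circledcirc]\times\mathcal{S}$, whence $\mathbf{u}=\mathring{\mathbf{u}}+\breve{\mathbf{u}}$ solves \eqref{uncontractedXCEFE}. Applying Lemma \ref{lemma:XCEFE-EFE} on the region where $\Theta\neq 0$---that is, for $\tau\in(0,\tau_\circledcirc)$, since $\Theta(\tau)=\sqrt{|\lambda|/3}\,\tau$ in this gauge---then produces a $C^{m-2}$ solution $\tilde{\bmg}=\Theta^{-2}\eta_{\bma\bmb}\bmomega^\bma\otimes\bmomega^\bmb$ of the vacuum equations \eqref{EinsteinFieldEquations} with the same de Sitter-like $\lambda$, and the locus $\Theta=0$, namely $\{0\}\times\mathcal{S}$, is the conformal boundary $\mathscr{I}$; this establishes (iii).

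The hard part will be the Cauchy stability step: one must ensure that the existence interval of the perturbed solutions does not shrink to zero as the data vary but is controlled uniformly by the fixed interval $[0,\tau_\circledcirc]$ on which the reference solution is known to be regular. This requires setting up Kato's abstract framework carefully---verifying the symmetry and smooth dependence of the operators on $\breve{\mathbf{u}}$, and checking that $\mathring{\mathbf{u}}$ together with its coefficient functions stays in the relevant function space uniformly on $[0,\tau_\circledcirc]$---so that the continuity of the existence time in the initial data can legitimately be invoked. Once this uniform lower bound on the existence time is secured, the remaining verifications are essentially routine.
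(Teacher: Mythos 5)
Your proposal is correct and follows essentially the same route as the paper's proof: points \emph{(i)} and \emph{(ii)} via Kato's theory \cite{Kat75} applied to \eqref{formPerturbationEquations1}-\eqref{formPerturbationEquations2}, with the lower bound on the existence time supplied by the regularity of the background solution $\ring{\mathbf{u}}$ on $[0,\tau_{\circledcirc}]$ established in Section \ref{sec:CoreAnalysis}, and point \emph{(iii)} via Lemma \ref{Thm:PropagationConstraints} followed by Lemma \ref{lemma:XCEFE-EFE}. The only step the paper singles out as the key observation that you leave implicit is that Kato's theory must be applied chart-by-chart on a finite (two-chart) atlas of $\mathbb{R}\times\mathbb{S}^2$, with the local solutions then patched together to yield the global solution on $[0,\tau_{\circledcirc}]\times\mathcal{S}$; your chart-based Sobolev norms from Section \ref{ControllingTheSizeOfThePerturbation} already set up exactly this localisation.
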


\begin{proof}
Points \emph{(i)} and \emph{(ii)} are a direct application of the theory contained
in \cite{Kat75} where it is used that the background solution
$\ring{\mathbf{u}}$ is regular on $\tau \in[0, \tau_{\circledcirc}]$.
The initial data for the Schwarzschild-de Sitter spacetime
encoded in $\mathbf{u}_{\star}$ is in a representation in which the
points $\mathcal{Q}$ and $\mathcal{Q'}$ are at infinity. 
  Observe that the asymptotic initial data, as derived
 in Section \ref{sec:InitialData},  for the
subextremal, extremal and hyperextremal cases are formally
 the same ---in particular,
notice that the initial data for the electric part of the rescaled Weyl tensor
contains information about the mass $m$ while the conformal factor
$\Theta$ carries information about $\lambda$. The arguments in the
analysis of Section \ref{sec:CoreAnalysis}
 are irrespective of the relation
between $\lambda$ and $m$. The key
observation in the proof is that one can apply the general theory of
symmetric hyperbolic systems of \cite{Kat75} for each
open set and chart of an atlas for $\mathbb{R}\times \mathbb{S}^2$;
then, these local solutions can be patched together to obtain the
required global solution over $[0,\tau_\circledcirc]\times
\mathcal{S}$ ---it is sufficient to cover
$\mathbb{R}\times\mathbb{S}^2$ with finitely many patches (two) as
discussed in Section
\ref{ControllingTheSizeOfThePerturbation}. Details of a similar
construction in the context of characteristic problems can be found in
\cite{Fri91}. To prove point \emph{(iii)} first observe that from
Lemma \ref{Thm:PropagationConstraints} the solution to the conformal
evolution system
\eqref{formPerturbationEquations1}-\eqref{formPerturbationEquations2}
implies a solution $\mathbf{u}=\ring{\mathbf{u}}+ \breve{\mathbf{u}}$
to the extended conformal Einstein field equations on
$[0,\tau_\circledcirc]\times\mathcal{S}$ if
$\mathbf{u}_\star=\ring{\mathbf{u}}_\star+ \breve{\mathbf{u}}_\star$
solves the conformal constraint equations on the initial hypersurface.
This solution implies, using Lemma \ref{lemma:XCEFE-EFE}, a solution
to the Einstein field equations whenever the conformal factor is not
vanishing. General results of the theory of asymptotics implies then
that the initial hypersurface $\mathcal{S}$ can be interpreted as the
conformal boundary of the physical spacetime
$(\tilde{\mathcal{M}}_{\tau_\bullet},\tilde{\bmg})$ ---see
\cite{Ste91,CFEBook}.
\end{proof}


\section{Conclusions}
\label{Sec:Conlcusions}

In this article we have studied the Schwarzschild-de Sitter family of
spacetimes as a solution to the extended conformal Einstein field
equations expressed in terms of a conformal Gaussian system. Given that,
in principle, it is not possible to explicitly express the spacetimes
in this gauge, we have adopted the alternative strategy of formulating
an asymptotic initial value problem for a spherically symmetric
spacetime with a de Sitter-like Cosmological constant. The
generalisation of Birkhoff's theorem to vacuum spacetimes with
Cosmological constant then ensures that the resulting solutions are
necessarily a member of the Schwarzschild-de Sitter spacetime.

As part of the formulation of an asymptotic initial value problem for
the Schwarzschild-de Sitter spacetime we needed to specify suitable
initial data for the conformal evolution equations. The rather simple
form that the conformal constraint equations acquire in the framework
considered in this article allows to study in detail the conformal
properties of the Schwarzschild-de Sitter spacetime at the conformal
boundary and, in particular, at the asymptotic points where
the conformal boundary \emph{meets} the horizons. The key observation from
this analysis is that the conformal structure is singular at these
points and cannot be regularised in an obvious manner. Accordingly,
any satisfactory formulation of the asymptotic initial value problem
will exclude these points.

An interesting property of the conformal evolution equations under the
assumption of spherical symmetry is that the system reduces to a
set of transport equations along the conformal geodesics covering the
spacetime. The essential dynamics, and in particular the formation of
singularities in the solutions to this system, is governed by a
\emph{core system} of three equations ---one of them a Riccati
equation. As discussed in Appendix
  \ref{FormationOfSingularitiesAndReparametrisations}, this core
system provides a mechanism for the formation of singularities in the
exact solution.  The analysis of the
core system allows not only to study the properties on the
Schwarzschild-de Sitter spacetime expressed in terms of a conformal
Gaussian gauge system, but also to understand the effects that the
\emph{gauge data} has on the properties of the conformal
representation arising as a solution to the conformal evolution
equations.  It is of
interest to explore the idea of whether the
mechanisms identified in the analysis of the core system could be used
to analyse the formation of singularities in more complicated
spacetimes ---say, in the developments of perturbations of asymptotic
initial data for the Schwarzschild-de Sitter spacetime.

The conformal representation of the Schwarzschild-de Sitter spacetime
obtained in this article has been used to show that it is possible to
construct, say, \emph{future asymptotically de Sitter
} solutions to the Einstein vacuum Einstein with a minimum existence time ---as measured by
the proper time of the conformal geodesics used to construct the gauge
system--- which can be understood as perturbations of a member of the
Schwarzschild-de Sitter family of spacetimes. As already mentioned in the main text, it
is an interesting problem to determine the maximal Cauchy development to these
spacetimes. In order to obtain the maximal Cauchy development of
suitably small perturbations of asymptotic data for the
Schwarzschild-de Sitter one would require the use of more refined
methods of the theory of hyperbolic partial differential equations as
one is, basically, confronted with \emph{global existence problem} for
the conformal evolution equations. In this respect, we conjecture that
the \emph{time symmetric conformal representation} in which $\kappa=0$
together with the \emph{global stability} methods of \cite{KreLor98}
should allow us to make inroads into this issue. Closely related to the construction of the maximal
development of perturbations of asymptotic initial data of the
Schwarzschild-de Sitter spacetime is the question whether there is a
Cauchy horizon associated to the boundary of this development. If this is the
case, one would like to investigate the properties of this
horizon. Intuitively, the answer to these issues should
depend on the relation between the asymptotic points $\mathcal{Q}$ and
$\mathcal{Q}'$ and the conformal structure of the spacetime. In
particular, one would like to know whether the singularities of
the rescaled Weyl tensor at these points generically propagate along
the boundary of the perturbed solution ---notice, that they do not
for the background solution. If one were able to use the
$\mathbb{R}\times \mathbb{S}^2$-representation of the conformal
boundary of  perturbations of asymptotic initial data for the
Schwarzschild-de Sitter to construct a maximal development and to gain
sufficient control on the asymptotic behaviour of the various
conformal fields, one could then rescale this solution to obtain a
representation with a conformal boundary of the form $\mathbb{S}^3\setminus\{
\mathcal{Q},\mathcal{Q}' \}$. As discussed in the main text, in this
representation some fields are singular at $\mathcal{Q}$ and $\mathcal{Q}'$. This observation suggests that
this construction could shed some light regarding the propagation (or
lack thereof) of singularities near the asymptotic points
$\mathcal{Q}$ and $\mathcal{Q}'$.

\bigskip


\section*{Acknowledgements}

EG gratefully acknowledges the support from Consejo
Nacional de Ciencia y Tecnolog\'ia (CONACyT Scholarship
494039/218141). We have profited from conversations with C. L\"ubbe, 
H. Friedrich and M. Mars.

\appendix
\section{Appendix: The asymptotic points $\mathcal{Q}$ and $\mathcal{Q'}$ and
conformal geodesics in the Schwarzschild-de Sitter spacetime}
\label{AsymptoticPointsQQprime}

\subsection{Analysis of the asymptotic points $\mathcal{Q}$ and $\mathcal{Q'}$}

In Section \ref{ConformalStructureSdS} it was shown that there exist a
conformal representation of the Schwarzschild-de Sitter spacetime in
which the metric at the conformal boundary is $\bmhbar$ ---i.e. the
standard metric on $\mathbb{S}^3$. In addition, we observed that the
North and South pole of $\mathbb{S}^3$ correspond to special points in
the conformal structure that we have labelled as $\mathcal{Q}$ and
$\mathcal{Q'}$. These asymptotic regions are represented in the
Penrose diagram for the subextremal, extremal and hyperextremal
Schwarzschild-de Sitter spacetime as the points where the conformal
boundary and the Cosmological horizon, Killing horizon and
singularity, respectively, seem to meet ---see Figures
\ref{fig:SubSdSDiagram} and \ref{fig:eSdSDiagram} and
\ref{fig:HypSdSDiagram}.  As discussed in Section
\ref{ConformalStructureSdS} these points correspond to
$(\bar{U},\bar{V})=(\pm \frac{\pi}{2},\pm \frac{\pi}{2})$ for which
the tortoise coordinate $\mathfrak{r}$ is not well defined. In Section
\ref{IdentifyAsympRegularData} we showed that in the conformal
representation in which the initial metric is $\bmhbar$ the data for
the electric part of the rescaled Weyl tensor $\dbar_{\bmi\bmj}$, as
given in equation \eqref{TTtensorSphere}, is singular precisely at
$\mathcal{Q}$ and $\mathcal{Q'}$. Observe that written in spinorial
terms the initial data for the rescaled Weyl spinor in this conformal
representation is given by
\[
\phibar_{\bmA \bmB \bmC \bmD}=
\frac{6m}{\sqrt{1-\omega^{2}(\xi)}}\epsilon^{2}{}_{\bmA \bmB \bmC
  \bmD}
\]
which is singular at both $\mathcal{Q}$ and $\mathcal{Q'}$.  This
situation resembles that of the geometry near spacelike infinity
$i^{0}$ of the Minkowski spacetime and the construction of the
\emph{cylinder at infinity} given in \cite{Fri98a} which allows to
regularise the data for the rescaled Weyl spinor. However, some
experimentation reveals that this type of regularisation procedure (in
contrast with the analysis of Schwarzschild spacetime given in
\cite{Fri98a}) cannot be implemented in the analysis of the
Schwarzschild-de Sitter spacetime without spoiling the regular
behaviour of the conformal factor.  Since the hyperbolic reduction
procedure for the extended conformal Einstein field equations is based
on the existence of a congruence of conformal geodesics in spacetime,
the singular behaviour of the initial data for the rescaled Weyl
spinor suggest that the congruence of conformal geodesics does not
cover the region of the spacetime corresponding to $\mathcal{Q}$ and
$\mathcal{Q'}$. To clarify this point, in the remaining of this
section we analyse the behaviour of conformal geodesics as they
approach the asymptotic points $\mathcal{Q}$ and $\mathcal{Q'}$.

\subsection{ Geodesics in Schwarzschild-de Sitter spacetime }
\label{eSdS:AsymptoticPoints}

The method for the hyperbolic reduction for the extended conformal Einstein field
equations available in the literature requires adapting the gauge
to a congruence of conformal geodesics. The behaviour of metric
geodesics in the Schwarzschild-de Sitter spacetime has been already
studied \cite{JakHel89, HacLam08} and an analysis of conformal geodesics
in Schwarzschild-de Sitter and anti-de-Sitter spacetimes is 
carried out in \cite{GarGasVal15}.  In static coordinates $(t,r,\theta,\varphi)$
the equation for radial timelike geodesics, 
$(\theta=\theta_{\star},\varphi=\varphi_{\star})$ with $\theta_{\star}$ and $\varphi_{\star}$
constant, are
\begin{equation}\label{SSmetricGeodesicEquations}
\frac{\mbox{d}r}{\mbox{d}\tilde{\tau}}=\sqrt{\gamma^2-F(r)}, \qquad
\frac{\mbox{d}t}{\mbox{d}\tilde{\tau}}=\frac{\gamma}{F(r)}.
\end{equation}
The first equation can be formally integrated as
\begin{equation}\label{rGeodesicIntegral}
\tilde{\tau}-\tilde{\tau}_{\star}=\int_{r_{\star}}^{r}\frac{1}{\sqrt{\gamma^2-F(s)}}\mbox{d}s
\end{equation}
where $\tilde{\tau}$ is the $\tilde{\bmg}_{SdS}$-proper time and $\gamma$
is a constant of motion which can be identified with the specific
 energy of a particle moving along the geodesic.
 The equation for $t$ can be solved once equation
\eqref{rGeodesicIntegral} has been integrated. As pointed out in
\cite{BicPod95,Pod99}, by choosing $\gamma=1$
 one can explicitly solve this integral.
However in general, for arbitrary $\gamma$,
 the integral is complicated and cannot be
written in terms of elementary functions.  A side observation is that if $r
\neq r_{b}$ and $r \neq r_{c}$ then the curves of constant $t$ correspond
to geodesics with $\gamma=0$. Finally, its worth noticing that geodesics with
constant $r$ are characterised by the condition
\begin{equation}\label{CriticalCurveCondition}
\gamma^2-F(r)=0.
\end{equation}
This last type of curves, which will be called \emph{critical
  curves}, are analysed in Section \ref{CriticalCurvesDiscussion}.
 In general, the properties of conformal geodesics differ from
their metric counterparts. However, in the case of an Einstein
spacetime with spacelike conformal boundary any conformal geodesic
leaving $\mathscr{I}$ orthogonally is, up to reparametrisation, a metric
geodesic --- see \cite{FriSch87} and Lemma
\ref{ReparametrisationMetricToConformalLeavingScri}.

\subsection{A special class of conformal geodesics in the Schwarzschild-de Sitter spacetime}
\label{UnitEnergySpecialGeodesic}
As briefly mentioned in Section \ref{eSdS:AsymptoticPoints} and
pointed out in \cite{BicPod95,Pod99}, in general, the integral
\eqref{rGeodesicIntegral} cannot be written in terms of elementary
functions except for the the special case when $\gamma=1$ where it
yields
\begin{equation}\label{ExplicitMetricGeodesic}
r(\tilde{\tau})=\mathcal{C} e^{\tilde{\tau}}\left(1-\left(\frac{3m}
{2|\lambda|}\right)\mathcal{C}^{-3}e^{-3\tilde{\tau}}\right)^{2/3},
\end{equation}
where $\mathcal{C}$ is an integration constant.  The last expression
is valid irrespective of the relation between $m$ and $\lambda$.  One can
also use this expression to integrate the second equation in
\eqref{SSmetricGeodesicEquations} to obtain the geodesic parametrised
as $(r(\tilde{\tau}),t(\tilde{\tau}))$.  The integration of $t$ will
not be required for the purposes of the analysis of this section.  A
complete analysis of conformal geodesics in the Schwarzschild-de
Sitter and anti-de Sitter spacetimes will be given in
\cite{GarGasVal15}.  By virtue of Lemma
\ref{ReparametrisationMetricToConformalLeavingScri} one can recast the
geodesic with $\gamma=1$ as a conformal geodesic by reparametrising it
in terms of the unphysical proper time as determined by equations
\eqref{ReparametrisationConformalGeodesics} and
\eqref{UniversalConformalFactor}.  A straightforward computation
yields
\begin{equation}\label{ReparametrisationUnphysicalToPhysical}
\tilde{\tau}(\tau)= \sqrt{\frac{3}{|\lambda|}}\ln
\left\lvert\frac{\tau}{2 + \kappa \tau}\right\rvert.
\end{equation}
Equivalently, assuming either $\kappa>0$ and $ \tau \geq 0$ or
$\kappa<0$ and $ 0 \leq \tau \leq-2/\kappa$ one obtains in both cases
\begin{equation}\label{ReparametrisationPhysicalToUnPhysical}
\tau(\tilde{\tau})=\frac{2\exp \Big(
  \sqrt{\displaystyle\frac{|\lambda|}{3}}\tilde{\tau}\Big)}{1-\kappa\exp \Big(
  \sqrt{\displaystyle\frac{|\lambda|}{3}}\tilde{\tau}\Big)}.
\end{equation}
From the last expression one can verify that
 \[
\lim_{\tilde{\tau}\rightarrow -\infty } \tau(\tilde{\tau})=0, \qquad
\lim_{{\tau}\rightarrow \infty} \tau(\tilde{\tau})=-2/\kappa,
\] 
as expected. In what follows, we will rewrite equation
\eqref{ExplicitMetricGeodesic} in terms of the unphysical proper time
as \begin{equation}
  r(\tau)=\frac{1}{(m|\lambda|)^{2/3}}\frac{(m|\lambda|\mathcal{C}^3\tau^3
    -6(2+\kappa \tau )^3)^{2/3}}{\mathcal{C} \tau(\tau + 2\kappa
    \tau)}.
\end{equation}
From the last expressions one can verify that one has $r \rightarrow
\infty$ as $\tau \rightarrow 0$ and $\tau \rightarrow -2/\kappa$.  The
location of the singularity $r=0$ is determined by 
\[
\tau_{\lightning}=\frac{2}{(m|\lambda|)^{1/3}\mathcal{C}-\kappa}.
\]
Recalling that $\mathcal{C}$ is an integration constant which depends
on the initial data for the congruence, since the only freedom left in
the conformal factor is encoded in $\kappa$, one realises that
$\mathcal{C}=\mathcal{C}(\kappa)$.  So one cannot draw any precise
conclusion about the location of the singularity unless one further
identifies explicitly $\mathcal{C}(\kappa)$.  In particular, 
considering constant $\kappa$ as we have done for the analysis of the
core system and setting $\mathcal{C}$ to be proportional to $\kappa$, say
$\mathcal{C}= \frac{(2\varkappa+1)}{(m|\lambda|)^{1/3}} \kappa $ for
some proportionality constant $\varkappa$, one obtains
\begin{equation*}
\tau_{\lightning}=\frac{1}{\varkappa\kappa},
\end{equation*}
which is in agreement the with the qualitative behaviour
of the core system as shown in Figures \ref{Figure:Kappa0},
\ref{fig:figPos}, and
\ref{fig:Negative}.
 Notice, however, that the arguments of the core system
given in Section \ref{AnalysisOfTheCoreSystem} and
Appendix \ref{FormationOfSingularitiesAndReparametrisations}
 do not rely on integrating
\eqref{rGeodesicIntegral} explicitly as we have done in this
section.

\subsection{Critical curves on the Schwarzschild-de Sitter spacetime}
\label{CriticalCurvesDiscussion}

In order to clarify the role of the asymptotic points, in this section
we show that there are not timelike conformal geodesics reaching
$\mathcal{Q}$ and $\mathcal{Q'}$ orthogonally.  More precisely, we
show that a timelike conformal geodesic becomes asymptotically null as
it approaches $\mathcal{Q}$ and $\mathcal{Q'}$. This is in stark
tension with the required conditions for constructing a
conformal Gaussian system of coordinates in the neighbourhood of
$\mathcal{Q}$ and $\mathcal{Q'}$.

\begin{figure}[t]
\centering
\includegraphics[width=0.45\textwidth]{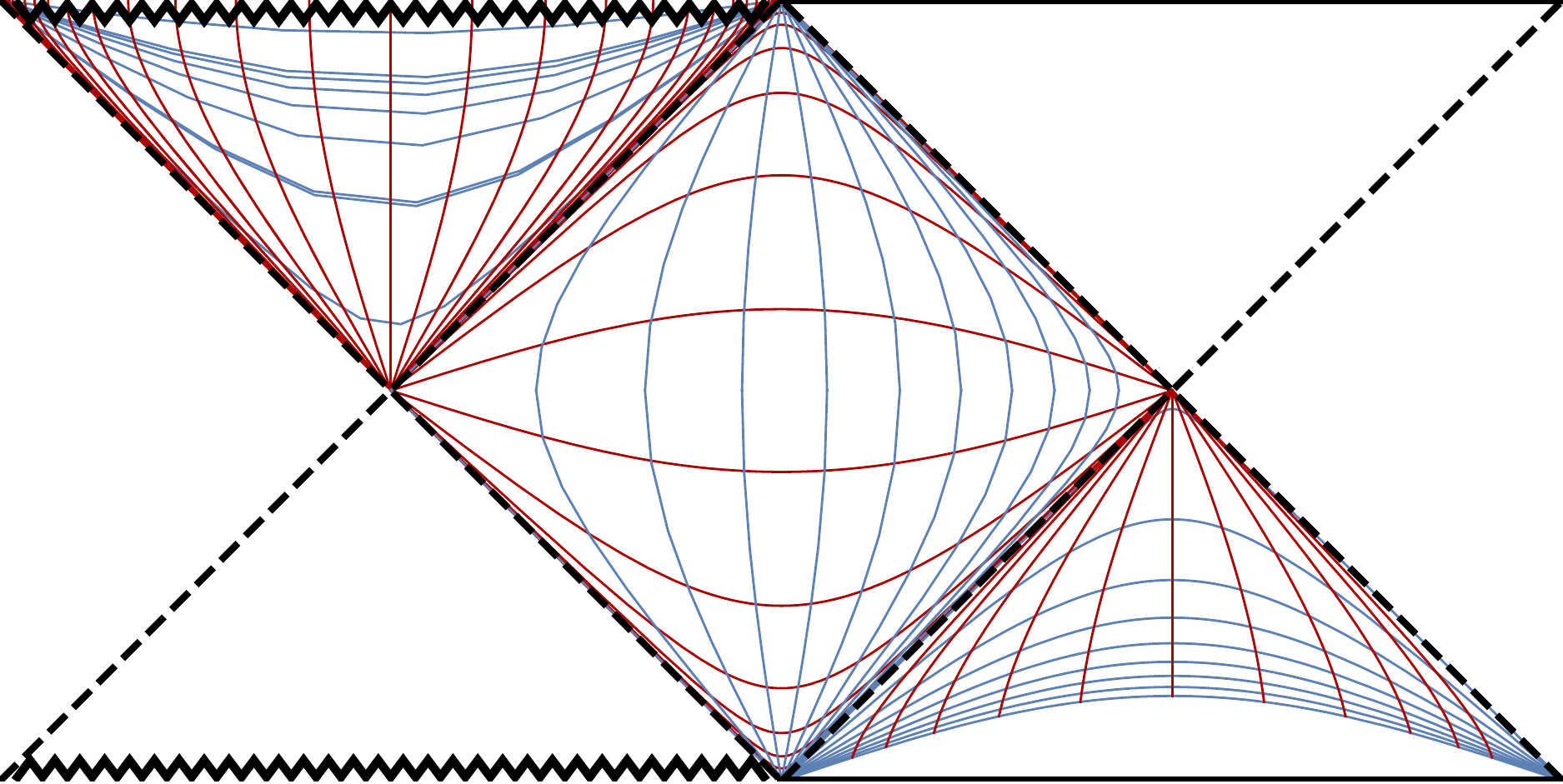}
\includegraphics[width=0.45\textwidth]{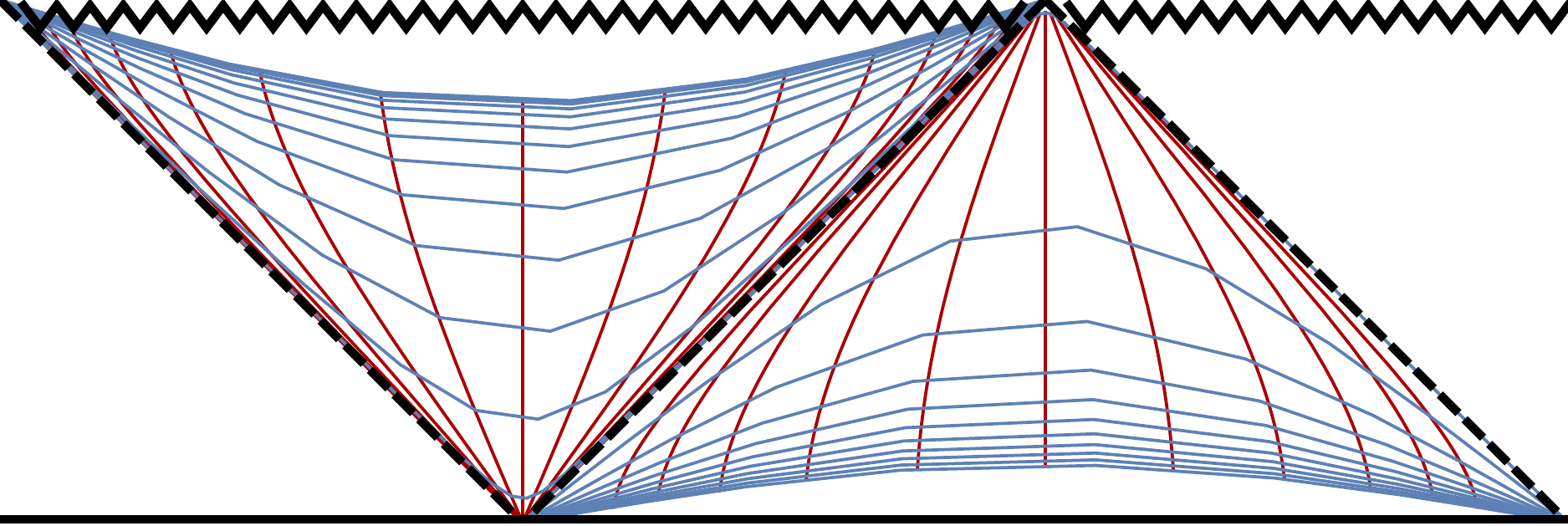}
\put(-400,120){a)}
\put(-180,120){b)}
\put(-290,100){\footnotesize{$\mathcal{Q}$}}
\put(-198,100){\footnotesize{$\mathcal{Q'}$}}
\put(-290,-10){\footnotesize{$\mathcal{Q}$}}
\put(-198,-10){\footnotesize{$\mathcal{Q'}$}}
\put(-65, 65){\footnotesize{$\mathcal{P}$}}
\put(0, -10){\footnotesize{$\mathcal{Q'}$}}
\put(-130, -10){\footnotesize{$\mathcal{Q}$}}
\put(-345,45){\footnotesize{$\mathcal{B}$}}
\put(-235,45){\footnotesize{$\mathcal{B'}$}}
\caption{Curves of constant $r$ and $t$ in the Schwarzschild-de Sitter
  spacetime.  a) Curves with constant $t$ and $r$ (red and blue
  respectively) are plotted on the Penrose diagram of the Subextremal
  Schwarzschild-de Sitter spacetime. Curves of constant $t$ accumulate
  at the bifurcation spheres $\mathcal{B}$, $\mathcal{B'}$ while the
  curves of constant $r$ accumulate at the asymptotic points
  $\mathcal{Q}$ and $\mathcal{Q'}$.  b) Curves with constant $t$ and
  $r$ (red and blue respectively) are plotted on the Penrose diagram
  of the extremal Schwarzschild-de Sitter spacetime. In contrast with
  the subextremal case, curves with constant $t$ in starting from some
  $r_{\star}<3m$ accumulate at the asymptotic points $\mathcal{Q}$ and
  $\mathcal{Q'}$ while those starting from $r_{\star}>3m$ accumulate
  at $\mathcal{P}$. The hyperextremal case is qualitatively similar to
  the extremal one and has been omitted. }
\label{fig:SdSConstantCurves}
\end{figure}

As shown in the Penrose diagram of Figure, \ref{fig:SdSConstantCurves}
in the subextremal case the
curves of constant $t=t_{\star}$ accumulate in the \emph{bifurcation
  spheres} $\mathcal{B}$ and $\mathcal{B'}
$ while the curves of constant $r$ accumulate in the
asymptotic points $\mathcal{Q}$ and $\mathcal{Q'}$. By contrast, in the
extremal case the curves with constant $t=t_{\star}$ approach the
asymptotic points $\mathcal{Q}$ and $\mathcal{Q'}$  ---see
\cite{Gey80} for an extensive discussion on the Penrose diagram for
Schwarzschild-de Sitter spacetime. It follows from the geodesic equation
\eqref{SSmetricGeodesicEquations} that the curves of constant $r$
correspond to geodesics whenever the condition
\eqref{CriticalCurveCondition} is satisfied, this equation explicitly reads
\begin{equation}
\label{CriticalCurveConditionGeneral}
|\lambda|r^3 + 3(\gamma^2-1)r+6m=0.
\end{equation}
Observe that for $\gamma=1$ the last condition reduces to
$|\lambda|r^3+6m=0$ which cannot be solved for positive $r$.

\medskip
 In this
section we perform an analysis of the behaviour of the critical curves
on the Schwarzschild-de Sitter spacetime.  Notice that in the
hyperextremal case the are no timelike geodesics with constant $r$ since for
$|\lambda|>1/9m^2$ one has strictly $F(r) <0$ so that the condition
\eqref{CriticalCurveCondition} can never be satisfied.

\subsubsection{Critical curves in the extremal Schwarzschild-de Sitter spacetime }

We start the analysis in the simpler case in which $|\lambda|=1/9m^2$
so that $F(r)$ is given as in equation \eqref{Ffor-eSdS} and the
condition \eqref{CriticalCurveCondition} reduces to considering 
$r=3m$ and $\gamma=0$. Observe
that the curves with $\gamma=0$ and $r \neq 3m$ correspond to curves with
constant $t=t_{\star}$ which, as discussed in previous paragraphs,
approach asymptotically the points $\mathcal{Q}$ and
$\mathcal{Q'}$. Notice that for $\gamma=0$ the expression
\eqref{rGeodesicIntegral} can easily be integrated to yield
\begin{equation}
\label{eSdSgeodesicNotCrossingHorizon}
  \tilde{\tau}-\tilde{\tau}_{\star}=3m \ln \left(H(r)/H(r_{\star})\right)
\end{equation}
where
\[
H(r)\equiv  \frac{\sqrt{3r}+ \sqrt{r+6m}}{(\sqrt{3r}-
  \sqrt{r+6m})(\sqrt{r}+ \sqrt{r+6m})^{2\sqrt{3}}}.
\]
Observe that equation \eqref{eSdSgeodesicNotCrossingHorizon}, as
pointed out in \cite{Pod99}, implies that the geodesics with $\gamma=0$
never cross the horizon since $\tilde{\tau}\rightarrow \infty$ as
$r\rightarrow 3m$. For simplicity, let
 $M_{\star} \equiv H(r_{\star}) + \exp(\tilde{\tau}_{\star}/3m)$
 with $r_{\star} \neq 3m$ so that
  $\tilde{\tau}= 3m \ln |H(r)/M_{\star}|$.  Reparametrising using
  equation \eqref{ReparametrisationPhysicalToUnPhysical} and that
  $|\lambda|=1/9m^2$ renders
\[
\tau(r)= \frac{2 W(r)}{M_{\star}^{p}-\kappa W(r)}
\]
with $W(r)=H(r)^{1/\sqrt{3}}$. Using L'H\^opital rule one can verify
that $\tau \rightarrow -2/\kappa$ as $r \rightarrow 3m$.  To analyse
the behaviour of these curves as they approach the points
$\mathcal{Q}$ and $\mathcal{Q'}$ let us consider $r$ such that $r=3m +
\epsilon$ . Then, one has that for small $\epsilon>0$ that
\[
W(r) = \left(\frac{m}{r-3m}\right)^{1/\sqrt{3}} \left( \frac{C_{1}}{m}
+ \frac{C_{2}}{m^2} (r-3m) + \mathcal{O}((r-3m)^2) \right)
\]
where $C_{1}$ and $C_{2}$ are numerical
factors whose explicit form is not relevant for the subsequent discussion.
 Hence, to leading order $W(r) = C/\epsilon^{p}$ where $C$
is a constant depending on $m$ only and $p=1/\sqrt{3}$. Consequently,
to leading order
\[
\frac{\mbox{d} \tau}{\mbox{d}\epsilon}= -\frac{pC \kappa \epsilon^{p-2}}{
  \left(M_{\star}^{p}\epsilon^{p}-\kappa C\right)} -\frac{pC
  \epsilon}{M_{\star}^{p}\epsilon^{p}-\kappa C}.
\]
Therefore, since $p<2$ one has that ${\mbox{d} \tau}/{\mbox{d}
  \epsilon}$ diverges as $\epsilon \rightarrow 0$ so that the curves
with $\gamma=0$ become tangent to the horizon as they approach
$\mathcal{Q}$ or $\mathcal{Q'}$ ---that is, they would have to become
null to reach $\mathcal{Q}$ or $\mathcal{Q'}$.  This is analogous to
the behaviour of the critical curve in the Schwarzschild spacetime
pointed out in \cite{Fri03a}, and the subextremal Reissner-Nordstr\"om
spacetime in \cite{LueVal13b} ---in contrast, in the extremal
Reissner-Nordstr\"om spacetime one has $\frac{\mbox{d} \tau}{\mbox{d}
  \epsilon}=0$ as $\epsilon \rightarrow 0$ as discussed in
\cite{LueVal13b} .

\subsubsection{Critical curves in the subextremal Schwarzschild-de Sitter spacetime }

For the subextremal case one could parametrise the roots of the
depressed cubic \eqref{CriticalCurveConditionGeneral} using Vieta's
formulae and choose some
 $\gamma \neq 1$ for which there is at least one
positive root. However, notice that fixing a value for $\gamma$
is equivalent to prescribe initial data for the congruence:
\begin{equation}
t(\tilde{\tau})=t_{\star}, \qquad r(\tilde{\tau})=r_{\star}, \qquad
\frac{\mbox{d}r}{\mbox{d}\tilde{\tau}}\Big\rvert_{r_{\star}}=
\sqrt{\gamma^2-F(r_{\star})}, \qquad
\frac{\mbox{d}t}{\mbox{d}\tilde{\tau}}\Big\rvert_{r_{\star}}=\frac{\gamma}{F(r_{\star})}.
\end{equation}
 Restricting our analysis to the static region $r_{b}< r_{\star}<
 r_{c}$ for which $F(r_{\star})>0$ and setting
 \[
 \frac{\mbox{d}r}{\mbox{d}\tilde{\tau}}\Big\rvert_{r_{\star}}=0,
\]
 one gets
\[
\gamma= \sqrt{F(r_{\star})},
\]
and condition \eqref{CriticalCurveCondition} is equivalent to
\[
F(r_{\star})-F(r)=\frac{ |\lambda|(r-r_{\star})}{3r} Q(r),
\]
where $Q(r)$ is the polynomial

\[
Q(r) \equiv r^2 + r_{\star}r-\frac{6m}{|\lambda|r_{\star}}.
\]
Notice that $Q(r)$ can be factorised as

\[
Q(r) = (r-\alpha_{-}(r_{\star}))(r-\alpha_{+}(r_{\star})),
\]
where
\[
\alpha_{\pm}(r_{\star}) \equiv \frac{r_{\star}}{2}\left(-1 \pm \sqrt{1 +
  \frac{24m}{|\lambda|r_{\star}^3}} \right).
\]
In addition, observe that 

\begin{eqnarray*}
Q(r_{\star})& >& 0 \qquad \text{for} \qquad
r_{\circledast}<r_{\star}<r_{c}, \\ Q(r_{\star})& < &0, \qquad
\text{for} \qquad r_{b}<r_{\star}<r_{\circledast}, \\ Q(r_{\star}) & =
&0, \qquad \text{for} \qquad r_{\star}=r_{\circledast}, \\
\end{eqnarray*}
where  $r_{\circledast} \equiv
\left(\frac{3m}{|\lambda|}\right)^{1/3}$.
In the extremal case one has $r_{b}=r_{c}=r_{\circledast}=3m$.  The
curve $r=r_{\circledast}$, as in the extremal case, will be called the
\emph{critical curve}.  With the above notation the integral
\eqref{rGeodesicIntegral} can be then rewritten as
\begin{equation}\label{rGeodesicIntegralCriticalCurve}
\tilde{\tau}-\tilde{\tau}_{\star}=
\int^{r}_{r_{\star}}\sqrt{\frac{s}{(s-r_{\star})(s-\alpha_{-}(r_{\star}))(s-\alpha_{+}(r_{\star}))}}\mbox{d}s.
\end{equation}
To study the behaviour close to the critical curve consider
$r_{\star}=(1+ \epsilon)r_{\circledast}$ For small $\epsilon>0$ and
considering $s>r_{\star}$ one can expand the right hand side of
equation \eqref{rGeodesicIntegralCriticalCurve} in Taylor series as
\begin{equation}
\tilde{\tau}-\tilde{\tau}_{\star}=
\int_{r_{\star}}^{r}\sqrt{\frac{s}{s+2r_{\circledast}}}
\left(\frac{1}{s-r_{\circledast}}-\frac{3r_{\circledast}^2s\epsilon^2
}{2(s-r_{\circledast})^3}\right)\mbox{d}s + \mathcal{O}(\epsilon^3).
\end{equation}
Integrating we obtain
\begin{eqnarray*}
\tilde{\tau}-\tilde{\tau}_{\star}=
-\frac{2}{\sqrt{3}}\text{arctanh}\left( \sqrt{3}\sqrt{\frac{1+
    \epsilon}{3+\epsilon}}\right) + 2\ln\left(
\sqrt{r_{\circledast}(1+\epsilon)}+\sqrt{r_{\circledast}(3+\epsilon)}
\right) -\frac{2}{\sqrt{3}}\text{arctanh} \left(
\frac{3r}{r+2r_{\circledast}} \right) \\ + 2\ln \left(\sqrt{r} +
\sqrt{r + 2r_{\circledast}} \right)
\ -\frac{3}{4}r_{\circledast}\sqrt{1+2r_{\circledast}}(1+ 2\epsilon)
-\frac{3}{4}r_{\circledast}{}^2\sqrt{1+2r_{\circledast}}\frac{(2r-r_{\circledast})\epsilon^2}{(r_{\circledast}-r)^2}
+ \mathcal{O}(\epsilon^3).
\end{eqnarray*}
As $\epsilon \rightarrow 0$ the last expression diverges ---as is to
be expected. The divergent term can be expanded for small $\epsilon>0$
as
\[
\text{arctanh} \left(\sqrt{3}\sqrt{\frac{1+\epsilon}{3+\epsilon}}
\right) = \frac{1}{2}\ln \left(  \left\lvert-\frac{6}{\epsilon} + 4 +
\frac{\epsilon}{6}+ \mathcal{O}(\epsilon^2) \right\rvert \right)
\]
and the second term can be expanded as
\[
 \ln\left(
 \sqrt{r_{\circledast}(1+\epsilon)}+\sqrt{r_{\circledast}(3+\epsilon)}
 \right) = \ln \left( (1 + \sqrt{3})\sqrt{r_{\circledast}}\right) +
 \frac{\epsilon}{2\sqrt{3}} - \frac{\epsilon^2}{6\sqrt{3}} +
 \mathcal{O}(\epsilon^3).
\]
Hence, to leading order one has
\[
\tilde{\tau}(r) =\frac{1}{\sqrt{3}}\ln \epsilon + f(r) +
\mathcal{O}(\epsilon)
\]
where
\[
f(r)=\tilde{\tau}_{\star} + 2\ln \left( (1 +
\sqrt{3})\sqrt{r_{\circledast}}\right) -\frac{2}{\sqrt{3}}
\text{arctanh} \left( \frac{3r}{r+2r_{\circledast}} \right) \\ + 2\ln
\left(\sqrt{r} + \sqrt{r + 2r_{\circledast}} \right)
\ -\frac{3}{4}r_{\circledast}\sqrt{1+2r_{\circledast}}.
\]
Reparametrising respect to the unphysical proper time using
\eqref{ReparametrisationPhysicalToUnPhysical} one gets
\[
\tau(r)=\frac{2\exp\left(\sqrt{|\lambda|/3}f(r)  +
  \mathcal{O}(\epsilon)\right) \epsilon^{p}} {1-\kappa
  \exp\left(\sqrt{|\lambda|/3}f(r) + \mathcal{O}(\epsilon)\right)\epsilon^{p} }
\]
with $p=1/\sqrt{3}$. Thus one gets

\[
\frac{\mbox{d} \tau}{\mbox{d} \epsilon}
= \frac{2p\exp\left(\sqrt{|\lambda|/3}f(r) +
  \mathcal{O}(\epsilon)\right)\epsilon^{p-1}}{\left(1 -
  \exp\left(\sqrt{|\lambda|/3}f(r) +
  \mathcal{O}(\epsilon)\right)\kappa\epsilon^{p} \right)^2}.
\]
Observe that since $p<1$ then one has that
 $\mbox{d}\tau/\mbox{d}\epsilon$ diverges
as $\epsilon \rightarrow 0$.

\section{Appendix:
The conformal evolution equations in the case $\kappa \neq 0$ 
and reparametrisations }
\label{FormationOfSingularitiesAndReparametrisations}

In Section \ref{AnalysisOfTheCoreSystem} we analysed
  the case $\kappa=0$ ---this corresponds to a conformal boundary with
  vanishing extrinsic curvature. Nevertheless, as discussed in
  Section \ref{AsymptoticInitialValueProblem}, $\kappa$ is a conformal
  gauge quantity arising from the conformal covariance of the conformal
  field equations. Consequently, it is of interest to analyse the
  behaviour of the core system in the case $\kappa \neq 0$.  For
  simplicity, in the remainder of this section, $\kappa$ will be
  assumed to be a constant on the initial hypersurface corresponding
  to $\tau=0$. In first instance, we restrict our attention to
  $|\kappa| >1$ and then
  discuss how to exploit the conformal covariance of
  the equations to extend these results for $\kappa \in
  [-1,0)\cup(0,1]$. 

\subsection{Analysis of the core system with $\kappa>1$}
\label{AnalysisKappaPos}

We begin the discussion of this case observing that, for
$\kappa>1$, one has that $\Theta(\tau) \geq 0$ and $\dot{\Theta}(\tau)
> 0$ for $\tau \geq 0$.  Using this simple observation and the core
equations \eqref{Core1}-\eqref{Core3} we obtain the following:

\begin{lemma} \label{AnalysisKappaPosPropL}
For a solution to the core system \eqref{Core1}-\eqref{Core3} with
initial data given by \eqref{InitialDataCoreSpherical} and $\kappa>1$
one has that $L(\tau) <0$ for $\tau \geq 0$.
\end{lemma}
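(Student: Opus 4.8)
The plan is to use a first-crossing (barrier) argument exploiting the sign structure of the evolution equation \eqref{Core3} for $L$. The starting point is that the initial datum is already negative: from \eqref{InitialDataCoreSpherical} one has $L(0)=\tfrac{1}{2}(1-\kappa^2)<0$ precisely because $\kappa>1$. I would then assemble the two sign facts that drive the mechanism. First, by Observation 1 (the integral representation \eqref{FormalSolutionE}) together with the standing assumption $m>0$, the field $\phi$ remains strictly positive, $\phi(\tau)>0$, throughout the interval of existence. Second, for $\kappa>1$ and $\tau\geq 0$ one has $\dot{\Theta}(\tau)=\sqrt{|\lambda|/3}\,(1+\kappa\tau)>0$, which is the hypothesis recorded at the opening of this subsection.

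With these in hand, the argument runs by contradiction on the maximal forward interval of existence. Suppose $L$ fails to remain negative, and let $\tau_0>0$ be the infimum of the set $\{\tau>0:\,L(\tau)=0\}$. By continuity one has $L(\tau_0)=0$ while $L(\tau)<0$ for $0\leq\tau<\tau_0$, so $L$ increases to the value $0$ at $\tau_0$ and hence the one-sided derivative satisfies $\dot{L}(\tau_0)\geq 0$. On the other hand, evaluating \eqref{Core3} at $\tau_0$ and using $L(\tau_0)=0$ gives
\[
\dot{L}(\tau_0)=-\chi(\tau_0)L(\tau_0)-\tfrac{1}{2}\dot{\Theta}(\tau_0)\phi(\tau_0)=-\tfrac{1}{2}\dot{\Theta}(\tau_0)\phi(\tau_0)<0,
\]
since both $\dot{\Theta}(\tau_0)>0$ and $\phi(\tau_0)>0$. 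This contradiction shows that no first zero $\tau_0$ can occur, so $L(\tau)<0$ for all $\tau\geq 0$ in the domain of existence.

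The essential point is that this is genuinely a barrier argument: the hyperplane $\{L=0\}$ can only be crossed downwards, because the $-\chi L$ term in \eqref{Core3}---the one term whose sign is not controlled \emph{a priori}, as $\chi$ may change sign---vanishes identically on that hyperplane, leaving the strictly negative source $-\tfrac{1}{2}\dot{\Theta}\phi$ in full control of the derivative there. I expect the only delicate point to be the bookkeeping about the interval of existence rather than any substantive estimate: the positivity $\phi>0$ supplied by \eqref{FormalSolutionE} holds wherever $\int_0^{\tau}\chi(s)\,\mbox{d}s$ is finite, so the conclusion should be read on the maximal forward interval of existence of the core system, which is exactly the setting in which the subsequent $\kappa>1$ analysis proceeds. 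No monotonicity or boundedness of $\chi$ itself is needed, which is why the same reasoning survives even though $\chi$ obeys a Riccati-type equation.
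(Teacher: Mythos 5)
Your proof is correct and is essentially the paper's own argument: a first-crossing contradiction at the first zero $\tau_L$ of $L$, using $L(0)<0$ for $\kappa>1$, the positivity of $\phi$ from \eqref{FormalSolutionE}, the positivity of $\dot{\Theta}$, and equation \eqref{Core3} evaluated at $\tau_L$. The only cosmetic difference is that the paper phrases the final contradiction as forcing $\phi(\tau_L)\leq 0$, whereas you phrase it as $\dot{L}(\tau_L)<0$ contradicting $\dot{L}(\tau_L)\geq 0$; these are the same step.
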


\begin{proof}
We proceed by contradiction. Assume that there exists
$0<\tau_{L}<\infty$ such that $L(\tau_{L})=0$. Without loss of
generality we can assume that $\tau_{L}$ corresponds to the first zero
of $L(\tau)$. Since for $\kappa>1$ we have $L(0)<0$ then by continuity
it follows that  $\dot{L}(\tau_{L}) \geq
0$ ---$\dot{{L}}(\tau_{L})$ cannot be negative since this would
  imply that ${L}(\tau)$ crossed the $\tau$-axis at some time $\tau <
  \tau_{L}$ but this is not possible since $\tau_{L}$ is the first
  zero of ${L}(\tau)$. It follows then from equation \eqref{Core3} that
\[ 
0 \leq \dot{L}(\tau_{L})=-\chi(\tau_{L})L(\tau_{L})
-\frac{1}{2}\dot{\Theta}(\tau_{L})\phi(\tau_{L}). 
\]
Since $L(\tau_L)=0$ and $\dot{\Theta}(\tau_L)>0$,  the last
inequality implies that $\phi(\tau_L)\leq0$ but this is a
contradiction since we already know from Observation 1 that
$\phi(\tau)>0$ for any $\tau$.
\end{proof}

\medskip
\noindent 
\textbf{Observation 4.} Using that $\dot{\Theta}(\tau) \geq
0$ for $\kappa>1$ and $\tau \geq 0$ and that $\phi(\tau)>0$ we obtain
 from equation \eqref{Core3} the differential inequality
\[ 
\dot{L}(\tau) \leq -\chi(\tau)L(\tau).
\]
 Observing  Lemma
\ref{AnalysisKappaPosPropL} we have that $L(\tau) < 0$. Thus, we can
formally integrate the last differential inequality and obtain
\begin{equation}
\label{EstimateL}
 L(\tau) \leq L(0)\exp\left(-\int^{\tau}_0 \chi(\mbox{s})\mbox{d}\mbox{s} \right).
\end{equation}

\medskip
We now show that the function $\chi(\tau)$ which is initially positive
must necessarily have a zero. 

\begin{lemma}
\label{AnalysisKappaPosChiVanishesSomewhere}
For a solution to the core system \eqref{Core1}-\eqref{Core3} with
initial data given by \eqref{InitialDataCoreSpherical} and $\kappa>1$
there exist $0<\tau_{\chi}<\infty$ such that $\chi(\tau_{\chi})=0$.
\end{lemma}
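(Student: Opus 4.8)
The plan is to argue by contradiction: suppose that $\chi(\tau)$ never vanishes. Since $\chi(0)=\kappa>1>0$, continuity then forces $\chi(\tau)>0$ throughout the interval of existence. The first step is to record the signs available for $\kappa>1$ and $\tau\geq0$: one has $\Theta(\tau)\geq0$ and $\dot\Theta(\tau)>0$, while Observation~1 gives $\phi(\tau)>0$ and Lemma~\ref{AnalysisKappaPosPropL} gives $L(\tau)<0$. Feeding these into the core equation \eqref{Core2} shows that every term on its right-hand side is nonpositive, so that $\dot\chi\leq -\chi^2$ and, more crudely, $\dot\chi\leq L<0$; in particular $\chi$ is strictly decreasing and confined to $(0,\kappa]$.

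The heart of the argument is to convert the two inequalities $\dot\chi\leq-\chi^2$ and $\dot\chi\leq L$ into a bound that is eventually negative. First I would integrate the Riccati inequality $\dot\chi\leq-\chi^2$ (using $\chi>0$) to obtain $\chi(\tau)\leq \kappa/(1+\kappa\tau)$ and hence $\int_0^\tau\chi\,\mathrm{d}s\leq \ln(1+\kappa\tau)$. Substituting this into the estimate \eqref{EstimateL} of Observation~4, and using $L(0)=\tfrac12(1-\kappa^2)<0$, the sign reversal produces the decay estimate
\[
L(\tau)\ \leq\ L(0)\,e^{-\int_0^\tau\chi}\ \leq\ \frac{L(0)}{1+\kappa\tau}\ <\ 0 .
\]
Integrating the cruder inequality $\dot\chi\leq L$ from $0$ then yields
\[
\chi(\tau)\ \leq\ \kappa+\int_0^\tau L(s)\,\mathrm{d}s\ \leq\ \kappa+\frac{1-\kappa^2}{2\kappa}\,\ln(1+\kappa\tau).
\]

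Because $\kappa>1$, the coefficient $(1-\kappa^2)/2\kappa$ is strictly negative, so the right-hand side tends to $-\infty$ as $\tau\to\infty$; this contradicts $\chi(\tau)>0$ and forces the existence of a finite $\tau_\chi$ with $\chi(\tau_\chi)=0$. I expect the main technical obstacle to be the subtlety that this contradiction is only meaningful if the solution actually survives long enough --- one must rule out a finite-time breakdown of the core system while $\chi$ remains positive. This I would handle by noting that $\chi$ is decreasing and trapped in $(0,\kappa]$, that $\phi\leq 2m$ by the integral formula \eqref{FormalSolutionE}, and that on any finite interval the linear structure of \eqref{Core3} together with a Gr\"onwall estimate prevents $L$ from escaping to $-\infty$ in finite time. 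Hence the solution extends as long as $\chi>0$, the logarithmic bound above genuinely applies, and the zero $\tau_\chi$ is attained at a finite parameter value.
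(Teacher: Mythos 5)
Your proof is correct, and it reaches the contradiction by a genuinely different route than the paper, even though the opening moves coincide. Both arguments assume $\chi>0$ throughout, use the sign information ($\Theta\geq 0$, $\dot\Theta>0$, $\phi>0$, $L<0$) to get the Riccati inequality $\dot\chi\leq-\chi^2$, and integrate it to obtain $\chi(\tau)\leq\kappa/(1+\kappa\tau)$ and hence $\int_0^\tau\chi\,\mathrm{d}s\leq\ln(1+\kappa\tau)$. At that point the paper keeps the term $-\tfrac12\Theta\phi$ in \eqref{Core2}: it substitutes the formal solution \eqref{FormalSolutionE} for $\phi$, bounds $\Theta(\tau)\exp(-3\int_0^\tau\chi)$ from below by an explicit rational function of $\tau$, and integrates to show $\chi(\tau)<\kappa-\tfrac{m}{2\kappa^2}\sqrt{|\lambda|/3}\,\bigl(\ln(\kappa\tau+1)+\cdots\bigr)$, which is eventually negative; the divergence there is driven by the mass term. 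You instead discard $-\tfrac12\Theta\phi$ and keep the term $L$, feeding the logarithmic bound on $\int\chi$ into the estimate \eqref{EstimateL} to get $L(\tau)\leq L(0)/(1+\kappa\tau)$, whose integral diverges to $-\infty$ because $L(0)=\tfrac12(1-\kappa^2)<0$; this yields $\chi(\tau)\leq\kappa+\tfrac{1-\kappa^2}{2\kappa}\ln(1+\kappa\tau)$ and the same contradiction. Your version is computationally lighter (no explicit integrals involving $\Theta$), does not use the mass at all --- so it would survive the limit $m\to 0$, where the paper's bound degenerates --- but it leans on the strict inequality $\kappa>1$ through $L(0)<0$, whereas the paper's divergent term only needs $m>0$. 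A further point in your favour: you explicitly close the continuation gap (the contradiction is only meaningful on an interval long enough for the logarithm to win), by noting that while $\chi\in(0,\kappa]$ one has $\phi\leq 2m$ and a Gr\"onwall bound on $L$ from \eqref{Core3}, so the solution cannot break down in finite time; the paper's proof leaves this standard but nontrivial step implicit.
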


\begin{proof}
We proceed again by contradiction. Assume that $\chi(\tau)$ never
vanishes. Since $\chi(0)=\kappa>0$ then $\chi(\tau)>0$ for $\tau \geq
0$. From Lemma \ref{AnalysisKappaPosPropL}  we know that $L(\tau)<0$. 
In addition, we know that $\Theta(\tau)\phi(\tau) \geq 0$.
 With these observations equation
\eqref{Core2} gives
\[ 
\dot{\chi}(\tau) < -\chi^2(\tau) \hspace{0.5cm} \text{for} \hspace{0.5cm}
 \tau > 0.
\]
Since we are assuming that $\chi(\tau)$ never vanishes then
\[  
\frac{\dot{\chi}(\tau)}{\chi^2(\tau)} < -1. 
\]
 Integrating from $0$ to $\tau>0$ and using the initial
data \eqref{InitialDataCoreSpherical} we get
\begin{equation}
\label{EstimateContradiction}
 \chi(\tau) < \frac{1}{\tau + {1}/{\kappa}} \hspace{0.5cm}
 \text{for} \hspace{0.5cm} \tau > 0.
\end{equation}
 In a similar way, we can consider equation \eqref{Core2}  and
obtain the differential inequality
\[ 
\dot{\chi}(\tau) < -\frac{1}{2}\Theta(\tau)\phi(\tau) 
\hspace{0.5cm} \text{for} \hspace{0.5cm} \tau \geq 0.
\]
Using now equation \eqref{FormalSolutionE} we get
\[ 
\dot{\chi} < -m\Theta(\tau) \exp\left({-3\int^\tau _0 
\chi(\mbox{s})\mbox{d}\mbox{s}}\right)  \hspace{0.5cm}
 \text{for} \hspace{0.5cm} \tau \geq
0.
\]
Integrating the from $0$ to $\tau>0$ we get
\begin{equation}
 \chi(\tau) < \kappa - m\int^{\tau} _{0}\Theta(\mbox{s}) 
\exp\left({-3\int^{s} _0 \chi(\mbox{s}')\mbox{ds}'}\right)\mbox{ds}  \hspace{0.5cm} 
\text{for} \hspace{0.5cm} \tau \geq 0.
\label{AnalysisKappaPosChiVanishesSomewhereIntermmediate}
\end{equation}

On the other hand, integrating expression
 \eqref{EstimateContradiction} we have
\[
\int^\tau _0 \chi(\mbox{s})\mbox{ds}<  \ln \left(\kappa\tau + 1 \right). 
\]
Consequently,
\[
-m\Theta(\tau) \exp\left({-3\int^\tau _0 \chi(\mbox{s}')\mbox{ds}'}\right)<
 -m\sqrt{\frac{|\lambda|}{3}}\frac{\tau(1+\tfrac{1}{2}\kappa \tau)}
{(1+\kappa \tau)^3}.
 \]
Integrating we get
\[
- m\int^{\tau} _{0}\Theta(\mbox{s}) \exp\left({-3\int^{s} _0
  \chi(\mbox{s}')\mbox{ds}'}\right)\mbox{ds} < -
\frac{m}{2\kappa^2}\sqrt{\frac{|\lambda|}{3}}\left(\frac{1}{(\kappa \tau
  + 1)^2} +\ln(\kappa \tau + 1) -1\right).
\]
Substituting the above result into the inequality
\eqref{AnalysisKappaPosChiVanishesSomewhereIntermmediate} we obtain 
\[ 
\chi(\tau) < \kappa -
\frac{m}{2\kappa^2}\sqrt{\frac{|\lambda|}{3}}\left(\frac{1}{(\kappa \tau
  + 1)^2} +\ln(\kappa \tau + 1) -1\right).
\]
 The right hand side of the last expression becomes negative for some
 sufficiently large $\tau$. This is a contradiction as we have assumed
 that $\chi(\tau)$ never vanishes and $\chi(0)>0$.
\end{proof}

\medskip
\noindent 
\textbf{Observation 5.} Combining Lemma
\ref{AnalysisKappaPosPropL} and Observation 1, we conclude that $L(\tau)< 0$ and
$\Theta(\tau)\phi(\tau)>0$ for $\tau>0$. Using these properties in equation
\eqref{Core2} we get
 \[
\dot{\chi}(\tau)< 0 \hspace{0.5cm} 
\text{for} \hspace{0.5cm}\tau \geq 0.
\] 
Thus, $\chi(\tau)$ is always
decreasing. From Lemma \ref{AnalysisKappaPosChiVanishesSomewhere} 
 we know that there exists a finite
$\tau_{\chi}>0$ such that $\chi(\tau_{\chi})=0$. Then, by continuity, for
any $\tau >\tau_{\chi}$ we have that
$\chi(\tau)<0$ .

\medskip
With this last observation we are in the position of proving the main
result of this section:

\begin{proposition}
\label{Proposition:FormationSingularitiesKappaPositive}
There exists $0<\tau_{\lightning}<\infty$ such that the solution of
\eqref{Core1}-\eqref{Core3} with initial data given by 
  \eqref{InitialDataCoreSpherical} and $\kappa>1$ satisfies 
\[
\chi\rightarrow -\infty, \quad  L \rightarrow -\infty, \quad  \phi
  \rightarrow \infty \quad  \mbox{as} \quad \tau \rightarrow \tau_{\lightning}.
\]
\end{proposition}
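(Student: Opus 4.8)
The plan is to read equation \eqref{Core2} as a Riccati inequality. For $\kappa>1$ and $\tau\geq 0$ one has $\Theta(\tau)\geq0$ and $\dot\Theta(\tau)>0$; combined with $\phi>0$ (Observation 1) and $L<0$ (Lemma \ref{AnalysisKappaPosPropL}), the two lower-order terms on the right-hand side of \eqref{Core2} are non-positive, so $\dot\chi<-\chi^2$ holds throughout the interval of existence. I would then organise the argument into three steps: first show that $\chi\to-\infty$ in finite time; second, and this is the heart of the matter, show that $\int_0^{\tau}(-\chi)\,\mbox{d}s$ diverges as $\tau$ approaches the blow-up time; third, read off the blow-up of $\phi$ and $L$ from this divergence.

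For the first step, recall from Lemma \ref{AnalysisKappaPosChiVanishesSomewhere} and Observation 5 that $\chi$ is strictly decreasing and vanishes at a finite $\tau_\chi$, so $\chi(\tau_0)<0$ for any fixed $\tau_0>\tau_\chi$. Dividing $\dot\chi<-\chi^2$ by $\chi^2>0$ and integrating from $\tau_0$ gives $\chi(\tau)^{-1}>\chi(\tau_0)^{-1}+(\tau-\tau_0)$; since $\chi(\tau_0)^{-1}<0$, the right-hand side increases to $0$ at $\tau=\tau_0-\chi(\tau_0)^{-1}$, forcing $\chi\to-\infty$ at some finite $\tau_{\lightning}\leq\tau_0-\chi(\tau_0)^{-1}$. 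As $\chi$ is unbounded there, $\tau_{\lightning}$ is the right endpoint of the maximal interval of existence.

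For the second and third steps, write $w\equiv-1/\chi>0$ on $(\tau_\chi,\tau_{\lightning})$, so that \eqref{Core2} becomes $\dot w=-1+(L-\tfrac{1}{2}\Theta\phi)/\chi^2$. I would prove $\int_0^{\tau_{\lightning}}(-\chi)\,\mbox{d}s=+\infty$ by contradiction: if this integral were finite, then the formal solution \eqref{FormalSolutionE} shows $\phi$ is bounded, hence so is $\Theta\phi$; moreover a Gronwall estimate applied to $M_L\equiv-L>0$ in \eqref{Core3}, which satisfies $\dot M_L=(-\chi)M_L+\tfrac{1}{2}\dot\Theta\phi$, shows $L$ is bounded as well. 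Both $L$ and $\Theta\phi$ would then be $o(\chi^2)$, so $\dot w\to-1$; since $w\to0^{+}$ at $\tau_{\lightning}$ this yields $w(\tau)\leq 2(\tau_{\lightning}-\tau)$ near $\tau_{\lightning}$, whence $-\chi=1/w\geq\tfrac{1}{2}(\tau_{\lightning}-\tau)^{-1}$ and $\int^{\tau_{\lightning}}(-\chi)$ diverges logarithmically, a contradiction. With the divergence established, \eqref{FormalSolutionE} gives $\phi\to+\infty$; and the unconditional inequality $\dot M_L\geq(-\chi)M_L$, valid on $(\tau_\chi,\tau_{\lightning})$ because $\chi<0$ and $\dot\Theta\phi>0$, integrates to $\ln M_L(\tau)\geq\ln M_L(\tau_0)+\int_{\tau_0}^{\tau}(-\chi)\to\infty$, so that $L=-M_L\to-\infty$.

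The main obstacle is exactly the divergence of $\int(-\chi)$ in the second step. The bare inequality $\dot\chi<-\chi^2$ guarantees finite-time blow-up but gives no lower bound on its rate, and a singularity weaker than a simple pole would leave the integral, and with it $\phi$, finite. The remedy is to use the full equation \eqref{Core2} rather than the inequality: the bootstrap that $\phi$ and $L$ remain bounded under the contradiction hypothesis forces the quotient $w=-1/\chi$ to approach $0$ linearly ($\dot w\to-1$), pinning the singularity to a genuine simple pole. The remaining ingredients—the Riccati comparison and the Gronwall estimate—are routine.
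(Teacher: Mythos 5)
Your proposal is correct, and its first step coincides with the paper's own proof: it uses Lemma \ref{AnalysisKappaPosPropL}, Lemma \ref{AnalysisKappaPosChiVanishesSomewhere} and Observation 5 to obtain $\dot{\chi}<-\chi^2$ with $\chi<0$ past $\tau_{\chi}$, and then integrates this Riccati inequality from a point after the zero of $\chi$ to force $\chi\to-\infty$ at a finite time --- this is exactly the paper's inequality \eqref{BlowUpSpherical}. Where you genuinely depart from the paper is in the treatment of $\phi$ and $L$. The paper's proof simply asserts that, once $\chi\to-\infty$, the blow-up of $\phi$ follows from \eqref{FormalSolutionE} and that of $L$ from \eqref{EstimateL}; but both conclusions actually require $\int_0^{\tau_{\lightning}}(-\chi)\,\mbox{d}s=+\infty$, which is not a consequence of $\chi\to-\infty$ alone. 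Indeed, integrating $\dot{\chi}/\chi^2<-1$ up to the blow-up time and using $1/\chi\to 0^{-}$ yields only the \emph{upper} bound $-\chi(\tau)\leq(\tau_{\lightning}-\tau)^{-1}$, which is compatible with an integrable singularity (a rate like $(\tau_{\lightning}-\tau)^{-1/2}$ would leave $\phi$ and $L$ finite); monotonicity of $\chi$ does not help either. Your bootstrap closes precisely this gap: assuming the integral finite, \eqref{FormalSolutionE} bounds $\phi$, a Gronwall estimate on \eqref{Core3} bounds $L$, and then $w=-1/\chi$ satisfies $\dot{w}=-1+\bigl(L-\tfrac{1}{2}\Theta\phi\bigr)/\chi^{2}\to-1$, pinning the singularity to a genuine simple pole with $-\chi\geq\tfrac{1}{2}(\tau_{\lightning}-\tau)^{-1}$, whose integral diverges --- a contradiction; the divergence of $\phi$ and of $L$ then follows as you describe. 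So your route is not merely different: it supplies a justification for the final step that the paper's proof glosses over, at the cost of a longer argument; the paper's version buys brevity but leaves the rate of blow-up of $\chi$ --- which you rightly call the heart of the matter --- unaddressed.
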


\begin{proof}
From Lemma \ref{AnalysisKappaPosChiVanishesSomewhere} we know
 there exists a finite $\tau_{\chi}$ for which $\chi(\tau)$
vanishes. By Observation 5, we have that $\chi(\tau_{\lozenge})<0$ for
any $\tau_{\lozenge}> \tau_{\chi}$. Let $\chi_{\lozenge}\equiv \chi(\tau_{\lozenge})
<0$. We can assume that $\chi_{\lozenge}$ is finite,
otherwise there is nothing to prove.  Now, using Lemma \ref{AnalysisKappaPosPropL} and
that $\Theta(\tau)\phi(\tau)>0$ we get
\[ 
\dot{\chi}(\tau) < -\chi^2(\tau) \hspace{0.5cm}
 \text{for} \hspace{0.5cm} \tau \geq 0.
\]
\noindent Since we know that $\chi(\tau) < 0$ for any $\tau>
\tau_{\lozenge}$ then
\[  
\frac{\dot{\chi}(\tau)}{\chi^2(\tau)} < -1. 
\]
\noindent Integrating form $\tau=\tau_{\lozenge}$ to $\tau>
\tau_{\lozenge}$ we get
\begin{equation}
 \chi(\tau) < \frac{1}{\tau - \tau_{\lozenge} +
   {1}/{\chi_{\lozenge}}} \hspace{0.5cm}
 \text{for} \hspace{0.5cm} \tau >
 \tau_{\lozenge}. \label{BlowUpSpherical}
\end{equation}
 From inequality \eqref{BlowUpSpherical} we can conclude that
$\chi(\tau)\rightarrow -\infty$ for some finite time
$\tau_{\lightning}< \tau_{\lozenge} - {1}/\chi_{\lozenge} $.
  Additionally, observe that $\tau_{\lozenge} -
  {1}/{\chi_{\lozenge}} > \tau_{\lozenge}>0$ since
  $\chi_{\lozenge}<0$ . Now, given that $\chi\rightarrow -\infty$ as $\tau
  \rightarrow \tau_{\lightning}$ it follows from equation
  \eqref{FormalSolutionE} that $\phi \rightarrow \infty$ as $\tau
  \rightarrow \tau_{\lightning}$.  Similarly, from inequality
  \eqref{EstimateL} and that $L(0)<0$ it follows that $L \rightarrow
  -\infty$ as $\tau \rightarrow \tau_{\lightning}$.
\end{proof}

\begin{figure}[t]
\centering

\includegraphics[width=0.65\textwidth]{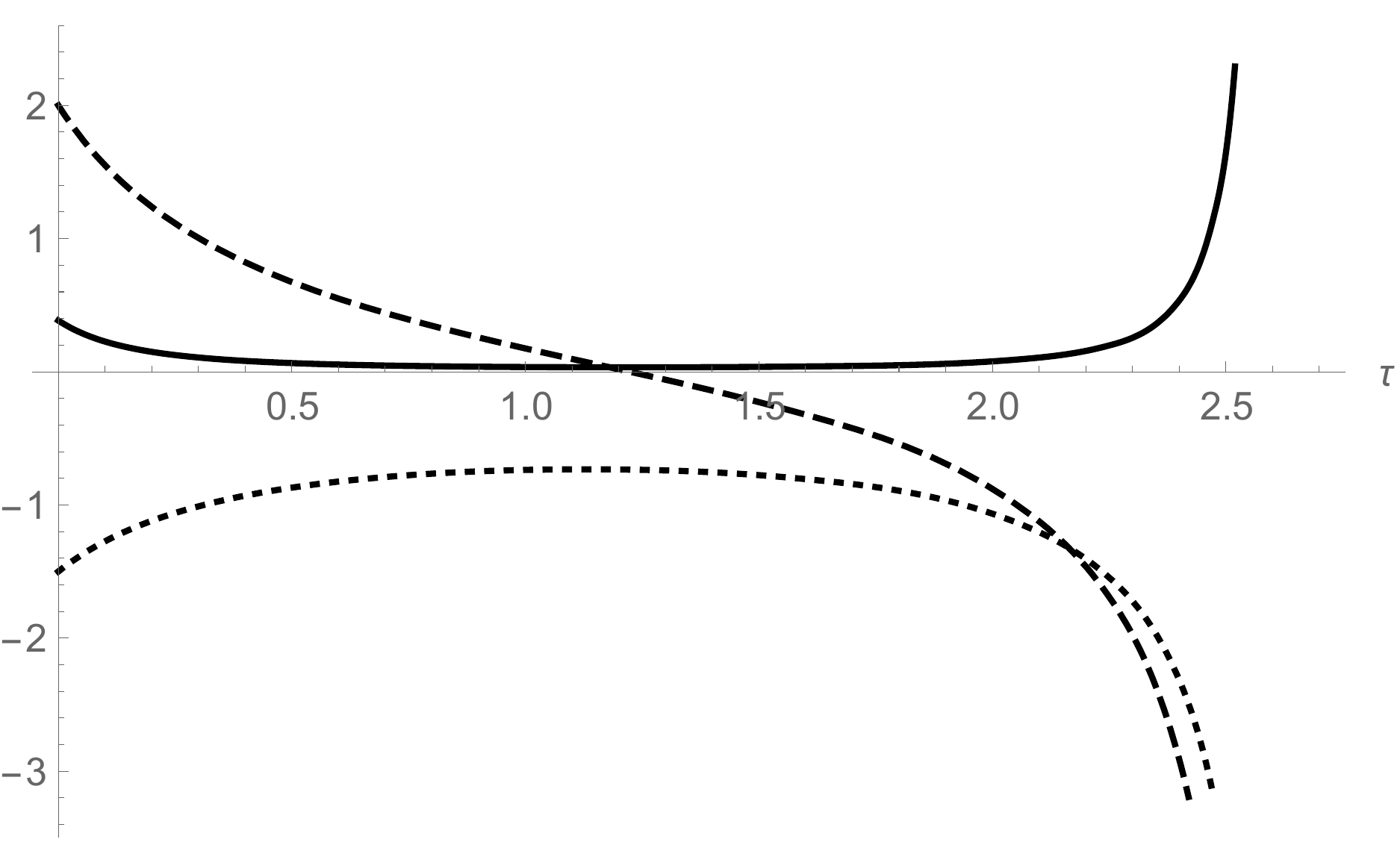}
   \caption[]{Numerical solutions of the core system
  \eqref{Core1}-\eqref{Core3} with initial data given by
  \eqref{InitialDataCoreSpherical} in the case $\kappa=2$ and
  $|\lambda|=3$, $m=1/3\sqrt{3}$.  The solid line describes the evolution of
  $\phi$, the dashed line that of $\chi$ and the dotted line that of
  $L$.  One can observe the formation of a singularity at $\tau
  \approx 2.6392$.}
\label{fig:figPos}
\end{figure}

\begin{remark}
{\em A plot of the numerical evaluation of the solutions
to the core system \eqref{Core1}-\eqref{Core3} with initial data
  \eqref{InitialDataCoreSpherical} in the case $\kappa>1$ can be seen
  in Figure \ref{fig:figPos}. } 
\end{remark}

\subsection{Analysis of the core system with $\kappa <-1$}
\label{AnalysisKappaNeg}

In this section we use a similar approach to that followed in Section
\ref{AnalysisKappaPos} to show that the fields in the core system
diverge for some finite time if $\kappa<-1$.  An interesting feature
of this case is that, assuming one knows that there exists a
singularity in the development, there exists an \emph{a priori} upper
bound for the time of its appearance ---namely, the location of second
component of the conformal boundary at $\tau = 2/|\kappa|$. As a
byproduct of the analysis of this section an improvement of this basic
bound is obtained.

\medskip
An important remark concerning the case $\kappa<-1$ is that if $\tau
\in [0,1/|\kappa|]$ then both
$\Theta(\tau)$ and $\dot{\Theta}(\tau)$ are non-negative. Based on
this observation our first result is:

\begin{lemma}\label{AnalysisKappaNegPropL}
If $\kappa<-1$ then the solution to the core system
\eqref{Core1}-\eqref{Core3} with initial data
\eqref{InitialDataCoreSpherical} satisfies $L(\tau) < 0$ for $\tau \in [0,1/|\kappa|]$.
\end{lemma}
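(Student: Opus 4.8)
The plan is to follow the contradiction strategy used in the proof of Lemma \ref{AnalysisKappaPosPropL}, but to recast it in integrated form so that the degenerate endpoint causes no trouble. First I would record the two ingredients that make the sign work: for $|\kappa|>1$ the initial datum $L(0)=\tfrac{1}{2}(1-\kappa^2)$ is \emph{strictly} negative, and by Observation 1 (equation \eqref{FormalSolutionE}) one has $\phi(\tau)>0$ throughout. The second ingredient is the remark preceding the lemma: for $\tau\in[0,1/|\kappa|]$ both $\Theta(\tau)$ and $\dot{\Theta}(\tau)$ are non-negative, since $\dot{\Theta}(\tau)=\sqrt{|\lambda|/3}\,(1+\kappa\tau)\geq 0$ exactly on this interval and $\Theta(\tau)=\sqrt{|\lambda|/3}\,\tau(1+\tfrac{1}{2}\kappa\tau)\geq 0$ on the larger interval $[0,2/|\kappa|]$.

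The key step is to read equation \eqref{Core3} as a \emph{linear} first-order ODE for $L$,
\[
\dot{L}+\chi L=-\tfrac{1}{2}\dot{\Theta}\phi,
\]
with $\chi(\tau)$ and the source $-\tfrac{1}{2}\dot{\Theta}(\tau)\phi(\tau)$ regarded as known continuous functions along the solution. Multiplying by the integrating factor $\exp\!\big(\int_0^\tau\chi\,\mathrm{d}s\big)$ and integrating from $0$ to $\tau$ gives
\[
L(\tau)=\exp\!\left(-\int_0^\tau\chi\,\mathrm{d}s\right)\left[L(0)-\tfrac{1}{2}\int_0^\tau\dot{\Theta}(s)\phi(s)\exp\!\left(\int_0^s\chi\,\mathrm{d}s'\right)\mathrm{d}s\right].
\]
For $\tau\in[0,1/|\kappa|]$ the integrand in the bracket is non-negative (as $\dot{\Theta}\geq 0$, $\phi>0$ and the exponential is positive), so the bracket equals $L(0)$ minus a non-negative quantity and is therefore $\leq L(0)<0$; since the prefactor is strictly positive, $L(\tau)<0$. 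This delivers the conclusion on the whole closed interval in one stroke.

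The main obstacle I anticipate is precisely the endpoint $\tau=1/|\kappa|$, where $\dot{\Theta}$ vanishes. If one tried to copy the pointwise contradiction argument of Lemma \ref{AnalysisKappaPosPropL} verbatim --- assume a first zero $\tau_L$ of $L$, deduce $\dot{L}(\tau_L)\geq 0$, and read off $\dot{L}(\tau_L)=-\tfrac{1}{2}\dot{\Theta}(\tau_L)\phi(\tau_L)<0$ from \eqref{Core3} --- the contradiction is immediate for $\tau_L\in(0,1/|\kappa|)$ but collapses at $\tau_L=1/|\kappa|$, since there $\dot{\Theta}=0$ forces $\dot{L}(\tau_L)=0$ and the first-derivative test is inconclusive. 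The integrated representation above circumvents this degeneracy because it only uses $\dot{\Theta}\geq 0$ together with the strict negativity of $L(0)$, and hence needs no information about the sign of $\dot{\Theta}$ at the single endpoint. One should, of course, read the statement on $[0,1/|\kappa|]$ intersected with the maximal interval of existence of the solution, which is consistent with the later result that the fields may diverge at a finite $\tau_{\lightning}$.
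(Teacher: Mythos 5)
Your proposal is correct, and it is not a cosmetic variant of the paper's argument: the paper proves this lemma by exactly the pointwise first-zero contradiction that you deliberately avoided. It assumes a first zero $\tau_L\in(0,1/|\kappa|]$ of $L$, deduces $\dot{L}(\tau_L)\geq 0$ by continuity, and then reads off from \eqref{Core3} that $0\leq \dot{L}(\tau_L)=-\tfrac{1}{2}\dot{\Theta}(\tau_L)\phi(\tau_L)$, asserting $\dot{\Theta}(\tau_L)>0$ to contradict $\phi(\tau_L)>0$. As you anticipated, that assertion fails precisely at the endpoint $\tau_L=1/|\kappa|$, where $\dot{\Theta}=\sqrt{|\lambda|/3}\,(1+\kappa\tau)$ vanishes; there the paper's inequality collapses to $0\leq 0$ and yields no contradiction, so the published proof has a genuine (if small) gap at the closed endpoint which your variation-of-constants identity closes, since it only uses $\dot{\Theta}\geq 0$ on $[0,\tau]$ together with $L(0)<0$ and never the strict sign of $\dot{\Theta}$ at a single point. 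Your integrated formula is also consistent with the paper's own subsequent bookkeeping: it reproduces in one stroke the bound $L(\tau)\leq L(0)\exp\bigl(-\int_0^\tau\chi\,\mathrm{d}s\bigr)$ that the paper records separately as Observation 6 (derived there from the differential inequality $\dot{L}\leq-\chi L$, again under $\dot{\Theta}\geq 0$, $\phi>0$). What the paper's route buys is brevity and a purely pointwise argument; what yours buys is uniform validity on the whole closed interval, including the degenerate endpoint, plus a quantitative decay estimate for free. Your closing caveat about intersecting with the maximal interval of existence is likewise appropriate, since Proposition \ref{Proposition:FormationSingularitiesKappaNegative} shows that blow-up indeed occurs at some $\tau_{\lightning}\leq 1/|\kappa|$.
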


\begin{proof}
We proceed by contradiction. Assume that there exists $0<\tau_{L} \leq
1/|\kappa|$ such that $L(\tau_{L})=0$. Without lost of generality we
can assume that $\tau_{L}$ is the first zero of $L(\tau)$. Since
$L(0)<0$ for $\kappa<-1$ then by continuity $\dot{L}(\tau_{L})\geq0.$
Therefore, proceeding as in Lemma \ref{AnalysisKappaPosPropL} one gets
from \eqref{Core3}
\[ 
0 \leq \dot{L}(\tau_{L})=-\chi(\tau_{L})L(\tau_{L})
-\frac{1}{2}\dot{\Theta}(\tau_{L})\phi(\tau_{L})
\hspace{0.5cm}\text{for}\hspace{0.5cm} \tau \in [0,1/|\kappa|]. 
\]
Since $L(\tau_L)=0$ and $\dot{\Theta}(\tau_L)>0$ the last
inequality implies that $\phi(\tau_L)\leq0$. This is a
contradiction since $\phi(\tau)>0$ ---cfr. Observation 1.
\end{proof}

\begin{lemma}\label{AnalysisKappaNegPropChi1}
If $\kappa<-1$ then the solution to the core system
\eqref{Core1}-\eqref{Core3} with initial data
\eqref{InitialDataCoreSpherical} satisfies $\chi(\tau)<0$ for $\tau
 \in[0,1/|\kappa|]$.
\end{lemma}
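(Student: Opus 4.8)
The plan is to show that $\chi$ is strictly decreasing throughout $[0,1/|\kappa|]$, so that it stays below its already negative initial value $\chi(0)=\kappa$. First I would record the sign information available on this interval. Since $\kappa<0$, for $\tau\in[0,1/|\kappa|]$ one has $\kappa\tau\in[-1,0]$, whence $1+\kappa\tau\geq 0$ and $1+\tfrac{1}{2}\kappa\tau\geq\tfrac{1}{2}>0$; recalling that $\Theta(\tau)=\sqrt{|\lambda|/3}\,\tau(1+\tfrac{1}{2}\kappa\tau)$ and $\dot{\Theta}(\tau)=\sqrt{|\lambda|/3}\,(1+\kappa\tau)$, this yields $\Theta(\tau)\geq 0$ and $\dot{\Theta}(\tau)\geq 0$ on $[0,1/|\kappa|]$, exactly as noted in the remark preceding the statement.

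Next I would assemble the three sign inputs needed to control equation \eqref{Core2}. Observation 1 (equivalently, the formal solution \eqref{FormalSolutionE}) gives $\phi(\tau)>0$ for $m>0$, irrespective of $\kappa$, while Lemma \ref{AnalysisKappaNegPropL} supplies $L(\tau)<0$ for $\tau\in[0,1/|\kappa|]$. With these in hand, each term on the right-hand side of \eqref{Core2} is non-positive on the interval, and the middle one is strictly negative, so that
\[
\dot{\chi}(\tau)=-\chi^2(\tau)+L(\tau)-\tfrac{1}{2}\Theta(\tau)\phi(\tau)<0,\qquad \tau\in[0,1/|\kappa|],
\]
since $-\chi^2\leq 0$, $L<0$, and $-\tfrac{1}{2}\Theta\phi\leq 0$.

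Finally, since $\chi(0)=\kappa<-1<0$ and $\dot{\chi}<0$ throughout, the function $\chi$ is strictly decreasing on the interval, whence $\chi(\tau)\leq\chi(0)=\kappa<0$, which establishes the claim. I do not expect a genuine obstacle here: the whole content of the lemma is that on $[0,1/|\kappa|]$ the conformal factor keeps $\Theta$ and $\dot{\Theta}$ non-negative, which---combined with the previously established facts $L<0$ and $\phi>0$---forces every term of the Riccati-type equation \eqref{Core2} to cooperate with the negativity of $\chi$. The only point requiring care is the endpoint bookkeeping, namely confirming that the interval of validity is precisely $[0,1/|\kappa|]$ (where $\dot{\Theta}\geq 0$) and that Lemma \ref{AnalysisKappaNegPropL} is invoked on exactly that interval.
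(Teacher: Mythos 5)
Your proposal is correct, but it takes a more direct route than the paper. The paper argues by contradiction: assuming $\chi$ has a first zero $\tau_{\chi}\in(0,1/|\kappa|]$, continuity forces $\dot{\chi}(\tau_{\chi})\geq 0$, while evaluating \eqref{Core2} at $\tau_{\chi}$ (where the $-\chi^2$ term drops out) yields $L(\tau_{\chi})\geq \tfrac{1}{2}\Theta(\tau_{\chi})\phi(\tau_{\chi})>0$, contradicting Lemma \ref{AnalysisKappaNegPropL}. You instead observe that on the whole interval every term on the right-hand side of \eqref{Core2} is non-positive, with $L<0$ strictly, so $\dot{\chi}<0$ pointwise and $\chi$ decreases monotonically from $\chi(0)=\kappa<0$. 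Both arguments rest on exactly the same inputs --- Lemma \ref{AnalysisKappaNegPropL}, Observation 1 ($\phi>0$), and the sign of $\Theta$ on $[0,1/|\kappa|]$ --- and there is no circularity in your use of Lemma \ref{AnalysisKappaNegPropL}, since its proof never invokes the sign of $\chi$ (the $\chi L$ term vanishes at the putative zero of $L$). Your version is marginally stronger: it shows $\chi(\tau)\leq\kappa$ and that $\chi$ is strictly decreasing, not merely that $\chi$ is negative; the paper's contradiction-at-first-zero style has the virtue of matching the template used in the neighbouring results (Lemmas \ref{AnalysisKappaPosPropL}, \ref{AnalysisKappaNegPropL} and \ref{AnalysisKappaPosChiVanishesSomewhere}), but either proof is acceptable here.
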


\begin{proof}
Again, we proceed by contradiction. Assume that there exists
$0<\tau_{\chi}\leq 1/|\kappa|$ such that
$\chi(\tau_{\chi})=0$. Without lost of generality we can assume that
$\tau_{\chi}$ is  the first zero of
$\chi(\tau)$. Then, by continuity, we have that $\dot{\chi}(\tau_{\chi})
\geq 0$. Using equation \eqref{Core2} one has
\[
 0 \leq \dot{\chi}(\tau_{\chi})=-\chi(\tau_{\chi})^2 + L(\tau_{\chi})-
\frac{1}{2}\Theta(\tau_{\chi})\phi(\tau_{\chi}) \hspace{0.5cm}\text{for}
\hspace{0.5cm}
\tau \in [0,1/|\kappa|].
 \]
Therefore, since $\chi(\tau_{\chi})=0$ one has
\[ 
L(\tau_{\chi}) \geq \frac{1}{2}
\Theta(\tau_{\chi})\phi(\tau_{\chi}) > 0.
\]
This is a contradiction since by Lemma
\ref{AnalysisKappaNegPropL} we know that $L(\tau) < 0$ for $\tau
\in[0, 1/|\kappa|]$.
\end{proof}

\medskip
\noindent 
\textbf{Observation 6}. Proceeding as in
Observation 4 one readily has that for $\kappa<-1$
\[
 L(\tau) \leq L(0) \exp\left( -\int_{0}^{\tau} \chi(\mbox{s})\mbox{ds} \right) 
\text{\hspace{0.5cm} for \hspace{0.5cm}} \tau \in(0, 1/|\kappa|].
\]

This last observation is used, in turn, to prove the main result of
this section:

\begin{proposition}
\label{Proposition:FormationSingularitiesKappaNegative}
If  $\kappa<-1$, then for the solution of \eqref{Core1}-\eqref{Core3} with initial data
  \eqref{InitialDataCoreSpherical} there
  exists $0<\tau_{\lightning}<1/|\kappa|$ such that 
\[
\chi(\tau)\rightarrow -\infty, \quad L(\tau) \rightarrow -\infty, \quad
\mbox{and} \quad 
  \phi(\tau) \rightarrow \infty \quad  \mbox{as} \quad \tau \rightarrow \tau_{\lightning}.
\]
\end{proposition}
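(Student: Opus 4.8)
The plan is to mirror the structure of Proposition \ref{Proposition:FormationSingularitiesKappaPositive} (the $\kappa>1$ case) but to exploit the additional leverage that the interval of interest is now the \emph{finite} interval $[0,1/|\kappa|]$, on which $\Theta(\tau)$ and $\dot\Theta(\tau)$ are both non-negative. The key preliminary facts are already in place: Lemma \ref{AnalysisKappaNegPropL} gives $L(\tau)<0$, Lemma \ref{AnalysisKappaNegPropChi1} gives $\chi(\tau)<0$, and Observation 1 gives $\phi(\tau)>0$ with $\Theta\phi\geq 0$ on this interval. First I would feed these sign facts into the Riccati equation \eqref{Core2}: since $L(\tau)<0$ and $\tfrac12\Theta\phi\geq 0$, we obtain the differential inequality
\[
\dot{\chi}(\tau) < -\chi^2(\tau)\qquad\text{for}\qquad \tau\in(0,1/|\kappa|].
\]

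Next, because $\chi(\tau)<0$ throughout the interval (Lemma \ref{AnalysisKappaNegPropChi1}), dividing by $\chi^2>0$ is legitimate and gives $\dot\chi/\chi^2<-1$. Integrating from $0$ to $\tau$ with $\chi(0)=\kappa<0$ yields the comparison bound
\[
\chi(\tau) < \frac{1}{\tau + 1/\kappa}\qquad\text{for}\qquad \tau\in(0,1/|\kappa|],
\]
exactly as in inequality \eqref{EstimateContradiction}, except that now $1/\kappa<0$. The right-hand side blows up to $-\infty$ as $\tau\to -1/\kappa = 1/|\kappa|$ from below, which forces $\chi(\tau)\to-\infty$ at some $\tau_{\lightning}\leq 1/|\kappa|$; one argues that $\tau_{\lightning}<1/|\kappa|$ strictly since the comparison bound already diverges at the endpoint and $\chi$ lies below it. Once $\chi\to-\infty$ is established at the finite time $\tau_{\lightning}<1/|\kappa|$, the divergence of the other two fields follows for free: from the formal solution \eqref{FormalSolutionE}, $\phi(\tau)=2m\exp(-3\int_0^\tau\chi)$ and $\chi\to-\infty$ makes the exponent tend to $+\infty$, so $\phi\to\infty$; and from Observation 6, $L(\tau)\leq L(0)\exp(-\int_0^\tau\chi)$ with $L(0)<0$ drives $L\to-\infty$.

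The main obstacle I anticipate is not the Riccati blow-up argument itself, which is essentially identical to the $\kappa>1$ case, but rather making the endpoint bookkeeping airtight: one must confirm that the blow-up time $\tau_{\lightning}$ genuinely lies strictly inside $[0,1/|\kappa|]$ and that Lemmas \ref{AnalysisKappaNegPropL}--\ref{AnalysisKappaNegPropChi1} remain valid \emph{up to} $\tau_{\lightning}$. Since those lemmas are proven on the closed interval $[0,1/|\kappa|]$ and the comparison inequality $\chi(\tau)<1/(\tau+1/\kappa)$ already diverges as $\tau\uparrow 1/|\kappa|$, the solution cannot remain regular all the way to the endpoint, so $\tau_{\lightning}<1/|\kappa|$ and all the sign hypotheses hold on $[0,\tau_{\lightning})$ as required. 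This is a cleaner situation than the $\kappa>1$ case precisely because the second component of the conformal boundary at $\tau=2/|\kappa|$ provides an \emph{a priori} cap, and the argument sharpens this to the stronger bound $\tau_{\lightning}<1/|\kappa|$ advertised in the statement.
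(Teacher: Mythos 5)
Your proposal follows the paper's proof essentially step by step: the same sign lemmas (Lemmas \ref{AnalysisKappaNegPropL} and \ref{AnalysisKappaNegPropChi1}), the same Riccati inequality $\dot{\chi}<-\chi^{2}$ extracted from \eqref{Core2}, the same integration yielding $\chi(\tau)<1/(\tau+1/\kappa)$ (your form of this bound is in fact the correct one; the paper's displayed version $\chi(\tau)<\tfrac{1}{\tau}-\tfrac{1}{|\kappa|}$ is evidently a typo, since as printed its right-hand side is positive on the interval), and the same final deductions of $\phi\rightarrow\infty$ from \eqref{FormalSolutionE} and $L\rightarrow-\infty$ from Observation 6. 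All of this is sound, including the inversion step, which is legitimate precisely because both $\chi$ and $\tau+1/\kappa$ are negative on $[0,1/|\kappa|)$.

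The genuine gap is your argument for the strict bound $\tau_{\lightning}<1/|\kappa|$. Lying strictly below a comparison function that diverges at the endpoint does not force blow-up before the endpoint: the function $2/(\tau-1/|\kappa|)$ lies strictly below $1/(\tau-1/|\kappa|)$ on all of $[0,1/|\kappa|)$ and diverges exactly at $\tau=1/|\kappa|$. So your comparison, like the paper's, only delivers $\tau_{\lightning}\leq 1/|\kappa|$ --- and indeed the paper's own proof concludes with ``$0<\tau_{\lightning}\leq 1/|\kappa|$'', so the strict inequality asserted in the proposition is left unproved there as well. If you want the strict bound, sharpen the comparison: Observation 6 together with $\chi<0$ gives $\exp\left(-\int_{0}^{\tau}\chi(\mbox{s})\mbox{d}\mbox{s}\right)>1$ and hence $L(\tau)<L(0)=\tfrac{1}{2}(1-\kappa^{2})<0$, so that $\dot{\chi}<-\chi^{2}-a^{2}$ with $a\equiv\sqrt{(\kappa^{2}-1)/2}$. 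Comparing with $\dot{y}=-y^{2}-a^{2}$, $y(0)=\kappa$, whose solution $y(\tau)=a\tan\left(\arctan(\kappa/a)-a\tau\right)$ blows up at $\tau^{*}=a^{-1}\left(\tfrac{\pi}{2}-\arctan(|\kappa|/a)\right)$, and using $\tfrac{\pi}{2}-\arctan x=\arctan(1/x)<1/x$ for $x>0$, one obtains $\tau_{\lightning}\leq\tau^{*}<1/|\kappa|$, which is the strict statement.
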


\begin{proof}
Consider equation \eqref{Core2} on the interval $\tau \in
[0,1/|\kappa|]$. Using Lemma \ref{AnalysisKappaNegPropL} we
know that $L(\tau)<0$. This observation and the fact that
$\phi(\tau)>0$ leads to the differential inequality
\[ 
\dot{\chi}(\tau)<-\chi^2(\tau) \hspace{0.5cm} \text{for}\hspace{0.5cm}
 \tau \in [0,1/|\kappa|].
\]
 Since by Lemma \ref{AnalysisKappaNegPropChi1}, we know
that $\chi(\tau) \neq 0$ for $\tau \in [0,1/|\kappa|]$ we can rewrite
the last expression as
\[ \frac{\dot{\chi}(\tau)}{\chi^2(\tau)} <-1 \hspace{0.5cm} \text{for} \hspace{0.5cm} 
\tau \in[0,1/|\kappa|].\]
 Integrating from $\tau=0$ to $1/|\kappa|$ and
using the initial data \eqref{InitialDataCoreSpherical} we get
\begin{equation} 
\label{AnalysisKappaNegPropChiInequality}
\chi(\tau) < \frac{1}{\tau} -\frac{1}{|\kappa|}.
\end{equation}
 From inequality \eqref{AnalysisKappaNegPropChiInequality}
one concludes that $\chi(\tau) \rightarrow -\infty$ for some
$0<\tau_{\lightning} \leq 1/|\kappa|$. Finally, using Observation 6
and Observation 1 one concludes that $L(\tau) \rightarrow -\infty$ and
$\phi(\tau) \rightarrow \infty$ as $\tau \rightarrow
\tau_{\lightning}$ for some $0<\tau_{\lightning} \leq 1/|\kappa|$.
\end{proof}
Notice that this upper bound for the location of the singularity is not trivial 
and improves the basic bound $\tau \leq 2/|\kappa|$ given by the location 
of the second component of the conformal boundary.

\begin{figure}[t]

\centering
\includegraphics[width=0.65\textwidth]{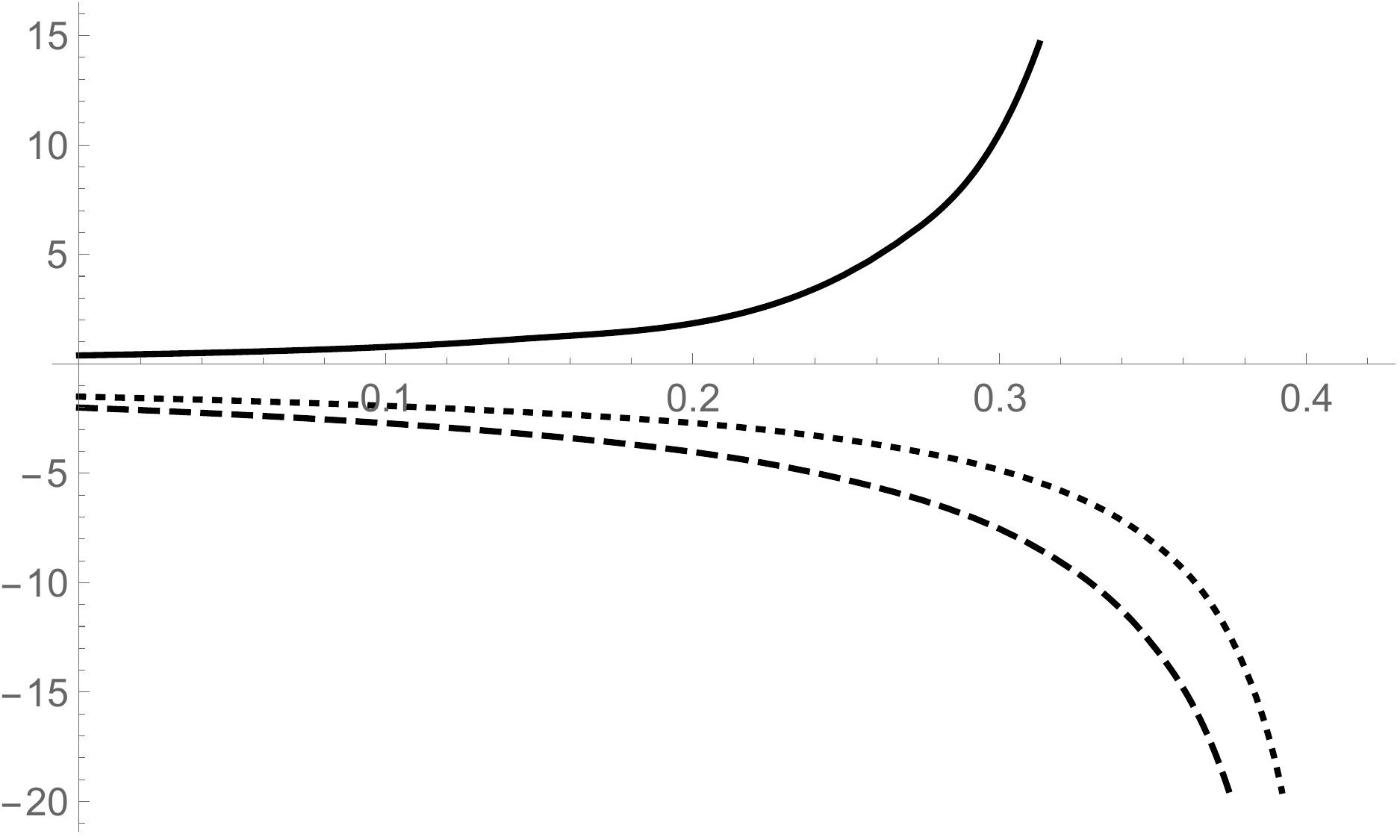}

\caption[]
{Numerical solution of the core system
  \eqref{Core1}-\eqref{Core3} with $|\lambda|=3$, $m=1/3\sqrt{3}$
 in the case $\kappa=-2$. 
  The solid line corresponds to $\phi$, the dashed line to $\chi$ and
  the dotted line to $L$.  One can observe a singularity 
at $\tau\approx0.4203$. }
\label{fig:Negative}
\end{figure}

\begin{remark}
A plot of the numerical evaluation of the solutions
to the core system \eqref{Core1}-\eqref{Core3} with initial data
  \eqref{InitialDataCoreSpherical} in the case $\kappa<-1$ can be seen
  in Figure \ref{fig:Negative}.  
\end{remark}

\subsection{Exploiting the conformal gauge}
\label{ExploitingTheConformalGaugeFreedom}

In Lemma \ref{ForeSdSfVanishes} we have shown that if
$\partial_{\psi}\kappa=0$ then the evolution equations imply, in
particular, $f_{x}=0$. Due to the spherical symmetry Ansatz, the
component $f_{x}$ is the only potentially non-zero component of
$\bmf$.  Thus, one concludes that $\bmf=0$. In Section
\ref{VanishingFields} we will exploit
this feature of the Weyl connection
 to extract further information about $\kappa$ and $s$. These
results are used in Section \ref{ChangingKappa} to discuss
the conformal gauge freedom of the extended conformal field equations
and the role played by reparametrisations of conformal
geodesics. 


\subsubsection{The relation between the Weyl and Levi-Civita connections}
\label{VanishingFields}

 As discussed in Section 
\ref{sec:FrameFormalismAndWeylConnections}, the Weyl connection
$\hat{\bmnabla}$ expressing the extended conformal field equations is
related to the Levi-Civita connection $\bmnabla$ of the unphysical
metric $\bmg$ via the 1-form $\bmf$. If $\bmf$ vanishes then
$\hat{\bmnabla}=\bmnabla$. Exploiting this simple observation we
obtain the following results:

\begin{lemma}
\label{propSconstantAlongGeodesics}
If $\bmf=0$ then the
conformal gauge conditions \eqref{protoGauge} and
\eqref{weylPropagationFrame} imply that $s=\ddot{\Theta}$.
 Moreover,  $s$ is constant along the conformal geodesics. 
\end{lemma}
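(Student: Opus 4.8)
The plan is to exploit the hypothesis $\bmf=0$, which by the discussion of Section \ref{VanishingFields} forces $\hat{\bmnabla}=\bmnabla$, so that every statement about the Weyl connection becomes a statement about the Levi-Civita connection of $\bmg$. In particular the conformal geodesic equations \eqref{protoGauge} read $\dot{x}^{c}\nabla_{c}\dot{x}^{a}=0$ and $L_{ab}\dot{x}^{b}=0$, and the Weyl propagation \eqref{weylPropagationFrame} becomes ordinary parallel transport. The key external input is the spacetime conformal field equation for the Hessian of the conformal factor, $\nabla_{a}\nabla_{b}\Theta = s\,g_{ab}-\Theta L_{ab}$, whose trace reproduces exactly the defining relation $s=\tfrac{1}{4}\nabla^{c}\nabla_{c}\Theta+\tfrac{1}{24}R\,\Theta$ and whose spatial projection is the constraint \eqref{CEFEConstraints1}; this I would take as given from the conformal field equations (Appendix \ref{Appendix:CFE}).

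First I would contract this Hessian equation with $\dot{x}^{a}\dot{x}^{b}=\bme_{\bm0}{}^{a}\bme_{\bm0}{}^{b}$. Using the normalisation $\bmg(\dot{\bmx},\dot{\bmx})=1$ from Lemma \ref{LemmaCF}, the term $s\,g_{ab}\dot{x}^{a}\dot{x}^{b}$ collapses to $s$, while the term $\Theta L_{ab}\dot{x}^{a}\dot{x}^{b}$ vanishes: contracting the gauge condition $L_{ab}\dot{x}^{b}=0$ from \eqref{protoGauge} once more with $\dot{x}^{a}$ gives $L_{ab}\dot{x}^{a}\dot{x}^{b}=0$. Hence
\[
\dot{x}^{a}\dot{x}^{b}\nabla_{a}\nabla_{b}\Theta = s .
\]

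Next I would evaluate the left-hand side by the Leibniz rearrangement
\[
\dot{x}^{a}\dot{x}^{b}\nabla_{a}\nabla_{b}\Theta
= \dot{x}^{a}\nabla_{a}\big(\dot{x}^{b}\nabla_{b}\Theta\big)
-\big(\dot{x}^{a}\nabla_{a}\dot{x}^{b}\big)\nabla_{b}\Theta .
\]
The second term drops out by the conformal geodesic equation $\dot{x}^{a}\nabla_{a}\dot{x}^{b}=0$. For the first term I would use the gauge choice $\bme_{\bm0}=\bm\partial_{\tau}$ of \eqref{GaugeAdaptTimelikeLeg}: acting on the scalar $\Theta$ one has $\dot{x}^{b}\nabla_{b}\Theta=\partial_{\tau}\Theta=\dot{\Theta}$, and then $\dot{x}^{a}\nabla_{a}\dot{\Theta}=\partial_{\tau}\dot{\Theta}=\ddot{\Theta}$. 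Combining the two displays gives $s=\ddot{\Theta}$. Finally, since Lemma \ref{LemmaCF} guarantees that $\ddot{\Theta}=\ddot{\Theta}_{\star}$ is constant along each conformal geodesic, the same holds for $s$, which is the second assertion.

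I expect the only delicate point to be the bookkeeping ensuring that $L_{ab}\dot{x}^{a}\dot{x}^{b}$ genuinely vanishes: this hinges on the identification $\hat{L}_{ab}=L_{ab}$, valid precisely because $\bmf=0$ makes $\hat{\bmnabla}=\bmnabla$, so that the gauge condition $\hat{L}_{ab}\dot{x}^{b}=0$ transfers to the Levi-Civita Schouten tensor. Everything else is the routine rewriting above, and no explicit computation of $\nabla^{c}\nabla_{c}\Theta$ in the full frame is needed, which is what keeps the argument short.
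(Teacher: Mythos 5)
Your proposal is correct. For the first assertion ($s=\ddot{\Theta}$) you follow essentially the paper's own route: both arguments evaluate the conformal field equation \eqref{CEFEsecondderivativeCF} along the congruence, using $\bmf=0$ to identify $\hat{\bmnabla}$ with $\bmnabla$ so that the gauge conditions \eqref{protoGauge} give $L_{ab}\dot{x}^b=0$ and the affine geodesic equation; your Leibniz rearrangement of $\dot{x}^a\dot{x}^b\nabla_a\nabla_b\Theta$ is just the abstract-index version of the paper's frame computation, where $\Gamma_{\bm0}{}^{\bma}{}_{\bmb}=0$ plays the role of your vanishing term $(\dot{x}^a\nabla_a\dot{x}^b)\nabla_b\Theta$. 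Where you genuinely diverge is the second assertion: the paper derives $\dot{s}=0$ directly from the second conformal field equation \eqref{CEFEs}, namely $\nabla_{\bm0}s=-L_{\bm0\bmb}\nabla^{\bmb}\Theta$, which vanishes by the same gauge condition $L_{\bm0\bma}=0$; you instead invoke Lemma \ref{LemmaCF}, which asserts that $\ddot{\Theta}=\ddot{\Theta}_{\star}$ is constant along each conformal geodesic, and combine it with the identity just proved. Both are valid in the paper's setting --- indeed in the conformal Gaussian gauge $\Theta$ is \emph{a priori} the quadratic polynomial \eqref{UniversalConformalFactor}, so constancy of $\ddot{\Theta}$ is built in. The trade-off: your route is shorter but leans on the external structural result of \cite{Fri95,CFEBook} (whose proof itself rests on the field equations), whereas the paper's route stays entirely inside the conformal field equation system and needs no appeal to the prescribed form of the conformal factor, which makes it a genuine consistency statement of the gauge rather than a consequence of it.
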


\begin{proof}
As discussed in Section \ref{sec:FrameFormalismAndWeylConnections} if
$\bmf=0$ then $\hat{L}_{ab}=L_{ab}$ and
$\hat{\Gamma}_{\bma}{}^{\bmc}{}_{\bmb}=\Gamma_{\bma}{}^{\bmc}{}_{\bmb}$. Using
the conformal gauge condition \eqref{protoGauge} it follows that
$L_{\bm0 \bma}=0$ and $\Gamma_{\bm0}{}^{\bma}{}_{\bmb}=0$. Now, the
standard conformal field equations \eqref{CEFEsecondderivativeCF} and
\eqref{CEFEs} in Appendix \ref{Appendix:CFE} give
\begin{subequations}
\begin{eqnarray}
&& \nabla_{\bm0}\nabla_{\bm0}\Theta + \Theta L_{\bm0 \bm0}-s\eta_{\bm0
    \bm0}=0,
 \label{UsingCEFE1}\\ 
&& \nabla_{\bm0}s=-L_{\bm0
   \bmb}\nabla^{\bmb}\Theta. \label{UsingCEFE2}
\end{eqnarray}
\end{subequations}
 Using $L_{\bm0 \bma}=0$ and $\Gamma_{\bm0}{}^{\bma}{}_{\bmb}=0$ in
 equation \eqref{UsingCEFE1} one concludes
 $\ddot{\Theta}=s$. Similarly, from equation \eqref{UsingCEFE2} one
 gets $\dot{s}=0$. Therefore $s$ is constant along the conformal
 geodesics.
\end{proof}

\begin{remark}
In the asymptotic initial value problem the initial
value of $s$ is given by $s_{\star}=\sqrt{|\lambda|/3}\kappa$ ---see
equation \eqref{SolutionContraintsGeneral1}. Thus, if $\bmf=0$ then
$s=\sqrt{|\lambda|/3}\kappa$ along the conformal geodesics.
\end{remark}

\medskip
Finally, one has the following:

\begin{lemma}\label{propKappaConstant}
In the asymptotic initial value problem, if $\bmf=0$, then the
conformal gauge conditions \eqref{protoGauge} and
\eqref{weylPropagationFrame} together with the conformal Einstein field
equations imply that  $\bme_{\bmi}(\kappa)=0$ ---that is, $\kappa$ is a constant.
\end{lemma}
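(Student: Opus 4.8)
The plan is to show that $\kappa$ is spatially constant by exploiting the vanishing of $\bmf$ together with the constancy of $s$ established in Lemma \ref{propSconstantAlongGeodesics}. Recall that in the asymptotic initial value problem the initial value of $s$ is $s_\star = \sqrt{|\lambda|/3}\,\kappa$, and the preceding remark shows that when $\bmf=0$ the field $s$ remains equal to $\sqrt{|\lambda|/3}\,\kappa$ along each conformal geodesic. The key idea is that since $s$ is now known to be constant along the congruence (in the $\tau$-direction), any spatial variation of $\kappa$ must be detectable already through the spatial derivatives of $s$ on the initial hypersurface $\mathscr{I}$, and these are controlled by the conformal constraint equations.

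First I would invoke Lemma \ref{propSconstantAlongGeodesics}, which gives $s = \ddot{\Theta}$ with $\bme_{\bm0}(s)=\dot s = 0$. Since $\bmf=0$ forces $\hat{\bmnabla}=\bmnabla$, the Weyl Schouten tensor coincides with the Levi-Civita one, and the conformal gauge condition \eqref{protoGauge} gives $L_{\bm0\bma}=0$ and $\Gamma_{\bm0}{}^{\bma}{}_{\bmb}=0$. Next I would turn to the spatial gradient of $s$. The relevant standard conformal field equation is \eqref{UsingCEFE2}, namely $\nabla_{\bma}s = -L_{\bma\bmb}\nabla^{\bmb}\Theta$; restricting to spatial frame indices $\bmi$ and using $\Sigma_{\bmi}=\bme_{\bmi}(\Theta)=0$ at $\mathscr{I}$ together with $\Sigma=\bme_{\bm0}(\Theta)=\dot\Theta_\star$, one finds $\bme_{\bmi}(s) = -L_{\bmi\bm0}\,\Sigma = 0$ since $L_{\bmi\bm0}=0$ by the gauge condition. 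Thus $\bme_{\bmi}(s)=0$ on $\mathscr{I}$.

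Finally, since $s=\sqrt{|\lambda|/3}\,\kappa$ on $\mathscr{I}$ and the prefactor $\sqrt{|\lambda|/3}$ is a nonzero constant, the vanishing of $\bme_{\bmi}(s)$ immediately yields $\sqrt{|\lambda|/3}\,\bme_{\bmi}(\kappa)=0$, hence $\bme_{\bmi}(\kappa)=0$; that is, $\kappa$ is constant on the initial hypersurface. The main obstacle, and the step requiring care, is the clean identification of the spatial derivative of $s$ with $-L_{\bmi\bm0}\Sigma$: one must carefully feed the conformal Gauss gauge conditions \eqref{conformalGaugeConditionsFrameVersion} and the asymptotic data $\Sigma_{\bmi}=0$, $L_{\bm0\bma}=0$ into equation \eqref{UsingCEFE2}, and verify that no connection terms spoil the frame derivative. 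Once this is done the conclusion is essentially algebraic. An alternative and arguably cleaner route would bypass \eqref{UsingCEFE2} entirely: since $\bme_{\bmi}(\kappa)=\partial_\psi\kappa$ is precisely the hypothesis under which $\bmf=0$ was derived in Lemma \ref{ForeSdSfVanishes}, one could instead argue directly from the definition $s_\star=\Sigma\kappa$ in \eqref{SolutionContraintsGeneral1} combined with $D_{\bmi}s=-L_{\bmi}\Sigma$ from \eqref{AsymptoticConformalConstraints3} and $L_{\bmi}=-D_{\bmi}\kappa$, obtaining $D_{\bmi}s = (D_{\bmi}\kappa)\Sigma$ on the one hand and $D_{\bmi}(\Sigma\kappa)=\Sigma D_{\bmi}\kappa$ on the other (using $D_{\bmi}\Sigma=0$), which is automatically consistent and does not by itself force $\bme_{\bmi}(\kappa)=0$ — so the genuinely new input must come from the constancy of $s$ along the geodesics established when $\bmf=0$, which is where the argument must ultimately rest.
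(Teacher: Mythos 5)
Your proof is correct, but it takes a genuinely different route from the paper's. The paper works \emph{off} the initial hypersurface: it substitutes the explicit conformal factor $\Theta=\dot{\Theta}_{\star}\tau(1+\tfrac{1}{2}\kappa\tau)$ and $s=\ddot{\Theta}$ (Lemma \ref{propSconstantAlongGeodesics}) into the equation \eqref{ConstantLambda}, observes that all $\tau$-dependent terms cancel against $\lambda$ except
$3\delta^{\bmi\bmj}\bme_{\bmi}(\Theta)\bme_{\bmj}(\Theta)=\tfrac{3}{4}\dot{\Theta}_{\star}^{2}\,\tau^{4}\,\delta^{\bmi\bmj}\bme_{\bmi}(\kappa)\bme_{\bmj}(\kappa)$,
and then concludes for $\tau\neq 0$ from the positive-definiteness of the spatial frame metric (a sum-of-squares argument). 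You instead work entirely \emph{on} $\mathscr{I}$: the spatial components of \eqref{CEFEs} together with $\Sigma_{\bmi}=0$ give $\bme_{\bmi}(s)=-L_{\bmi\bm0}\Sigma$, the gauge condition \eqref{protoGauge} with $\bmf=0$ kills $L_{\bmi\bm0}$, and $s=\Sigma\kappa$ with $\Sigma=\sqrt{|\lambda|/3}$ constant then forces $\bme_{\bmi}(\kappa)=0$. In effect you re-derive the constraint relation $L_{\bmi}=-D_{\bmi}\kappa$ of \eqref{SolutionContraintsGeneral1} from the field equations and then note that the gauge annihilates $L_{\bmi}$. Your argument is linear, local to the initial hypersurface, and does not need the explicit $\tau$-dependence of $\Theta$; the paper's argument is quadratic but exhibits how the inconsistency would propagate into the bulk (at order $\tau^{4}$). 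Two small points deserve care. First, your step uses $L_{\bmi\bm0}=L_{\bm0\bmi}$; this symmetry is not automatic for a Weyl connection, but it does hold here precisely because $\bmf=0$ makes $\hat{L}_{\bma\bmb}$ equal to the (symmetric) Levi-Civita Schouten tensor, so the step is justified but should be stated. Second, Lemma \ref{propSconstantAlongGeodesics} is never actually used in your main chain ---the decisive input is the gauge condition $\hat{L}_{\bm0\bma}=0$ combined with $\bmf=0$--- so your closing remark that the argument ``must ultimately rest'' on the constancy of $s$ along the conformal geodesics misattributes the key ingredient; this is a flaw in your commentary, not in the proof itself.
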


\begin{proof}
 Using $\bmf=0$ and the gauge conditions \eqref{protoGauge} we
 get from the conformal field equation \eqref{ConstantLambda} that
\begin{equation}
\label{constanLambdaEquation}
6\Theta s -3\dot{\Theta}^2 + 3 \delta^{\bmi \bmj}\bme_{\bmi}\Theta\bme_{\bmj}
\Theta=\lambda.
\end{equation}
 Using Lemma \ref{propSconstantAlongGeodesics} we have
 $s=\ddot{\Theta}$. Therefore,  substituting $\Theta(\tau) =
 \dot{\Theta}_\star \tau(1 + \kappa\tau/2)$ into equation 
\eqref{constanLambdaEquation} and
 recalling $\dot{\Theta}_{\star}=\sqrt{|\lambda|/3}$ we obtain
\begin{equation*} 
\tau^4 \delta^{\bmi\bmj}\bme_{\bmi}(\kappa) \bme_{\bmj}(\kappa)=0.
\end{equation*}
 Observe that the last equation is trivially satisfied on
 $\mathscr{I}$ as $\tau=0$. Off the initial hypersurface, where 
$\tau \neq 0$, the last equation implies
\[ \delta^{\bmi \bmj}\bme_{\bmi}(\kappa) \bme_{\bmj}(\kappa)=0.\]
 Therefore, we conclude that $\bme_{\bmi}(\kappa)=0$.
\end{proof}

\subsubsection{Changing the conformal gauge}
\label{ChangingKappa}

 The analysis of the core system given in Sections
 \ref{AnalysisKappaPos}, \ref{AnalysisKappaNeg} and
 Section \ref{AnalysisOfTheCoreSystem} covers the cases for which
 $|\kappa|>1$ and $\kappa=0$. As a consequence of the conformal
 covariance of the extended conformal Einstein field equations one has
 the freedom of performing conformal rescalings and of reparametrising
 the conformal geodesics ---thus, effectively changing the
 representative of the conformal class $[\tilde{\bmg}]$ one is working
 with.  This conformal freedom can be exploited to extend the analysis
 given in Sections \ref{AnalysisKappaPos} and \ref{AnalysisKappaNeg}
 to the case where
 $\kappa \in [-1,0) \cup (0,1]$.

Following the discussion in the previous paragraph, any two spacetimes
$(\mathcal{M},\bmg)$ and $(\bar{\mathcal{M}},\bar{\bmg})$ with $\bmg=
\Theta^2 \tilde{\bmg}$ and $\bar{\bmg}=\bar{\Theta}^2\tilde{\bmg}$
representing two solutions to the extended conformal Einstein field
equations for different choices of parameter $\kappa$ are conformally
related.  From Lemmas \ref{lemmaCGreparametrisations} and
\ref{LemmaCF} we have that
\begin{equation}\label{TransfTheta}
\Theta(\tau)=\sqrt{\frac{|\lambda|}{3}}\tau\Big(1+\frac{1}{2}\kappa
\tau\Big), \hspace{1cm}
\bar{\Theta}(\bar{\tau})=\sqrt{\frac{|\lambda|}{3}}\bar{\tau}
\Big(1+\frac{1}{2}\bar{\kappa}
\bar{\tau}\Big),
\end{equation}
 with
\begin{equation}\label{FracTrans}
 \bar{\tau}= \frac{a \tau}{c \tau + d}.
\end{equation}

The free parameter $b$ in the fractional transformation of Lemma
\ref{lemmaCGreparametrisations} has been set to $b=0$ in order to
ensure that $\Theta$ and $\bar{\Theta}$ vanish at $\tau=0$ and
$\bar{\tau}=0$, respectively. Thus, the conformal boundary
$\mathscr{I}$ is equivalently represented by the hypersurfaces with
$\tau=0$ or $\bar{\tau}=0$. As $\bmg$ and $\bar{\bmg}$ are conformally
related one can write
\[ 
\bar{\bmg}= \omega^2 \bmg \hspace{0.5cm} \text{with} \hspace{0.5cm} 
\omega \equiv \bar{\Theta} \Theta^{-1}.
\]
Using the relations in \eqref{TransfTheta} and \eqref{FracTrans} we
obtain, after a calculation, that 
\begin{equation}
\label{OmegaAnsatz}
 \omega(\tau)=\frac{a\left(1+ \displaystyle \frac{a\bar{\kappa}\tau}{2(c\tau+d)}\right)}{
   \left((c\tau + d)\left(1+\displaystyle\frac{1}{2}\kappa \tau\right)\right)}.
\end{equation}

The conformal transformation law
for the field $s$ can be seen to be given by
 \[ \bar{s}= \omega^{-1}s +
 \omega^{-2}\nabla_{\bmc}\omega\nabla^{\bmc}\Theta + \frac{1}{2}\omega^{-3}
\Theta\nabla_{\bmc}\omega\nabla^{\bmc}\omega.
\]
As discussed in Section \ref{VanishingFields}, in the analysis of the
extremal Schwarzschild-de Sitter spacetime one can assume that
$\partial_{\psi}\kappa=0$ and $\bmf=0$. Now, Propositions
\ref{propSconstantAlongGeodesics} and \ref{propKappaConstant} imply
that $s= \sqrt{|\lambda|/3}\kappa$ and $\bar{s}=
\sqrt{|\lambda|/3}\bar{\kappa}$ are constant.  Exploiting this
observation, the transformation law for $s$ can be read as an equation
for $\omega$ ---namely
\begin{equation}
\label{DifferentialEquationForOmega}
\Theta\dot{\omega}^ 2 +2\omega \dot{\Theta}\dot{\omega} + \omega^2s
-\omega^3\bar{s}=0 .
 \end{equation}
Substituting expression \eqref{OmegaAnsatz} into equation
\eqref{DifferentialEquationForOmega} one gets the condition 
\begin{equation}
\label{relationFracTrans}
2c + a \bar{\kappa} - d\kappa =0.
\end{equation}
 One can read equation \eqref{relationFracTrans} as the transformation
law for $\bar{\kappa}$ so that
\[ 
\bar{\kappa} = \frac{d\kappa-2c}{a}. 
\]
In order to have a meaningful transformation law between $\bar{\tau}$
and $\tau$, neither $a$ nor $d$ can vanish. Substituting equation
\eqref{relationFracTrans} into the reparametrisation formula
\eqref{FracTrans} and expression
\eqref{OmegaAnsatz} one can observe that $a/d$ actually corresponds to
$\omega(0)\equiv\omega_{\star}$. Therefore, one has that
\begin{equation}
\label{TransformationLawTauAndOmega}
\bar{\tau}(\tau)=\frac{2\omega_{\star}\tau}
{(\omega_{\star}\bar{\kappa}-\kappa)\tau-2}, \qquad \qquad
\omega(\tau)=\frac{4
  \omega_{\star}}{\big((\omega_{\star}\bar{\kappa}-\kappa)\tau-2\big)^2}.
\end{equation}

From the last expression one can identify
$\dot{\omega}_{\star}\equiv
\dot{\omega}(0)=\omega_{\star}(\omega_{\star}\bar{\kappa}-\kappa)$. 
In addition, notice
that $\bar{\tau} \rightarrow \infty$ and $\omega \rightarrow \infty$
as $\tau \rightarrow 2/(\omega_{\star}\bar{\kappa}-\kappa)$. Therefore,
the hypersurface defined by
$\tau=2/(\omega_{\star}\bar{\kappa}-\kappa)$ is at an infinite
distance from the conformal boundary as measured with respect to the
$\bar{\bmg}$-proper time.

\begin{remark}
{\em An alternative approach to deduce equations
\eqref{relationFracTrans} and \eqref{TransformationLawTauAndOmega} is
to write $\bar{\Theta}(\bar{\tau}(\tau))=\omega(\tau)\Theta(\tau)$ and use
equations \eqref{TransfTheta} and \eqref{FracTrans} to identify 
$\kappa$ and $\omega$.}
\end{remark}

\section{Appendix: Cartan's structure equations and space spinor formalism}
\label{Appendix:CartanStrucureEquations}

In this appendix we give a brief discussion of Cartan's structure
equations and the space spinor formalism.

\subsection{Cartan's structure equations in frame formalism}
\label{CartanFrame}

Consider a $\bmh$-orthonormal frame $\{ \bme_{\bmi}\}$ with
corresponding coframe $\{\bmomega^\bmi\}$. By construction, one has
$\langle \bmomega^{\bmi},\bme_{\bmj}\rangle = \delta_{\bmi}{}^{\bmj}$.
The connection coefficients of the Levi-Civita connection $\bm D$ of
$\bmh$ respect to this frame are defined as
\[
\langle \bmomega^\bmj , D_\bmi\bme_\bmk \rangle \equiv
\gamma_{\bmi}{}^{\bmj}{}_{\bmk}.
\]
As a consequence of the metricity of $\bm D$ it follows that
$\gamma_{\bmi\bmj\bmk}=-\gamma_{\bmi\bmk\bmj}$. The connection form is
accordingly defined as
 \[
\bmgamma^{\bmj}{}_{\bmk} \equiv\gamma_{\bmi}{}^{\bmj}{}_{\bmk} \wedge
\bmomega^\bmi.
\]
 With these definitions, the first and second Cartan's structure
 equations are, respectively, given by
\begin{subequations}
\begin{eqnarray}
&& \mathbf{d} \bmomega^\bmi = -\bmgamma^\bmi{}_{\bmj}\wedge
  \bmomega^\bmj, \label{FirstCartanStructureEquationFrame}\\ &&
  \mathbf{d}\gamma^{\bmi}{}_{\bmj}= -\bmgamma^{\bmi}{}_{\bmk}\wedge
  \bmgamma^{\bmk}{}_{\bmj} +
  \bm\Omega^{\bmi}{}_{\bmj},\label{SecondCartanStructureEquationFrame}
\end{eqnarray}
\end{subequations}
where $\bm\Omega^{\bmi}{}_{\bmj}$ is the curvature 2-form defined as
\[ 
\bm\Omega^{\bmi}{}_{\bmj} \equiv R^{\bmi}{}_{\bmj \bmk \bml}
\bmomega^{\bmk} \wedge\bmomega^{\bml}.
\]

\subsection{Basic spinors}
\label{Appendix:SpaceSpinorFormalism}

In the space spinor formalism, given a spin basis
$\{\epsilon_{\bmA}{}^{A}\}$ where $_{\bmA = \bm0,\bm1}$, any of the
spinorial fields appearing in the extended conformal Einstein field
equations can be decomposed in terms of basic irreducible spinors. The
basic valence-2 symmetric spinors are:
\begin{equation}
x_{\bmA\bmB} \equiv \sqrt{2}\epsilon_{(\bmA}{}^{\bm0}\epsilon_{\bmB)}{}^{\bm1}, 
\qquad  y_{\bmA\bmB}\equiv
-\frac{1}{\sqrt{2}}\epsilon_{(\bmA}{}^{\bm1}\epsilon_{\bmB)}{}^{\bm1},
 \qquad  z_{\bmA\bmB}\equiv
 \frac{1}{\sqrt{2}}\epsilon_{(\bmA}{}^{\bm0}\epsilon_{\bmB)}{}^{\bm0}.  \label{compsBasicxyz}
\end{equation}
The basic valence 4 spinors are given by
\begin{subequations}
\begin{eqnarray}
&\epsilon_{\bmA\bmC}x_{\bmB\bmD}+\epsilon_{\bmB\bmD}x_{\bmA\bmC},\qquad
  \epsilon_{\bmA\bmC}y_{\bmB\bmD}+\epsilon_{\bmB\bmD}y_{\bmA\bmC},
  \qquad
  \epsilon_{\bmA\bmC}z_{\bmB\bmD}+\epsilon_{\bmB\bmD}z_{\bmA\bmC},& \label{Valence4Basic1}
  \\
 &h_{\bmA\bmB\bmC\bmD}\equiv-\epsilon_{\bmA(\bmC}\epsilon_{\bmD)\bmB}
  ,
\qquad  \epsilon^{\bmi}{}_{\bmA\bmB\bmC\bmD}=
\epsilon_{(\bmA}{}^{(\bm E}\epsilon_{\bmB}{}^{\bm
  F}\epsilon_{\bmC}{}^{\bm G}\epsilon_{\bmD)}{}^{\bm H)_{\bmi}}. &\label{Valence4Basic2}
\end{eqnarray}
\end{subequations}
In the last expression ${}^{(\bmA\bmB\bmC\bmD)_\bmi}$ indicates
that an $\bmi$ number of indices are set equal to $\bm1$ after
symmetrisation.  Any valence 4 spinor $\zeta_{ABCD}$ with the
symmetries $\zeta_{(AB)(CD)}$ can be expanded in terms of these basic
spinors. One has the identities
\begin{subequations}
\begin{eqnarray}
& x_{(\bmA\bmB}x_{\bmC\bmD)}=2\epsilon^{2}{}_{\bmA\bmB\bmC\bmD}, \hspace{3mm}
  y_{(\bmA\bmB}y_{\bmC\bmD)}=\displaystyle\frac{1}{2}\epsilon^{4}{}_{\bmA\bmB\bmC\bmD}, \hspace{3mm}
  z_{(\bmA\bmB}z_{\bmC\bmD)}=\displaystyle\frac{1}{2}\epsilon^{0}{}_{\bmA\bmB\bmC\bmD},&  \label{xyzProducts1}
  \\ 
& y_{\bmA\bmB}x_{\bmC\bmD}=-\epsilon^{3}{}_{\bmA\bmB\bmC\bmD} -
  \displaystyle\frac{1}{2\sqrt{2}}(\epsilon_{\bmA\bmC}y_{\bmB\bmD}+\epsilon_{\bmB\bmD}y_{\bmA\bmC}),&
  \label{xyzProducts2}\\ 
& z_{\bmA\bmB}x_{\bmC\bmD}=\epsilon^{1}{}_{\bmA\bmB\bmC\bmD} +\displaystyle\frac{1}{2\sqrt{2}}
  (\epsilon_{\bmA\bmC}z_{\bmB\bmD}+\epsilon_{\bmB\bmD}z_{\bmA\bmC}),&\label{xyzProducts3}\\ 
& y_{\bmA\bmB}z_{\bmC\bmD}=-\displaystyle\frac{1}{2}\epsilon^{2}{}_{\bmA\bmB\bmC\bmD}+
  \displaystyle\frac{1}{4\sqrt{2}} (\epsilon_{\bmA\bmC}x_{\bmB\bmD}+
  \epsilon_{\bmB\bmD}x_{\bmA\bmC})  -\displaystyle\frac{1}{6}h_{\bmA\bmB\bmC\bmD}.&\label{xyzProducts4}
\end{eqnarray}
\end{subequations}
Another set of identities used in the main text is given by
\begin{subequations} 
 \begin{eqnarray} 
& x_{\bmA\bmB}x^{\bmA\bmB}=1, \hspace{0.33cm}
   x_{\bmA\bmB}y^{\bmA\bmB}=0, \hspace{0.33cm}
   x_{\bmA\bmB}z^{\bmA\bmB}=0, \hspace{0.33cm}
   z_{\bmA\bmB}z^{\bmA\bmB}=0, \hspace{0.33cm}
   y_{\bmA\bmB}z^{\bmA\bmB}=- \displaystyle\frac{1}{2},&
   \label{UsefulIdentities1}\\ 
& x_{\bmA}{}^{\bmQ}x_{\bmB\bmQ}=
    \displaystyle\frac{1}{2}\epsilon_{\bmA\bmB}, \hspace{0.33cm}
   y_{\bmA}{}^{\bmQ}x_{\bmB\bmQ} =
    \displaystyle\frac{1}{\sqrt{2}}y_{\bmA\bmB}, \hspace{0.33cm}
   z_{\bmA}{}^{\bmQ}x_{\bmB\bmQ} =
   - \displaystyle\frac{1}{\sqrt{2}}z_{\bmA\bmB}, \hspace{0.33cm}y_{\bmA}{}^{\bmQ}y_{\bmB\bmQ}=0,&
 \label{UsefulIdentities2}
\\ &    y_{\bmA}{}^{\bmQ}z_{\bmB\bmQ}= - \displaystyle\frac{1}{2\sqrt{2}}x_{\bmA\bmB} +
    \displaystyle\frac{1}{4}\epsilon_{\bmA\bmB}, \hspace{0.33cm}
   z_{\bmA}{}^{\bmQ}z_{\bmB\bmQ}=0, & \label{UsefulIdentities3} 
\\ & \epsilon^{2}_{\bmA \bmB \bmC \bmD}x^{\bmC \bmD}=-\displaystyle\frac{1}{3}x_{\bmA \bmB}, \hspace{0.55cm}
 \epsilon^{2}_{\bmA \bmB \bmC \bmD}y^{\bmC \bmD}=\displaystyle\frac{1}{6}y_{\bmA \bmB}, \hspace{0.55cm}
\epsilon^{2}_{\bmA \bmB \bmC \bmD}z^{\bmC \bmD}=\displaystyle\frac{1}{6}z_{\bmA \bmB}. \label{UsefulIdentities4}
\end{eqnarray}
\end{subequations}

 These identities and a more exhaustive list has been given
in \cite{FriKan00}.

\subsection{Cartan's structure equations in spinor form}
\label{CartanSpacespinor}

Let $\tau^{\bmA \bmA'}$ denote a Hermitian spinor $\tau^{\bmA \bmA'}$
with normalisation $\tau^{\bmA
  \bmA'}\tau_{\bmA \bmA'}=2$. Consider an adapted spin dyad
$\{\epsilon_{\bmA}{}^{A}\}$ such that the matrix representation of
$\tau^{\bmA \bmA'}$ is given by the identity $2\times 2$ matrix. The spatial Infeld-van
de Waerden symbols are related to the usual Infeld-van der
Waerden via
\begin{equation} \label{defSpatialInfeld}
 \sigma_{\bmA \bmB}{}^{\bmi} \equiv
 \tau_{(\bmB}{}^{\bmB'}\sigma_{\bmA) \bmB'}{}^{\bmi}.
 \end{equation}
Equivalently, one has 
\[
\sigma^{\bmA \bmB}{}_{\bmi}= -\tau^{(\bmB}{}_{\bmB'}\sigma^{\bmA)
  \bmB'}{}_{\bmi}. 
\]
The matrix representation of the spatial Infeld-van der Waerden
symbols is given by
\begin{eqnarray*}
&& \sigma_{\bmA \bmB}{}^{\bm1} \equiv \frac{1}{\sqrt{2}} \begin{pmatrix}
  -1 & 0 \\ 0 & 1 \\ \end{pmatrix}, \hspace{1.5cm} \sigma_{\bmA
  \bmB}{}^{\bm2} \equiv \frac{1}{\sqrt{2}} \begin{pmatrix} \mbox{i} &
  0 \\ 0 & \mbox{i} \\ \end{pmatrix} , \hspace{1.5cm} \sigma_{\bmA
  \bmB}{}^{\bm3} \equiv \frac{1}{\sqrt{2}} \begin{pmatrix} 0 & 1 \\ 1
  & 0 \\ \end{pmatrix} , \\
&& \sigma^{\bmA \bmB}{}_{\bm1} \equiv \frac{1}{\sqrt{2}} \begin{pmatrix}
  -1 & 0 \\ 0 & 1 \\ \end{pmatrix}, \hspace{1.5cm} \sigma^{\bmA
  \bmB}{}_{\bm2} \equiv \frac{1}{\sqrt{2}} \begin{pmatrix} \mbox{-i} &
  0 \\ 0 & \mbox{i} \\ \end{pmatrix} , \hspace{1.5cm} \sigma^{\bmA
  \bmB}{}_{\bm3} \equiv \frac{1}{\sqrt{2}} \begin{pmatrix} 0 & 1 \\ 1
  & 0 \\ \end{pmatrix} .
\end{eqnarray*}
 Thus, the space spinor counterpart of coframe and connection
coefficients can be obtained succinctly by contraction with the
spatial
Infeld-van der Waerden symbols as $\bmomega^{\bmA \bmB} \equiv
\bmomega^{\bmi}\sigma_{\bmi}{}^{\bmA \bmB}$ and $\gamma_{\bmA
  \bmB}{}^{\bmC \bmD}{}_{\bmE \bmF} = \gamma_{\bmi}{}^{\bmj}{}_{\bmk}
\sigma^{\bmi}{}_{\bmA \bmB}\sigma_{\bmj}{}^{\bmC \bmD}
\sigma^{\bmk}{}_{\bmE \bmF}$.  With these definitions the spinorial
version of the  Cartan
structure equations is given by
\begin{subequations}
\begin{eqnarray} 
\label{FirstCartanStructureEquationSpinor}
&&\mathbf{d}\bmomega^{\bmA \bmB}= -\bmgamma^{\bmA}{}_{\bmB} \wedge
\bmomega^{\bmB \bmE} - \gamma^{\bmB}{}_{\bmE}\wedge \bmomega^{\bmA
  \bmE},\\
&&\label{SecondCartanStructureEquationSpinor}
 \mathbf{d}\bmgamma^{\bmA}{}_{\bmB}=-\bmgamma^{\bmA}{}_{\bmE}\wedge
 \bmgamma^{\bmE}{}_{\bmB} + \Omega^{\bmA}{}_{\bmB},
\end{eqnarray}
\end{subequations}
 where 
\[
\bmgamma^{\bmA}{}_{\bmB} \equiv
 \frac{1}{2}\gamma_{\bmC \bmD}{}^{\bmA \bmQ}{}_{\bmB \bmQ}
 \bmomega^{\bmC \bmD},
\]
 and $\Omega^{\bmA}{}_{\bmB}$ is the spinor
version of the curvature 2-form, with
 \[
\Omega^{\bmA}{}_{\bmB}\equiv \frac{1}{2}r^{\bmA}{}_{\bmB \bmC \bmD
   \bmE \bmF}\omega^{\bmC \bmD}\wedge \bmomega^{\bmE \bmF}.
\]
 In the last expression the spinor $r_{\bmA \bmB \bmC \bmD
  \bmE \bmF}$ can be decomposed as
\[ 
r_{\bmA \bmB \bmC \bmD \bmE \bmF}= \left(\frac{1}{2}  s_{\bmA \bmB \bmC \bmD}
 - \frac{1}{12}rh_{\bmA\bmB \bmC \bmE}\right) \epsilon_{\bmD \bmF} +
 \left(\frac{1}{2}s_{\bmA \bmB \bmD \bmF}-\frac{1}{12}rh_{\bmA \bmB \bmD
   \bmF}\right) \epsilon_{\bmC \bmE}
\]
 where $s_{\bmA \bmB \bmC \bmD}$ and
 $r$ correspond to the space spinor version of the trace-free part of
 the Ricci tensor and Ricci scalar of $\bmh$, respectively.

\medskip
To relate the previous discussion with the basic spinors
 $x_{\bmA \bmB}$, $y_{\bmA \bmB}$ and $z_{\bmA
  \bmB}$,  observe that using \eqref{compsBasicxyz} and
\eqref{defSpatialInfeld} one obtains that
\begin{subequations}
\begin{eqnarray} 
&\sigma_{\bmA \bmB}{}^{\bm1} = -z_{\bmA \bmB}-y_{\bmA
    \bmB}, \qquad  \sigma_{\bmA \bmB}{}^{\bm2} =
  \mbox{i}(z_{\bmA \bmB}-y_{\bmA \bmB}) , \qquad \sigma_{\bmA
    \bmB}{}^{\bm3} = x_{\bmA \bmB}, & \label{Infeldxyz1}
  \\ 
&\sigma^{\bmA \bmB}{}_{\bm1} = z^{\bmA \bmB} + y^{\bmA
    \bmB}, \qquad \sigma^{\bmA \bmB}{}_{\bm2} =
  \mbox{i}(-z_{\bmA \bmB} + y_{\bmA \bmB}) , \qquad
  \sigma^{\bmA \bmB}{}_{\bm3} = -x^{\bmA \bmB}.\label{Infeldxyz2} &
\end{eqnarray}
\end{subequations}

\section{Appendix: The frame conformal Einstein field equations}
\label{Appendix:CFE}

The tensorial (frame) version of the standard vacuum conformal
Einstein field equations are given by the following system ---see
e.g. \cite{Fri81a,Fri81b,Fri82,Fri83}:
\begin{subequations}
\begin{eqnarray}
&& \Sigma_{\bma}{}^{\bmc}{}_{\bmb}e_{\bmc}=0 , \label{CEFEnoTorsion}
  \\ && \nabla_{\bme}d{}^{\bme}{}_{\bma\bmb\bmf}{}=0
  , \label{CEFErescaledWeyl} \\ &&
  \nabla_{\bmc}L_{\bmd\bmb}-\nabla_{\bmd}L_{\bmb\bmc}
  -\nabla_{\bma}\Xi d{}^{\bma}{}_{\bmb\bmc\bmd} =0
  , \label{CEFESchouten}\\ && \nabla_{\bma}\nabla_{\bmb}\Xi +\Xi
  L_{\bma\bmb} - s g_{\bma\bmb}=0 ,
 \label{CEFEsecondderivativeCF}\\
&& \nabla_{\bma}s +L_{\bma\bmc} \nabla ^{\bmc}\Xi=0
 , \label{CEFEs}\\ && R^{\bmc}{}_{\bma\bmb\bmd} -
 \rho^{\bmc}{}_{\bma\bmb\bmd}=0, \label{CEFERiemann} \\
&& 6\Xi s -3\nabla_{\bma}\Xi \nabla^{\bma}\Xi = \lambda, \label{ConstantLambda}
\end{eqnarray}
\end{subequations}
where $\Sigma_{\bma}{}^{\bmc}{}_{\bmb}$ is the torsion tensor, given
in terms of the connection coefficients, as
\[
\Sigma_{\bma}{}^{\bmc}{}_{\bmb}\bme_{\bmc} \equiv
      [\bme_{\bma},\bme_{\bmb}]-
      (\Gamma_{\bma}{}^{\bmc}{}_{\bmb}-\Gamma_{\bmb}{}^{\bmc}{}_{\bma})\bme_{\bmc};
\]
$L_{\bma \bmb}$ is the Schouten tensor; $\Xi$ is the conformal factor and
$s$ is a concomitant of the conformal factor defined by
\[
s\equiv \frac{1}{4}\nabla_{\bma}\nabla^{\bma}\Xi + \frac{1}{24}R\Xi.
\]
In addition, $\rho^{\bma}{}_{\bmb \bmc \bmd}$ is the algebraic
curvature and $R^{\bmc}{}_{\bmd\bma\bmb}$ is the geometric curvature.
\begin{eqnarray*}
&& \rho^{\bma}{}_{\bmb\bmc\bmd} \equiv\Xi d^{\bma}{}_{\bmb\bmc\bmd} +
  2( g^{\bma}{}_{[\bmc}L_{\bmd]\bmb} - g_{\bmb[
      \bmc}L_{\bmd]}{}^{\bma} ), \\ 
&& R^{\bmc}{}_{\bmd \bma
    \bmb}\equiv \bme_\bma(\Gamma_{\bmb}{}^{\bmc}{}_{\bmd})-\bme_{\bmb}
(\Gamma_{\bma}{}^{\bmc}{}_{\bmd})+\Gamma_{\bmf}{}^{\bmc}{}_{\bmd}
(\Gamma_{\bmb}{}^{\bmf}{}_{\bma}-\Gamma_{\bma}{}^{\bmf}{}_{\bmb}) \\
 && 
\qquad \qquad \qquad 
  \qquad
  +\Gamma_{\bmb}{}^{\bmf}{}_{\bmd}\Gamma_{\bma}{}^{\bmc}{}_{\bmf}-
\Gamma_{\bma}{}^{\bmf}{}_{\bmd}\Gamma_{\bmb}{}^{\bmc}{}_{\bmf}-
\Sigma_{\bma}{}^{\bmf}{}_{\bmb}\Gamma_{\bmf}{}^{\bmc}{}_{\bmd}.
\end{eqnarray*}



\end{document}